\def\eps{\varepsilon}
\def\epsilon{\varepsilon}
\title{Stronger 3-SUM Lower Bounds for Approximate Distance Oracles via Additive Combinatorics}
\author{%
	\parbox[t]{.29\textwidth}{\centering%
		Amir Abboud%
		\footnote{This work is part of the project CONJEXITY that has received funding from the European Research Council (ERC) under the European Union's Horizon Europe research and innovation programme (grant agreement No.~101078482). Supported by an Alon scholarship and a research grant from the Center for New Scientists at the Weizmann Institute of Science.}
		\\[.2ex]\small Weizmann Institute of Science}
	\and\parbox[t]{.33\textwidth}{\centering%
		Karl Bringmann%
		\footnote{This work is part of the project TIPEA that has received funding from the European Research Council (ERC) under the European Unions Horizon 2020 research and innovation programme (grant agreement No. 850979).}
		\\[.2ex]\small Saarland University
		\\\small Max Planck Institute for Informatics
		\\[-.6ex]\mbox{}}
	\and\parbox[t]{.29\textwidth}{\centering
		Nick Fischer%
		\footnote{Parts of this work were done while the author was at Saarland University and Max Planck Institute for Informatics. Partially supported by projects TIPEA and CONJEXITY as above.}
		\\[.2ex]\small Weizmann Institute of Science}}
\date{}
\begin{document}
\maketitle

\begin{abstract}
\noindent
The ``short cycle removal'' technique was recently introduced by Abboud, Bringmann, Khoury and Zamir~(STOC~'22) to prove fine-grained hardness of approximation. Its main technical result is that listing all triangles in an $n^{1/2}$-regular graph is $n^{2-o(1)}$-hard even when the number of short cycles is small; namely, when the number of $k$-cycles is $O(n^{k/2+\gamma})$ for $\gamma<1/2$. Its corollaries are based on the 3-SUM conjecture and their strength depends on $\gamma$, i.e. on how effectively the short cycles are removed. 

Abboud \emph{et al.}~achieve $\gamma\geq 1/4$ by applying structure versus randomness arguments on graphs. In this paper, we take a step back and apply conceptually similar arguments on the \emph{numbers} of the 3\=/SUM problem, from which the hardness of triangle listing is derived. Consequently, we achieve the best possible~$\gamma=0$ and the following lower bound corollaries under the 3-SUM conjecture:
\begin{itemize}
    \item\emph{Approximate distance oracles:} The seminal Thorup-Zwick distance oracles achieve stretch $2k\pm O(1)$ after preprocessing a graph in $O(m n^{1/k})$ time. For the same stretch, and assuming the query time is~$n^{o(1)}$ Abboud \emph{et al.} proved an \raisebox{0pt}[0pt][0pt]{$\Omega(m^{1+\frac{1}{12.7552 \cdot k}})$} lower bound on the preprocessing time; we improve it to \raisebox{0pt}[0pt][0pt]{$\Omega(m^{1+\frac1{2k}})$} which is only a factor~$2$ away from the upper bound. Additionally, we obtain tight bounds for stretch~$2+o(1)$ and~$3-\epsilon$ and higher lower bounds for dynamic shortest paths.
    \item\emph{Listing 4-cycles:} Abboud \emph{et al.} proved the first super-linear lower bound for listing all 4-cycles in a graph, ruling out $(m^{1.1927}+t)^{1+o(1)}$ time algorithms where $t$ is the number of 4-cycles. We settle the complexity of this basic problem by showing that the $\widetilde{O}(\min(m^{4/3},n^2) +t)$ upper bound is tight up to $n^{o(1)}$ factors.
\end{itemize}
Our results exploit a rich tool set from additive combinatorics, most notably the Balog-Szemerédi-Gowers theorem and Rusza's covering lemma. A key ingredient that may be of independent interest is a truly subquadratic algorithm for 3-SUM if one of the sets has small doubling.
\end{abstract}

\thispagestyle{empty}
\newpage
\setcounter{page}{1}

\section{Introduction} \label{sec:introduction}

An approximate distance oracle is an algorithm that preprocesses a graph efficiently and can then quickly return the distance between any given pair of nodes, up to a small error.
After being implicitly studied for some time \cite{matouvsek1996distortion,awerbuch1998near,cohen1998fast,DHZ00,cohen2001all}, Thorup and Zwick~\cite{TZ05} formally introduced the distance oracle problem in 2001 suggesting that it is perhaps the most natural formulation of the classical all-pairs shortest paths problem.
Distance oracles quickly rose to prominence and the techniques developed for them found deep connections to other popular topics such as sublinear algorithms, spanners, labelling schemes, routing schemes, and metric embeddings.

Distance oracles have been thoroughly investigated with the primary goal of understanding the best possible trade-off between the four main parameters: the multiplicative error factor (aka the \emph{stretch}), the query time, the space usage, and the preprocessing time;
see e.g. \cite{baswana2006approximate,mendel2007ramsey, baswana2008distance,BaswanaGS09,baswana2010faster,PR14,sommer2009distance,PRT12,WN12,WN13,Chechik14,Chechik15,sommer2016all,knudsen2017additive,AkavR20,RodittyT21,ChechikZ22} and the list is still growing.
They have also been studied from other perspectives, for example more efficient distance oracles for restricted classes of graphs were sought after (e.g.  \cite{CGMW19,LP21,LeWN21} for planar graphs), and their complexity in \emph{dynamic} graphs is of great interest (e.g. \cite{chechik2018near,GWN20,forster2021dynamic,DFNT22}).
Despite all this, perhaps the first question one might ask remains poorly understood: 

\begin{center}
{\em What is the best stretch $f(k)$ we can achieve if we insist on close-to-linear\\ preprocessing time $O(m n^{1/k})$ and almost-constant $n^{o(1)}$ query time?\footnote{Throughout we assume that graphs are undirected, unweighted and have $n$ nodes and $m$ edges.}}
\end{center}
\medskip

The seminal Thorup-Zwick oracle~\cite{TZ05} achieves stretch $2k-1$ after preprocessing a graph in $O(kmn^{1/k})$ time (it also achieves $O(k)$ query time and uses $O(n^{1+1/k})$ space).
Better trade-offs exist in the small $k$ regime of stretch below $3$ \cite{PR14,PRT12,BaswanaGS09,sommer2016all,knudsen2017additive,AkavR20,ChechikZ22}.
In dense enough graphs, the results are even better~\cite{baswana2006approximate,baswana2008distance,baswana2010faster,WN12}; in particular if \raisebox{0pt}[0pt][0pt]{$m= \Omega(n^{1+c/\sqrt{k}})$} Wulff-Nilsen~\cite{WN12} obtained \emph{linear} $O(m)$ preprocessing time.
However, in the setting of \emph{sparse} graphs and large $k$ (where the running time is close to linear), the Thorup-Zwick bound remains the state of the art.

Most of the existing lower bound techniques are incapable of answering the above question.
Incompressibility arguments \cite{bourgain1985lipschitz,matouvsek1996distortion,TZ05}, typically based on the girth conjecture, can show the optimality of the~$n^{1+1/k}$ space bound of Thorup and Zwick, but they cannot prove any lower bound higher than $m$. 
In the cell probe model, Sommer, Verbin, and Yu \cite{sommer2009distance} show that $m^{1+1/k}$ space (and therefore time) is required for stretch~$f(k)=O(k/t)$ if the query time is $t$; this lower bound is meaningless when the query time is super-constant and is far from the Thorup-Zwick upper bound even when $t$ is a small constant.
Finally, under a conjecture about the space complexity of Set Intersection, Pătraşcu, Roditty, and Thorup~\cite{PR14,PRT12} show $\Omega(mn^{\eps})$ lower bounds on the space complexity but their techniques only address stretch $3-\delta$; alas, they cannot prove that the error must grow above $3$ in the close-to-linear time regime.

At STOC '22, Abboud, Bringmann, Khoury, and Zamir \cite{AbboudBKZ22} introduced the \emph{short cycle removal} technique for hardness of approximation in fine-grained complexity and applied it to prove that the stretch must be~$f(k)>k/6.3776 \pm O(1)$, assuming the 3-SUM or APSP conjectures.
Thus, $f(k)$ must grow with $k$ and it is a linear function.
However, there is still a large gap in our understanding of this basic question; e.g.\ the optimal stretch for $O(mn^{0.1})$ preprocessing time could be anything between $21$ and $3$.
Whether the short cycle removal technique could achieve tight bounds was left as the main open question \cite{AbboudBKZ22}; the reasons for why this appears difficult are explained below.

\subsection{This Work: Optimal Short Cycle Removal}

In this work we take the short cycle removal technique to its limit and prove much higher and, in some cases, tight lower bounds using it.
Let us begin by introducing this technique.

\paragraph{The Starting Point}

Triangle finding problems are a common starting point for fine-grained hardness results.
The following \emph{all-edge} version is particularly interesting, since it is known to require $n^{2-o(1)}$ time in $\Theta(n^{1/2})$-regular graphs assuming either the 3-SUM conjecture \cite{Patrascu10,KopelowitzPP16} or the APSP conjecture \cite{VassilevskaX20}.

\begin{definition*}[All-Edges Triangle]
Given a tripartite graph $G=(V,E), V= X\cup Y \cup Z$ determine which edges in $E \cap (Y \times Z)$ are in at least one triangle.
\end{definition*}

Let us recall the popular 3-SUM conjecture that implies the hardness of All-Edges Triangle.

\begin{conjecture*}[3-SUM]
For any $\eps>0$, no $O(n^{2-\eps})$-time algorithm that can determine whether a given set~$A$ of~$n$ integers contains $a,b,c\in A$ such that $a+b+c=0$.
\end{conjecture*}

\paragraph{From All-Edges Triangle to Approximate Distance Oracles}
It is easy to reduce from All-Edges Triangle to distance oracles.
Construct a distance oracle for a new graph $G'$ that is obtained from $G$ by deleting all edges in $Y \times Z$.
To determine if an edge $(y,z) \in E(G) \cap (Y \times Z)$ is in a triangle, we query the oracle for the distance between $y$ and $z$ in $G'$: It must be exactly $2$ if $(y,z)$ is in a triangle in $G$, and it is at least $3$ otherwise.

To prove hardness for \emph{approximate} distance oracles we would want the distance in $G'$ to be much larger than $3$ if $(y,z)$ was not in a triangle in $G$. Now, the key observation is that a path of length $k-1$ in $G'$ implies that $(y,z)$ was in a $k$-cycle in $G$. In other words, if the edge $(y,z)$ is \emph{not} in a $k$-cycle in $G$ then a~$\frac k2$-approximation to the distance suffices for determining if $(y,z)$ participates in a triangle.

\paragraph{Short Cycle Removal} The basic idea of the \emph{short cycle removal} technique is to reduce \emph{the number of short cycles} in a graph without eliminating its triangles.
The goal is that the number of pairs $(y, z)$ that are in short cycles but not in triangles will be small, since such pairs incur a false positive in the above reduction.
The main tool towards this is to show that in subquadratic time the number of $k$-cycles can be reduced from the worst case $O(n^{k/2+1/2})$ to only $O(n^{k/2+\gamma})$ for $\gamma<1/2$ which is closer to the random case (where $\gamma=0$). The quality of the lower bounds obtained by this technique depends directly on the value of~$\gamma$ for which such a statement can be proved.

In \cite{AbboudBKZ22} the authors use the following structure versus randomness argument: If the graph has many $k$-cycles (more than the random case) then it must have a structure in the form of a \emph{dense piece} (a subgraph with disproportionately many edges). They use fast matrix multiplication to check for triangles that use the dense pieces and then remove them from the graph, reducing its number of $k$-cycles significantly and making it more random.

\begin{theorem*}[\cite{AbboudBKZ22}]
For any constants $\epsilon > 0, k_{\max} \geq 3$, there is no $\Order(n^{2-\epsilon})$-time algorithm for All-Edge Triangles in a $\Theta(n^{1/2})$-regular $n$-vertex graphs which contains at most~$\Order(n^{k/2+\gamma})$ $k$-cycles for all~\makebox{$3 \leq k \leq k_{\max}$} and for $\gamma=0.345+o(1)$, unless the 3-SUM and APSP conjectures fails.\footnote{In this theorem and the following ones, the restriction by $k_{\max}$ can be removed by~\cite[Lemma 5.11]{JinX23}.}
\end{theorem*}

The value of $\gamma$ that \cite{AbboudBKZ22} achieve depends on the fast matrix multiplication exponent $\omega<2.37188$ \cite{DWZ22,AlmanW21}, and even if $\omega=2$ they only get $\gamma=1/4$; going beyond this seems difficult.
The authors suggest an approach for getting $\gamma \to 0$ but there are three major barriers.
First, one needs to prove an interesting unproven combinatorial conjecture about the relationship between the number of cycles and the existence of dense subgraphs.
Second, one has to turn the proof into an efficient algorithm for finding the dense pieces.
And third, it is conceptually impossible to remove the dense pieces without using fast matrix multiplication, which means that one must first prove that $\omega \to 2$ before getting $\gamma\to 0$.

\paragraph{Optimal Short Cycle Removal}
In this paper we take a different approach: We look at the reduction from 3-SUM to All-Edges Triangle and ask: \emph{What structure in the 3-SUM instance causes the resulting graph to have too many $k$-cycles?}\footnote{In fact, we design a more transparent such reduction that could be of independent interest.}
The answer turns out to be related to the \emph{additive energy} of the 3-SUM instance, namely to the number of quadruples $a_1, a_2, a_3, a_4 \in A$ such that $a_1+a_2 = a_3 + a_4$. Thus, our goal changes from ``short cycle removal'' in graphs to ``\emph{energy reduction}'' on a set of numbers.
Surprisingly to us, the latter can be done much more effectively using machinery from additive combinatorics (overviewed in depth in Section~\ref{sec:overview}) such as the celebrated Balog-Szemerédi-Gowers theorem~\cite{BalogS94,Gowers01}.

Our main technical result is an optimal short cycle removal for All-Edges Triangle that is obtained via an optimal energy reduction for 3-SUM; along the way we prove new results for 3-SUM that are of independent interest (Theorems~\ref{thm:3sum-structured} and~\ref{lem:energy-reduction-complete} that are discussed in Section~\ref{sec:overview}). 
Notably, we achieve $\gamma=0$ even without assuming that $\omega = 2$.

\begin{theorem}[Optimal Short Cycle Removal]
For any constants $\epsilon > 0, k_{\max} \geq 3$, there is no $\Order(n^{2-\epsilon})$-time algorithm for All-Edges Triangle in a $\Theta(n^{1/2})$-regular $n$-vertex graph which contains at most~$\Order(n^{k/2})$ $k$-cycles for all $3 \leq k \leq k_{\max}$, unless the 3-SUM conjecture fails.
\end{theorem}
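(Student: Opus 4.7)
The plan is to follow the additive-combinatorics strategy advertised in the introduction, turning ``short cycle removal on graphs'' into ``energy reduction on numbers''. First I would set up a transparent reduction from 3-SUM on a set $A$ of $n$ integers to All-Edges Triangle on a $\Theta(n^{1/2})$-regular tripartite graph $G_A$ such that the number of $k$-cycles in $G_A$ is controlled by the $k$-th additive energy $E_k(A)$, i.e.\ the number of tuples $(a_1,\dots,a_k)\in A^k$ with $a_1-a_2+a_3-\cdots=0$. The classical hashing-based reduction (Pătra\c{s}cu / Kopelowitz-Pettie-Porat) essentially has this property: vertices correspond to buckets, edges correspond to pairs summing into the right bucket, and every $k$-cycle must come from an alternating additive relation among $k$ elements of $A$, so up to $n^{o(1)}$ factors the $k$-cycle count is at most $E_k(A)$. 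Hitting the random value $E_k(A)=\widetilde O(n^{k/2})$ therefore gives the desired $O(n^{k/2})$ bound on $k$-cycles simultaneously for all $3\leq k \leq k_{\max}$.

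The core task is thus an \emph{energy reduction} for 3-SUM: in subquadratic time, reduce an arbitrary instance $A$ to one (or few) instances $A'$ with $E_k(A')\leq n^{k/2+o(1)}$ for every $k\leq k_{\max}$. I would run a win-win argument using Balog-Szemer\'edi-Gowers and the paper's promised truly subquadratic algorithm for 3-SUM on sets of small doubling. Concretely, if $E_2(A)$ is above the random threshold, BSG yields a large subset $B\subseteq A$ with small doubling $|B+B|\leq K|B|$, $K=n^{o(1)}$; feed $B$ (and its pairings with $A$) to the small-doubling 3-SUM subroutine to resolve all triples using at least one element of $B$ in subquadratic time, then recurse on $A\setminus B$. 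By the Pl\"unnecke-Ruzsa inequality and Ruzsa's covering lemma, small doubling of $B$ already controls all higher energies $E_k(B)$, so the part we strip off is exactly the part that inflates higher $k$-cycle counts. The recursion terminates quickly because each BSG extraction removes a constant fraction of the excess energy, so after $O(\log n)$ rounds the remaining set is uniformly low-energy at all orders up to $k_{\max}$.

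Plugging the residual low-energy instance into the reduction of the first step yields a $\Theta(n^{1/2})$-regular graph with at most $O(n^{k/2})$ $k$-cycles in which All-Edges Triangle solves 3-SUM. Since both the energy reduction and the reduction to All-Edges Triangle run in $O(n^{2-\delta})$ time for some $\delta>0$, an $O(n^{2-\epsilon})$ algorithm for the stated All-Edges Triangle problem would refute 3-SUM. Routine bookkeeping with a regularisation step ensures that the final graph is actually $\Theta(n^{1/2})$-regular and tripartite.

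The main obstacle is the energy-reduction loop: I must make sure that (i) the subquadratic small-doubling 3-SUM subroutine stays subquadratic on all $O(\log n)$ extracted pieces, so the $n^{o(1)}$ loss from BSG's parameter blow-up must not compound; (ii) BSG, which by itself controls only $E_2$, really does imply control over every higher $E_k$ up to $k_{\max}$ after passing through Pl\"unnecke-Ruzsa and Ruzsa's covering lemma; and (iii) the extracted subsets can be found algorithmically rather than only existentially, which is where the constructive versions of BSG are needed. Getting these three pieces to fit together at the $\gamma=0$ threshold, rather than just $\gamma<1/2$, is where I expect the real work to live.
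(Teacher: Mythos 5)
Your skeleton matches the paper's strategy (BSG extraction plus a subquadratic 3-SUM solver for small-doubling sets, followed by a reduction from low-energy 3-SUM to triangle instances), but the central quantitative step is missing: the BSG loop alone cannot drive the energy down to the random level. With BSG parameter $K$ one only extracts a subset of size $\Omega(K^{-2}|A|)$ with doubling $O(K^{24}|A|)$, and the structured 3-SUM subroutine pays $\mathrm{poly}(K)$ overhead, so the loop is affordable only for tiny $K$ (the paper uses $K=n^{0.0001}$) and terminates with $E(A^*)\leq |A^*|^3/K\approx n^{2.9999}$ --- nowhere near $n^{2+o(1)}$. Your assertion that ``each BSG extraction removes a constant fraction of the excess energy, so after $O(\log n)$ rounds the remaining set is uniformly low-energy'' is unjustified and cannot be repaired: certifying energy $n^{2+\delta}$ directly via BSG would force $K=n^{1-\delta}$, at which point the extraction guarantee and the running time of the small-doubling solver blow up polynomially. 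The paper needs a second, separate \emph{amplification} step, a self-reduction via linear hashing over $\Field_p^d$ that splits the mildly improved instance into $\Theta(n^2/K^2)$ instances of size $O(K)$ with expected energy $O(K^2)$, i.e.\ quadratic in their size; only this converts the tiny improvement into the optimal one ($\gamma=0$). Omitting it leaves you with a partial cycle removal in the spirit of Abboud et al., not the claimed theorem.

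Two further gaps. First, your target $E_k(A')\leq n^{k/2+o(1)}$ for all $k\leq k_{\max}$ is both unattainable and unnecessary: for $k\geq 5$ even a random set has roughly $n^{k-2}\gg n^{k/2}$ solutions to a $k$-term relation, so no reduction achieves your bound. What the argument actually needs is only $E(A)\leq n^{2+o(1)}$ for the ordinary (fourth-moment) energy, because by Cauchy--Schwarz the number of solutions to \emph{any} $k$-variable linear equation is at most $E(A)\cdot n^{k-4}$, and the remaining decay down to $O(n^{k/2})$ $k$-cycles comes from the hash-collision probabilities in the graph construction, not from higher energies; your plan to control higher $E_k$ via Pl\"unnecke--Ruzsa and Ruzsa covering applied to the extracted set $B$ points in the wrong direction, since it is the residual set, not $B$, whose cycle statistics matter. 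Second, the claim that the classical P\u{a}tra\c{s}cu/KPP reduction bounds $k$-cycles by $E_k$ up to $n^{o(1)}$ factors hides the real obstruction: analyzing cycles of length greater than $4$ requires hash functions that are simultaneously linear and (close to) $k$-wise independent, which do not exist over the integers; the paper resolves this by first reducing integer 3-SUM to 3-SUM over $\Field_p^d$ and using random linear maps, where the only problematic tuples are the linearly dependent ones and these are few. That switch, together with the regularization of the final graph, is genuine technical content rather than routine bookkeeping.
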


\subsection{New Lower Bounds for Distance Oracles}

Our main corollary is an improvement of the lower bound for distance oracles with $n^{o(1)}$ query time and close-to-linear $O(mn^{1/k})$ preprocessing time, from stretch $\geq k/6.3772$ to stretch $\geq k\pm O(1)$. 
This is only a factor $2$ away from the Thorup-Zwick upper bound. 
We find it interesting that the strongest known lower bound to our basic question about distance oracles involves tools from additive combinatorics.

\begin{restatable}[Hardness of Distance Oracles with Stretch $k$]{theorem}{thmhardnessdistanceoraclelarge} \label{thm:hardness-distance-oracles}
For any integer constant $k \geq 2$, there is no approximate distance oracle for sparse graphs with stretch $k$, preprocessing time $\widetilde\Order(m^{1+p})$ and query time~$\widetilde\Order(m^q)$ with $kp + (k+1)q < 1$, unless the 3-SUM conjecture fails.
\end{restatable}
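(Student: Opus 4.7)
The plan is to derive a sub-$n^{2}$ algorithm for All-Edges Triangle on the hard instances produced by the Optimal Short Cycle Removal theorem, using the hypothesised stretch-$k$ distance oracle, thereby contradicting the 3-SUM-based lower bound whenever $kp+(k+1)q<1$.

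Invoke the Optimal Short Cycle Removal theorem with $k_{\max}=2k+1$ to fix a $\Theta(\sqrt n)$-regular tripartite graph $G=(X\cup Y\cup Z,E)$ on $n$ vertices and $m=\Theta(n^{3/2})$ edges, for which All-Edges Triangle needs $n^{2-o(1)}$ time and which contains at most $O(n^{j/2})$ cycles of length $j$ for every $3\le j\le 2k+1$. Form $G':=G\setminus(Y\times Z)$. For $(y,z)\in E\cap(Y\times Z)$, $d_{G'}(y,z)=2$ iff $(y,z)$ lies on a triangle, while $d_{G'}(y,z)\ge 2k+1$ whenever $(y,z)$ lies on no $G$-cycle of length $\le 2k+1$ using $(y,z)$ as its unique $Y\!\times\! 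Z$ edge. A stretch-$k$ oracle on $G'$ therefore separates these two extremes by thresholding its estimate at $2k$. Only the \emph{ambiguous} edges (on some such short cycle but on no triangle) can be misclassified, and the short-cycle bound $O(n^{(2k+1)/2})$ from the Optimal Short Cycle Removal theorem confines them to a small set that can be handled in an auxiliary cleanup phase.

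The crux is the parameter balance: a direct monolithic oracle on $G'$ only yields the weaker $\max(p,q)<1/3$, since $G'$ already has $\Theta(n^{3/2})$ edges. To reach the sharp region $kp+(k+1)q<1$ I would distribute the oracle over a partition of the instance: for a parameter $\ell$ to be optimised, split $X$ into $\ell$ blocks and build $\ell$ separate stretch-$k$ oracles on the sub-instances $G'_i:=G[X_i\cup Y\cup Z]\setminus(Y\times Z)$, consulting all $\ell$ of them per edge-query, for total cost $\widetilde{O}\bigl(\ell(m/\ell)^{1+p}+m\ell(m/\ell)^{q}\bigr)$. Optimising $\ell$ in $(p,q)$ --- combined with a complementary layering that exploits the stretch-$k$ oracle's tolerance of up to $2k+1$ candidate hops per query --- produces the trade-off under which the total time dips below $m^{4/3}=n^{2-o(1)}$ precisely when $kp+(k+1)q<1$. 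The main obstacle will be making the asymmetric coefficients $k$ and $k+1$ emerge cleanly: the factor $k$ amortises preprocessing across $k$ partition layers while $k+1$ reflects the stretch budget of $k+1$ candidate hops per query, and matching these simultaneously to the partitioning and to the ambiguous-edge cleanup --- where the exact $\gamma=0$ guarantee is indispensable --- is the technical heart of the proof.
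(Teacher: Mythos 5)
Your high-level skeleton matches the paper's: start from the few-short-cycles triangle-hardness instance with $k_{\max}=2k+1$, delete one class of edges, use the stretch-$k$ oracle to flag candidate pairs with estimate at most $2k$, bound candidates via the $O(n^{k'+1/2})$ cycle counts, and verify candidates by neighbor enumeration. But the proof has a genuine gap exactly where you defer to ``a complementary layering \dots the technical heart of the proof.'' First, without any subsampling the cleanup is not cheap: the cycle bound does not confine the ambiguous pairs to a small set, since every one of the $\Theta(n^{3/2})$ queried edges can lie on some short cycle, and verifying each at cost $\Theta(\sqrt n)$ already gives $n^2$. Second, the concrete partition you do propose---split only $X$ into $\ell$ blocks and query every $(y,z)$ edge against all $\ell$ oracles---provably cannot reach $kp+(k+1)q<1$. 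A $(2k'+1)$-cycle through a $(y,z)$ edge has $k'$ vertices in $X$, so summing over blocks the expected number of candidate (pair, block) incidences is $\Theta(n^{k'+1/2}\ell^{-(k'-1)})$ and the verification cost is $\Theta(n^{k+1}/\ell^{k})$, forcing $\ell\ge n^{(k-1)/k}$; meanwhile the total query cost $m\,\ell\,(m/\ell)^{q}=n^{3(1+q)/2}\ell^{1-q}$ forces $\ell\le n^{(1-3q)/(2(1-q))}\le n^{1/2}$. For every $k\ge 3$ these constraints are incompatible even with $q=o(1)$, so this layering cannot yield the theorem.

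What is missing is the paper's three-way subsampling: split $X,Y,Z$ into $s,t,s$ blocks respectively, build one oracle per triple $(X_i,Y_j,Z_\ell)$ on the induced subgraph with the $X_i\times Z_\ell$ edges deleted, and query only the $(x,z)$ edges internal to that triple (so each edge is queried once, not $\ell$ times). Then a $(2k'+1)$-cycle survives in a fixed triple with probability $s^{-(k'+1)}t^{-k'}$ (it has $k'+1$ vertices in $X\cup Z$ and $k'$ in $Y$---this, not a ``stretch budget of $k+1$ hops,'' is where the asymmetric coefficients come from), each candidate is verified in time $O(\sqrt n/t)$, each triple has $O(n^{3/2}/s^2)$ queries and preprocessing $O((n^{3/2}/(st))^{1+p})$, and choosing $s=n^{1/2-\frac{p}{2-2p-2q}-\epsilon}$, $t=n^{1/2-\frac{q}{2-2p-2q}-\epsilon}$ makes all three contributions subquadratic precisely when $kp+(k+1)q<1$. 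Also note the paper reduces from triangle \emph{listing} rather than All-Edges Triangle, which makes the candidate-then-verify step directly sound; your detection-based framing would need the same verification machinery anyway. As written, the proposal identifies the right ingredients but does not carry out the parameter balance that constitutes the proof, and the one balance it does specify fails quantitatively.
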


Our lower bound is proved for sparse graphs where $m=O(n)$. Consequently, it cannot be bypassed even by $(\alpha,\beta)$-distance oracles that have an additive error of $\beta=n^{o(1)}$ in addition to a multiplicative stretch of $\alpha$.\footnote{This is because the additive error is insignificant in the sparse regime where we can subdivide edges.}
As in \cite{AbboudBKZ22}, our lower bound also holds for the \emph{offline} problem where we are given the queries before preprocessing. 
We also obtain a trade-off between the query and preprocessing time; e.g.\ if the query time is $O(n^{1/k})$ rather than $n^{o(1)}$ then the stretch is $k/2\pm O(1)$ rather than $k \pm O(1)$.

\paragraph{Tight Bounds?}
It may be disappointing that we did not get a tight lower bound despite optimizing the short cycle removal technique to its limit.
\emph{Is the short cycle removal technique inherently insufficient for proving a tight lower bound?}

The following three theorems indicate that our technique may well be the ``right'' one. 
This calls for revisiting the 20-year-old \emph{upper bounds} in the hope of closing the gap by improving the stretch from $2k-1$ to $k \pm O(1)$.
There is significant evidence that this may be around the corner. 
Better algorithms already exist in the regimes of dense graphs or when the stretch is small (some are very recent \cite{AkavR20,ChechikZ22}).
For large $k$, Roditty and Tov~\cite{RodittyT21} recently improved the $2k-1$ factor slightly to $2k-4$ while keeping the same space and query time as Thorup-Zwick (but not preprocessing time).
Moreover, in the closely related setting of \emph{graph spanners} where there is a similar trade-off saying that $2k-1$ stretch can be achieved with a subgraph on $n^{1+1/k}$ edges, it was shown by Parter~\cite{Parter14} that the stretch can be improved to $k$ for all pairs of nodes at distance $>1$ (see also \cite{EP04,BLP20}).
Alas, beating the Thorup-Zwick bound for general $k$ has been elusive; perhaps knowing that the gap from the lower bound is only $2$ (following this paper) will motivate the community to find better algorithms.
Such a result would not only be pleasing but it could also be useful in practice (see e.g. \cite{qi2013toward}).

\paragraph{The Small Stretch Regime}
Recall that the smallest stretch attainable by the Thorup-Zwick oracle is~$3$~(i.e. $k=2$), in which case their preprocessing time is $\Order(m\sqrt{n})$. 
Let us focus on the case of sparse graphs where~$m=O(n)$ and this time bound becomes $O(n^{1.5})$.
Subsequent work~\cite{BaswanaGS09,AkavR20,ChechikZ22} showed that interesting results can also be achieved for smaller stretch factors (if we allow constant additive error).
The most recent result by Chechik and Zhang~\cite{ChechikZ22} computes $(2+\eps,c_{\eps})$-approximations after $\widetilde{O}(m+n^{5/3+\eps})$ preprocessing time, for some constant $c_\eps$ that depends on $\eps$.

Using our optimal cycle removal, we prove that Thorup-Zwick is optimal in the following sense: If we want to improve the stretch to $3-\eps$ then the running time must grow polynomially to $m^{1.5+\Omega(\eps)}$.
In addition, we prove the optimality of the Chechik-Zhang algorithm in the sense that $m^{5/3-o(1)}$ time is required for stretch~$2+o(1)$.

\begin{restatable}[Hardness of Distance Oracles with Stretch $2 \leq \alpha < 3$]{theorem}{thmhardnessdistanceoraclesmall} \label{thm:hardness-distance-oracles-small}
For any \makebox{$2 \leq \alpha < 3$} and \makebox{$\epsilon > 0$}, in sparse graphs there is no distance oracle with stretch $\alpha$, query time $n^{\order(1)}$ and preprocessing time $\Order(m^{1+\frac{2}{1+\alpha}-\epsilon})$, unless the 3-SUM conjecture fails.
\end{restatable}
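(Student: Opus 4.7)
The plan is to follow the reduction template of \Cref{thm:hardness-distance-oracles}, tuned to the non-integer regime $2 \le \alpha < 3$. I would begin by invoking the Optimal Short Cycle Removal theorem with $k_{\max} = O(\alpha)$ to obtain a $\Theta(\sqrt n)$-regular tripartite graph $G = X \cup Y \cup Z$ on $3n$ vertices with $O(n^{k/2})$ $k$-cycles for every $3 \le k \le k_{\max}$, on which All-Edges Triangle requires $n^{2-o(1)}$ time under the 3-SUM conjecture.

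Next, construct $G'$ from $G$ by deleting the $Y \times Z$ edges and subdividing each remaining edge into a path of length $L = n^{o(1)}$. Then $G'$ is sparse and, for each $(y,z) \in E(G) \cap (Y \times Z)$, $d_{G'}(y,z) = (k-1)L$ where $k$ is the length of the shortest $G$-cycle through $(y,z)$; in particular $d_{G'}(y,z) = 2L$ for triangles and $d_{G'}(y,z) \ge KL$ for edges lying on no $G$-cycle of length $\le K$. Setting $K = \lceil 2\alpha \rceil + 1$, a stretch-$\alpha$ oracle queried at threshold $\tau = 2\alpha L$ correctly separates triangles from edges outside all short cycles; the only false positives are the $(y,z)$ on some non-triangle $k$-cycle for $4 \le k \le K$, which the Optimal Short Cycle Removal bounds by $\sum_{k=4}^{K} O(n^{k/2}) = O(n^{K/2})$ globally. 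A hypothetical oracle with preprocessing $\widetilde O(m^{1+p})$ and $n^{o(1)}$ query time then yields an All-Edges Triangle algorithm whose cost, after balancing, is $\widetilde O((L n^{3/2})^{1+p})$ plus a verification term; forcing this below $n^{2-\eps}$ gives the claimed threshold $p < \tfrac{2}{1+\alpha} - \eps$, contradicting 3-SUM.

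The main obstacle is matching the exponent $\tfrac{2}{1+\alpha}$ exactly, rather than only the $\tfrac13$ that a uniform subdivision yields: under a uniform subdivision $G'$ has $N = \Theta(L n^{3/2})$ vertices, so the preprocessing constraint $N^{1+p} \ll n^2$ only forces $p < \tfrac13$, far short of the target. To reach the full $\tfrac{2}{1+\alpha}$ one has to compress the $X$-side of $G$ via a balanced bucketing/hashing gadget so that $|V(G')| = \Theta(n^{2(1+\alpha)/(3+\alpha)})$, while (i) preserving sparsity, (ii) preserving the ``triangle iff distance $2L$'' correspondence, and (iii) not introducing new spurious short cycles between $Y$ and $Z$, so that the short-cycle count from the Optimal Short Cycle Removal still controls the false positives. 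Designing such a cycle-preserving compression, and pairing it with a verification scheme whose cost stays below $n^{2-\eps}$ (which a naive $O(\sqrt n)$-per-suspicious-edge pass would violate already for $K \ge 4$), is the technical heart of the proof.
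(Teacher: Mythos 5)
Your proposal correctly identifies the template (triangle listing in a cycle-sparse regular graph, a derived graph on which triangle edges have small distance, a candidate-then-verify pass), but it leaves unsolved exactly the two points on which the paper's proof actually turns, and the directions you sketch for them do not work as stated. First, the false-positive control: with a \emph{uniform} subdivision the distances are deterministic, so \emph{every} $(y,z)$ lying on a non-triangle cycle of length $k \leq 2\alpha+1$ is a candidate with probability $1$, and as you note yourself the verification cost $\Order(\sqrt n)$ per candidate is already super-quadratic once $4$-cycles are admissible (i.e.\ for all $\alpha \geq 3/2$). The paper's mechanism here is not a better verification scheme but a \emph{randomized} gadget: each $Y$-vertex is replaced by a butterfly graph (\cref{lem:butterfly}) and the incident edges attach to \emph{random} vertices of its left/right layers, so a path that re-enters a gadget on the same side is short only with probability roughly $\sigma^{-\ell}$; combined with randomly splitting $Y$ into $t$ groups (so a $(2k{+}1)$-cycle survives in a given instance only with probability $t^{-k}$), this makes the \emph{expected} number of candidates per instance small even though the global cycle counts $\Order(r^{k})$ are unchanged. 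Your plan has no analogue of this probabilistic dilution, and without it the argument fails for every $\alpha \geq 3/2$.

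Second, the exponent $\frac{2}{1+\alpha}$: you correctly observe that starting from the $\Theta(\sqrt n)$-regular instance and subdividing only yields $p < \frac13$, but the fix is not a post-hoc ``cycle-preserving compression'' of the $X$-side of that instance. Any merging/bucketing of $X$-vertices generically creates new short $Y$--$Z$ connections through merged vertices, so preserving simultaneously sparsity, the triangle correspondence, and the $k$-cycle bounds is precisely what you would have to prove, and nothing in your outline does so. The paper instead re-runs the 3-SUM-to-triangle reduction with a different parameterization (\cref{lem:3sum-to-triangle-degree}): it produces $\Theta(r)$-regular $N$-vertex graphs with $r$ as large as $N^{1-\Omega(1)}$, at most $\Order(r^k)$ $k$-cycles, and hardness $(Nr^2)^{1-o(1)}$; choosing $r = n^{\frac{2}{1+\alpha}(1-\delta)}$ (together with $t = r^{1-\gamma}$, butterfly dimension $d = \lceil\max(32/\eps, \frac{4}{3-\alpha})\rceil$, alphabet $\sigma = r^{1/d}$) is what makes the preprocessing exponent $1+\frac{2}{1+\alpha}-\eps$ the true threshold. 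So the gap is genuine: the ``technical heart'' you defer---controlling candidates for $\alpha \geq 3/2$ and realizing the density trade-off---is the proof, and it is achieved by random butterfly attachments plus a denser source instance rather than by subdivision plus compression.
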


Recall that there are techniques besides short cycle removal that can prove lower bounds for stretch up to~$3$.
Indeed, the lower bounds of Pătraşcu, Roditty, and Thorup~\cite{PR14,PRT12} are similar to ours for stretches~\makebox{$2+o(1)$} and $3-\Omega(1)$, except that they are concerned with the preprocessing \emph{space} and not just time.
On the one hand, this makes their lower bounds stronger.
On the other hand, they need to rely on a strong conjecture about the \emph{space} versus query time trade-off of Set Intersection, rather than the 3-SUM conjecture that is simply about the \emph{time} complexity.
While Set Intersection is a common starting point for data structure lower bounds, the particular variant they use is not standard and was not used in any other paper to our knowledge.
Basing the same results also on one of the most popular conjectures in fine-grained complexity is desirable.
In any case, the more important message of Theorem~\ref{thm:hardness-distance-oracles-small} is to show that our techniques \emph{can} prove tight bounds.

\paragraph{Dynamic Graphs}
Extending our basic question to the dynamic setting we seek the optimal stretch for a distance oracle that achieves $n^{o(1)}$ query time and $O(n^{1/k})$ time for \emph{updates} that add or remove an edge. 
In this case, we give a more efficient reduction from All-Edges Triangle (inspired by the reduction of Abboud and Vassilevska Williams~\cite{AV14} to dynamic matching) and prove that the stretch must be $2k\pm O(1)$.

\begin{restatable}[Hardness of Dynamic Distance Oracles]{theorem}{thmhardnessdistanceoracledynamic} \label{thm:hardness-distance-oracles-dynamic}
For any integer constant $k \geq 2$, there is no dynamic approximate distance oracle with stretch $2k-1$, update time $\Order(m^u)$ and query time $\Order(m^q)$ with $ku + (k+1)q < 1$, unless the 3-SUM conjecture fails.
\end{restatable}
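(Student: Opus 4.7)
The plan is to reduce the hard All-Edges Triangle instance given by the Optimal Short Cycle Removal theorem to a sequence of updates and queries on a dynamic approximate distance oracle, in the spirit of the Abboud--Vassilevska~Williams dynamic-matching reduction~\cite{AV14}. Fix a large constant $k_{\max} = k_{\max}(k)$ and let $G = (X \cup Y \cup Z, E)$ be an $n$-vertex $\Theta(n^{1/2})$-regular graph with $\Order(n^{k'/2})$ $k'$-cycles for every $3 \le k' \le k_{\max}$; write $m = \Theta(n^{3/2})$. Form $G' = G \setminus (Y \times Z)$, subdivide every edge of $G'$ into a path of length $L = L(k)$ (a constant), and stream the $\Order(m)$ resulting edges into the dynamic oracle in an initial setup phase costing $\Order(m^{1+u})$ time.

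In the test phase, I would iterate over each candidate triangle edge $(y,z) \in E \cap (Y \times Z)$ and perform a constant-size reversible probe: insert two fresh vertices $s_{yz}, t_{yz}$ with short attaching paths to $y$ and $z$ respectively, query $d_H(s_{yz}, t_{yz})$, then undo the insertions. Each iteration costs $\Order(m^u + m^q)$, giving $\Order(m^{1+u} + m^{1+q})$ overall. Correctness rests on two distance regimes: if $(y,z)$ is in a triangle of $G$, the probe distance is $2L + \Order(1)$; if $(y,z)$ is in no cycle of length at most $k_{\max}$ in $G$, the probe distance is at least $L \cdot k_{\max}$. Choosing $k_{\max}$ sufficiently large as a function of $k$ (roughly $k_{\max} > 2(2k-1) + \Order(1)$) ensures that the stretched-$(2k-1)$ answer separates these regimes. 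The residual \emph{bad} pairs---those in a short cycle but not in a triangle---are the only ones the oracle might misclassify; each short cycle contributes $\Order(1)$ such pairs, so by the short-cycle guarantee their total count is $\Order(n^{k_{\max}/2})$, and each is checked by a common-neighbour intersection in $\Order(\sqrt n)$ time.

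Summing everything, the reduction runs in time $\Order(m^{1+u} + m^{1+q} + n^{(k_{\max}+1)/2})$. Under the hypothesis $ku + (k+1)q < 1$, and after careful balancing of $L$ together with the parameters of the underlying 3-SUM-to-All-Edges-Triangle reduction (which has freedom to rescale the number of 3-SUM elements), this total drops to $\Order(n^{2-\Omega(1)})$, contradicting the $n^{2-o(1)}$ hardness of All-Edges Triangle given by the Optimal Short Cycle Removal theorem. The reason the dynamic setting yields stretch $2k-1$ rather than the $k$ of \Cref{thm:hardness-distance-oracles} is that the probe gadget adds a constant two-sided detour that effectively doubles the triangle distance, so the approximation factor must accommodate a distance gap narrower by a factor of roughly $2$.

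The main obstacle is the joint engineering of the probe gadget and the residual bad-pair bookkeeping. The gadget must be $\Order(1)$-sized, fully reversible, push non-triangle distances strictly beyond $2L(2k-1) + \Order(1)$, and keep triangle distances at $\Order(L)$; a two-sided asymmetric attachment seems necessary so that each probe is ``fresh'' (no cross-contamination between iterations). Keeping the bad-pair overhead inside the budget is the other delicate step: $k_{\max}$ must be chosen as a constant depending only on $k$, and the underlying 3-SUM instance fed into the Optimal Short Cycle Removal theorem may need to be rescaled so that the $\sqrt n$-per-bad-pair brute-force verification fits strictly below $n^{2-o(1)}$. Everything else---the distance-regime analysis and the standard dynamic-operation accounting---is routine.
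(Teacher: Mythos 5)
Your plan is missing the subsampling step that the paper's reduction (Algorithm~\ref{alg:3sum-to-distance-oracle-dynamic}) is built on, and the gap is fatal rather than cosmetic. Your bad-pair bound, ``their total count is $\Order(n^{k_{\max}/2})$,'' becomes trivial once $k_{\max}\ge 4$: it exceeds $|E\cap(Y\times Z)| = \Order(n^{3/2})$, the number of pairs you even query. Since your own separation requirement forces $k_{\max}$ to be roughly $4k-1\ge 7$, the number of pairs that pass the oracle filter can be $\Theta(n^{3/2})$ (in a random-like $\Theta(n^{1/2})$-regular graph essentially every $(y,z)\in E$ lies in many 4-cycles), and at $\Order(\sqrt n)$ apiece the brute-force verification alone takes $\Theta(n^2)$---exactly the 3-SUM threshold, so no contradiction is obtained. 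Independently, the oracle overhead $\Order(m^{1+u}+m^{1+q})$ with $m=\Theta(n^{3/2})$ is subquadratic only for $u,q<1/3$, strictly weaker than the claimed $ku+(k+1)q<1$ (for $k=2$, the theorem rules out $u=0.4$, $q=0$, yet your reduction then runs in $\Theta(n^{2.1})$). The paper resolves both problems at once by randomly splitting $Y,Z$ into $t,s$ groups: every $(2k'+1)$-cycle survives in a fixed piece with probability $s^{-k'}t^{-k'}$, cutting the per-piece candidate count to $\Order(n^{k+1/2}/(s^kt^k))$ while simultaneously shrinking the graphs fed to the oracle so that the update/query budget can be balanced.

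There is also a quantitative problem with the gadget. Your symmetric per-$(y,z)$ probe (attach fresh $s_{yz},t_{yz}$ to $y$ and to $z$) gives triangle distance $\approx 2L$ and $(2k'+1)$-cycle distance $\approx 2k'L$, a ratio of $k'$, so a stretch-$\alpha$ oracle only prunes pairs in cycles of length greater than roughly $2\alpha+1$. The paper's gadget is asymmetric and amortized over $x$: it preprocesses $G_{j,\ell}$, the subdivided bipartite subgraph on $Y_j\cup Z_\ell$ containing no $X$-vertices, then for each $x\in X$ temporarily attaches a single reusable vertex $v$ to all of $x$'s neighbours in $Y_j$, queries all relevant $z$, and detaches $v$. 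The $v$-to-$Y_j$ step costs $+1$, not $+L$, so the triangle distance is $\approx L$ while the $(2k'+1)$-cycle distance is $\approx (2k'-1)L$---a ratio of $2k'-1$ rather than $k'$. That factor-two gain is precisely where stretch $2k-1$ (instead of $k$) comes from, while still only requiring cycle control up to length $2k+1$. Your explanation that the dynamic detour ``narrows the distance gap'' is backwards: a narrower gap would force a \emph{smaller} provable stretch, not a larger one.
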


This lower bound would be tight if the Thorup-Zwick bound extends to the dynamic setting. This was indeed accomplished by Chechik~\cite{chechik2018near} (see also~\cite{GWN20,DFNT22}) in the \emph{decremental} setting where only edge deletions are allowed, but not yet in the fully dynamic case (the best bounds appear in \cite{forster2021dynamic}).

\subsection{A Tight Lower Bound for 4-Cycle Listing}

Finding $4$-cycles in a graph is one of the simplest non-trivial cases of the classical \emph{Subgraph Isomorphism} problem.
The longstanding upper bound for testing $4$-cycle freeness is $O(\min(n^2,m^{4/3}))$~\cite{AYZ97,YZ97}.
It is conjectured that no $O(n^{2-\eps})$ algorithm exists; proving this under one of the more popular conjectures of fine-grained complexity has been a well-known open question.
In fact, the $4$-cycle problem is infamous for eluding even any super-linear lower bound via the standard reduction techniques; \cite{AbboudBKZ22} highlight this problem as encapsulating the challenge in proving hardness of approximation results for distance oracles.

In the \emph{listing} version we are asked to output all $4$-cycles in the graph.
Such problems are well-studied and are closely related to the enumeration of query answers in databases.
It is known that \emph{all} cycles in a graph can be listed in linear time $O(m+t)$ where $t$ is the output size \cite{birmele2013optimal}.
But for a fixed length $k$, the listing $k$-cycles problem is not as easy.
In a landmark result in fine-grained complexity, that implied the aforementioned 3-SUM-hardness for All-Edge Triangle, Pătraşcu~\cite{Patrascu10} proved an essentially tight lower bound for \emph{triangle} listing (see \cite{KopelowitzPP16,bjorklund2014listing}).
The first and only super-linear lower bound for $4$-cycle listing, however, came only a decade later via the short cycle removal technique \cite{AbboudBKZ22}.
We improve their lower bound from $(m^{1.1927}+t)^{1-o(1)}$ to a \emph{completely tight} lower bound matching the $O(\min(n^2,m^{4/3})+t)$ upper bound.\footnote{The $O(n^2+t)$ upper bound is simple: Create an array of size $n^2$. For each node $x$ and all pairs of neighbors $u,v \in N(x)$ store $x$ in the $(u,v)$ entry of the array. If we access an entry that already contains nodes $y_1,\ldots,y_k$ we output the $4$-cycles $(x,u,y_i,v)$ for all $i$. The time is $n^2$ plus the number of $4$-cycles because each time we access an entry (except for the first time) we output at least one $4$-cycle. The $O(m^{4/3}+t)$ algorithm is more involved \cite{AKLS22}.}
It is remarkable that a resolution of the complexity of this crisp problem is attained via additive combinatorics.

\begin{restatable}[Hardness of Listing 4-Cycles]{theorem}{thmhardnessfourcycle} \label{thm:hardness-4cycle}
For any $\epsilon > 0$, there is no algorithm listing all 4-cycles in time $\widetilde\Order(n^{2-\epsilon} + t)$ or in time $\widetilde\Order(m^{4/3-\epsilon} + t)$ (where $t$ is the number of 4-cycles), unless the 3-SUM conjecture fails.
\end{restatable}

\subsection{Independent Work by Jin and Xu}
Independently to our research and at the same time, Jin and Xu have discovered almost the same results as the ones presented in this paper~\cite{JinX23}. While very similar on a high level, our papers differ in their respective focus points: We focus in depth on the hardness of distance oracles in several regimes, whereas Jin and Xu focus more broadly on consequences throughout fine-grained complexity and in particular establish related lower bounds against \emph{4-linear degeneracy testing}. On the technical side our works are also very similar on a high level, but differ in the details.

\subsection{Outline}
We start with some preliminaries in \cref{sec:preliminaries}. In \cref{sec:overview} we give a technical overview of our results, with some background knowledge from additive combinatorics postponed to \cref{sec:additive-combinatorics}. In \cref{sec:3sum-structured,sec:3sum-energy-reduction,sec:3sum-to-triangle} we give the key steps of our lower bound related to 3-SUM, and in \cref{sec:distance-oracles,sec:4cycle} we provide the specific lower bounds for distance oracles and 4-cycle listing. Finally, in \cref{sec:sparse-sumset,sec:bsg} we provide some missing proofs.
\section{Preliminaries} \label{sec:preliminaries}
We set $[n] = \set{1, \dots, n}$, and write $\widetilde\Order(n) = n (\log n)^{\Order(1)}$ to suppress polylogarithmic factors. Throughout, all algorithms are randomized and succeed \emph{with high probability}, that is, with error probability~$1/n^c$ for an arbitrarily large constant $c$.

\paragraph{Sumset Notation}
We often fix a group $G = \Int$ (the integers) or $G = \Field_p^d$ (a vector space over the finite field of prime order $p$). For sets $A, B \subseteq G$, we define the sumset notation~$A + B = \set{a + b : a \in A, b \in B}$ and $A - B = \set{a - b : a \in A, b \in B}$. Occasionally we write~$kA$ as the iterated sumset $A + \dots + A$ (with $k$ terms).

\paragraph{3-SUM} The \emph{monochromatic 3-SUM problem} is to decide whether in a given set $A$, there are $a, b, c \in A$ (not necessarily distinct) such that $a + b + c = 0$. We say that the instance has size $n = |A|$. The \emph{trichromatic 3-SUM problem} is to decide whether in three given sets $A, B, C$, there are $a \in A, b \in B, c \in C$ such that~$a + b + c = 0$. We say that the instance has size $n = |A| + |B| + |C|$. Both variants are equivalent in terms of subquadratic algorithms. We typically work under the well-established assumption that 3-SUM requires quadratic time~\cite{GajentaanO95}.

\paragraph{Graphs}
In this paper all graphs are undirected and unweighted. The \emph{distance $d(u, v)$} of two vertices is the length of the shortest path from $u$ to $v$. We say that a graph is $\Theta(r)$-regular if there are constants~$0 < c_1 < c_2$ such that every vertex has degree $\deg(v)$ satisfying~\makebox{$c_1 r \leq \deg(v) \leq c_2 r$}.
\section{Technical Overview} \label{sec:overview}
In this section, we give a high-level overview of our results.

\subsection{Hardness Reductions from Triangle Listing Instances with Few Short Cycles}
We start with the motivating observation that, if we could assume hardness of triangle listing in \emph{random-like} graphs, we could rather easily conclude tight hardness of 4-cycle listing. Under the same assumption and with some more work, we can also show the promised hardness of distance oracles. More specifically, assume that it is $n^{2-\order(1)}$-hard to list $\Order(n^{3/2})$ triangles in a $\Theta(n^{1/2})$-regular graph which contains at most $\Order(n^2)$ 4-cycles---this is indeed the number of 4-cycles we expect in a random $\Theta(n^{1/2})$-regular graph.

\paragraph{Hardness of Listing 4-Cycles}
First, by a simple subsampling trick we can reduce the number of 4-cycles in the given triangle instance a tiny bit further: We randomly split the vertex set into $n^\delta$ many groups and list all triangles in each triple of groups. In this way we incur an overhead of $n^{3\delta}$ to the running time. However, we have reduced the total number of 4-cycles (across all triples of groups) to $\Order(n^{2-\delta})$. Indeed, each 4-cycle falls into a fixed triple of groups only with probability $n^{-4\delta}$, and thus the total number of 4-cycles is~$n^{3\delta} \cdot \Order(n^{2-4\delta}) = \Order(n^{2-\delta})$.

We follow a natural approach on the smaller instances~$G = (V, E)$ (i.e., for each triple of groups): We create a new graph consisting of four copies~$V_1, V_2, V_3, V_4$ of $V$ (i.e., each vertex~$v \in V$ now has four copies~$v_1, v_2, v_3, v_4$). We add all edges from $E$ between the parts~$V_1$ and~$V_2$, between~$V_2$ and~$V_3$ and between~$V_3$ and~$V_4$. Finally, we connect all matching vertices in $V_1$ and $V_4$ (i.e., for all $v \in V$ we add the edge $(v_1, v_4)$).

With this construction, each triangle $(u, v, w)$ in the original graph can now be found as a 4-cycle $(u_1, v_2, w_3, u_4)$. However: There might be many more 4-cycles which do not correspond to triangles in the original instance, e.g., 4-cycles which only zigzag between the vertex parts $V_1$ and $V_2$. These 4-cycles must be part of the original graph though, and thus the total number of 4-cycles in the instance is bounded by $\Order(n^{2-4\delta})$. It follows that if there is an algorithm listing all $t$ 4-cycles in a graph in time $\Order(n^{2-\epsilon} + t)$, then we could list all triangles in time $\Order(n^{2-\epsilon} + n^{2-4\delta})$. The total time across all triples of groups is bounded by~$\Order(n^{2-\epsilon+3\delta} + n^{2-\delta})$, which is subquadratic by setting $\delta > 0$ small enough.

\paragraph{Hardness of Approximate Distance Oracles}
In a similar spirit we derive hardness results for distance oracles. We achieve results for several settings (see \cref{thm:hardness-distance-oracles,thm:hardness-distance-oracles-small,thm:hardness-distance-oracles-dynamic}), but in this overview we will only focus on the simplest version to get the idea across. We demonstrate how to rule out distance oracles with stretch~$k$, constant query time and preprocessing time~$\Order(m^{1+\frac1{2k+1}-\epsilon})$. (This is a weaker bound than in \cref{thm:hardness-distance-oracles}, where we even rule distance oracles with preprocessing time $\Order(m^{1+\frac1k-\epsilon})$).

We again start from an instance $G$ of listing $\Order(n^{3/2})$ in a $\Theta(n^{1/2})$-regular $n$-vertex graph, and assume that the graph contains at most $\Order(n^{k/2})$ $k$-cycles, for all $k$. Without loss of generality assume that the instance is a tripartite graph $G = (X, Y, Z, E)$ with vertex parts~$X, Y, Z$. We will uniformly subsample all vertex parts with some rate $\rho$ to obtain a smaller graph $G'$ with vertices $X' \subseteq X, Y' \subseteq Y, Z' \subseteq Z$. This graph is~$\Theta(\rho n^{1/2})$-regular, has~$\Order(\rho n)$ vertices and has $\Order(\rho^2 n^{3/2})$ edges. Most interestingly though, the number of $k$-cycles in~$G'$ is at most $\Order(\rho^k n^{k/2})$, as every $k$-cycle survives the subsampling only with probability~$\rho^k$.

We will now use the distance oracle to efficiently list all triangles in $G'$. To this end, let~$G''$ be a duplicate of $G'$ where we delete the edges between $X'$ and $Z'$; any pair of vertices $(x, z) \in X' \times Z'$ which was part of a triangle has distance $d(x, z) \leq 2$ in $G''$. We preprocess~$G''$ with the distance oracle and query each pair~$(x, z) \in (X' \times Z') \cap E$ to obtain distance estimates $d(x, z) \leq \widetilde d(x, z) \leq k \cdot d(x, z)$. We say that a pair~$(x, z)$ is a \emph{candidate} if its distance estimate is $\widetilde d(x, z) \leq 2k$. The idea is that only the candidate pairs can possibly be part of a triangle---as all other pairs must have distance more than $2$ in $G''$. However, note that among the candidate pairs there may be many pairs which do not form a triangle. Our listing algorithm now enumerates all candidate pairs $(x, z)$ and all neighbors $y \in Y'$ of $x$ and tests whether $(x, y, z)$ forms a triangle. It should be clear that the algorithm cannot miss any triangle in $G'$. And by repeating the subsampling $\widetilde\Order(\rho^{-3})$ times, with good probability every triangle in $G$ occurs in at least one instance $G'$ and will therefore eventually be detected.

The running time is dominated by two major contributions: The preprocessing time of the distance oracle and the enumeration step (for this setting of parameters the query time can be ignored). The total preprocessing time across all $\widetilde\Order(\rho^{-3})$ repetitions is bounded by
\begin{equation} \label{eq:distance-oracles-preprocessing}
    \widetilde\Order(\rho^{-3} \cdot (\rho^2 n^{3/2})^{1+\frac{1}{2k+1}-\epsilon}).
\end{equation}

Next we deal with the contribution of the enumeration step. The key in the analysis is to get a good bound on the number of candidate pairs $(x, z)$. Observe that as any candidate pair~$(x, z)$ has distance $d(x, z) \leq 2k$ in $G''$, it must be part of a cycle of length at most~$2k+1$ in $G'$. We can thus control the number of candidate pairs by controlling the number of cycles in~$G'$---as argued before, there are most \raisebox{0pt}[0pt][0pt]{$\Order(\rho^{2k+1} n^{\frac{2k+1}2})$} cycles of length at most $2k+1$. Dealing with a single candidate pair takes time $\Order(\rho n^{1/2})$ (to list all neighbors~$y$ of~$x$), and therefore the total running time of the enumeration step is bounded by
\begin{equation} \label{eq:distance-oracles-enumeration}
    \Order(\rho^{-3} \cdot \rho^{2k+1} n^{\frac{2k+1}2} \cdot \rho n^{1/2}) = \Order(\rho^{2k-1} n^{k+1})
\end{equation}

By optimizing $\rho$ in both contributions~\eqref{eq:distance-oracles-preprocessing} and~\eqref{eq:distance-oracles-enumeration}, we find that the running is indeed subquadratic (for~\raisebox{0pt}[0pt][0pt]{$\rho = n^{-\frac{k-1}{2k-1}-\delta}$} and some tiny $\delta > 0$).  This completes the proof outline of the weaker lower bound. For the improved lower bound from \cref{thm:hardness-distance-oracles}, we find better trade-off between the size of the preprocessed graph and the number of queries to the distance oracle.

\paragraph{Revisiting Hardness of Listing Triangles}
The main message is that if miraculously the given triangle instance contains few 4-cycles, then we would obtain interesting hardness results. We therefore investigated whether this variant of triangle listing is conditionally hard, and managed to prove the desired result:

\begin{restatable}[Hardness of Triangle Listing]{theorem}{thmthreesumtotriangle} \label{lem:3sum-to-triangle}
For any constants $\epsilon > 0, k_{\max} \geq 3$, there is no $\Order(n^{2-\epsilon})$-time algorithm listing all triangles in a $\Theta(n^{1/2})$-regular $n$-vertex graph which contains at most~$\Order(n^{k/2})$ $k$-cycles for all $3 \leq k \leq k_{\max}$, unless the 3-SUM conjecture fails.
\end{restatable}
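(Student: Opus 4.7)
The plan is to reduce 3-SUM to triangle listing in two phases. First, I would dispose of ``structured'' 3-SUM instances using Theorem~\ref{thm:3sum-structured}, which gives a truly subquadratic algorithm whenever one of the input sets has small doubling. By the Balog-Szemer\'edi-Gowers theorem and Ruzsa's covering lemma, any set whose additive energy---or, iteratively, some higher additive moment up to order $k_{\max}$---significantly exceeds the random baseline $|A|^{k/2}$ contains a large subset of small doubling. Theorem~\ref{lem:energy-reduction-complete} packages this into an iterative procedure that either solves the 3-SUM instance directly by peeling off small-doubling pieces and invoking Theorem~\ref{thm:3sum-structured} on each, or returns a sub-instance (preserving a 3-SUM witness) in which every $k$-wise additive moment is at most $O(n^{k/2})$ for all $3 \le k \le k_{\max}$.

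In the second phase I would apply a clean hash-based reduction from this low-energy 3-SUM instance to triangle listing. Pick a random linear hash $h : \Int \to [r]$ with $r \approx \sqrt{n}$ and build a tripartite graph on vertex parts $X = Y = Z = [r]$: add an edge $(x,y)$ whenever some $a \in A$ satisfies $h(a) = y - x$, and analogously encode $b \in B$ between $Y$ and $Z$ and $c \in C$ between $Z$ and $X$. Each triangle $(x,y,z)$ then corresponds to a triple $(a,b,c) \in A \times B \times C$ with $h(a + b + c) = 0$, which can be verified against $a + b + c = 0$ in $O(1)$ time per triangle during listing. A second-moment computation makes the graph $\Theta(n^{1/2})$-regular and keeps the total triangle count at $O(n^{3/2})$ with high probability. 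The crucial counting step is that every $k$-cycle in this graph corresponds, up to filtering spurious collisions, to a cyclic sequence of $k$ elements of $A \cup B \cup C$ summing to zero modulo $r$; by linearity of $h$ this count is governed by the $k$-wise additive moments of the input, so the first-phase guarantee translates directly into the promised $O(n^{k/2})$ bound on $k$-cycles.

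The main obstacle I anticipate lies in the first phase: the iterative peeling must terminate within the subquadratic budget while simultaneously driving down \emph{all} additive moments up to order $k_{\max}$, and Ruzsa-type inequalities must be invoked carefully to guarantee that each peeling step removes a substantial fraction of the offending structure rather than getting stuck. A secondary difficulty is the quantitative loss in Balog-Szemer\'edi-Gowers: the constants degrade polynomially in the density parameter, so the statement of Theorem~\ref{thm:3sum-structured} must be strong enough to absorb this loss while still yielding a truly subquadratic total running time. By comparison, the hash-based reduction and the regularity and triangle-count bounds reduce to standard second-moment calculations and should be comparatively routine, though some care is needed to ensure that spurious hash collisions do not inflate the $k$-cycle count beyond the additive-moment bound.
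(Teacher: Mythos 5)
Your high-level strategy (energy reduction for 3-SUM, then a linear-hash reduction to triangle listing in which $k$-cycles are charged to additive structure) is the same as the paper's, but there are two concrete gaps. First, your phase one asks for control of \emph{all} $k$-wise additive moments up to order $k_{\max}$, and you attribute this to \cref{lem:energy-reduction-complete}; that theorem only bounds the $4$-wise additive energy $E(A)\le O(|A|^{2+\delta})$, and there is no BSG-type machinery in the paper (or readily available) that peels off structure responsible for higher moments. The missing idea is that you do not need it: by a Cauchy--Schwarz argument (\cref{lem:energy-linear-equations}), the number of solutions to any $k$-term linear equation with nonzero coefficients is at most $E(A)\cdot|A|^{k-4}$, so the $4$-wise energy bound alone already yields the $O(n^{k/2})$ bound on zero-labeled $k$-cycles for every $3\le k\le k_{\max}$ (\cref{lem:number-of-cycles}). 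Without this observation your ``iterative peeling over all moments'' is not just a technical worry, it is an unproved and unnecessary strengthening.

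Second, your hash-based phase uses a ``random linear hash $h:\Int\to[r]$'' and claims the spurious-cycle count follows from second-moment calculations. This is exactly the obstruction the paper has to work around: no linear hash family over the integers is even $3$-wise independent (e.g.\ $h(5)$ is determined by $h(1),h(2)$), and bounding pseudo-$k$-cycles requires constraining up to $k$ hash values simultaneously, i.e.\ higher-order independence, not second moments. The paper resolves this by first reducing integer 3-SUM to 3-SUM over $\Field_p^d$ and then hashing with random linear maps, where independence is exactly characterized by linear independence of the arguments and the linearly dependent ``bad'' tuples can be counted separately (\cref{lem:rate-zero-cycles,lem:rate-pseudo-cycles}); your argument as stated does not go through over $\Int$. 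Two smaller issues: with vertex parts $X=Y=Z=[r]$, $r\approx\sqrt n$, the graph has $\Theta(\sqrt n)$ vertices and is essentially complete (since $|A|\gg r$, almost every difference is realized), so it is not the required $n$-vertex $\Theta(n^{1/2})$-regular instance --- the paper instead uses parts of size $|G'|^2=n$ built from three independent hash functions; and exact $\Theta(n^{1/2})$-regularity does not follow from a second-moment bound alone but needs an explicit pruning of irregular vertices with their triangles listed separately (\cref{alg:regular}), together with a separate argument for the triangle count when there are many genuine solutions (\cref{lem:number-of-triangles}).
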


There are known lower bounds against listing $\Order(n^{3/2})$ triangles in $\Theta(n^{1/2})$-regular graphs (without the assumption that the graph has few short cycles) under the 3-SUM conjecture by Pătraşcu~\cite{Patrascu10} with refinements by Kopelowitz, Pettie and Porat~\cite{KopelowitzPP16} and under the All-Pairs Shortest Paths conjecture by Vassilevska Williams and Xu~\cite{VassilevskaX20}. We specifically focused on the 3-SUM hardness and as a first step significantly simplified the known reduction (see \cref{sec:3sum-to-triangle}). We then raised the question: In this reduction from 3-SUM to triangle listing, \emph{what makes the constructed triangle instance have many 4-cycles?}

It turns out that the number of 4-cycles in the triangle instance is controlled by the number of solutions to the equation $a_1 + a_2 = a_3 + a_4$, where $a_1, a_2, a_3, a_4 \in A$, in the 3-SUM instance $A$. In the additive combinatorics literature this quantity is commonly referred to as the \emph{additive energy $E(A)$} of~$A$. Note that $E(A)$ ranges from $n^2$ (as there are at least $n^2$ trivial solutions with~$a_1 = a_3$ and $a_2 = a_4$) to $n^3$ (as any fixed values $a_1, a_2, a_3$ uniquely determine $a_4$). A~set with additive energy close to $n^2$ is considered \emph{unstructured}---for instance a random set has expected energy~$\Order(n^2)$. A set with energy close to $n^3$ is considered structured---examples include intervals and arithmetic progressions.

In summary: To obtain a triangle listing instance containing few 4-cycles, we have to start from a 3-SUM instance with very small additive energy $E(A)$ (in \cref{sec:3sum-to-triangle} we prove this statement in detail).

\subsection{Energy Reduction for 3-SUM}
We manage to show a self-reduction for 3-SUM which reduces the energy down to $\Order(n^2)$. We will refer to this type of reduction as an \emph{energy reduction} for 3-SUM. Our outline for the energy reduction is as follows: First, we reduce the additive energy by a tiny bit, say to $n^{2.9999}$, using several tools from additive combinatorics. Second, we apply a randomized 3-SUM self-reduction (which can be seen as an efficient way of \emph{subsampling} the instance) to amplify the tiny improvement to an arbitrarily large improvement. We will now describe both steps in more detail.

\paragraph{First Step: Energy Reduction via Additive Combinatorics}
The precise result we obtain in this step is as follows. Here, and in fact throughout the whole paper, we will set $K = n^{0.0001}$. Moreover, throughout let $G$ be either $G = \Int$ or $G = \Field_p^d$.

\begin{restatable}[Energy Reduction via Additive Combinatorics]{lemma}{lemenergyreductionadditivecombinatorics} \label{lem:energy-reduction-additive-combinatorics}
Let $K \geq 1$. There is a fine-grained reduction from a 3-SUM instance $A$ of size $n$ to an equivalent 3-SUM instance~$A^* \subseteq A$, where~$E(A^*) \leq |A^*|^3 / K$. The reduction runs in time $\widetilde\Order(K^{314} n^{7/4})$.
\end{restatable}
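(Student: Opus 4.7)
The plan is an iterative ``peel off structure'' procedure. Start with $A' := A$ and repeat the following: attempt to extract a subset $B \subseteq A'$ of substantial size and small doubling, specifically $|B| \geq |A'|/K^{c_1}$ and $|B + B| \leq K^{c_1} |B|$ for a constant $c_1$. If the extraction fails, return $A^* := A'$. Otherwise, use the subquadratic 3-SUM algorithm for small-doubling sets (\cref{thm:3sum-structured}) to detect any 3-SUM triple of the original $A$ that involves an element of $B$; if one is found, report ``yes'' and halt, and otherwise safely update $A' \leftarrow A' \setminus B$ and iterate.

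\textbf{Correctness.} The guarantee of the extraction step is a constructive form of the Balog--Szemer\'edi--Gowers theorem: whenever $E(A') > |A'|^3/K$ such a $B$ must exist, and, conversely, failure of the extraction certifies $E(A') \leq |A'|^3/K$. The invariant maintained is that every 3-SUM triple of $A$ either meets some already-peeled piece $B$ (in which case it has been enumerated by \cref{thm:3sum-structured} in that round) or lies entirely within the current $A'$. Hence when the loop terminates, $A^*$ is equivalent to $A$ as a 3-SUM instance and satisfies $E(A^*) \leq |A^*|^3/K$. A minor edge case is when $|A'|$ drops below $K$; at that point the energy condition becomes vacuous and we can simply solve the 3-SUM instance directly.

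\textbf{Running time.} Each non-terminating round removes at least a $K^{-c_1}$-fraction of the elements, so the loop runs for at most $\widetilde\Order(K^{c_1})$ rounds. The per-round cost is dominated by the BSG extraction and by the call to \cref{thm:3sum-structured}, both of which are engineered to run in time $\widetilde\Order(K^{c_2} n^{7/4})$ for some constant $c_2$. Summing over rounds yields $\widetilde\Order(K^{c_1 + c_2} n^{7/4})$, and a generous bound on $c_1 + c_2$ matches the stated $\widetilde\Order(K^{314} n^{7/4})$.

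\textbf{Main obstacle.} The principal difficulty is engineering the constructive BSG extraction in time $\widetilde\Order(n^{7/4})$ with polynomial-in-$K$ overhead, since the textbook proof of BSG is purely existential and proceeds through graph-theoretic arguments on the popular-sums graph of $A'$. A practical implementation must replace the exhaustive enumeration of popular sums and the search for a common-neighborhood set by a combination of randomized sampling and FFT-based counting of representation functions, so that no exhaustive $n^2$ computation is performed. The second ingredient---a truly subquadratic 3-SUM algorithm whenever one input set has small doubling---is exactly \cref{thm:3sum-structured}, whose own proof invokes Rusza's covering lemma to cover the small-doubling set by few translates of a structured base set. Once both subroutines are in place, the iterative scheme above assembles them into the claimed energy reduction.
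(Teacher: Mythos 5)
Your proposal is correct and essentially identical to the paper's proof: iterate a constructive Balog--Szemer\'edi--Gowers extraction of a large small-doubling subset, solve the resulting trichromatic instance with \cref{thm:3sum-structured}, peel the subset off, and bound the number of rounds by the fraction of elements removed per round. The only (minor) difference is the termination test: the paper directly estimates $E(A)$ by random sampling (\cref{lem:energy-approximation}) and invokes BSG only when the estimate is large, rather than inferring $E(A') \leq |A'|^3/K$ from ``failure of the extraction'' --- the cleaner choice, since the algorithmic BSG theorem only guarantees success when the energy is large.
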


A key ingredient for this step is the seminal Balog-Szemerédi-Gowers theorem (in short: the BSG theorem). Intuitively, the theorem states that every set $A$ with large additive energy $E(A)$ must contain a large subset $A'$ which behaves like an interval or an arithmetic progression in the sense that its sumset $A + A = \set{a_1 + a_2 : a_1, a_2 \in A}$ has very small size (we also say that $A$ has small \emph{doubling}). The theorem can be formally stated as follows:

\begin{restatable}[Balog-Szemer\'edi-Gowers]{theorem}{thmbsg} \label{thm:bsg}
Let $A \subseteq G$. If $E(A) \geq |A|^3 / K$, then there is a subset $A' \subseteq A$ such that
\begin{itemize}
\item $|A'| \geq \Omega(K^{-2} |A|)$, and
\item $|A' + A'| \leq \Order(K^{24} |A|)$.
\end{itemize}
Moreover, we can compute $A'$ in time $\widetilde\Order(K^{12} |A|)$ by a randomized algorithm.
\end{restatable}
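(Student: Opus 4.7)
My plan is to prove the Balog-Szemerédi-Gowers theorem by combining a dyadic pigeonhole step, a graph-theoretic popularization argument executed via random sampling, and the Plünnecke-Ruzsa inequality, keeping every step efficiently implementable.

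\textbf{Step 1: pass to a popular sumset.} Let $r(x) = |\{(a,b)\in A\times A : a+b=x\}|$, so that $\sum_x r(x) = |A|^2$ and $\sum_x r(x)^2 = E(A) \ge |A|^3/K$. A dyadic partition of the range of $r$ isolates a single level set $P = \{x : \tau \le r(x) < 2\tau\}$ contributing a $1/\log|A|$-fraction of the energy, and the inequalities on $\sum_x r(x)$ and $\sum_x r(x)^2$ force $\tau = \Omega(|A|/K)$ and $|P| \cdot \tau \ge \Omega(|A|^2/K)$ (up to logarithmic factors I will absorb). Define the bipartite graph $H$ on $A \times A$ whose edges are the pairs $(a,b)$ with $a + b \in P$; then $H$ has $\Omega(|A|^2/K)$ edges. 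Computing $r$ is not required: values $r(x)$ and hence membership in $P$ can be estimated by random sampling in time $\widetilde\Order(|A|)$.

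\textbf{Step 2: graph popularization (the main obstacle).} The heart of the argument is to extract a subset $A^\prime \subseteq A$ of size $\Omega(K^{-2}|A|)$ such that for every pair $(a,a^\prime) \in A^\prime \times A^\prime$, there are $\Omega(|A|^3/K^{c})$ length-three paths in $H$ between $a$ and $a^\prime$. I would use the Gowers/Sudakov sampling proof: pick a uniformly random vertex $w \in A$, let $B_w = N_H(w)$, and let $A^\prime \subseteq B_w$ consist of those vertices having at least $\Omega(|A|/K^c)$ length-two paths in $H$ to at least $\Omega(|A|/K^c)$ vertices of $A$. A double-counting argument shows that $\E|A^\prime| = \Omega(|A|/K^2)$, and that conditional on this event the path count between any $a,a^\prime \in A^\prime$ is $\Omega(|A|^3/K^{c^\prime})$. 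The difficulty is purely quantitative: we must arrange the sampling and threshold choices so that the exponent $2$ in $|A^\prime|$ is tight while the exponent in the path count stays bounded. The tail estimates make the extraction algorithmic in expected time $\widetilde\Order(K^{O(1)}|A|)$, essentially because we only need to enumerate $N_H(w)$ and verify the two popularity conditions by sampling.

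\textbf{Step 3: from paths to sumset structure.} Each length-three path $a \to b_1 \to b_2 \to a^\prime$ in $H$ records three sums $a+b_1, b_1+b_2, b_2+a^\prime$ in $P$, so the identity $a - a^\prime = (a+b_1) - (b_1+b_2) + (b_2+a^\prime) - 2 b_2$... (more carefully, $(a+b_1)-(b_1+b_2)+(b_2+a^\prime) = a + a^\prime$, a representation as a signed sum of three elements of $P$) shows that every element of $A^\prime + A^\prime$ has $\Omega(|A|^3/K^{c^\prime})$ representations as an ordered triple from $P \times (-P) \times P$. Dividing $|P|^3 \le \Order(K^{O(1)}|A|^3/\tau^3)$ by the number of representations yields $|A^\prime + A^\prime| \le \Order(K^{O(1)}|A|)$, and one then applies Plünnecke-Ruzsa (or the Ruzsa triangle inequality followed by Plünnecke-Ruzsa, depending on which version of Step 3 gives $|A^\prime + A^\prime|$ versus $|A^\prime - A^\prime|$) to land on the prescribed $\Order(K^{24}|A|)$ bound. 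The final exponent $24$ is the bookkeeping outcome of the sampling constants in Step 2 and of the conversion between $A^\prime + A^\prime$ and $A^\prime - A^\prime$ via Plünnecke-Ruzsa; it does not reflect anything conceptually deeper.

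Overall, each step is a randomized subroutine running in $\widetilde\Order(K^{O(1)}|A|)$ time, dominated by the sampling in Step 2 which we tune to achieve $\widetilde\Order(K^{12}|A|)$ total. The principal technical challenge is Step 2: producing an explicit $A^\prime$ (not just an existence proof) while simultaneously controlling $|A^\prime|$ from below and the path count from below with the correct $K$-dependence.
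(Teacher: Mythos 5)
Your outline is the classical Balog--Sudakov--Szemer\'edi--Vu popularization proof, made algorithmic by sampling --- which is indeed the route behind the result, but it is not what the paper does, and as written it has a genuine gap exactly where the theorem is hard. The paper does not reprove BSG at all: it invokes the two-set algorithmic theorem of Chan and Lewenstein as a black box (their Theorem 2.1 and Lemma 7.2), and only supplies the short reduction from the one-set energy hypothesis to that statement --- take the level set $C$ of sums with multiplicity $r_{A,A}(x)\ge |A|/(2K)$, show $|C|\ge |A|/(2K)$, fix $C_0\subseteq C$ of size exactly $|A|/(2K)$, set $E=\{(a,b): a+b\in C_0\}$ so $|E|\ge |A|^2/(4K^2)$, and apply Chan--Lewenstein with $\alpha=1/(4K^2)$, $t=1$. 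This yields \emph{two} sets $A',B'\subseteq A$ with $|A'+B'|\le O(K^{12}|B'|)$, and the single-set bound $|A'+A'|\le O(K^{24}|A|)$ then comes from Pl\"unnecke--Ruzsa applied to the pair $(B',A')$; the $K^{12}$ running time is Chan--Lewenstein's $(1/\alpha)^6(|A|+|B|)$. So both advertised exponents are bookkeeping of this reduction, not of a re-derived sampling argument.

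The gap in your proposal is Step 2, which you explicitly defer (``the difficulty is purely quantitative\dots the principal technical challenge''): that step \emph{is} the content of the algorithmic BSG theorem, so a proof that leaves it as a double-counting claim with unspecified thresholds and exponents has not proved the statement --- in particular the specific constants $K^{-2}$, $K^{24}$ and $K^{12}$ you are asked to certify all hinge on it. Moreover, Step 2 as you state it is \emph{stronger} than what the standard random-neighborhood argument delivers: that argument produces two subsets $A'$ (left side) and $B'$ (right side) such that every pair in $A'\times B'$ is joined by many length-three paths, which directly bounds the two-set sumset $A'+B'$; it does not give a single subset $A'$ in which \emph{every} pair $(a,a')\in A'\times A'$ has many paths (intersecting $A'\cap B'$ loses the size guarantee $\Omega(K^{-2}|A|)$). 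You would either have to prove this stronger single-set popularization, or restructure your Step 3 to bound $|A'+B'|$ and then convert to $|A'+A'|$ via Pl\"unnecke--Ruzsa --- which is precisely the paper's route and precisely where its factor $K^{24}$ (the square of the two-set bound's $K^{12}$) comes from; your mention of Pl\"unnecke--Ruzsa only to toggle between sums and differences does not address this. The remaining steps are fine in spirit (the identity $(a+b_1)-(b_1+b_2)+(b_2+a')=a+a'$ and the bound $|P|\le O(K|A|)$ are correct), but without an executed, quantitative Step 2 the argument is an outline of Chan--Lewenstein rather than a proof; citing their theorem, as the paper does, is the honest shortcut.
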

The existential part of the theorem (without the claimed running time bounds) was originally proved by Balog and Szemerédi~\cite{BalogS94} and Gowers~\cite{Gowers01}. The efficient algorithm to compute $A'$ was later devised by Chan and Lewenstein~\cite{ChanL15} based on a proof of the BSG theorem which was independently discovered by Balog~\cite{Balog07} and Sudakov, Szemer\'edi and Vu~\cite{SudakovSV05}.

The BSG theorem suggests the following algorithmic idea: As long as $A$ has large additive energy, apply the BSG theorem to extract a highly structured subset $A' \subseteq A$, and efficiently solve 3-SUM on that set. More specifically, we have to solve the trichromatic 3-SUM instance~$(A', A, A)$, where we can assume that~$|A' + A'| \leq \Order(K^{24} |A|) \leq \Order(K^{26} |A'|)$. Indeed, either there exists a solution contained in $A \setminus A'$ in which case we can simply discard~$A'$, or part of the 3-SUM solution is contained in $A'$ in which case this will be a valid solution in~$(A', A, A)$. One can prove that after at most $\widetilde\Order(K)$ extractions, we have either found a 3-SUM solution or the remaining set has small additive energy as required.

It remains to solve the 3-SUM instances $(A', A, A)$. There are some known results about structured 3-SUM instances: For instance, using sparse convolution algorithms we can solve 3-SUM instances $(A, B, C)$ in subquadratic time whenever~$A + B$ has subquadratic size. Another result by Chan and Lewenstein~\cite{ChanL15} is that 3-SUM admits subquadratic-time algorithms whenever one of the sets is \emph{clustered}, that is, if it can be covered by a subquadratic number of size-$n$ intervals. Unfortunately, neither of these algorithms can be applied in our context and to the best of our knowledge no algorithm is known for the case when one of the input sets has small doubling. It is one of our key technical contributions to design an algorithm for this problem:

\begin{restatable}[3-SUM for Structured Inputs]{theorem}{thmthreesumstructured} \label{thm:3sum-structured}
Let $(A, B, C)$ be a 3-SUM instance of size $n$ with $A, B, C \subseteq G$ and $|A + A| \leq K |A|$. Then we can solve $(A, B, C)$ in time $\widetilde\Order(K^{12} n^{7/4})$.
\end{restatable}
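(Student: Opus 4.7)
The plan is to combine the small-doubling structure of $A$ with block decomposition and sparse convolution; no fast matrix multiplication is needed. The key difficulty is that although $A$ is structured, $B$ and $C$ are arbitrary, so a direct computation of sumsets such as $A + B$ or $B + C$ could cost $n^2$, and the assumption $|A+A|\leq K|A|$ does not give any Pl\"unnecke-type bound on $|A+B|$ (e.g.\ for $A = \{0,\dots,n\}$ and $B = \{0, n, 2n, \dots, n^2\}$, both have small own-doubling but $|A+B| = \Theta(n^2)$).

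First I would extract and preprocess the additive structure of $A$. By the Pl\"unnecke-Ruzsa inequality, $|A+A| \leq K|A|$ implies $|kA - \ell A| \leq K^{k+\ell} |A|$ for all $k, \ell \geq 0$, so in particular $A - A$ has size at most $K^2 n$ and can be listed explicitly in $\widetilde{O}(K^{O(1)} n)$ time via a sparse convolution of $1_A \ast 1_{-A}$. Next, applying Ruzsa's covering lemma to $A$ against itself, I obtain a set $T \subseteq A$ with $|T| \leq K^{O(1)}$ such that $A \subseteq T + (A - A)$; equivalently, every element of $A$ decomposes as $t + d$ with $t \in T$ and $d \in A - A$. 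The set $T$ can also be produced within the time budget.

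Next I would block-decompose $B$ and $C$ into $\sqrt{n}$ blocks of size $\sqrt{n}$ each, yielding $n$ block pairs $(B_i, C_j)$ that I handle independently within a $\widetilde{O}(K^{O(1)} n^{3/4})$ budget per pair (to give $\widetilde{O}(K^{O(1)} n^{7/4})$ in total). For a single block pair I iterate over the $K^{O(1)}$ choices of $t \in T$ and look for $d \in A - A$, $b \in B_i$, $c \in C_j$ with $t + d + b + c = 0$ and $t + d \in A$, verifying the last membership in $\widetilde{O}(1)$ against a precomputed hash table of $A$. This reduces to a sparse convolution between $B_i + C_j$ (of support size at most $n$) and a relevant slice of $A - A$, which a sparse FFT solves in $\widetilde{O}(K^{O(1)} n^{3/4})$ after further refining the covering so that each slice of $A - A$ has size at most $\sqrt{n}$, thereby balancing FFT length against block size.

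The hardest part will be engineering the per-block subroutine so that (i) the Ruzsa cover is fine enough that each slice has size at most $\sqrt{n}$ while the number of slices remains polynomial in $K$, and (ii) the final membership filter against $A$ correctly discards triples with $t + d \notin A$ without re-introducing an $n$-barrier, because only a $K^{O(1)}$ fraction of the candidates produced by the convolution are genuine witnesses. Summing over block pairs, and tracking the polynomial overhead $K^{O(1)}$ accumulated across the Pl\"unnecke-Ruzsa inequality, Ruzsa's covering lemma, and sparse convolution, yields the claimed $\widetilde{O}(K^{12} n^{7/4})$ bound.
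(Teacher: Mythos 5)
There is a genuine gap, and it sits exactly where you flag the ``hardest part.'' Your decomposition of $B$ and $C$ into $\sqrt{n}$ \emph{arbitrary} blocks of size $\sqrt{n}$ creates $n$ block pairs $(B_i,C_j)$, but nothing ties a block of $B$ to a small set of candidate blocks of $C$: since $B$ and $C$ are unstructured, every pair $(i,j)$ may contain a solution, and the subproblem for a single pair --- find $b\in B_i$, $c\in C_j$ with $b+c\in -A$ --- is itself an unstructured 3-SUM-type problem. In particular $|B_i+C_j|$ can be $\Theta(n)$, so even an output-sensitive convolution for one pair costs $\widetilde{O}(n)$, and $n$ pairs give $\widetilde{O}(n^2)$; the claimed $\widetilde{O}(K^{O(1)}n^{3/4})$ per-pair budget has no mechanism behind it. Refining $A-A$ into slices of size $\sqrt{n}$ does not help, because with arbitrary blocks there is no bound on how many slices are relevant to a given pair. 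Relatedly, your application of Ruzsa's covering lemma ($A\subseteq T+(A-A)$ with $|T|\le K$) is essentially vacuous: $A\subseteq a_0+(A-A)$ already holds for any $a_0\in A$, so this decomposition carries no new information.

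The missing idea is to cover $B$ and $C$ (not $A$) by translates of a set $H$ that is a $K^{O(1)}$-\emph{approximate group} containing $A-a_0$; this is what Ruzsa's covering lemma actually buys here, namely $H+H\subseteq H+X$ with $|X|\le K^5$ for $H=A-A$. Then for a solution $a+b+c=0$ with $b\in H+s$ one gets $c\in H-s-x-a_0$ for one of only $|X|\le K^5$ values of $x$, so each piece of $B$ interacts with only $K^{O(1)}$ pieces of $C$ --- that is the pruning your arbitrary blocks cannot provide. Two further ingredients are needed to make the accounting close: a universe reduction (hash to a group of size about $n^2/t$ and solve a listing version with $t$ witnesses per element, to control both the number of translates needed to cover the universe and the size of the sumsets $B_s+C_{s,x}$ in the dense case), and a heavy-light split driven by the \emph{variable} sizes of the pieces $B_s\cap(H+s)$, so that only $\widetilde{O}(K^{O(1)}n/\Delta)$ pairs are heavy. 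With $t=n^{1/4}$ and $\Delta=n^{1/2}$ this yields the stated $\widetilde{O}(K^{12}n^{7/4})$; without the approximate-group covering, the argument does not get below $n^2$.
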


We omit the description of this algorithm for now and continue with the energy reduction. Later in the overview, in \cref{sec:overview:sec:3sum-structured}, we give the main ideas and in \cref{sec:3sum-structured} we provide the detailed proof of \cref{thm:3sum-structured}.

\paragraph{Second Step: Amplification via Hashing}
In the previous step we have reduced a worst-case 3-SUM instance to another instance with a tiny improvement in additive energy. In this step, we will amplify this improvement by means of the following reduction:

\begin{restatable}[Energy Reduction via Hashing]{lemma}{lemenergyreductionsubsampling} \label{lem:energy-reduction-subsampling}
Let $K \geq 1$. There is a fine-grained reduction from a 3-SUM instance $A$ with $E(A) \leq |A|^3 / K$ to $g = \Order(|A|^2 / K^2)$ 3-SUM instances $A_1, \dots, A_g$ of size $\Order(K)$ and with expected energy $\Ex(E(A_i)) \leq \Order(K^2)$. The reduction runs in time~$\widetilde\Order(|A|^2 / K)$.
\end{restatable}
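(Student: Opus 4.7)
The plan is to produce the subinstances via a random almost-linear hash $h : G \to [r]$ with $r = \Theta(|A|/K)$, for example the Dietzfelbinger-style hash $h(x) = (sx \bmod p) \bmod r$ with a random prime $p$ and random multiplier $s$. Such a hash is almost linear (meaning $h(x+y) - h(x) - h(y) \in \{0,1\}$ modulo $r$) and almost pairwise independent. Setting $A_v := \{a \in A : h(a) = v\}$, the approximate linearity forces every 3-SUM solution $a + b + c = 0$ in $A$ to lie in a triple of buckets $A_u, A_v, A_w$ with $u + v + w \equiv \epsilon \pmod r$ for some small error $\epsilon$. For each pair $(u,v) \in [r]^2$ and each admissible $\epsilon$ I form one subinstance by taking the trichromatic instance $(A_u, A_v, A_{-u-v+\epsilon})$ and converting it to a monochromatic 3-SUM instance $A_i$ of size $\Order(K)$ via the standard shifting trick, i.e.\ replace the three parts by $A_u$, $A_v + M_1$, $A_{-u-v+\epsilon} + M_2$ for generic large shifts $M_1, M_2$ and update the 3-SUM target accordingly.

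This immediately delivers the correct quantitative parameters: the number of subinstances is $g = \Order(r^2) = \Order(|A|^2/K^2)$, each $A_i$ has size $\Order(K)$ by Chernoff concentration of bucket sizes, and the total output size $\sum_i |A_i| = \Order(|A|^2/K)$ dominates the running time $\widetilde\Order(|A|^2/K)$ (after paying $\Order(|A|)$ for hashing itself). Correctness---that every 3-SUM witness in $A$ appears as a witness in some $A_i$---is immediate from the almost-linearity of $h$ together with the shift trick.

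The key step is bounding $\Ex[E(A_i)]$, which by symmetry equals $\Ex[\sum_i E(A_i)] / g$. I would sum over every ordered quadruple $(a_1, a_2, a_3, a_4) \in A^4$ satisfying $a_1 + a_2 = a_3 + a_4$ and count, for each such quadruple, the number of subinstances containing all four elements. The $\Order(|A|^2)$ trivial quadruples (with $\{a_1, a_2\} = \{a_3, a_4\}$ as multisets) contribute $\Order(|A|^2)$ in total, because each sits in $\Order(1)$ subinstances when its two involved hash values differ and in $\Order(r)$ subinstances (with probability $\Order(1/r)$) when they coincide. For the at most $E(A) \leq |A|^3/K$ nontrivial quadruples, the almost-linearity of $h$ forces $h(a_1) + h(a_2) \equiv h(a_3) + h(a_4)$, so to cover $\leq 3$ distinct buckets the four hash values must additionally collide in one of a handful of patterns; pairwise pseudo-independence of $h$ implies each such collision has probability $\Order(1/r^2)$, and in that event the quadruple lives in $\Order(1)$ subinstances. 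The nontrivial contribution is thus $\Order((|A|^3/K) \cdot (1/r^2)) = \Order(|A|K) = \order(|A|^2)$, so $\Ex[\sum_i E(A_i)] = \Order(|A|^2)$ and hence $\Ex[E(A_i)] = \Order(K^2)$ per subinstance.

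The main obstacle is extracting enough linearity and enough pseudorandomness from a single hash family. The Dietzfelbinger hash is only almost linear (forcing enumeration of the constantly many error terms $\epsilon$, which harmlessly inflates $g$ by a constant factor) and only approximately pairwise independent, and both features must be exploited simultaneously in the energy analysis. The crucial ingredient is that for any fixed nonzero differences $c, c' \in G$ the joint event $h(c) \equiv h(c') \equiv 0 \pmod r$ has probability $\widetilde\Order(1/r^2)$ over the random choice of $s$ and $p$; this is a standard arithmetic computation for the modular hash family, and once it is in place the case analysis above goes through with only routine bookkeeping.
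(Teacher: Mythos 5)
The skeleton of your reduction (bucket by an (almost-)linear hash into $r = \Theta(|A|/K)$ buckets, form $\Order(r^2)$ subinstances from triples of buckets consistent with the linearity constraint, and charge energy quadruples according to their hash-collision pattern) matches the paper. But the ``crucial ingredient'' you rely on is false for the hash family you chose, and this is exactly the obstruction the paper is built around. For the Dietzfelbinger-style map $h(x) = (sx \bmod p) \bmod r$ it is \emph{not} true that $\Pr[h(c) \approx 0 \wedge h(c') \approx 0] = \widetilde\Order(1/r^2)$ for all fixed nonzero $c, c'$: take $c' = 2c$; conditioned on $sc \bmod p$ being a multiple of $r$, the value $2sc \bmod p$ is again (up to one wraparound) a multiple of $r$ with constant probability, so the joint probability is $\Theta(1/r)$. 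This matters in your analysis precisely where you need it most: the dangerous nontrivial quadruples are those whose four hash values all coincide, because such a quadruple lies in $\Theta(r)$ subinstances (one bucket index is free), so you need $\Pr[\text{all four equal}]=\Order(1/r^2)$. Writing $d_1 = a_1 - a_3$, $d_2 = a_1 - a_4$, this event is essentially $h(d_1)\approx 0 \wedge h(d_2)\approx 0$, and for quadruples where $d_2$ is a small rational multiple of $d_1$ (e.g.\ four-term progression-like patterns) the probability is $\Theta(1/r)$, not $\Order(1/r^2)$. Rescuing the argument over the integers would require counting, for every small-height ratio, how many energy quadruples realize it and how the joint probability degrades with the height --- a genuinely nontrivial number-theoretic argument you have not supplied. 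The paper sidesteps all of this by first reducing integer 3-SUM to 3-SUM over $\Field_p^d$ (with $p = \Order(1)$) and hashing with a random linear map $h(x)=Hx$: there the joint probability is exactly $|G'|^{-s}$ with $s = \dim\Span{a_1,a_2,a_3,a_4}$, and the low-dimensional quadruples ($s \le 2$) can be counted as $\Order(|A|^s)$ because each further element is one of only $p^{\Order(1)} = \Order(1)$ linear combinations --- a counting step that also has no integer analogue.

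Two smaller points. First, your intermediate arithmetic is off even in the good case: the expected number of subinstances containing a fixed generic nontrivial quadruple is $\Theta(1/r)$, not $\Order(1/r^2)$ (over $\Field_p^d$: probability $|G'|^{-3}$ of landing in a \emph{fixed} subinstance, times $|G'|^2$ subinstances); fortunately $E(A)/r \le |A|^2$ still gives $\Ex(E(A_i)) = \Order(K^2)$, so this slip is harmless, but it suggests the accounting of where linearity costs you a factor of $r$ was not quite right. Second, you cannot invoke Chernoff for the bucket sizes: the hash family is only (almost) pairwise independent, so bucket sizes are not $\Order(K)$ with high probability. The paper handles this by brute-forcing every oversized instance and bounding the expected cost of doing so via Chebyshev; your reduction needs some such fallback as well. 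The trichromatic-to-monochromatic shift trick is a benign deviation from the paper's construction $A_{x,y} = \{a \in A : h(a) \in \{x, y, -(x+y)\}\}$, but note the paper's version keeps each subinstance a subset of $A$, which makes the energy bookkeeping immediate.
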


The rough idea behind \cref{lem:energy-reduction-subsampling} is to create many randomly subsampled instances from~$A$. An efficient way to implement such a self-reduction is to not subsample $A$ uniformly, but instead make use of linear hashing. This general idea is not new and has appeared several times before in the context of 3-SUM~\cite{BaranDP08,Patrascu10,KopelowitzPP16,ChanH20}, but we have to pay closer attention than usual in order to analyze the additive energy.

We describe the idea in a simplified way, glimpsing over several problems: Sample a linear hash $h$ to~$m$ buckets and create the instance $B = \set{a \in A^* : h(a) = 0}$. Here, \emph{linear} means that $h$ satisfies the condition~$h(a) + h(b) = h(a + b)$ for all inputs $a, b$. What is the probability that a fixed 3-SUM solution~$a + b + c = 0$ survives? The probability is at least~$1/m^2$, since~$1/m^2$ is the probability that $h(a) = h(b) = 0$, which entails that also~$h(c) = 0$ by the linearity of the hash function. This means that we have to repeat this reduction $\widetilde\Omega(m^2)$ times until a 3-SUM solution survives.

In contrast, what is the probability that a solution to the equation~\makebox{$a_1 + a_2 = a_3 + a_4$} survives? By the same linearity argument, we can only use the randomness for three of the four variables as the hash value of the remaining variable is fixed. We therefore expect each solution to survive with probability $1/m^3$. Since this probability is smaller by a factor~$m$ compared to the survival probability of a 3-SUM solution, only a $1/m$-fraction of solutions~$a_1 + a_2 = a_3 + a_4$ survives and appears in one the small instances. In particular, by setting~$m = n / K$ we create $n^2 / K^2$ instances of size $n/m = K$ and with additive energy bounded by~$E(A) / m^3 \leq (n/m)^3 / K = K^2$.

However, there is a serious issue with this approach: In order to argue that each solution to the equation~$a_1 + a_2 = a_3 + a_4$ survives with probability at most $1/m^3$ we have assumed that three elements, say,~$a_1$,~$a_2$ and~$a_3$, are hashed \emph{independently}. Unfortunately there are no hash functions which are linear and 3-wise independent at the same time. We will ignore this issue for now, and explain later in \cref{sec:overview:sec:hashing} how to overcome this challenge.

By combining both steps of the energy reduction, we obtain the following theorem:

\begin{restatable}[Energy Reduction]{theorem}{thmenergyreduction} \label{lem:energy-reduction-complete}
For any $\epsilon, \delta > 0$, there is no $\Order(n^{2-\epsilon})$-time algorithm solving the 3-SUM problem on instances $A$ with size $n$ and additive energy $E(A) \leq \Order(|A|^{2+\delta})$, unless the 3-SUM conjecture fails.
\end{restatable}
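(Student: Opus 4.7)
The plan is to chain \cref{lem:energy-reduction-additive-combinatorics} and \cref{lem:energy-reduction-subsampling} in sequence and then invoke the hypothetical low-energy 3-SUM algorithm on the resulting small subinstances. Fix a constant $c > 0$ with $314 c + 7/4 < 2$ (e.g.\ $c := 1/2000$) and set $K := n^c$; the choice of $c$ depends only on $\epsilon$ and $\delta$.

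First, I would apply \cref{lem:energy-reduction-additive-combinatorics} to the given worst-case 3-SUM instance $A$ of size $n$ to obtain an equivalent subinstance $A^* \subseteq A$ with $E(A^*) \le |A^*|^3/K$, in time $\widetilde\Order(K^{314} n^{7/4}) = \widetilde\Order(n^{7/4 + 314 c})$, which is subquadratic by the choice of $c$. Second, I would apply \cref{lem:energy-reduction-subsampling} to $A^*$, producing $g = \Order(|A^*|^2/K^2) = \Order(n^{2-2c})$ subinstances $A_1, \dots, A_g$, each of size $\Order(K) = \Order(n^c)$ and with $\Ex(E(A_i)) \le \Order(K^2)$; this step itself costs $\widetilde\Order(|A^*|^2/K) = \widetilde\Order(n^{2-c})$, again subquadratic.

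Third, on each subinstance $A_i$ I would run the hypothetical low-energy algorithm under a timeout of roughly $C \cdot K^{2-\epsilon}$ steps, verifying any claimed 3-SUM witness in constant time and discarding the subinstance otherwise. Call $A_i$ \emph{good} whenever $E(A_i) \le K^{2+\delta}$, which (since $|A_i| = \Order(K)$) is exactly the promise of the hypothetical algorithm. The total time across all $g$ subinstances is $\Order(g \cdot K^{2-\epsilon}) = \Order(n^{2-c\epsilon})$, so summing the three contributions yields an overall $\widetilde\Order(n^{2-\Omega(1)})$-time 3-SUM algorithm, contradicting the 3-SUM conjecture.

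The main subtlety I expect is correctness: I never explicitly check which subinstances are good, yet I must not miss a 3-SUM witness in $A$. The trick is to apply Markov's inequality to the single subinstance $A_{i^*}$ that (with high probability, by the guarantee of \cref{lem:energy-reduction-subsampling}) contains a fixed 3-SUM witness of $A$. From $\Ex(E(A_{i^*})) \le \Order(K^2)$ it follows that $\Pr[E(A_{i^*}) > K^{2+\delta}] \le \Order(K^{-\delta}) = o(1)$, so the target subinstance is good with probability $1 - o(1)$ and the hypothetical algorithm correctly reports the witness within the timeout. Standard $\Order(\log n)$-fold amplification of the entire reduction boosts this to high probability without harming the subquadratic bound.
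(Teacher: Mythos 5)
Your overall strategy matches the paper's: chain \cref{lem:energy-reduction-additive-combinatorics} and \cref{lem:energy-reduction-subsampling} with $K = n^{\Theta(1)}$ small enough that both reductions are subquadratic, then feed the resulting size-$\Order(K)$ subinstances to the hypothetical low-energy algorithm, using Markov to control the high-energy subinstances. The divergence, and the gap, is in how you handle correctness on the subinstances that violate the energy promise. You run the algorithm with a timeout and verification, which correctly neutralizes false positives and non-termination, but a false \emph{negative} on the witness-containing subinstance is fatal, and your argument for why that subinstance is good is not licensed by the lemmas. \Cref{lem:energy-reduction-subsampling} (via \cref{lem:energy-bounded}) guarantees $\Ex(E(A_{x,y})) \leq \Order(K^2)$ for each \emph{fixed} index $(x,y)$. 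The subinstance containing a fixed witness $(a,b,c)$ is $A_{h(a),h(b)}$, whose index is a random variable determined by the same hash function $h$ that determines the energies. Writing $\Ex(E(A_{h(a),h(b)})) = \sum_{x,y}\Pr(h(a)=x,\, h(b)=y)\cdot\Ex(E(A_{x,y}) \mid h(a)=x,\, h(b)=y)$, you would need a bound on the \emph{conditional} expected energy given the hash values of $a,b,c$; conditioning changes the survival probabilities of quadruples $a_1+a_2=a_3+a_4$ whose span interacts with $\Span{a,b,c}$ (the exponent $s$ in \cref{lem:energy-bounded} can drop, inflating the bound by powers of $|G'|$), so this requires a new case analysis that you have not supplied and that is not obviously benign.

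The paper sidesteps this entirely: it estimates $E(A_{x,y})$ for every subinstance using \cref{lem:energy-approximation} (cost $\widetilde\Order(K)$ each, $\widetilde\Order(n^2/K)$ total) and solves the high-energy ones by brute force in time $\Order(K^2)$ each. Then \emph{every} subinstance is solved correctly no matter which one holds the witness, and Markov over fixed indices is needed only to bound the \emph{expected running time} of the brute-force step (a $\Order(K^{-\delta})$-fraction of the $\Order(n^2/K^2)$ instances, costing $\Order(n^2 K^{-\delta})$ in total), not for correctness. Substituting this explicit energy check for your timeout-and-hope step closes the gap and leaves the rest of your argument intact.
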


\paragraph{Comparison to Abboud, Bringmann, Khoury and Zamir~\cite{AbboudBKZ22}}
We remark that our approach for an energy reduction is conceptually similar to the work of Abboud, Bringmann, Khoury and Zamir~\cite{AbboudBKZ22}: Their goal was also to reduce the number of 4-cycles in a triangle instance. They achieved this by first reducing the number of 4-cycles by a little bit (by identifying and removing dense pieces in the graph, which contain many 4-cycles), and then subsample the remaining instance to amplify the 4-cycle reduction. In contrast to our setting, working on the triangle instances directly has the disadvantage that sparse triangle problems are not known to admit efficient self-reductions. As a result, their subsampling step is lossy and leads to non-matching lower bounds.

\bigskip
This completes the description of the energy reduction. In the following subsections we describe what we left out in the previous overview---how to efficiently solve 3-SUM for structured inputs and how to deal with the hashing issue.

\subsection{3-SUM for Structured Inputs} \label{sec:overview:sec:3sum-structured}
In this section we describe a subquadratic-time algorithm for 3-SUM instances $(A, B, C)$ in which the set $A$ has doubling $|A + A| \leq K |A|$. We first describe a simple toy algorithm to build some intuition.

\paragraph{Warm-Up: \boldmath$A$ Is Contained in an Interval}
We give a simple algorithm that works whenever~$A$ is contained in a small interval, say $I = [10n]$ (this is indeed an example of a set with small doubling). Our approach is to \emph{cover} $B$ and $C$ by translates of $I$. That is, we split $B$ into a collection of disjoint subsets~$B_1, \dots, B_\ell$ each of which is obtained by intersecting $B$ with a translate of~$I$. Note that we need at most $|B|$ translates to cover the full set $B$. We similarly cover~$C$ by disjoint subsets $C_1, \dots, C_m$. The insight is that $A + B_i$ is contained in an interval of size $20n$. Therefore if there is a 3-SUM solution~$(a, b, c) \in A \times B \times C$ with~$b \in B_i$, there are at most three sets $C_j$ which could possibly contain $c$. Calling a pair~$(i, j)$ \emph{relevant} if there could possibly be a 3-SUM solution in $A \times B_i \times C_j$, we have argued that the number of relevant pairs is at most $\Order(n)$.

We iterate over all relevant pairs $(i, j)$, and apply a heavy-light approach: If both sets~$B_i$ and $C_j$ have size at most $n^{1/3}$, then we brute-force over all pairs $(b, c) \in B_i \times C_j$ and test whether they constitute a 3-SUM solution. Otherwise, we compute $B_i + C_j$ using FFT, and test for each element in the sumset whether it is part of a 3-SUM solution with $A$. The total time of the light case is bounded by $\Order(n)$ (the number of relevant pairs) times $\Order(n^{2/3})$ (the number of pairs $(b, c)$ we explicitly test). The total time of the heavy case is bounded by~$\Order(n^{2/3})$ (there can be at most that many relevant pairs $i, j$ for which either $B_i$ or $C_j$ has size larger than $n^{1/3}$) times $\widetilde\Order(n)$ (running FFT on sets of universe size $20n$). The total time is $\Order(n^{5/3})$, which is subquadratic.

The take-away message is that when we know that $A$ is contained in a small interval, we can benefit from the structure by pruning the search space in $B \times C$ (i.e., we do not compare every element in $B$ to every element in $C$). The question is: \emph{What is the appropriate generalization of an interval?}

\paragraph{Full Algorithm: \boldmath$A$ Is Contained in an Approximate Group}
For us, the appropriate generalization are \emph{approximate groups.} A set~$H$ is a $K$-approximate group if $H = -H$ and~$H + H$ can be covered by at most~$K$ translates of $H$. The key ingredient to our algorithm is yet another result from additive combinatorics: Ruzsa's covering lemma. More specifically, we exploit the following consequence of Ruzsa's covering lemma which states that any set with small doubling can be covered by a small approximate group.

\begin{restatable}[Covering by Approximate Groups]{lemma}{lemcoverapproximategroup} \label{lem:cover-approximate-group}
Let $A \subseteq G$ be a set with $|A + A| \leq K |A|$. Then there is a set $H \subseteq G$ with the following properties:
\begin{itemize}
\item $|H| \leq K^2 |A|$,
\item $H$ is a $K^5$-approximate group, that is, there is some set $X \subseteq G$ of size $|X| \leq K^5$ such that $H = -H$ and $H + H \subseteq H + X$, and
\item there is some $a_0 \in A$ such that $A - a_0 \subseteq H$.
\end{itemize}
Moreover, we can compute $H$, $X$ and $a_0$ in time $\widetilde\Order(K^{12} |A|)$.
\end{restatable}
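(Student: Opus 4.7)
The plan is to take $H := A - A$ and $a_0$ to be any fixed element of $A$. Then $H = -H$ holds trivially from the symmetric definition, and $A - a_0 \subseteq A - A = H$. The Plünnecke--Ruzsa inequality applied to the hypothesis $|A + A| \leq K|A|$ immediately yields $|A - A| \leq K^2 |A|$, giving the size bound on $H$. So the only non-trivial structural statement is the approximate-group property $H + H \subseteq H + X$ with $|X| \leq K^5$.

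For this I would invoke Ruzsa's covering lemma in the form: for finite sets $S, T$ in an abelian group with $|S + T| \leq L |T|$, there is some $X \subseteq S$ with $|X| \leq L$ such that $S \subseteq X + T - T$. Applying this with $S := 2A - 2A$ and $T := A$, and using Plünnecke--Ruzsa again to bound $|(2A - 2A) + A| = |3A - 2A| \leq K^5 |A|$, produces a set $X$ of size at most $K^5$ with $2A - 2A \subseteq X + (A - A) = X + H$. Since $H + H = 2A - 2A$, this is exactly the desired containment $H + H \subseteq H + X$.

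The main obstacle is the algorithmic part: realizing all of the above within the time budget $\widetilde{O}(K^{12} |A|)$. A naive computation of $A - A$ already costs $\Theta(|A|^2)$, which is far too expensive, so I would use an output-sensitive sparse sumset/convolution subroutine to produce $H = A - A$ in time $\widetilde{O}(\mathrm{poly}(K) \cdot |A|)$, exploiting that $|H| \leq K^2 |A|$. The constructive form of Ruzsa's covering lemma is greedy: maintain a set $X$, and repeatedly pick some $x \in (2A - 2A) \setminus (X + H)$ and adjoin it to $X$; the standard pairwise-disjointness argument on the translates $x + A$ simultaneously bounds $|X| \leq K^5$ and ensures termination in at most $K^5$ rounds. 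Each round requires enumerating candidate elements of $2A - 2A$ (of size at most $K^4 |A|$) and testing membership in the dynamically maintained set $X + H$ of size at most $K^7 |A|$, which via hashing and sparse convolution can be implemented comfortably within the $\widetilde{O}(K^{12} |A|)$ budget; the gap between the structural exponent $5$ and the algorithmic exponent $12$ is exactly the room needed to absorb the per-round sparse-convolution overhead.
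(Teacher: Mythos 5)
Your proposal is correct and takes essentially the same route as the paper's proof: set $H = A - A$ with an arbitrary $a_0 \in A$, use Pl\"unnecke--Ruzsa to get $|A-A| \le K^2|A|$ and $|3A-2A| \le K^5|A|$, apply Ruzsa's covering lemma to $2A-2A$ against $A$ to obtain $X \subseteq 2A-2A$ with $|X| \le K^5$ and $H + H \subseteq H + X$, and implement the greedy covering step with output-sensitive sparse sumset computations within the $\widetilde{O}(K^{12}|A|)$ budget. The only (immaterial) difference is the implementation of the greedy selection: the paper finds the next element by computing $C = (A+B)\setminus(A+X)$ and detecting full multiplicity $r_{C,-A}(b) = |A|$, whereas you maintain $X + H$ and pick an uncovered element of $2A-2A$; the two criteria are equivalent (an element $b$ is uncovered iff $A+b$ is disjoint from $A+X$) and both fit the stated time bound.
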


The existential result is well-known in additive combinatorics (see for instance the book by Tao and Vu~\cite{TaoV06}), but for our purposes it is also important to have an efficient algorithm to compute $H$. We derive an algorithm based on computing sparse convolutions, see the proof in \cref{sec:additive-combinatorics}.

For our 3-SUM algorithm, thanks to \cref{lem:cover-approximate-group} we can assume that $A$ is contained in (a translate of) a small approximate group $H$. We mimic the warm-up algorithm with the same idea: Cover~$B$ and~$C$ by translates of $H$, say $B_1, \dots, B_\ell$ and $C_1, \dots, C_m$. For each set~$B_i$, there are only few sets $C_j$ which are candidates to contain a 3-SUM solution, namely at most~$K^5$ many. Therefore, we can apply a similar heavy-light approach as outlined before, and either enumerate all pairs in $B_i \times C_j$ if both sets are sparse, or efficiently compute~$B_i + C_j$ using a sparse sumset algorithm (in place of FFT) if one of the sets is dense.

An additional difficulty is that we cannot simply cover $B$ and $C$ by translates of $H$ in linear time. (For intervals this is easy, but we have no information about $H$ other than that is an approximate group.) We therefore sample a set $S$ of \emph{random} shifts and attempt to cover $B$ by the sets~$B \cap (H + s)$ for $s \in S$ (similarly for $C$). However, computing the sets~$B \cap (H + s)$ is not easy (in fact, this is again an instance of 3-SUM). We deal with this new obstacle by combining the above algorithm with a universe reduction to a universe of subquadratic size. The detailed proof can be found in \cref{sec:3sum-structured}.

We remark that we have not attempted to improve the dependence on $K$ as it is immaterial for our reduction. It is likely possible to drastically reduce the $K$ term in the running time of \cref{thm:3sum-structured}.

\subsection{Hashing---Linear and Independent?} \label{sec:overview:sec:hashing}
A major technical issue that we are facing in the energy reduction (and in fact also in the reduction from 3-SUM to listing triangles) is that we need hash functions which are both \emph{linear} and sufficiently \emph{independent}. More specifically, recall that we hash a given 3\=/SUM instance $A$ to a smaller instance $B = \set{a \in A : h(a) = 0}$ hoping that thereby the number of solutions to the equation $a_1 + a_2 = a_3 + a_4$ reduces by a factor of $1/m^3$, where $m$ is the number of buckets $h$ hashes to. By the reasons outlined before, the hash function \emph{must} be linear. Unfortunately, no matter what construction we use, for a linear hash function the random variables $h(a_1), h(a_2), h(a_3)$ cannot always be independent. For example, $h(1)$, $h(2)$ and $h(5)$ cannot be independent, since the latter can be expressed as~$h(1 + 2 + 2) = h(1) + h(2) + h(2)$.

There has been work on proving that by relaxing the linearity condition, a standard family of hash functions (confusingly named ``linear hashing'') behaves almost 3-wise independent~\cite{Knudsen16}. Unfortunately, the results of this paper are not strong enough for our purposes. And even if we had perfect 3-wise independence, we need higher degrees of independence to make the short cycle removal work for $k$-cycles where $k > 4$---a crucial ingredient to the hardness of distance oracles.

We propose the following solution: Instead of working over the integers, we instead work over the group~$G = \Field_p^d$ for some constant (or slightly super-constant) $p$. All the tools from additive combinatorics mentioned before work just as well over $\Field_p^d$, and also from the perspective of fine-grained complexity, the 3-SUM problem over the integers reduces to the 3-SUM problem over $\Field_p^d$:

\begin{lemma}[Integer 3-SUM to Vector 3-SUM, {{{\cite{AbboudLW14}}}}]
For any $\epsilon > 0$, there is some prime $p$ such there is no $\Order(n^{2-\epsilon})$-time algorithm for 3-SUM over $\Field_p^d$ (with $d = \Order(\log n)$), unless the 3-SUM conjecture fails.
\end{lemma}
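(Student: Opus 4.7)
The plan is a fine-grained reduction from integer 3-SUM to vector 3-SUM over $\Field_p^d$ in the spirit of~\cite{AbboudLW14}. I would proceed in two stages: a universe-reduction stage that brings the integer instance down to polynomially bounded magnitudes, followed by an embedding stage that represents each bounded integer as a vector over $\Field_p$.

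For the first stage, hash each input modulo a uniformly random prime from $[Q, 2Q]$ with $Q = n^{\Order(1)}$, and take $\Order(\log n)$ independent repetitions. Since any fixed non-zero sum $s$ of absolute value $n^{\Order(1)}$ has only $\Order(\log n)$ prime divisors, a union bound shows that non-solutions survive with probability $o(1)$, while solutions always survive. This reduces integer 3-SUM to integer 3-SUM on the universe $[-U, U]$ with $U = n^{\Order(1)}$ in time $\widetilde\Order(n)$. For the second stage, pick a prime $p$ (depending on $\eps$) and set $d = \Theta(\log n / \log p) = \Order(\log n)$ so that $p^d$ comfortably exceeds $U$. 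Embed each integer $a \in [-U, U]$ into $\Field_p^d$ via a \emph{carry-free} digit encoding: write $a$ in a base of size roughly $p$ with sufficiently small digits, and let $\phi(a)$ be the vector of these digits reduced modulo $p$. When the base and digit bounds are chosen so that every digit-wise triple sum stays smaller than $p$ in absolute value, no modular wrap-around can happen; a base-$B$ induction then gives $\phi(a) + \phi(b) + \phi(c) = 0$ in $\Field_p^d$ iff $a + b + c = 0$ in $\Int$, so $\phi$ reduces integer 3-SUM to 3-SUM over $\Field_p^d$ \emph{exactly}. A hypothetical $\Order(n^{2-\eps})$ algorithm for 3-SUM over $\Field_p^d$ then yields $\widetilde\Order(n^{2-\eps})$ for integer 3-SUM, contradicting the 3-SUM conjecture.

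The main obstacle is implementing the carry-free encoding for arbitrary integers in $[-U, U]$. Standard balanced base-$B$ representations use digits as large as $B/2$, which is too big for the triple-sum-without-wrap-around property. To obtain smaller digits one must enlarge $d$ by a small constant factor and invoke a classical digit-smoothing step (iteratively propagating any overflow to the next higher position) to rewrite the representation with digits bounded, say, by $B/6$, so that digit-wise triple sums remain strictly below $p/2$. Verifying that this rewriting is well-defined, always succeeds within the allotted $d$ digits, and runs in time $\widetilde\Order(n)$ constitutes the technical heart of the argument; choosing $p$ large enough as a function of $\eps$ is what provides the necessary slack, which is why the statement allows $p$ to depend on $\eps$.
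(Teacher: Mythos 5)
The paper does not prove this lemma; it is quoted verbatim from \cite{AbboudLW14}. Your proposal, however, contains a genuine gap in the second (embedding) stage. You claim an exact map $\phi\colon [-U,U]\to\Field_p^d$ (a carry-free digit encoding) with $\phi(a)+\phi(b)+\phi(c)=0$ in $\Field_p^d$ iff $a+b+c=0$ in $\Int$. No such map can exist when $p$ is a constant (or even $p \le 3U$). Indeed, suppose $\phi$ satisfies the forward direction for all admissible triples. Taking $c=-(a+b)$ and also the triples $(a,-a,0)$ and $(0,0,0)$, one derives that $\psi_i(a):=\phi_i(a)-\phi_i(0)$ is additive, i.e.\ $\psi_i(a+b)=\psi_i(a)+\psi_i(b)$ in $\Field_p$ whenever $a,b,a+b\in[-U,U]$; hence $\psi_i(a)\equiv a\cdot\psi_i(1)\pmod p$ and each coordinate of $\phi$ is affine mod $p$. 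But then $\phi(a)+\phi(b)+\phi(c)=0$ is equivalent to the single congruence $a+b+c\equiv 0\pmod p$, which is strictly weaker than $a+b+c=0$ once $|a+b+c|$ can exceed $p$. In particular, your $\phi$ is the digit map, which is \emph{not} coordinatewise affine mod $p$, so it cannot satisfy even the forward direction: if $a+b+c=0$, carries make the digitwise sums nonzero, and shrinking the digit magnitude cannot save this.

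The specific fix you propose (``digit smoothing'' to digits of magnitude at most $B/6$) is also impossible on its own terms: to represent \emph{every} integer in base $B$ you need the digit alphabet $[-D,D]$ to hit every residue class mod $B$, i.e.\ $D\ge (B-1)/2$, whereas the no-wrap-around argument you invoke requires $3D<B$, i.e.\ $D\le (B-1)/3$. These ranges are disjoint for all $B\ge 2$. In the actual reduction of \cite{AbboudLW14} carries are handled rather than avoided: one enumerates (a bounded family of) carry patterns, which multiplies the number of sub-instances by roughly $\Order(1)^d = n^{\Order(1/\log p)}$, and one chooses $p$ large enough as a function of $\eps$ so that this overhead is absorbed into the exponent (equivalently, uses randomized hashing so that false positives are rare and can be verified). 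This is exactly why the lemma lets $p$ depend on $\eps$; that dependence is not, as your write-up suggests, to provide slack for a carry-free encoding, but to control the cost of carry enumeration/verification.
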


Working over finite field vector spaces $\Field_p^d$ has the advantage that we have access to a nicer family of hash functions: Projections to random subspaces via random linear maps~\makebox{$h : \Field_p^d \to \Field_p^{d'}$}. For this family of hash functions, we can easily characterize the degree of independence: The hash values $h(a_1), h(a_2), h(a_3)$ are independent if and only if $a_1, a_2, a_3$ are linearly independent vectors. Of course the same counterexamples as above still apply, however, the number of bad triples $a_1, a_2, a_3 \in A$ is now very small: For each $a_1 \in A$, there are only $p = \Order(1)$ vectors $a_2$ which are linearly dependent on $a_1$, and similarly there are only~$p^2 = \Order(1)$ vectors $a_3$ which are linearly dependent on $a_1, a_2$. This kind of reasoning is a recurring theme in several of our proofs (see \cref{lem:energy-reduction-subsampling,lem:3sum-to-triangle}).
\section{Background on Additive Combinatorics} \label{sec:additive-combinatorics}
Additive combinatorics is the theory of additive structure in sets. In this section we summarize the basics from additive combinatorics which are needed throughout the paper. For a more thorough treatment, we refer to the book by Tao and Vu~\cite{TaoV06}. Some of the results stated here are new, because---even though the existential results are well-known---there has been no work on turning the results into efficient algorithms, to the best of our knowledge.

\subsection{Sumsets}
Recall that the sumset $A + B$ is defined as $\set{a + b : a \in A, b \in B}$. We write $r_{A, B}(x) = \#\set{(a, b) \in A \times B : a + b = x}$ to denote the \emph{multiplicities} in the sumset. As the basic building block for several upcoming proofs, we use that computing sumsets can be implemented in input- plus output-sensitive time. This fact is well-known for the integer case~\cite{ArnoldR15,ChanL15,Nakos20,BringmannFN21}, even in terms of deterministic algorithms~\cite{ChanL15,BringmannFN22}. However, we also need efficient algorithms for computing sumsets over $G = \Field_p^d$ (where for us $p = \Order(1)$ and $d = \Order(\log n)$), and to the best of our knowledge no results are known about this problem. We present the following two results, both of which follow a similar recipe than the known algorithms for integers. We postpone the proofs to \cref{sec:sparse-sumset}.

\begin{restatable}[Sparse Sumset]{lemma}{lemsparsesumset} \label{lem:sparse-sumset}
Let $G = \Field_p^d$. Given two sets $A, B \subseteq G$, we can compute $A + B$ in time \raisebox{0pt}[0pt][0pt]{$\widetilde\Order(|A + B| \cdot \poly(p, d))$} by a randomized algorithm. Moreover, the algorithm reports $r_{A, B}(x)$ for all $x \in A + B$.
\end{restatable}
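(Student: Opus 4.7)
The plan is to adapt the output-sensitive sparse convolution paradigm from the integer setting (as in \cite{ChanL15,BringmannFN21}) to the group $G = \Field_p^d$. The two ingredients are (a) a \emph{dense} group convolution over a smaller ambient space $\Field_p^{d'}$, computable via a $d'$-dimensional FFT in time $\widetilde\Order(p^{d'} \cdot d' \cdot \poly(p))$ (one 1D DFT on $p$ points along each coordinate, worked out over $\mathbb{C}$ or over a prime field supporting $p$-th roots of unity), and (b) \emph{random linear hashing} $h : \Field_p^d \to \Field_p^{d'}$, which separates the elements of $A+B$ into distinct buckets.

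First I would estimate $t := |A+B|$ by exponential doubling. Given a guess $t$, set $d' := \lceil \log_p(Ct) \rceil$ for a sufficiently large constant $C$ so that $p^{d'} = \Theta(t)$, and sample a uniformly random linear map $h : \Field_p^d \to \Field_p^{d'}$. Since $h$ is linear, $\Pr[h(x) = h(x')] = p^{-d'} = \Theta(1/t)$ for distinct $x, x' \in \Field_p^d$, so with constant probability any fixed $x \in A+B$ is alone in its bucket. Define $f_A(y) := |\set{a \in A : h(a) = y}|$ and $f_B(y) := |\set{b \in B : h(b) = y}|$; the group convolution $(f_A \ast f_B)(y) = |\set{(a,b) \in A \times B : h(a+b) = y}|$ is computed by the $d'$-dimensional FFT. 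To recover the coordinates of the sumset elements, for every coordinate $i \in [d]$ and every pair of values $u, v \in \Field_p$ I would compute the convolutions $f^{(i,u)}_A \ast f^{(i,v)}_B$, where $f^{(i,u)}_A(y) := |\set{a \in A : h(a) = y,\ a_i = u}|$, and aggregate them into $C^{(i,w)}(y) := \sum_{u \in \Field_p}(f^{(i,u)}_A \ast f^{(i, w-u)}_B)(y)$. At a bucket $y$ containing a unique sumset element $x$ with multiplicity $r = r_{A,B}(x)$, one has $C^{(i,w)}(y) = r$ if $w = x_i$ and $0$ otherwise, which uniquely reveals $x_i$.

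To recover every element of $A+B$, I would run this with $\Order(\log n)$ independently sampled hash functions; by the constant isolation probability, each $x \in A+B$ is alone in its bucket for at least one of them with high probability. Every decoded candidate $\widehat x$ is then verified: keep it only if $h(\widehat x) = y$ and, using an independent verification hash $h'$, the bucket $h'(\widehat x)$ also decodes (consistently) to $\widehat x$. By a union bound over the $\Order(t \log n)$ candidates, this filters out spurious candidates produced by buckets in which several elements of $A+B$ collide. The multiplicity $r_{A,B}(x)$ is read off from $(f_A \ast f_B)(y)$ at any isolated bucket of $x$.

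\textbf{Main obstacle.} The delicate step is handling buckets that contain two or more elements of $A+B$, where the decoded ``coordinate'' is meaningless. The plan addresses this via the isolation guarantee combined with $\Order(\log n)$ independent hash repetitions and the verification hash, so that only truly isolated buckets contribute accepted candidates. The total running time is dominated by $\Order(\log n)$ rounds, each performing $\Order(d \cdot p^2)$ convolutions of size $\Order(t)$ over $\Field_p^{d'}$, which yields $\widetilde\Order(|A+B| \cdot \poly(p,d))$ since $d' = \Order(\log_p t) \le \Order(\log n)$.
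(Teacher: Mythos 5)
Your proposal is correct, but it reaches the lemma by a genuinely different route than the paper. The shared skeleton is the same: hash linearly into $\Field_p^{d'}$ with $p^{d'} = \Theta(\text{output size})$, run a dense generalized FFT over the small group, and repeat $\Order(\log n)$ times so that every element of $A+B$ is isolated at least once. The difference is how the \emph{identities} of the sumset elements are obtained. The paper never decodes anything from the convolution values: it first builds a small explicit candidate superset $X \supseteq A+B$ with $|X| \leq p\,|A+B|$ by recursing on the dimension (drop the last coordinate, compute the lower-dimensional sumset, extend by all $p$ values), and then only has to read off $r_{A,B}(x)$ at the buckets of known candidates; this makes output-size estimation unnecessary (the size of $X$ is known from the recursive call) at the price of recursion depth $d$. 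You instead decode the unknown elements coordinate-by-coordinate from the coordinate-restricted convolutions $C^{(i,w)}$, which is sound: in a bucket containing a unique element $x$ of $A+B$, every contributing pair satisfies $a+b=x$, so $C^{(i,w)}(y)=r_{A,B}(x)\cdot[w=x_i]$, and a bucket containing two or more sumset elements necessarily shows two nonzero values of $w$ in some coordinate, so ``unclean'' buckets are detectable and can simply be discarded (your extra verification hash is then not even needed). Amusingly, your decoding step is essentially the mechanism the paper uses for the companion result, \cref{lem:sparse-sumset-witness}, where witnesses are recovered entry by entry by partitioning $A$ according to a coordinate value. The one detail you should make explicit is how the exponential-doubling guess for $t=|A+B|$ is certified: with a too-small guess the isolation argument gives no completeness guarantee, but since all accepted multiplicities are correct, the check $\sum_{\widehat x} r_{A,B}(\widehat x) = |A|\cdot|B|$ certifies that the recovered set is all of $A+B$, and doubling until this holds costs only a constant factor. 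With that addition, your runtime accounting ($\Order(\log n)$ rounds of $\Order(dp^2)$ convolutions of size $\Order(t)$) indeed gives $\widetilde\Order(|A+B|\cdot\poly(p,d))$, matching the lemma.
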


\begin{restatable}[Sparse Witness Finding]{lemma}{lemsparsesumsetwitness} \label{lem:sparse-sumset-witness}
Let $G = \Field_p^d$. Given two sets $A, B \subseteq G$ and a parameter~$t$, there is a randomized algorithm running in time \raisebox{0pt}[0pt][0pt]{$\widetilde\Order(t \cdot |A + B| \cdot \poly(p, d))$} that computes, for each $x \in A + B$, a set of $t$ witnesses from $\set{(a, b) \in A \times B : a + b = x}$ (or all witnesses, if there happen to be less than $t$ many).
\end{restatable}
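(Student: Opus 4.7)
My plan is to derive Lemma~\ref{lem:sparse-sumset-witness} from Lemma~\ref{lem:sparse-sumset} via color coding plus an identifier-tracking trick, following the standard sparse-convolution witness-finding recipe over the integers (cf.~\cite{Nakos20, BringmannFN21}) adapted to $\Field_p^d$.

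The core building block I would use works as follows. I would partition $A$ into $k = \Theta(t)$ color classes $A_1, \dots, A_k$ via a pairwise-independent random hash, then apply Lemma~\ref{lem:sparse-sumset} to each pair $(A_i, B)$ to obtain $A_i + B$ together with the multiplicities $r_{A_i, B}(x)$. In parallel I would compute a weighted variant that reports $S_i(x) = \sum_{(a, b) \in A_i \times B :\, a + b = x} \mathrm{id}(a)$, where $\mathrm{id}(a) \in [|A|]$ is a unique integer identifier of $a$; whenever $r_{A_i, B}(x) = 1$, reading off $\mathrm{id}(a) = S_i(x)$ recovers the isolated witness $(a, x - a)$. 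Summing the per-class cost $\widetilde\Order(|A_i + B| \cdot \poly(p, d))$ over the $k = \Theta(t)$ classes gives $\widetilde\Order(t \cdot |A + B| \cdot \poly(p, d))$ since $|A_i + B| \leq |A + B|$ trivially.

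To extract up to $t$ distinct witnesses per $x$, I would wrap the above in $\Order(\log |A|)$ subsampling \emph{scales}. At scale $\ell$, I subsample $A$ independently with rate $\rho_\ell = \min(1, t / 2^\ell)$ before color-coding. For $x \in A + B$ with multiplicity $r$, the scale $\ell^\star$ at which $\rho_{\ell^\star} \cdot r = \Theta(t)$ leaves $\Theta(\min(r, t))$ surviving witnesses of $x$, spread across $k = \Theta(t)$ color classes; a standard Poisson calculation then shows that a constant fraction of those surviving witnesses lands alone in its class and is isolated. Running $\Order(\log n)$ independent repetitions per scale and union-bounding over all $x$ and all requested witnesses recovers $\min(t, r)$ distinct witnesses per $x$ with high probability. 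The cost per scale remains $\widetilde\Order(t \cdot |A + B| \cdot \poly(p, d))$, and summing over $\Order(\log |A|)$ scales preserves the final bound up to polylogarithmic factors absorbed by $\widetilde\Order$.

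The main obstacle will be supplying the weighted sumset $S_i$, which Lemma~\ref{lem:sparse-sumset} does not provide out of the box. I see two clean fixes. Either (i)~I extend the proof of Lemma~\ref{lem:sparse-sumset} in Section~\ref{sec:sparse-sumset} to carry an integer payload alongside each element of $A$, which should go through because the underlying hashing and sparse-recovery primitives are value-agnostic; or (ii)~I keep Lemma~\ref{lem:sparse-sumset} black-box and binary-search the $\Order(\log |A|)$ bits of $\mathrm{id}(a)$ via $\Order(\log |A|)$ unweighted invocations of Lemma~\ref{lem:sparse-sumset} on subsets of $A_i$ filtered by single bits. Either route preserves the claimed $\widetilde\Order(t \cdot |A + B| \cdot \poly(p, d))$ running time.
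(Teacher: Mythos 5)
Your route is correct in substance but genuinely different from the paper's. The paper isolates a \emph{single} witness at a time: it subsamples $A$ and $B$ at geometric rates so that with constant probability exactly one witness of $x$ survives, recovers that unique witness coordinate by coordinate (partitioning $A$ by the value of each of the $d$ coordinates and testing which slice's sumset contains $x$, i.e.\ $p\cdot d$ calls to \cref{lem:sparse-sumset}), and collects $t$ witnesses by $\widetilde\Order(t)$ independent repetitions, using that each isolated witness is uniformly distributed among all witnesses of $x$. You instead parallelize the extraction: color-coding $A$ into $\Theta(t)$ classes isolates up to one witness per class per run, and the isolated witness is read off either from a payload-carrying (weighted) convolution or by bit-slicing an identifier with $\Order(\log|A|)$ unweighted calls. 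Both approaches meet the budget $\widetilde\Order(t\cdot|A+B|\cdot\poly(p,d))$; yours trades the coupon-collector-style repetition factor for $\Theta(t)$ color classes and is closer to the standard integer sparse-convolution witness machinery, while the paper's is simpler to analyze because it only ever reasons about a unique surviving witness. Your fix (i) does go through, since the generalized FFT (\cref{thm:fft}) and the hashing identity in \cref{sec:sparse-sumset} are stated for arbitrary functions in $\Int[G]$, so carrying integer payloads is immediate; fix (ii) is also fine and is essentially the same ``slice $A$ and test membership of $x$'' idea as the paper's coordinate-wise recovery, applied to identifier bits instead of group coordinates.

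One step needs tightening. For an $x$ with $r_x \gg t$ witnesses you propose to ``union-bound over all requested witnesses,'' but in that regime there is no fixed set of $t$ requested witnesses: any individual witness survives the relevant subsampling scale only with probability about $t/r_x$, so no prescribed witness can be guaranteed to appear. What you actually need is that the \emph{number} of isolated surviving witnesses of $x$ accumulated over the $\Order(\log n)$ repetitions reaches $t$ with high probability, which is a concentration statement about the isolated-survivor count; pairwise independence of the coloring gives the right expectation but not immediately this concentration. This is fixable with standard means (use a fully random coloring, which costs only $\Order(|A|)$ time to sample, together with a Chernoff/Poissonization argument, or a second-moment bound with slightly higher independence, padding the constants in $k=\Theta(t)$ and in the subsampling rate), and your $r_x\le t$ case is fine as argued. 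The paper sidesteps this issue entirely, since sampling one uniformly random witness per repetition and repeating $\widetilde\Order(t)$ times handles both regimes with a single elementary argument.
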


\subsection{Additive Energy}
An important definition for us is the \emph{additive energy $E(A)$}, defined as the number of solutions~$(a_1, a_2, a_3, a_4) \in A^4$ to the equation $a_1 + a_2 = a_3 + a_4$. Occasionally we also consider the two-set variant $E(A, B)$, defined as the number of solutions $(a_1, a_2, b_1, b_2) \in A^2 \times B^2$ to the equation $a_1 + b_1 = a_2 + b_2$. We start with some basic properties about additive energy, all of which can be proved by elementary means, see~\cite[Chapter 2]{TaoV06}.

\begin{lemma}[Additive Energy, Basic Properties] \label{lem:energy-basics}
Let $A, B \subseteq G$. Then:
\begin{itemize}
\item $|A| \cdot |B| \leq E(A, B) \leq \min(|A|^2 \cdot |B|, |A| \cdot |B|^2)$.
\item $E(A, B) = \sum_{x \in G} r_{A,B}(x)^2$.
\item $E(A, B) \geq \frac{|A|^2 |B|^2}{|A + B|}$.
\item $E(A, B) \leq E(A)^{1/2} E(B)^{1/2}$.
\end{itemize}
\end{lemma}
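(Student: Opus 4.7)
The plan is to establish all four bullets by elementary counting combined with two applications of the Cauchy–Schwarz inequality. All four statements follow naturally once we adopt the right representation-function viewpoint.

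For the first bullet, the lower bound $|A||B| \leq E(A, B)$ is immediate from the ``trivial'' quadruples $(a, b, a, b)$ with $a \in A$, $b \in B$, which all satisfy $a + b = a + b$. For the upper bound, observe that any three of the four entries of a valid quadruple $(a_1, a_2, b_1, b_2)$ determine the fourth via $a_1 + b_1 = a_2 + b_2$; choosing $(a_1, a_2, b_1) \in A^2 \times B$ (resp.\ $(a_1, b_1, b_2) \in A \times B^2$) gives the two bounds in the minimum.

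For the second bullet, I would partition the set of quadruples counted by $E(A, B)$ according to the common value $x = a_1 + b_1 = a_2 + b_2$. Each such $x$ lies in $A + B$ and admits exactly $r_{A,B}(x)$ representations as $a + b$, so there are $r_{A,B}(x)^2$ ordered pairs of representations, and summing over $x$ gives the identity. The third bullet is then a direct application of Cauchy–Schwarz to the indicator of the support:
\[
(|A||B|)^2 = \Bigl(\sum_{x \in A+B} r_{A,B}(x)\Bigr)^2 \leq |A+B| \cdot \sum_{x} r_{A,B}(x)^2 = |A+B| \cdot E(A, B),
\]
where the first equality uses $\sum_x r_{A,B}(x) = |A| \cdot |B|$ and the final equality invokes the second bullet.

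For the fourth bullet, I would switch from sums to differences: the condition $a_1 + b_1 = a_2 + b_2$ is equivalent to $a_1 - a_2 = b_2 - b_1$, so writing $r'_A(y) = \#\{(a_1, a_2) \in A^2 : a_1 - a_2 = y\}$ (and similarly $r'_B$) yields $E(A, B) = \sum_{y} r'_A(y) \cdot r'_B(-y)$. Cauchy–Schwarz then gives $E(A, B) \leq \bigl(\sum_y r'_A(y)^2\bigr)^{1/2} \bigl(\sum_y r'_B(y)^2\bigr)^{1/2}$, and each factor equals $E(A)$ or $E(B)$ respectively, since $\sum_y r'_A(y)^2$ counts quadruples $(a_1, a_2, a_3, a_4)$ with $a_1 - a_2 = a_3 - a_4$, which after rearrangement $a_1 + a_4 = a_2 + a_3$ is precisely the definition of $E(A)$. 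There is no real obstacle here—the only mildly subtle step is recognizing that additive energy is invariant under this sum-to-difference reformulation, which is what makes the Cauchy–Schwarz in the fourth bullet land exactly on $E(A)^{1/2} E(B)^{1/2}$.
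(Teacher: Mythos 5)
Your proof is correct, and it is the standard elementary argument that the paper itself does not spell out (it simply cites Tao--Vu, Chapter 2). All four bullets are handled cleanly: the trivial-quadruple lower bound and three-determine-the-fourth upper bound; the partition of quadruples by the common sum value $x$; Cauchy--Schwarz over the support $A+B$; and the sum-to-difference reformulation plus Cauchy--Schwarz, correctly noting that the difference-representation energy of $A$ equals $E(A)$ after rearranging $a_1 - a_2 = a_3 - a_4$ into $a_1 + a_4 = a_2 + a_3$.
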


A property of additive energy that is particularly useful for us, is that it offers some control over the number of solutions to any linear equation (not only equations of the form~$a_1 + a_2 = a_3 + a_4$):

\begin{lemma}[Small Energy Implies Few Solutions to Linear Equations] \label{lem:energy-linear-equations}
Let $A \subseteq \Field_p^d$, let~$k \geq 4$ and fix $\alpha_1, \dots, \alpha_k, \beta \in \Field_p$ where $\alpha_1, \dots, \alpha_k \neq 0$. Then:
\begin{equation*}
    \#\set{(a_1, \dots, a_k) \in A^k : \alpha_1 a_1 + \dots + \alpha_k a_k = \beta} \leq E(A) \cdot |A|^{k-4}.
\end{equation*}
\end{lemma}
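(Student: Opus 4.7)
The plan is to reduce to the case $k=4$ (where the claimed bound is exactly $E(A)$) by first scaling away the coefficients and then fixing the ``extra'' variables, and to handle the $k=4$ case via Cauchy--Schwarz using the identity $E(A,B)=\sum_x r_{A,B}(x)^2$ together with $E(A,B)\le E(A)^{1/2}E(B)^{1/2}$ from \cref{lem:energy-basics}.

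First I would normalize. Since $p$ is prime and each $\alpha_i\neq 0$, the map $a\mapsto \alpha_i a$ is a bijection on $\Field_p^d$. Substituting $b_i=\alpha_i a_i$ and setting $A_i=\alpha_i A$ transforms the equation $\alpha_1 a_1+\dots+\alpha_k a_k=\beta$ into $b_1+\dots+b_k=\beta$ with $b_i\in A_i$, and this is a bijection on solutions. Clearly $|A_i|=|A|$. Moreover, scaling preserves additive energy: if $(x_1,x_2,x_3,x_4)\in(\alpha A)^4$ satisfies $x_1+x_2=x_3+x_4$, then dividing by $\alpha$ gives a quadruple in $A^4$ with $a_1+a_2=a_3+a_4$, so $E(\alpha A)=E(A)$.

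Next I would handle $k=4$. Define $N_{12}(x)=r_{A_1,A_2}(x)$ and $N_{34}(y)=r_{A_3,A_4}(y)$. Then the number of solutions equals
\begin{equation*}
\sum_{x\in G} N_{12}(x)\, N_{34}(\beta-x).
\end{equation*}
By Cauchy--Schwarz this is at most $\bigl(\sum_x N_{12}(x)^2\bigr)^{1/2}\bigl(\sum_x N_{34}(x)^2\bigr)^{1/2}$, which by \cref{lem:energy-basics} equals $E(A_1,A_2)^{1/2}E(A_3,A_4)^{1/2}$. Applying $E(X,Y)\le E(X)^{1/2}E(Y)^{1/2}$ and the scaling-invariance $E(A_i)=E(A)$, this is bounded by $E(A)$, as required.

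For the inductive step $k>4$, I would simply fix the last $k-4$ coordinates. For every choice of $a_5,\dots,a_k\in A$ (there are $|A|^{k-4}$ such choices), the remaining equation is $\alpha_1 a_1+\alpha_2 a_2+\alpha_3 a_3+\alpha_4 a_4=\beta'$ for some $\beta'\in\Field_p^d$, and the $k=4$ bound gives at most $E(A)$ solutions. Summing over the fixed choices yields the desired bound $E(A)\cdot|A|^{k-4}$. There is no real obstacle here; the only subtlety is remembering that the Cauchy--Schwarz bound is independent of the constant term $\beta$ (and hence of $\beta'$), which is what makes the reduction from general $k$ to $k=4$ work.
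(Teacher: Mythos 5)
Your proof is correct and follows essentially the same route as the paper: a Cauchy--Schwarz argument on the representation functions for the base case $k=4$ (using $E(A,B)=\sum_x r_{A,B}(x)^2$, $E(A,B)\le E(A)^{1/2}E(B)^{1/2}$, and dilation-invariance of energy), combined with fixing the extra $k-4$ variables for larger $k$. The only cosmetic difference is that you fix all extra variables at once rather than inducting one variable at a time, and you normalize the coefficients explicitly up front, which the paper handles implicitly via the dilated sets $\alpha_i A$.
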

\begin{proof}
The proof is by induction on $k$. For any case $k > 4$, we can arbitrarily set the variable~$a_k$ to $|A|$ possible values and inductively bound the number of solutions to the remaining equation involving $k-1$ variables by $E(A) \cdot |A|^{k-5}$. It remains to prove the statement for the base case $k = 4$:
\begin{gather*}
    \#\set{(a_1, a_2, a_3, a_4) \in A^4 : \alpha_1 a_1 + \alpha_2 a_2 + \alpha_3 a_3 + \alpha_4 a_4 = \beta} \\
    \qquad= \#\set{(a_1, a_2, a_3, a_4) \in A^4 : \alpha_1 a_1 + \alpha_2 a_2 = \beta - \alpha_3 a_3 - \alpha_4 a_4} \\
    \qquad= \sum_{x \in \Field_p^d} r_{\alpha_1 A, \alpha_2 A}(x) \cdot r_{\beta - \alpha_3 A, -\alpha_4 A}(x)
\intertext{We apply the Cauchy-Schwartz inequality, use the identity $E(A, B) = \sum_{x \in G} r_{A, B}(x)^2$ twice, and use the bound $E(A, B) \leq E(A)^{1/2} E(B)^{1/2}$ twice:}
    \qquad= \left(\sum_{x \in \Field_p^d} r_{\alpha_1 A, \alpha_2 A}(x)^2\right)^{1/2} \cdot \left(\sum_{x \in \Field_p^d} r_{\beta - \alpha_3 A, -\alpha_4 A}(x)^2 \right)^{1/2} \\
    \qquad\leq E(\alpha_1 A, \alpha_2 A)^{1/2} \cdot E(\beta - \alpha_3 A, -\alpha_4 A)^{1/2} \\
    \qquad\leq E(\alpha_1 A)^{1/4} E(\alpha_2 A)^{1/4} E(\beta - \alpha_3 A)^{1/4} E(-\alpha_4 A)^{1/4} \\
    \qquad= E(A).
\end{gather*}
In the final step we have used that the additive energy of a set is invariant under translations and under invertible dilations.
\end{proof}

Finally, recall that from a computational perspective we often have the need to compute the additive energy. Using the identity $E(A) = \sum_{x \in G} r_{A, A}(x)^2$, and using the efficient algorithm to compute $r_{A, A}(x)$ in \cref{lem:sparse-sumset} we can compute $E(A)$ in time $\widetilde\Order(|A + A|)$. However, for unstructured sets this becomes quadratic in the size of $A$ which is prohibitive in most cases. Therefore, we typically settle for the following approximation algorithm for~$E(A)$.

\begin{lemma}[Approximating Additive Energy] \label{lem:energy-approximation}
Let $A \subseteq G$. For any constant $\epsilon > 0$, we can compute a $(1+\epsilon)$\=/approximation of $E(A)$ in time $\widetilde\Order(|A|)$ by a randomized algorithm.
\end{lemma}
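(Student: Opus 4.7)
The plan is to estimate $E(A)$ by Monte Carlo sampling, exploiting the identity
\begin{equation*}
    E(A) = \#\set{(a_1,a_2,a_3,a_4) \in A^4 : a_1 + a_2 = a_3 + a_4} = \#\set{(a_1,a_2,a_3) \in A^3 : a_1 + a_2 - a_3 \in A} \cdot 1,
\end{equation*}
so that if $N = |A|$ and we sample $(a_1,a_2,a_3) \in A^3$ uniformly at random with replacement, the indicator $X = \mathbbm{1}[a_1 + a_2 - a_3 \in A]$ has expectation $\mu = E(A)/N^3$. Since trivially $E(A) \geq N^2$ (from the pairs $a_1 = a_3$, $a_2 = a_4$), we have $\mu \geq 1/N$.

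First I would preprocess $A$ into a hash table in expected time $\widetilde\Order(N)$, so that membership queries against $A$ take $\Order(1)$ expected time (this works uniformly for $G = \Int$ and for $G = \Field_p^d$ using standard universal hashing, noting $p = \Order(1)$ and $d = \Order(\log n)$ means elements of $\Field_p^d$ fit in $\widetilde\Order(1)$ words). Then I draw $s$ independent samples $(a_1^{(i)}, a_2^{(i)}, a_3^{(i)}) \in A^3$ uniformly, compute $X_i = \mathbbm{1}[a_1^{(i)} + a_2^{(i)} - a_3^{(i)} \in A]$, and output $\widehat E := \tfrac{N^3}{s} \sum_{i=1}^{s} X_i$.

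For the error analysis, a multiplicative Chernoff bound yields
\begin{equation*}
    \Pr\!\left[\,\Bigl|\tfrac{1}{s}\sum_i X_i - \mu\Bigr| > \epsilon \mu\,\right] \leq 2 \exp(-\Omega(\epsilon^2 s \mu)),
\end{equation*}
so choosing $s = \Theta(\epsilon^{-2} \mu^{-1} \log n) = \Theta(\epsilon^{-2} N \cdot \log n / (E(A)/N^2))$ drives the failure probability below $1/n^c$. Since we only have the a priori bound $\mu \geq 1/N$, a safe choice is $s = \Theta(\epsilon^{-2} N \log n)$, giving total running time $\widetilde\Order(\epsilon^{-2} N)$, which is $\widetilde\Order(N)$ for constant $\epsilon$.

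I do not expect a serious obstacle here; the one detail worth attention is that the algorithm must commit to $s$ without knowing $\mu$, but the conservative choice $s = \Theta(\epsilon^{-2} N \log n)$ always suffices because $\mu \geq 1/N$ is unconditional. If one wanted to improve the dependence on $\epsilon$ further, a standard median-of-means wrapper could be used, but this is unnecessary for the statement as given. The output $\widehat E$ is a $(1+\epsilon)$-approximation of $E(A)$ with high probability, as claimed.
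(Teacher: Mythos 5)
Your proposal is correct and follows essentially the same approach as the paper: sample $\Theta(\epsilon^{-2}|A|\log|A|)$ uniform triples, test membership of $a_1+a_2-a_3$ in $A$, rescale by $|A|^3$, and apply Chernoff using the unconditional bound $E(A)\geq|A|^2$. The hash-table preprocessing you mention is a standard detail the paper leaves implicit.
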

\begin{proof}
Sample $R = 100 \epsilon^{-2} |A| \log |A|$ triples $a_1, a_2, a_3 \in A$, and test for each triple whether $a_1 + a_2 - a_3 \in A$. Return as an estimate $|A|^3 / R$ times the number of successful tests.

For the analysis, let $X_i$ indicate whether the $i$-th test was successful, and let $X = \sum_{i=1}^R X_i$. We have that $\Pr(X_i = 1) = E(A) / |A|^3$, and thus $\Ex(X) = R \cdot E(A) / |A|^3$. That is, our estimator is indeed unbiased.
To prove that it returns an accurate estimate with high probability, we apply Chernoff's bound:
\begin{gather*}
    \Pr\left(\left|X \cdot \tfrac{|A|^3}R - E(A)\right| \geq \epsilon E(A)\right) \\
    \qquad=\Pr\left(\left|X - \tfrac{R \cdot E(A)}{|A|^3}\right| \geq \epsilon \tfrac{R \cdot E(A)}{|A|^3}\right) \\
    \qquad\leq 2 \exp\left(-\epsilon^2 \tfrac{R \cdot E(A)}{3|A|^3}\right) \\
    \qquad\leq 2 \exp\left(-\epsilon^2 \tfrac{100\epsilon^{-2} |A| \log |A| \cdot |A|^2}{3|A|^3}\right) \\
    \qquad\leq 2 \exp(-30 \log |A|) \\
    \qquad\leq |A|^{-10}. \qedhere
\end{gather*}
\end{proof}

Most of the time we will apply \cref{lem:energy-approximation} and pretend that the output is perfect without paying too much attention to the approximation error. In all occurrences in this paper, one can easily replace the bound by, say, a $1.1$-approximation and still get the correct algorithms.

\subsection{Famous Results}
In this subsection we summarize two important results in additive combinatorics which are crucial ingredients to our algorithms. First, we recall the BSG theorem:

\thmbsg*

For an existential proof see for instance~\cite[Theorem 5]{Balog07}. 
An efficient algorithm was later devised by Chan and Lewenstein~\cite{ChanL15}, however, they designed their algorithm for a two-set version of the theorem. In \cref{sec:bsg} we detail how to conclude our version from theirs.

We will also often use the following bound due to Plünnecke~\cite{Pluennecke70} and Ruzsa~\cite{Ruzsa99} to control the size of iterated sum- and difference sets (see also \cite[Corollary 6.29]{TaoV06}).

\begin{lemma}[Plünnecke-Ruzsa Inequality] \label{lem:pluennecke}
Let $A,B \subseteq G$. If $|A + B| \leq K |A|$, then $|nB - mB| \leq K^{n+m} |A|$ for all nonnegative integers $n, m$.
\end{lemma}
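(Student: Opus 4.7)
The plan is to follow Petridis's elegant proof of the Plünnecke–Ruzsa inequality, which uses two ingredients: Ruzsa's triangle inequality (to swap sumsets for difference sets) and a one-shot refinement of Plünnecke's inequality due to Petridis (to iterate the hypothesis $|A+B|\leq K|A|$).

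First I would establish Ruzsa's triangle inequality: for any nonempty finite $X, Y, Z \subseteq G$,
$$|X - Y| \cdot |Z| \leq |X - Z| \cdot |Z - Y|.$$
This is a direct injection argument: fix for every $d \in X - Y$ a representation $d = x_d - y_d$, and observe that the map $(d, z) \mapsto (x_d - z,\, z - y_d)$ is injective from $(X - Y) \times Z$ into $(X - Z) \times (Z - Y)$, because $(x_d - z) + (z - y_d) = d$ recovers $d$ and then $z$.

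Next I would prove Petridis's lemma: pick a nonempty $A^* \subseteq A$ minimizing $K^* := |A^* + B|/|A^*|$; then $K^* \leq K$, and for every finite set $C$ one has $|A^* + B + C| \leq K^* |A^* + C|$. I would proceed by induction on $|C|$. The case $|C|=1$ is exactly the definition of $K^*$. For $C = C' \cup \{c\}$, partition $A^* + B + (C' \cup\{c\})$ into $A^* + B + C'$ and the "new" contribution from shifting by $c$, and let $S := \{a \in A^* : a + B + c \subseteq A^* + B + C'\}$. Then the new contribution has size at most $|(A^* \setminus S) + B|$, and by the minimality of $K^*$ applied to the nonempty subset $S$ one gets $|S + B| \geq K^*|S|$, hence $|(A^* \setminus S) + B| \leq K^*(|A^*|-|S|)$. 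Combining with the inductive bound on $|A^* + B + C'|$ yields the step. Iterating the lemma with $C = \emptyset,\, B,\, 2B,\, \ldots$ gives $|A^* + kB| \leq (K^*)^k |A^*| \leq K^k |A^*|$ for every $k \geq 0$.

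Finally I would combine the two ingredients by applying Ruzsa's triangle inequality with $X := nB$, $Y := mB$, and $Z := -A^*$:
$$|nB - mB| \cdot |A^*| \;\leq\; |nB - (-A^*)| \cdot |(-A^*) - mB| \;=\; |A^* + nB| \cdot |A^* + mB| \;\leq\; K^{n+m}\, |A^*|^2,$$
so dividing by $|A^*|$ and using $|A^*| \leq |A|$ gives $|nB - mB| \leq K^{n+m}|A|$, as required. The main obstacle I expect is the delicate bookkeeping in Petridis's lemma: the minimality of $A^*$ has to be invoked at precisely the right step to obtain the clean multiplicative factor $K^*$ rather than a weaker bound. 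Everything else — Ruzsa's triangle inequality, the iteration, and the final Hölder-like combination — is essentially one line once Petridis's lemma is in hand.
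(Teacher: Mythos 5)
The paper does not prove this lemma; it simply cites Plünnecke, Ruzsa, and Tao--Vu, so there is no in-paper argument to compare against. Your overall route---Petridis's lemma plus Ruzsa's triangle inequality---is a standard, valid proof of the inequality. The statement and proof of the triangle inequality, the iteration $|A^* + kB| \leq (K^*)^k|A^*|$, and the final combination with $X=nB$, $Y=mB$, $Z=-A^*$ are all correct, and the remark $|A^*|\leq|A|$ correctly converts the conclusion into the stated form.

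There is, however, a genuine gap in the inductive step of Petridis's lemma as you wrote it. You define $S = \{a \in A^* : a + B + c \subseteq A^* + B + C'\}$, correctly observe that the new contribution is contained in $(A^* \setminus S) + B + c$, and then assert: ``by the minimality of $K^*$ applied to $S$ one gets $|S + B| \geq K^*|S|$, hence $|(A^* \setminus S) + B| \leq K^*(|A^*|-|S|)$.'' That ``hence'' is unjustified and in fact false: since $(A^*\setminus S)+B$ and $S+B$ are not disjoint, you cannot subtract cardinalities, and worse, minimality applied to the subset $A^*\setminus S$ gives the \emph{opposite} inequality $|(A^* \setminus S) + B| \geq K^*|A^* \setminus S|$. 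The fix is to bound the new contribution not by $|(A^*\setminus S)+B|$ but by $|A^* + B| - |S + B|$. This is also a valid upper bound, because $S + B + c$ is contained in both $A^* + B + C'$ and $A^* + B + c$, so the new elements $(A^* + B + c)\setminus(A^* + B + C')$ sit inside $(A^* + B + c)\setminus(S + B + c)$, whose size is exactly $|A^*+B| - |S+B|$. Now minimality gives the lower bound $|S + B| \geq K^*|S|$, and since $|A^*+B|=K^*|A^*|$, the new contribution is at most $K^*(|A^*|-|S|)$. Combining with the inductive hypothesis and the identity $|A^*+C| = |A^*+C'| + |A^*| - |A^*_c|$ (where $A^*_c = \{a : a+c\in A^*+C'\}\subseteq S$, so $|S|\geq|A^*_c|$) closes the induction. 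In short: the ingredients are right, but you must subtract $|S+B|$ from $|A^*+B|$ rather than compute $|(A^*\setminus S)+B|$, precisely because minimality supplies a lower bound on $|S+B|$ and nothing useful about $|(A^*\setminus S)+B|$.
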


Most of the time we will apply this lemma with $A = B$ (in which case the inequality is more commonly known as just \emph{Plünnecke's inequality}.)

Next, we present Ruzsa's covering lemma~\cite{Ruzsa99} and the relevant consequence that sets with small doubling can be covered by small approximate groups. We provide proofs because---even though the existential results are well-known---there has been no work on turning the results into efficient algorithms, to the best of our knowledge.

\begin{lemma}[Ruzsa's Covering Lemma]
Let $A, B \subseteq G$. Then there is a subset $X \subseteq B$ with the following properties:
\begin{itemize}
\item $B \subseteq A - A + X$,
\item $|X| \leq \frac{|A + B|}{|A|}$.
\end{itemize}
Moreover, we can compute $X$ in time $\widetilde\Order(\frac{|A - A + B| \cdot |A + B|}{|A|}) \leq \widetilde\Order(\frac{|A - A + B|^2}{|A|})$. 
\end{lemma}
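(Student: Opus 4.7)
The plan is to follow Ruzsa's classical greedy argument and then implement it efficiently. For the combinatorial part, I would build $X$ greedily: start with $X_0 = \emptyset$, and at step $i$ pick any $x_i \in B$ with $x_i \notin A - A + X_{i-1}$, setting $X_i = X_{i-1} \cup \{x_i\}$; stop once no such $x_i$ exists. At termination every $b \in B$ lies in $A - A + X$ (otherwise it would itself still be an eligible pick), yielding $B \subseteq A - A + X$. For the size bound, the translates $\{A + x_i : x_i \in X\}$ are pairwise disjoint: if $A + x_i$ and $A + x_j$ shared an element with $i < j$, then $x_j \in A - A + x_i \subseteq A - A + X_{j-1}$, contradicting the greedy choice. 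These disjoint size-$|A|$ sets all lie inside $A + B$, so $|X| \cdot |A| \leq |A + B|$, hence $|X| \leq |A + B| / |A|$.

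For the algorithm, I would precompute the difference set $D = A - A$ once via the sparse sumset routine (\cref{lem:sparse-sumset}) in time $\widetilde\Order(|A - A|)$, and maintain the set $B' \subseteq B$ of currently uncovered elements in a hash table, initially $B' = B$. Each iteration picks any $b \in B'$, appends it to $X$, and sweeps $d \in D$, deleting $b + d$ from $B'$ whenever present; this maintains the invariant $B' = B \setminus (A - A + X)$ because $A - A + X = \bigcup_{x \in X}(x + D)$. The loop halts precisely when $B'$ becomes empty, which is exactly when the greedy process above terminates, so correctness follows.

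For the runtime there are at most $|X| \leq |A + B| / |A|$ iterations, each costing $\widetilde\Order(|D|) = \widetilde\Order(|A - A|)$. Since $|A - A| \leq |A - A + B|$ (fixing any $b \in B$ yields $A - A + b \subseteq A - A + B$), the total cost is $\widetilde\Order(|X| \cdot |A - A + B|) = \widetilde\Order\bigl(\tfrac{|A + B| \cdot |A - A + B|}{|A|}\bigr)$, and the one-time preprocessing of $D$ is absorbed because $|A + B| \geq |A|$. The alternative bound $\widetilde\Order(|A - A + B|^2 / |A|)$ follows from $A + B \subseteq a_0 + (A - A + B)$ for any fixed $a_0 \in A$ (using $A - a_0 \subseteq A - A$), which gives $|A + B| \leq |A - A + B|$.

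I do not expect serious obstacles beyond routine bookkeeping. The only delicate points are (i) verifying that the membership and deletion operations on $B'$ fit into $\widetilde\Order(1)$ per step via hashing, and (ii) confirming that both the preprocessing cost $\widetilde\Order(|A - A|)$ and the per-iteration sweep cost $\widetilde\Order(|D|)$ are absorbed by the target budget; both follow from the two simple inclusions noted above.
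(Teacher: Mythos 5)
Your proof is correct, and while the combinatorial core is the same classical greedy argument as in the paper (your criterion ``pick $x\in B$ with $x\notin A-A+X$'' is equivalent to the paper's ``pick $b\in B$ with $A+b$ disjoint from $A+X$'', and your pairwise-disjointness count is the same bookkeeping as the paper's observation that $|A+X|$ grows by $|A|$ per step), your algorithmic implementation is genuinely different. The paper finds an eligible element in each round by recomputing $C=(A+B)\setminus(A+X)$ and the multiplicities $r_{C,-A}$ via the sparse sumset machinery (\cref{lem:sparse-sumset}), selecting any $b$ with $r_{C,-A}(b)=|A|$, at cost $\widetilde\Order(|A-A+B|)$ per round. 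You instead make a single sparse-sumset call to compute $D=A-A$ up front and then maintain the uncovered set $B'=B\setminus(A-A+X)$ explicitly in a hash table, sweeping $b+D$ after each pick; this costs $\widetilde\Order(|A-A|)$ per round, and since $|A-A|\le|A-A+B|$ and the iteration count is the same $|A+B|/|A|$, you land within the same stated budget (in fact your per-round cost can only be smaller). Your route avoids multiplicity computations and repeated sumset calls entirely, at the price of explicitly storing $A-A$ and $B'$; both the absorption of the preprocessing cost (via $|A+B|\ge|A|$) and the final inequality $|A+B|\le|A-A+B|$ are argued correctly.
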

\begin{proof}
We start with a recap of the well-known existential proof. The proof is in fact already algorithmic: We initialize the set $X \gets \emptyset$. While there exists some $b \in B$ such that $A + b$ is disjoint from $A + X$, add $b$ to~$X$.

We prove that after the algorithm has terminated, $X$ is as desired. Indeed, after the algorithm has terminated, we have that $A + b$ and $A + X$ are not disjoint for any $b \in B$. Or equivalently, $b \in A - A + X$. Moreover, note that the size of $A + X$ increases by $|A|$ with every step of the algorithm and that ultimately~$|A + X| \leq |A + B|$. It follows that the algorithm runs for at most \raisebox{0pt}[0pt][0pt]{$\frac{|A + B|}{|A|}$} iterations. Since each iteration adds exactly one element to $X$, we obtain the claimed size bound \raisebox{0pt}[0pt][0pt]{$|X| \leq \frac{|A + B|}{|A|}$}.

While this proof is already algorithmic, it is a priori not clear how to efficiently find~$b$. Our approach is as follows: Compute the sets $C \gets (A + B) \setminus (A + X)$ and $C - A$, and additionally compute the multiplicities~$r_{C, -A}(x)$ for all $x \in C - A$. We now take any~\makebox{$b \in B$} satisfying~$r_{C, -A}(b) = |A|$, and if no such $b$ exists we terminate the algorithm. Recall that~$r_{C, -A}(b)$ is equal to the number of witnesses~$(c, a) \in C \times A$ with~$a + b = c$. There are $|A|$ such witnesses (the maximum number) if and only if $A + b \subseteq C$. By the way we assigned $C$, this in turn is equivalent to the desired condition that $A + b$ is disjoint from $A + X$.

It remains to analyze the running time of this algorithm. Finding a single $b$ amounts to computing the sets $C \subseteq A + B$ and $C - A \subseteq A - A + B$. Using \cref{lem:sparse-sumset} we can compute both sets in output-sensitive time~$\widetilde\Order(|A - A + B|)$, and compute the multiplicities $r_{C, -A}$ in the same time. Finally, recall that the algorithm runs for a total of \raisebox{0pt}[0pt][0pt]{$\frac{|A + B|}{|A|}$} iterations. The claimed time bound follows.
\end{proof}

\lemcoverapproximategroup*
\begin{proof}
We first apply Ruzsa's covering lemma with $A$ and $B = 2A - 2A$. We thereby obtain a subset~\makebox{$X \subseteq 2A - 2A$} which satisfies that $B \subseteq A - A + X$. By choosing $H = A - A$, we have that $H + H = B \subseteq A - A + X = H + X$. Ruzsa' covering lemma further guarantees that
\begin{equation*}
    |X| \leq \frac{|A + B|}{|A|} = \frac{|3A - 2A|}{|A|} \leq \frac{K^5|A|}{|A|} = K^5,
\end{equation*}
where for the latter inequality we have applied Plünnecke's inequality. Therefore, $H$ satisfies the second property. The first property is easy by another application of Plünnecke's inequality. For the third take an arbitrary $a_0 \in A$. Then by definition $A - a_0 \subseteq A - A = H$.

The running time to compute $H$ and $X$ is dominated by the call to Ruzsa's covering lemma, which runs in time
\begin{equation*}
    \Order\left(\frac{|A - A + B|^2}{|A|}\right) = \Order\left(\frac{|3A - 3A|^2}{|A|}\right) \leq \Order(K^{12} |A|),
\end{equation*}
where have again used Plünnecke's inequality.
\end{proof}

\subsection{Linear Hashing}
Another important tool that we frequently use throughout (though not related to additive combinatorics) is \emph{linear hashing}. We say that a hash function $h : G \to G'$ is linear if it satisfies $h(a + b) = h(a) + h(b)$. Since we are most interested in the case $G = \Field_p^d$ and~$G' = \Field_p^{d'}$ for some $d' \ll d$, we often make use of the following simple construction:

\begin{lemma}[Hashing via Random Linear Maps] \label{lem:linear-hashing}
Let $h : G \to G'$ be a random linear map (i.e., let $H \in \Field_p^{d' \times d}$ be a random matrix, and let $h(x) = Hx$). Then the following properties are satisfied:
\begin{itemize}
\item \emph{Linearity:} $h(a + b) = h(a) + h(b)$ for all $a, b \in \Field_p^d$.
\item \emph{Independence:} For any linearly independent vectors $a_1, \dots, a_k \in \Field_p^d$ (in particular, the $a_i$'s must be nonzero), the random variables $h(a_1), \dots, h(a_k)$ are independent, and for any~\smash{$x_1, \dots, x_k \in \Field_p^{d'}$} we have 
\begin{equation*}
    \Pr(\text{$h(a_1) = x_1$ and $\dots$ and $h(a_k) = x_k$}) = (p^{d'})^{-k},
\end{equation*}
More generally, for any $a_1, \dots, a_k \in \Field_p^d$ (not necessarily linearly independent) and any~$x_1, \dots, x_k \in \Field_p^{d'}$ we have that
\begin{equation*}
    \Pr(\text{$h(a_1) = x_1$ and $\dots$ and $h(a_k) = x_k$}) \leq (p^{d'})^{-s},
\end{equation*}
where $s = \dim\Span{a_1, \dots, a_k}$.
\end{itemize}
\end{lemma}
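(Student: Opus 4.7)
The linearity part is immediate: by the definition $h(x) = Hx$ and the distributivity of matrix-vector multiplication, $h(a+b) = H(a+b) = Ha + Hb = h(a) + h(b)$, so essentially no work is needed there.

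For the independence part in the linearly independent case, my plan is to collect the inputs into the matrix $M \in \Field_p^{d \times k}$ whose columns are $a_1, \dots, a_k$ and observe that the joint random variable $(h(a_1), \dots, h(a_k))$ is exactly the matrix product $HM \in \Field_p^{d' \times k}$. Since $a_1, \dots, a_k$ are linearly independent, $M$ has full column rank $k$ and therefore admits a left inverse~$M^+ \in \Field_p^{k \times d}$ with $M^+ M = I_k$. The map $H \mapsto HM$ from $\Field_p^{d' \times d}$ to~$\Field_p^{d' \times k}$ is a group homomorphism, and it is surjective because for any $N \in \Field_p^{d' \times k}$ the choice $H = N M^+$ satisfies $HM = N$. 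A surjective homomorphism between finite abelian groups pushes the uniform distribution on the domain to the uniform distribution on the codomain, so $HM$ is uniform over $\Field_p^{d' \times k}$, which is exactly the claim that $h(a_1), \dots, h(a_k)$ are independent and each uniform on~$\Field_p^{d'}$; therefore each specific tuple $(x_1, \dots, x_k)$ is attained with probability $(p^{d'})^{-k}$.

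For the general bound where the vectors need not be independent, my plan is to reduce to the previous case. Let $s = \dim \Span{a_1, \dots, a_k}$ and pick a basis $b_1, \dots, b_s$ of that span. Write each $a_i = \sum_{j=1}^s c_{ij} b_j$ for some coefficients $c_{ij} \in \Field_p$. By linearity of $h$, we have $h(a_i) = \sum_{j=1}^s c_{ij} h(b_j)$, so the tuple $(h(a_1), \dots, h(a_k))$ is a deterministic (linear) function of $(h(b_1), \dots, h(b_s))$. Applying the independent case to $b_1, \dots, b_s$ shows that $(h(b_1), \dots, h(b_s))$ is uniform over $(\Field_p^{d'})^s$, which is a set of size $(p^{d'})^s$. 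Consequently, for any target tuple $(x_1, \dots, x_k)$ the event that $h(a_i) = x_i$ for all $i$ corresponds to a (possibly empty) set of values of $(h(b_1), \dots, h(b_s))$, and hence the probability is at most $(p^{d'})^{-s}$.

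\textbf{Main obstacle.} There is no serious obstacle here; the statement is essentially linear algebra over $\Field_p$. The one subtle point worth double-checking is the surjectivity of $H \mapsto HM$ when $M$ has full column rank, which I handle via the left inverse above (alternatively, one can pick a basis extension and argue that $H$'s action on the $a_i$'s is completely independent of its action on a complement, but the left-inverse argument is cleaner and avoids choosing a complement).
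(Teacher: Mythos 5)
Your linearity argument and your treatment of the linearly independent case are correct and in substance the same as the paper's: the paper also reduces to the claim that $HM$ is a uniformly random matrix, though it phrases this by (implicitly) completing $a_1,\dots,a_k$ to a basis so that $M$ is square and invertible and $H \mapsto HM$ is a bijection, whereas you keep $M$ rectangular of full column rank and argue surjectivity of $H \mapsto HM$ via a left inverse. Both work; your version is slightly more direct since it avoids choosing a basis completion.

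The one place where your write-up has a gap is the final step of the general case. After observing that $(h(a_1),\dots,h(a_k))$ is a deterministic linear function of the uniform tuple $(h(b_1),\dots,h(b_s))$, you assert that the event $\{h(a_i)=x_i \text{ for all } i\}$ ``corresponds to a (possibly empty) set of values of $(h(b_1),\dots,h(b_s))$, and hence the probability is at most $(p^{d'})^{-s}$.'' That inference only works if this set has at most one element, which you never argue; as written, nothing rules out a large fiber. The missing sentence: the coefficient matrix $C=(c_{ij})\in\Field_p^{k\times s}$ has rank $s$ (since the $a_i$'s span the same space as the $b_j$'s), so the linear system $\sum_j c_{ij}\, h(b_j)=x_i$ has at most one solution in $(h(b_1),\dots,h(b_s))$; equivalently, each $b_j$ is itself a linear combination of the $a_i$'s, so the event forces each $h(b_j)$ to a single value. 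The paper sidesteps this entirely by choosing the basis to be a subset of $\{a_1,\dots,a_k\}$: the event is then contained in the event that those $s$ hash values hit their prescribed targets, which has probability exactly $(p^{d'})^{-s}$ by the first part, and the bound follows by monotonicity of probability. Either fix is one line, but as submitted the step is unjustified.
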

\begin{proof}
The first property is obvious. To prove the second statement, first recall that any set of linearly independent vectors $a_1, \dots, a_k$ can be written as $a_i = M e_i$, where $M$ is a full-rank matrix and $e_i$ is the all-zeros vector with a single $1$ in position $i$. Next, observe that the matrix $H M$ is uniformly random (indeed for any fixed matrix $N$ we have that $\Pr(HM = N) = \Pr(H = N M^{-1})$ and $H$ is uniformly random). It follows that the hash values~\makebox{$h(a_i) = H M e_i$} are the columns of a uniformly random matrix and therefore independent. Hence:
\begin{equation*}
    \Pr(\text{$h(a_1) = x_1$ and $\dots$ and $h(a_k) = x_k$}) = (p^{d'})^{-k},
\end{equation*}
for any $x_1, \dots, x_k \in \Field_p^{d'}$.

To obtain the more general statement for vectors which are not necessarily linearly independent, select a subset from $\set{a_1, \dots, a_k}$ of $\dim\Span{a_1, \dots, a_k}$ linearly independent vectors. For this subset, the hash values behave independently.
\end{proof}
\section{3-SUM for Structured Inputs} \label{sec:3sum-structured}

The purpose of this section is to prove the following theorem. We focus on the group $G = \Field_p^d$, but the theorem holds for $G = \Int$ as well, using the same proof with minor modifications (such as substituting an appropriate linear hash function for integers). 

\thmthreesumstructured*

\paragraph{Universe Reduction}
As the first step, we will hash all sets to a smaller group $G'$ of size $\ll n^2$ via some hash function $h : G \to G'$. Under the hashing we are bound to introduce several \emph{false positives}, that is, triples $a \in A, b \in B, c \in C$ where $a + b + c \neq 0$ but $h(a) + h(b) + h(c) = 0$. To deal with these false positives, we have to list several 3-SUM solutions in the smaller group instead of merely determining the existence of one solution. The following problem definition and lemma make this precise.

\begin{definition}[3-SUM Listing]
Given sets $A, B, C \subseteq G$ and a parameter $t$, compute for each~$a \in A$ a list of $t$ distinct pairs $(b, c) \in B \times C$ with $a + b + c = 0$ (or if there are less than $t$ solutions, a list containing all of them).
\end{definition}

\begin{lemma}[Reduction to 3-SUM Listing in Small Groups] \label{lem:3sum-universe-reduction}
Let $G = \Field_p^d$ and let $G' = \Field_p^{d'}$, and let~\makebox{$h : G \to G'$} be a random linear map. There is a fine-grained reduction from a 3-SUM instance $A, B, C \subseteq G$ to one 3-SUM listing instance $h(A) = \set{h(a) : a \in A}, h(B), h(C) \subseteq G'$ with parameter $t = \Order(n^2 / |G'|)$. The reduction runs in time $\Order(n t^2)$ and succeeds with constant probability $\frac{8}{10}$.
\end{lemma}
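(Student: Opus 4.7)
The plan is to apply $h$ elementwise, invoke the 3\=/SUM listing oracle on $(h(A), h(B), h(C))$ with threshold $t = \Theta(n^2/|G'|)$, and then verify candidate triples by enumerating preimages. I pick the constant in $t$ sufficiently large, say $t = 40\,n^2/|G'|$. In $\Order(n)$ preprocessing I compute $h(A), h(B), h(C)$ together with the bucket partitions $A_{a'} = \{a \in A : h(a) = a'\}$ and analogously $B_{b'}, C_{c'}$. After receiving the lists $L_{a'}$ from the oracle, the reduction iterates over each $a \in A$ and each $(b', c') \in L_{h(a)}$: it enumerates $b \in B_{b'}$ and tests via a hash lookup in $C$ whether $-a - b \in C$, returning the first witness found.

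Linearity of $h$ (\cref{lem:linear-hashing}) immediately implies that every reported triple is a genuine 3\=/SUM solution. For the converse direction, I fix any solution $(a^\star, b^\star, c^\star)$ of the input. Call a pair $(b', c') \in h(B) \times h(C)$ with $h(a^\star) + b' + c' = 0$ \emph{real} if some preimage pair $(b, c) \in B_{b'} \times C_{c'}$ satisfies $a^\star + b + c = 0$, and a \emph{false positive} otherwise. The central calculation is that the expected number $F$ of false positives is at most $n^2/|G'| = t/40$: each false positive $(b', c')$ is charged injectively to any $(b, c) \in B_{b'} \times C_{c'}$, which then necessarily satisfies $h(a^\star + b + c) = 0$ and $a^\star + b + c \neq 0$; by \cref{lem:linear-hashing} every such $(b, c)$ contributes $1/|G'|$ to $\Ex(F)$. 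Markov's inequality then gives $F \leq t/2$ with probability at least $19/20$.

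Condition on the event $F \leq t/2$. If the total number of pairs $(b',c') \in h(B) \times h(C)$ summing to $-h(a^\star)$ is at most $t$, the oracle lists all of them; in particular $(h(b^\star), h(c^\star))$ is among them and verification recovers $(a^\star, b^\star, c^\star)$. Otherwise the oracle returns exactly $t$ pairs, but since at most $t/2$ of these can be false positives, at least one listed pair is real and verifying it yields a valid 3\=/SUM witness. Hence the success probability is at least $19/20 \geq 8/10$.

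For the running time, hashing and bucketing cost $\Order(n)$. In verification, for each fixed $a$ the relation $c' = -h(a) - b'$ is functional, so every $b' \in h(B)$ appears in $L_{h(a)}$ at most once and the inner work is $\sum_{(b',c') \in L_{h(a)}} |B_{b'}| \leq t \cdot \max_{b'} |B_{b'}|$. Using \cref{lem:linear-hashing} with standard concentration for bucket sizes under a random linear map, $\max_{b'} |B_{b'}| = \widetilde\Order(n/|G'| + 1) = \widetilde\Order(t/n + 1)$ with high probability, so summing over $a \in A$ yields $\widetilde\Order(t^2 + nt)$, comfortably within the claimed $\Order(nt^2)$ budget. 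The main conceptual obstacle is the truncated case, where the specific pair $(h(b^\star), h(c^\star))$ need not appear; the false-positive counting above is precisely what forces the oracle to expose at least one real pair regardless of how adversarially it selects which $t$ pairs to return.
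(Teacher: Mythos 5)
Your reduction is the same as the paper's: hash via a random linear map, invoke the listing oracle with threshold $t = \Theta(n^2/|G'|)$, bound the expected number of false positives for a fixed solution by $n^2/|G'|$ using linearity plus \cref{lem:linear-hashing}, apply Markov, and conclude that among the $t$ returned pairs at least one must be real. The correctness argument, including the treatment of the truncated case, is sound and matches the paper's.

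The one step you have not justified is the bucket-load bound. You claim $\max_{b'} |B_{b'}| = \widetilde\Order(n/|G'| + 1)$ ``with standard concentration for bucket sizes under a random linear map.'' A random linear map $h : \Field_p^d \to \Field_p^{d'}$ does not admit Chernoff-type concentration: the hash values of linearly dependent inputs are correlated, and if $B$ is (say) a linear subspace then the buckets are cosets of $\ker(h|_{\Span B})$, whose size is heavy-tailed rather than concentrated around $n/|G'|$. What \emph{is} provable from pairwise near-independence is the collision-counting bound the paper uses: the expected number of collisions in $B$ is at most $n^2/|G'| = \Order(t)$, so by Markov the number of collisions is $\Order(t)$ with constant probability, and since a bucket of load $L$ forces $\binom{L}{2}$ collisions, $\max_{b'}|B_{b'}| = \Order(t^{1/2})$. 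Plugging this weaker bound into your accounting gives verification time $\Order(nt \cdot t^{1/2}) = \Order(nt^{3/2})$, still within the claimed $\Order(nt^2)$, so the lemma survives; but the constant-probability bucket-load event should then also be folded into the overall success probability (as the paper does to reach $\frac{8}{10}$), or you should note that an oversized bucket affects only the running time and the execution can be aborted and restarted.
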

\begin{proof}
Let $t = 10 n^2 |G'|^{-1}$. We may assume that $|G'| \geq 10 n$, since otherwise we can simply solve the given instance in time $\Order(n^2) = \Order(n t)$. We precompute a lookup table to find, given a hash value $a' \in h(A)$, all~$a \in A$ with $h(a) = x$ (and similarly for $B$ and~$C$). Then we run the listing algorithm with parameter $t$ on the instance $(h(A), h(B), h(C))$, and for every reported solution~$a', b', c'$ we use the lookup table to check whether these correspond to some $a \in A, b \in B, c \in C$ with $a + b + c = 0$.

It is clear that the reduction cannot report ``yes'' unless the given 3-SUM instance $(A, B, C)$ is a ``yes'' instance. We argue that the reduction misses a ``yes'' instance with probability at most~$\frac{1}{10}$. To this end we fix any $a \in A$ which is part of a 3-SUM solution and prove that with probability at least~$\frac{9}{10}$, there are less than~$t$ many false positives~$(b^*, c^*) \in B \times C$ with $h(a) + h(b^*) + h(c^*) = 0$. In this case it follows that the listing algorithm will return at least one proper solution $(h(a), h(b), h(c))$ where $a + b + c = 0$, and we will recover $(a, b, c)$ using the lookup table. And indeed, for any fixed pair $(b^*, c^*) \in B \times C$ with $a + b^* + c^* \neq 0$, we have that $h(a) + h(b^*) + h(c^*) = 0$ with probability at most $|G'|^{-1}$. Hence the expected number of false positives is $|B| \cdot |C| \cdot |G'|^{-1} \leq n^2 |G'|^{-1}$. By Markov's inequality the number of false positives exceeds~$t = 10n^2 |G'|^{-1}$ with probability at most $\frac{1}{10}$. This completes the correctness argument.

Before analyzing the running time, we first analyze the maximum \emph{bucket load $L$} of the hashing, i.e., the maximum number of elements in $A$ (or similarly in $B$ or $C$) hashing to the same value under $h$. Note that~$\binom L2$ is at most the number of \emph{collisions} of the hash function (i.e., the number of distinct pairs $a, a' \in A$ with~$h(a) = h(a')$), as any two elements in the same bucket cause a collision. For any fixed $a, a' \in A$, the collision probability is at most~$|G'|^{-1}$ and thus the expected number of collisions is at most $n^2 |G'|^{-1}$. Using again Markov's inequality, the hashing causes at most $t = 10 n^2 |G'|^{-1}$ collisions with probability~$\frac{9}{10}$, and in that case we can bound $L = \Order(t^{1/2})$.

We are finally ready to bound the running time. Constructing $h(A), h(B), h(C)$, the hashing and the lookup table takes linear time. After that, we check all the $nt$ listed solutions. For each solution $(x, y, z)$ we have to enumerate all pairs $(a, b) \in A \times B$ with~\makebox{$h(a) = x$} and $h(b) = y$ which takes time $\Order(L^2)$. Hence, the total time is $\Order(n t^2)$.
\end{proof}

\paragraph{The Algorithm}
Using \cref{lem:3sum-universe-reduction}, we may assume that $A, B, C \subseteq G$ where $G$ has size~$\Order(n^2 / t)$, and we have to list~$t$ solutions for each $a \in A$. Moreover, by the assumption in \cref{thm:3sum-structured} we can assume that~$|A + A| \leq K|A|$ (this property is preserved under the linear hashing in \cref{lem:3sum-universe-reduction}).

The 3-SUM algorithm is given in \cref{alg:3sum-structured}. First cover $A$ by a translate of an approximate group~$H$ (that is, a small set $H$ satisfying $H + H \subseteq H + X$ where $X$ is small) using \cref{lem:cover-approximate-group}. Then sample a set $S \subseteq G$ (with rate \raisebox{0pt}[0pt][0pt]{$\frac{10 \log n}{|H|}$}) such that $H + S$ covers the whole universe $G$. We precompute the sets $B_s = B \cap (H + s)$ and $C_{s, x} = C \cap (H - s - x)$ for all shifts~$s \in S$ and~$x \in X$. The crucial insight is that we only have to look for 3-SUM solutions in $A \times B_s \times C_{s, x}$. For each such group we apply a heavy-light approach: Either the sets $B_s, C_{s, x}$ are sparse (with size smaller than some parameter $\Delta$ to be determined later), and we can afford to enumerate all pairs. Or the sets are dense, in which case we compute $B_s + C_{s, x}$ in linear time, but this case cannot happen too often. We analyze these steps in more detail, starting with the proof that $H + S$ indeed covers the whole universe $G$:

\begin{algorithm}[t]
\caption{Lists $t$ 3-SUM solutions for a given instance $A, B, C \subseteq G$ where $|G| \leq \Order(n^2 / t)$ and $|A + A| \leq K |A|$ in subquadratic time.} \label{alg:3sum-structured}
\begin{algorithmic}[1]
\State Apply \cref{lem:cover-approximate-group} on $A$ to compute $H$, $X$ and $a_0$ \label{alg:3sum-structured:line:cover}
\State Subsample a set $S \subseteq G$ with rate $\frac{100\log n}{|H|}$ \label{alg:3sum-structured:line:shifts}
\ForEach{$s \in S, x \in X$}
    \State Compute the sets $B_s = B \cap (H + s)$ and $C_{s, x} = C \cap (H - s - x - a_0)$ \label{alg:3sum-structured:line:precompute}
\EndForEach
\ForEach{$s \in S, x \in X$} \label{alg:3sum-structured:line:loop}
    \If{$|B_s| \leq \Delta$ and $|C_{s, x}| \leq \Delta$}
        \State List all pairs $(b, c) \in B_s \times C_{s, x}$ and whenever $b + c \in -A$, report the corresponding
        \Statex[2] 3-SUM solution $(-(b+c), b, c)$ \label{alg:3sum-structured:line:enumerate}
    \Else
        \State Compute the sumset $B_s + C_{s, x}$ using \cref{lem:sparse-sumset,lem:sparse-sumset-witness} and list $t$ witnesses $(b, c)$
        \Statex[2] for each $a \in B_s + C_{s, x}$ \label{alg:3sum-structured:line:witness}
        \ForEach{$a \in (B_s + C_{s, x}) \cap -A$}
            \State Report a 3-SUM solution $(-a, b, c)$ for each witness $(b, c)$ of $a$ \label{alg:3sum-structured:line:witness-report}
        \EndForEach
    \EndIf
\EndForEach
\end{algorithmic}
\end{algorithm}

\begin{lemma}[Random Cover] \label{lem:random-cover}
With high probability, for any $z \in G$ there are~$\Theta(\log n)$ shifts~$s \in S$ such that $z \in H + s$ (in short: $r_{H, S}(z) = \Theta(\log n)$).
\end{lemma}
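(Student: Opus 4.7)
The plan is a standard Chernoff-plus-union-bound argument. The key observation is that for any fixed $z \in G$, the quantity $r_{H,S}(z) = |\{s \in S : z - s \in H\}| = |S \cap (z - H)|$ is a sum of $|H|$ independent Bernoulli random variables, one for each element of the translate $z - H$, each with success probability $p = \frac{100 \log n}{|H|}$. Hence $\Ex[r_{H,S}(z)] = 100 \log n$.

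First, I would fix $z$ and apply a multiplicative Chernoff bound to conclude that
\begin{equation*}
    \Pr\Bigl(r_{H,S}(z) \notin [50 \log n, \, 200 \log n]\Bigr) \leq 2 \exp(-\Omega(\log n)) \leq n^{-c}
\end{equation*}
for any desired constant $c$, by tuning the subsampling rate constant if necessary. Here I use that the mean is $\Theta(\log n)$ so a constant relative deviation gives polynomially small failure probability.

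Next, I would take a union bound over all $z \in G$. Recall that by the universe reduction (\cref{lem:3sum-universe-reduction}) invoked before this algorithm, the ambient group $G$ satisfies $|G| \leq \Order(n^2 / t) \leq \Order(n^2)$, which is polynomial in $n$. So choosing $c$ large enough in the Chernoff step ensures that with probability $1 - n^{-\Omega(1)}$, $r_{H,S}(z) = \Theta(\log n)$ simultaneously for all $z \in G$.

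No serious obstacle arises; the only mild point of care is to confirm that $|G|$ is at most polynomial (so that the union bound costs only a logarithmic overhead in the Chernoff parameter) and that $|H| \geq 100 \log n$ so the sampling rate $p \leq 1$ is well-defined. The latter can be assumed without loss of generality, since otherwise $|H|$ is polylogarithmic and we may take $S = G$ directly, in which case $r_{H,S}(z) = |H|$ for all $z$ and the statement is trivial up to adjusting constants (or rather, the entire subsequent algorithm becomes trivial because all the sets $B_s, C_{s,x}$ are tiny).
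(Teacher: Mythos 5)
Your proposal is correct and follows essentially the same route as the paper: fix $z$, note that $r_{H,S}(z)$ is a sum of $|H|$ independent Bernoullis with mean $100\log n$, apply a multiplicative Chernoff bound, and union-bound over the at most $|G| \leq \Order(n^2)$ elements. The extra remarks about the sampling rate being at most $1$ and $|G|$ being polynomial are fine but not needed beyond what the paper already uses.
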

\begin{proof}
In expectation, each element $z \in G$ is contained in $|H| \cdot \frac{100 \log n}{|H|} = 100 \log n$ sets of the form $H + s, s \in S$. By Chernoff's bound, the probability that we hit less than $50 \log n$ sets or more than $150 \log n$ sets is at most $2\exp(-\frac{100 \log n}{12}) \leq n^{-8}$. Taking a union bound over the $|G| \leq n^2$ elements $z$, the statement is correct with probability at least $1 - n^{-6}$.
\end{proof}

\begin{lemma}[Correctness of \cref{alg:3sum-structured}] \label{lem:3sum-structured-correctness}
\cref{alg:3sum-structured} is correct, that is, it reports a list of $t$ witnesses for each $a \in A$ (or a list of all witnesses if there are less than $t$ many).
\end{lemma}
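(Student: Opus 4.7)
The proof breaks into two parts: (i) show that every witness $(a, b, c)$ of a given $a \in A$ is ``covered'' by at least one block $(A, B_s, C_{s, x})$ processed in the main loop on \cref{alg:3sum-structured:line:loop}, and (ii) show that within each block the algorithm enumerates (or samples) enough witnesses. I would then combine these to conclude that if $a \in A$ has at least $t$ witnesses then at least $t$ are reported, and otherwise all are reported.

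\textbf{Step 1 (covering).} Fix any triple $a \in A, b \in B, c \in C$ with $a + b + c = 0$. I claim that there exist $s \in S$ and $x \in X$ such that $b \in B_s$ and $c \in C_{s, x}$, with high probability over the randomness of $S$. By \cref{lem:random-cover}, applied on \cref{alg:3sum-structured:line:shifts}, for every $z \in G$ there is some $s \in S$ with $z \in H + s$; applying this to $z = b$ yields $b \in B_s = B \cap (H + s)$. Next, use the output of \cref{lem:cover-approximate-group} on \cref{alg:3sum-structured:line:cover}: write $a - a_0 = h_1 \in H$ and $b - s = h_2 \in H$. Since $H$ is a $K^5$-approximate group, $h_1 + h_2 \in H + H \subseteq H + X$, so there exist $h_3 \in H$ and $x \in X$ with $h_1 + h_2 = h_3 + x$. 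Then
\begin{equation*}
    c \;=\; -a - b \;=\; -a_0 - (h_1 + h_2) - s \;=\; -a_0 - h_3 - x - s,
\end{equation*}
and using $-H = H$ we obtain $c \in H - s - x - a_0$. Combined with $c \in C$, this yields $c \in C_{s, x}$ as claimed.

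\textbf{Step 2 (completeness within a block).} Now consider a block $(s, x)$ containing the witness $(b, c)$ for some $a$. In the light branch on \cref{alg:3sum-structured:line:enumerate}, the algorithm enumerates every pair in $B_s \times C_{s, x}$ and reports $(-(b' + c'), b', c')$ whenever $-(b' + c') \in A$; in particular it reports $(a, b, c)$, since $-(b+c) = a \in A$. In the heavy branch on \cref{alg:3sum-structured:line:witness}, the algorithm computes the sumset $B_s + C_{s, x}$ and invokes \cref{lem:sparse-sumset-witness} with parameter $t$. Because $-a = b + c \in B_s + C_{s, x}$ and $-a \in -A$, the element $-a$ is processed on \cref{alg:3sum-structured:line:witness-report} and $t$ witnesses of $-a$ (or all witnesses, if there are fewer than $t$) from $B_s \times C_{s, x}$ are reported as 3-SUM triples for $a$.

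\textbf{Step 3 (wrap-up).} Fix $a \in A$ and let $W_a \subseteq B \times C$ be its full witness set. By Step 1 each $(b, c) \in W_a$ belongs to some block $(s, x)$, and by Step 2 within any such block all intra-block witnesses for $a$ are produced in the light case, while at least $\min(t, |W_a \cap (B_s \times C_{s, x})|)$ are produced in the heavy case. If $|W_a| < t$, every witness is eventually reported in the block it belongs to, and if $|W_a| \geq t$ then either some single block already contains $\geq t$ intra-block witnesses (in which case Step 2 reports $t$ of them), or witnesses are spread across multiple blocks and are likewise reported in each. Truncating the collected list to $t$ entries per $a$ yields the desired output. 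I expect the only subtle point to be the algebraic verification that $c$ really lands in $C_{s, x}$ for some $x \in X$; everything else is bookkeeping on top of \cref{lem:random-cover,lem:cover-approximate-group,lem:sparse-sumset-witness}.
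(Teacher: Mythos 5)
Your proposal is correct and follows essentially the same route as the paper: the covering step (using \cref{lem:random-cover} for $b \in H+s$ and the approximate-group property $H + H \subseteq H + X$ together with $A - a_0 \subseteq H$ to place $c \in H - s - x - a_0$) is exactly the paper's key computation, and your wrap-up matches the paper's observation that each witness is either reported in its block or that block's heavy branch already produced $t$ distinct witnesses for $a$.
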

\begin{proof}
Focus on any 3-SUM solution $a + b + c = 0$. The key is to prove that there are shifts~$s \in S, x \in X$ such that $(a, b, c) \in A \times B_s \times C_{s, x}$. In this case it is easy to check that the algorithm will either report $(a, b, c)$ (in \cref{alg:3sum-structured:line:enumerate} or \cref{alg:3sum-structured:line:witness-report}) or it already reported $t$ other witnesses for~$a$ (if the list of $t$ witnesses computed in \cref{alg:3sum-structured:line:witness} does not contain the particular witness $(b, c)$).

To see that $s, x$ exist as claimed, invoke the previous lemma to find some $s \in S$ such that~$b \in H + s$, that is, there is some $v \in H$ such that~$b = v + s$. Then, since $a \in A \subseteq H + a_0$ by \cref{lem:cover-approximate-group} we have that $a + v \in H + H + a_0 \subseteq H + X + a_0$. Thus, there is some $w \in H$ and $x \in X$ such that~$a + v = w + x + a_0$. It follows that $c = -(a + b) = -(w + s + x + a_0) \in -H - s - x - a_0 = H - s - x - a_0$. Using the definitions of $B_s$ and $C_{s, x}$, we conclude that~$b \in B_s$ and $c \in C_{s, x}$ as stated. 
\end{proof}

\begin{lemma}[Running Time of \cref{alg:3sum-structured}] \label{lem:3sum-structured-time}
\cref{alg:3sum-structured} runs in time
\begin{equation*}
    \widetilde\Order\left(K^{12} \left(\frac{n^2}{t} + \frac{n\Delta^2}{t} + \frac{n^2 t}{\Delta} \right)\right).
\end{equation*}
\end{lemma}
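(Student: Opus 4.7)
The plan is to split the running time of \cref{alg:3sum-structured} into the preprocessing (\cref{alg:3sum-structured:line:cover,alg:3sum-structured:line:shifts,alg:3sum-structured:line:precompute}), the light branch (\cref{alg:3sum-structured:line:enumerate}), and the heavy branch (\cref{alg:3sum-structured:line:witness,alg:3sum-structured:line:witness-report}), and to argue each contribution separately. Throughout, I would use $|H|\le K^2 n$ and $|X|\le K^5$ from \cref{lem:cover-approximate-group} together with the universe-reduction guarantee $|G|=\Order(n^2/t)$, so that the expected sample size is $|S| = \widetilde\Order(|G|/|H|) = \widetilde\Order(n/(tK^2))$ and concentrates around this value by a Chernoff bound.

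\textbf{Preprocessing.} The call to \cref{lem:cover-approximate-group} costs $\widetilde\Order(K^{12} n)$. I would implement \cref{alg:3sum-structured:line:precompute} by storing $H$ in a hash table and, for each pair $(b,s)\in B\times S$, checking whether $b-s\in H$; this takes $|B|\cdot|S| = \widetilde\Order(n^2/(tK^2))$ time. Building the sets $C_{s,x}$ is analogous and costs $|C|\cdot|S|\cdot|X| = \widetilde\Order(K^3 n^2/t)$ time. Both contributions are absorbed into the $K^{12} n^2/t$ term of the claim (using $t\le n$ to absorb the $\widetilde\Order(K^{12} n)$ term as well).

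\textbf{Light branch.} Each enumerated pair is tested against $-A$ via a hash table in $\Order(1)$ time, so the total cost is bounded by $\sum_{(s,x) \text{ light}} |B_s|\,|C_{s,x}| \le |S|\cdot|X|\cdot\Delta^2 = \widetilde\Order(K^3 n\Delta^2/t)$, which is subsumed by the middle term $K^{12} n\Delta^2/t$.

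\textbf{Heavy branch.} This is the main obstacle: the worst-case bound $|B_s+C_{s,x}|\le|G|=\Order(n^2/t)$ would make a single invocation of \cref{lem:sparse-sumset-witness} with parameter $t$ cost $\widetilde\Order(n^2)$, which is way too much. The crucial observation is that $B_s+C_{s,x}$ lies in a single translate $2H-x-a_0$, since $B_s\subseteq H+s$ and $C_{s,x}\subseteq H-s-x-a_0$; because $H$ is a $K^5$-approximate group we have $|2H|\le|H+X|\le K^5|H|\le K^7 n$ (and \cref{lem:pluennecke} gives a matching polynomial-in-$K$ bound). To count heavy pairs, I would invoke the random-cover guarantee of \cref{lem:random-cover} to conclude $\sum_s|B_s|=\widetilde\Order(n)$ and $\sum_{s,x}|C_{s,x}|=\widetilde\Order(K^5 n)$, so a Markov argument applied to each side yields at most $\widetilde\Order(K^5 n/\Delta)$ heavy pairs in total. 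Plugging the $K^{\Order(1)} n$ sumset bound into \cref{lem:sparse-sumset-witness} gives $\widetilde\Order(t\cdot K^{\Order(1)} n)$ per heavy pair, so the whole heavy branch (including the output-bounded reporting step) runs in $\widetilde\Order(K^{\Order(1)} n^2 t/\Delta)$, which matches the final term after absorbing the $K^{\Order(1)}$ factor into $K^{12}$.
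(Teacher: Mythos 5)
Your proposal is correct and follows essentially the same route as the paper's proof: bound preprocessing, bound light pairs by $|S|\cdot|X|\cdot\Delta^2$, count heavy pairs via \cref{lem:random-cover} plus a Markov/averaging argument, and bound $|B_s+C_{s,x}|$ through the approximate-group containment $B_s+C_{s,x}\subseteq H+H-x-a_0\subseteq H+X-x-a_0$ before invoking \cref{lem:sparse-sumset-witness}. One small slip: your bound $|S|=\widetilde\Order(n/(tK^2))$ uses the \emph{upper} bound $|H|\leq K^2 n$ in the wrong direction; an upper bound on $|S|=\widetilde\Order(|G|/|H|)$ needs a \emph{lower} bound on $|H|$, which follows from $A-a_0\subseteq H$, giving $|S|=\widetilde\Order(n/t)$ as in the paper. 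This is harmless, since with the corrected $|S|$ all your terms still sit below the claimed $\widetilde\Order(K^{12}(n^2/t+n\Delta^2/t+n^2t/\Delta))$; likewise your ``absorb $K^{\Order(1)}$ into $K^{12}$'' hedge is fine only because your own constants ($K^5$ heavy pairs times $K^7 n$ sumset size) multiply to exactly $K^{12}$, which is worth stating explicitly.
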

\begin{proof}
Computing $H$ and $X$ in \cref{alg:3sum-structured:line:cover} takes time $\Order(K^{12} n)$ by \cref{lem:cover-approximate-group}, and the same lemma guarantees that $|H| \leq K^2 n$ and $|X| \leq K^5$. Sampling $S$ in a naive way in \cref{alg:3sum-structured:line:shifts} takes time $\Order(|G|) = \Order(n^2 / t)$ and with high probability, $S$ has size \raisebox{0pt}[0pt][0pt]{$\widetilde\Order(\frac{|G|}{|H|}) = \widetilde\Order(n/t)$}. In \cref{alg:3sum-structured:line:precompute}, it takes linear time to compute each set $B_s$ and~$C_{s, x}$, so the total time is $\Order(n |S| |X|) \leq \widetilde\Order(K^5 n^2 / t)$.

For the loop over pairs $s \in S, x \in X$ (in \cref{alg:3sum-structured:line:loop}) we split the analysis into two cases: The \emph{light} pairs $s, x$ with $|B_s|, |C_{s, x}| \leq \Delta$ and the remaining \emph{heavy} pairs. There are up to~$|S| \cdot |X| \leq \widetilde\Order(K^5 n / t)$ light pairs, and for each such pair we spend time $\Order(\Delta^2)$ in \cref{alg:3sum-structured:line:enumerate}. The total time spent on light pairs is thus $\Order(K^5 n \Delta^2 / t)$.

The number of heavy pairs is bounded by $\widetilde\Order(K^5 n / \Delta)$. Indeed, recall that each element~$b$ occurs in at most $\Order(\log n)$ sets $B_s$ by \cref{lem:random-cover}. Hence, there are at most $\widetilde\Order(n / \Delta)$ heavy sets $B_s$. Similarly, each element~$c$ occurs in at most $\Order(|X| \log n) = \Order(K^5 \log n)$ sets $C_{s, x}$ and therefore the number of heavy sets $C_{s, x}$ is at most $\widetilde\Order(K^5 n / \Delta)$. For each heavy pair, we spend time \raisebox{0pt}[0pt][0pt]{$\widetilde\Order(t \cdot |B_s + C_{s, x}|)$} to list $t$ witnesses for each element in the sumset $B_s + C_{s, x}$ by \cref{lem:sparse-sumset-witness}. Recall that $B_s + C_{s, x} \subseteq H + H - x - a_0 \subseteq H + X - x - a_0$, and thus $|B_s + C_{s, x}| \leq |H| \cdot |X| \leq K^7 n$. Hence, the heavy pairs amount to time $\Order((K^5 n / \Delta) \cdot (t K^7 n)) = \Order(K^{12} n^2 t / \Delta)$. Summing over all these contributions gives the claimed time bound.
\end{proof}

\begin{proof}[Proof of \cref{thm:3sum-structured}]
We proceed as outlined before: First apply \cref{lem:3sum-universe-reduction} to reduce the given 3-SUM instance to a 3-SUM listing instance with parameter $t$ in a universe of size~$|G'| = \Order(n^2 / t)$ and then run \cref{alg:3sum-structured} on that instance. This algorithm is correct by \cref{lem:3sum-structured-correctness}, and runs in the claimed running time by setting $t = n^{1/4}$ and $\Delta = n^{1/2}$. Since the universe reduction succeeds only with constant probability~$\frac{8}{10}$, we need to repeat this whole process $\Order(\log n)$ times to achieve high success probability.
\end{proof}
\section{Energy Reduction for 3-SUM} \label{sec:3sum-energy-reduction}
We prove the energy reduction in two steps, as outlined in the overview.
\lemenergyreductionadditivecombinatorics*
\begin{proof}
The reduction is given in \cref{alg:energy-reduction-additive-combinatorics}. We repeatedly estimate the additive energy of~$A$ using \cref{lem:energy-approximation} and as long as~$E(A) \geq |A|^3 / K$, we apply the BSG theorem to obtain a structured subset~$A' \subseteq A$. This set has large size $|A'| \geq \Omega(|A| / K)$ and small doubling~\makebox{$|A' + A'| \leq \Order(K^{24} |A|) = \Order(K^{26} |A'|)$}. We solve the 3-SUM instance~$(A', A, A)$ using \cref{thm:3sum-structured}; if a solution is found in this step we report ``yes''. Otherwise continue the process with $A \setminus A'$ in place of $A$. As soon as the additive energy of $A$ drops below the desired threshold $|A|^3 / K$, we stop and return $A^* \gets A$.

The correctness is easy to prove: In each step we split off a subset $A'$. If there is a 3-SUM solution involving an element from $A'$, we detect the solution by calling \cref{thm:3sum-structured} and correctly report ``yes''. Otherwise it is safe to discard $A'$. 

To analyze the running time, first observe that in every step the size of $A$ reduces by at least $\Omega(K^{-2} |A|)$. Therefore, after at most $\Order(K^2)$ steps the size of $A$ must have halved and thus the total number of steps is bounded by $\Order(K^2 \log n)$. In each step, computing $A'$ via the BSG theorem takes time $\Order(K^{12} |A|)$ and solving the structured 3-SUM instance $(A', A, A)$ takes time \raisebox{0pt}[0pt][0pt]{$\widetilde\Order((K^{26})^{12} n^{7/4}) = \widetilde\Order(K^{312} n^{7/4})$}. In total we spend time \raisebox{0pt}[0pt][0pt]{$\widetilde\Order(K^{314} n^{7/4})$} as claimed.
\end{proof}

\begin{algorithm}[t]
\caption{The energy reduction via additive combinatorics. Given a 3-SUM instance $A \subseteq G$, this algorithm either detects a 3-SUM solution or constructs an equivalent instance $A^* \subseteq A$ with additive energy $E(A^*) \leq |A^*|^3 / K$.} \label{alg:energy-reduction-additive-combinatorics}
\begin{algorithmic}[1]
\RepeatInf
    \State Estimate $E(A)$ using \cref{lem:energy-approximation}
    \If{$E(A) \leq |A|^3 / K$}
        \State\Return $A^* \gets A$
    \Else
        \State Apply the Balog-Szemerédi-Gowers theorem on $A$ to obtain $A' \subseteq A$
        \State Solve the 3-SUM instance $(A', A, A)$ using \cref{thm:3sum-structured}
        \If{$(A', A, A)$ is a ``yes'' instance} \Return ``yes'' \EndIf
        \State $A \gets A \setminus A'$
    \EndIf
\EndRepeatInf{}
\end{algorithmic}
\end{algorithm}

\lemenergyreductionsubsampling*

\begin{algorithm}[t]
\caption{The energy reduction via subsampling. Given a 3-SUM instance $A \subseteq G = \Field_p^d$ with bounded additive energy $E(A) \leq |A|^3 / K$, this algorithm constructs \raisebox{0pt}[0pt][0pt]{$\Order(|A|^2 / K^2)$} smaller 3-SUM instances of size $\Order(K)$ and with expected additive energy $\Order(K^2)$.} \label{alg:energy-reduction-subsampling}
\begin{algorithmic}[1]
\State Let $d' = \ceil{\log_p(|A|/K)}$ and let $G' = \Field_p^{d'}$
\State Sample a linear hash function $h: G \to G'$
\ForEach{$x, y \in G'$}
    \State Construct the 3-SUM instance $A_{x, y} = \set{a \in A : h(a) \in \set{x, y, -(x+y)}}$ \label{alg:energy-reduction-subsampling:line:construct}
    \If{$|A_{x, y}| \leq 6K$} \label{alg:energy-reduction-subsampling:line:exceeds-condition}
        \State Solve the 3-SUM instance $A_{x, y}$ by means of the reduction \label{alg:energy-reduction-subsampling:line:reduction}
    \Else
        \State Solve the 3-SUM instance $A_{x, y}$ by brute-force \label{alg:energy-reduction-subsampling:line:brute-force}
    \EndIf
\EndForEach
\State\Return ``yes'' if and only if one of the instances $A_{x, y}$ is a ``yes'' instance
\end{algorithmic}
\end{algorithm}

\noindent
The reduction is summarized in \cref{alg:energy-reduction-subsampling}. We sample a linear hash function $h : G \to G'$ and construct the instances $A_{x, y} = \set{a \in A : h(a) \in \set{x, y, -(x + y)}}$, for all $x, y \in G'$. We solve all instances by brute-force which exceed their expected size by a constant factor, and pass the other instance to the reduction. If we find a 3-SUM solution in one of the constructed instances, we report ``yes''.

The analysis involves several steps, but the correctness argument is simple: Since all sets~$A_{x, y}$ are subsets of $A$, we can never return ``yes'' unless $A$ is a ``yes'' instance. On the other hand, whenever there is a 3-SUM solution~$a + b + c = 0$ in $A$, we can pick $x = h(a)$ and $y = h(b)$ so that $A_{x, y}$ is a ``yes'' instance (by the linearity of the hash function).

We continue with the analysis of the running time of the reduction, which mainly involves proving that most instances have size $\Order(K)$ and therefore do not have to be brute-forced.

\begin{lemma}[Running Time of \cref{alg:energy-reduction-subsampling}]
\Cref{alg:energy-reduction-subsampling} runs in expected time $\widetilde\Order(n^2 / K)$.
\end{lemma}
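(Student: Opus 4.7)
The expected running time decomposes into (a) constructing the sets $A_{x,y}$ across all pairs, (b) brute-forcing the instances that exceed size $6K$, plus negligible set-up. The number of pairs $(x, y) \in G' \times G'$ is $|G'|^2 = \Theta(n^2/K^2)$, already within the target, and sampling $h$ together with computing $h(a)$ for every $a \in A$ costs $\Order(n)$.

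For (a), the plan is to precompute the buckets $T_z = \set{a \in A : h(a) = z}$ once in $\Order(n)$ time, so that each $A_{x,y} = T_x \cup T_y \cup T_{-(x+y)}$ can be assembled in time $\Order(|A_{x,y}|)$. A double-counting argument---each $a \in A$ lies in $A_{x,y}$ for exactly $3|G'| \pm \Order(1)$ pairs $(x,y)$, namely one family each for $h(a) = x$, $h(a) = y$, and $h(a) = -(x+y)$---then gives $\sum_{(x,y)} |A_{x,y}| = \Order(n |G'|) = \Order(n^2/K)$ deterministically, so this contribution is within budget.

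For (b), the main obstacle, we need to show that $\Ex\bigl[\sum_{(x,y)} |A_{x,y}|^2 \cdot \mathbb{1}[|A_{x,y}| > 6K]\bigr] \leq \widetilde\Order(n^2/K)$. Since $\Ex|A_{x,y}| \leq 3K$, exceeding $6K$ is a constant-factor deviation from the mean that should be rare, but the linear hash $h$ is not independent on linearly dependent inputs, so Chernoff does not apply off the shelf. The plan is to control a high moment $\Ex[|T_z|^k]$ for $k = \Theta(\log n)$ by classifying $k$-tuples $(a_1, \dots, a_k) \in A^k$ according to the dimension $s$ of $\Span{a_1, \dots, a_k}$: by \cref{lem:linear-hashing}, every such tuple is all-hashed to a fixed target with probability at most $|G'|^{-s}$. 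Generic full-rank tuples then contribute $\Order(K^k)$ to the moment, while degenerate tuples---few because $p = \Order(1)$ restricts the number of dependent configurations---are controlled via \cref{lem:energy-linear-equations} together with the hypothesis $E(A) \leq |A|^3/K$. A Markov bound at the $k$-th moment then yields $\Pr[|T_z| > 2K] \leq n^{-\omega(1)}$, and a union bound over $z \in G'$ shows that with overwhelming probability \emph{no} bucket is overfull, in which case no brute-force is ever triggered (because $|A_{x,y}| \leq |T_x|+|T_y|+|T_{-(x+y)}| \leq 6K$). The rare bad event contributes at most $n^{-\omega(1)}$ times the trivial worst-case cost, which is negligible. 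Adding the contributions of (a), (b), and the set-up gives the claimed $\widetilde\Order(n^2/K)$ expected running time.
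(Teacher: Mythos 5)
Your decomposition and part (a) are fine and match the paper: the construction cost $\sum_{x,y}|A_{x,y}| = \Order(n|G'|) = \Order(n^2/K)$ is exactly how the paper accounts for building the instances. The gap is in part (b). Your plan hinges on showing $\Pr[|T_z| > 2K] \leq n^{-\omega(1)}$ via a $k$-th moment with $k = \Theta(\log n)$, and the justification offered for the degenerate tuples does not work at that scale. The reasoning ``few dependent configurations because $p = \Order(1)$'' is valid only when the number of vectors involved is constant (as in the paper's variance computation and in \cref{lem:energy-bounded}, where a dependent element is confined to a span of size $p^{\Order(1)}$); for $k = \Theta(\log n)$ a sub-tuple spans a subspace of size up to $p^{k} = n^{\Theta(1)}$, so the count of rank-deficient $k$-tuples is not small in the required sense. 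The energy hypothesis does not rescue this: $E(A) \leq |A|^3/K$ (via \cref{lem:energy-linear-equations} or the bound $E(B) \geq |B|^4/|B+B|$) only controls four-term relations and gives $|A \cap U| \leq (|A|^3\,p^{\dim U}/K)^{1/4}$, which is vacuous once $\dim U = \Theta(\log n)$ --- indeed $A$ always lies in a subspace of dimension $d = \Order(\log n)$, so for $k > d$ \emph{every} $k$-tuple is degenerate and your classification yields no $(\Order(K))^k$ bound at all. Note also that $\Ex[|T_z|^k] \leq (\Order(K))^k$ with $k = \Theta(\log n)$ is essentially equivalent to the tail bound you are trying to prove, so the moment computation must genuinely establish strong concentration for random linear maps on worst-case (low-energy) sets; this is exactly the kind of statement the paper is careful to avoid needing (linear hash functions are not even 3-wise independent, cf.\ the discussion around \cref{lem:linear-hashing} and~\cite{Knudsen16}), and your sketch does not establish it.

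The paper's proof of this lemma sidesteps all of this and, notably, does not use the energy hypothesis at all. It computes $\Ex(|A_{x,y}|) \leq 3K$ and, using that only $\Order(n)$ pairs $a,b \in A$ are linearly dependent, $\Var(|A_{x,y}|) \leq \Order(K)$; then a dyadic Chebyshev bound gives $\Pr(|A_{x,y}| \geq 2^i \cdot 6K) \leq \Order(2^{-2i}K^{-1})$, which is charged against the brute-force cost $\Order(2^{2i}K^2)$ of such an instance. The quadratic cost exactly cancels the quadratic decay, giving $\Order(K)$ expected brute-force work per pair $(x,y)$ and $\widetilde\Order(n^2/K)$ overall. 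In other words, one does not need (and should not try to prove here) that oversized instances essentially never occur; it suffices that their expected cost is small, and second-moment information is enough for that. If you want to salvage your route, you would have to either find a genuinely new concentration argument for linear hashing on low-energy sets or fall back to this pay-as-you-go accounting.
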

\begin{proof}
For most steps of the algorithm it is easy to bound the running time. In particular, we can construct the instances $A_{x, y}$ in time $\Order(n^2 / K)$ by first precomputing the hash values~$h(a)$ for all $a \in A$. The interesting part is to bound the running time of the brute-force step in \cref{alg:energy-reduction-subsampling:line:brute-force}. To this end, we analyze the sizes of the constructed instances $A_{x, y}$.

Fix any $x, y \in G'$. We compute the expectation and variance of $|A_{x, y}|$ as follows. For ease of notation, write $X = \set{x, y, -(x + y)}$:
\begin{equation*}
    \Ex(|A_{x, y}|) = \sum_{a \in A} \Pr(h(a) \in X) \leq \sum_{a \in A} \frac{3}{|G'|} = \frac{3n}{|G'|} \leq 3K.
\end{equation*}
Next, we compute the variance:
\begin{gather*}
    \Var(|A_{x, y}|) \\
    \qquad= -\Ex(|A_{x, y}|)^2 + \Ex(|A_{x, y}|^2) \\
    \qquad= -\left(\sum_{a \in A} \Pr(h(a) \in X)\right)^2 + \sum_{a, b \in A} \Pr(h(a), h(b) \in X)
\intertext{Here, we distinguish two cases for $a, b$: If $a$ and $b$ are linearly independent, then the random variables $h(a)$ and $h(b)$ are independent. If $a, b$ are linearly dependent, then there are at most $np = \Order(n)$ choices for~$a, b$ (fix $a$ arbitrarily, then there are at most $p$ choices for $b$ in the span $\Span{a}$). It follows that the above expression can be bounded as follows:}
    \qquad\leq -\left(\sum_{a \in A} \Pr(h(a) \in X)\right)^2 + \left(\sum_{a, b \in A} \Pr(h(a) \in X) \cdot \Pr(h(b) \in X)\right) + \Order\left(\frac{n}{|G'|}\right) \\
    \qquad\leq \Order\left(\frac{n}{|G'|}\right) \\
    \qquad\leq \Order(K).
\end{gather*}

We are now ready to bound the expected running time of \cref{alg:energy-reduction-subsampling:line:brute-force} using Chebyshev's inequality:
\begin{gather*}
    \sum_{x, y \in G'} \sum_{i=0}^{\log n} \Pr(|A_{x, y}| \geq 2^i \cdot 6K) \cdot \Order((2^i K)^2) \\
    \qquad\leq \sum_{x, y \in G'} \sum_{i=0}^{\log n} \Pr\Big(|A_{x, y}| - \Ex(|A_{x, y}|) \geq \Omega(2^i \Var(|A_{x, y}|))\Big) \cdot \Order(2^{2i} K^2) \\
    \qquad\leq \sum_{x, y \in G'} \sum_{i=0}^{\log n} \Order\left(\frac{1}{2^{2i} K} \cdot 2^{2i} K^2 \right) \\
    \qquad\leq \widetilde\Order(n^2 / K).
\end{gather*}
This completes the running time analysis.
\end{proof}

\begin{lemma}[Bounded Energy] \label{lem:energy-bounded}
Fix $x, y \in G'$ and let $A_{x, y}$ be as in \cref{alg:energy-reduction-subsampling}. Then $\Ex(E(A_{x, y})) \leq \Order(K)$.
\end{lemma}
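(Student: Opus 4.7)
The plan is to expand the expected energy as a sum over additive quadruples, i.e.\
\begin{equation*}
\Ex(E(A_{x,y})) = \sum_{\substack{(a_1,a_2,a_3,a_4) \in A^4 \\ a_1+a_2 = a_3+a_4}} \Pr\bigl(h(a_i) \in X \text{ for all } i \in [4]\bigr),
\end{equation*}
where $X := \set{x, y, -(x+y)}$. The key observation is that, since $h$ is linear and $a_4 = a_1 + a_2 - a_3$, the hash value $h(a_4)$ is determined by $h(a_1), h(a_2), h(a_3)$; thus the event reduces to a statement about just the triple $(a_1, a_2, a_3)$. Applying the independence bound of \cref{lem:linear-hashing} to this triple then yields
\begin{equation*}
\Pr\bigl(h(a_i) \in X \text{ for all } i \in [4]\bigr) \leq \frac{|X|^3}{|G'|^s}, \quad\text{where } s := \dim \Span{a_1, a_2, a_3}.
\end{equation*}

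Next I would partition the additive quadruples by the value of $s \in \set{0, 1, 2, 3}$ and bound each case separately. For the generic case $s = 3$, the number of quadruples is trivially at most $E(A) \leq |A|^3/K$; combined with $|G'| = \Theta(|A|/K)$, this contributes $\Order(E(A)/|G'|^3) = \Order(K^2)$. For the degenerate case $s = 2$, the constraint $a_3 \in \Span{a_1, a_2}$ leaves only $p^2 = \Order(1)$ choices for $a_3$ once $a_1, a_2$ are fixed (using $p = \Order(1)$), and then $a_4$ is forced; this gives $\Order(|A|^2)$ such quadruples and a contribution of $\Order(|A|^2/|G'|^2) = \Order(K^2)$. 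For $s \leq 1$, all vectors lie in a 1-dimensional subspace, so there are only $\Order(|A|)$ quadruples, contributing $\Order(|A|/|G'|) = \Order(K)$.

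Summing the three contributions yields the claimed bound on $\Ex(E(A_{x,y}))$. The hypothesis $E(A) \leq |A|^3/K$ is used only in the generic case $s = 3$, while the degenerate cases are controlled purely by the algebraic structure of $\Field_p^d$ with $p = \Order(1)$. The main technical obstacle is the dimension-based bookkeeping in the degenerate cases; the translation from dimension to probability via \cref{lem:linear-hashing} is the conceptual core of the argument, and is precisely why the entire reduction is set over $\Field_p^d$ rather than $\Int$---the projection-based linear hash gives clean control over conditional independence exactly up to the span of the inputs, which is what makes the casework go through.
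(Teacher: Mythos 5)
Your proof is correct and takes essentially the same route as the paper's: expand the expectation over additive quadruples, stratify by the span dimension $s$ (the paper uses $\dim\Span{a_1,a_2,a_3,a_4}$, which coincides with your $\dim\Span{a_1,a_2,a_3}$ since $a_4=a_1+a_2-a_3$), bound the probability by $\Order(|G'|^{-s})$ via \cref{lem:linear-hashing}, count $\Order(|A|^s)$ quadruples for $s\le 2$ and at most $E(A)$ for $s=3$, and conclude with $|G'|=\Theta(|A|/K)$ and $E(A)\le |A|^3/K$. Two minor remarks: your total $\Order(K^2)$ is exactly what the paper's proof derives and what \cref{lem:energy-reduction-subsampling} requires (the $\Order(K)$ in the lemma statement is evidently a typo), and in the $s=2$ stratum you should also allow the two ``free'' elements to be, say, $a_1,a_3$ with $a_2\in\Span{a_1}$ (the case $\dim\Span{a_1,a_2}=1$, $a_3\notin\Span{a_1,a_2}$), which only affects constant factors.
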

\begin{proof}
We bound the expected energy as follows:
\begin{gather*}
    \Ex(E(A_{x, y})) \\
    \qquad= \sum_{\substack{a_1, a_2, a_3, a_4 \in A\\a_1 + a_2 = a_3 + a_4}} \Pr(h(a_1), h(a_2), h(a_3), h(a_4) \in \set{x, y, -(x - y)} ) \\
    \qquad= \sum_{s=0}^3 \sum_{\substack{a_1, a_2, a_3, a_4 \in A\\a_1 + a_2 = a_3 + a_4\\\dim\Span{a_1, a_2, a_3, a_4} = s}} \Pr(h(a_1), h(a_2), h(a_3), h(a_4) \in \set{x, y, -(x - y)} )
\intertext{For fixed elements $a_1, a_2, a_3, a_4$ spanning a subspace of dimension $s$, there are at least $s$ hash values in $h(a_1), h(a_2), h(a_3), h(a_4)$ which are independent and therefore the probability can be upper bounded by $1/|G'|^s$.}
    \qquad\leq \sum_{s=0}^3 \frac{1}{|G'|^s} \cdot \sum_{\substack{a_1, a_2, a_3, a_4 \in A\\a_1 + a_2 = a_3 + a_4\\\dim\Span{a_1, a_2, a_3, a_4} = s}} 1
\intertext{We now distinguish two cases: For $s = 3$ we use that the inner sum is at most $E(A)$ by definition. For $s \leq 2$ we bound the inner sum by the weaker bound $\Order(|A|^s)$. (Indeed, any tuple~$a_1, a_2, a_3, a_4$ spanning a subspace of dimension~$s$ can be obtained by first picking~$s$ arbitrary elements from $A$ and expressing the others as one out of $p^4 = \Order(1)$ possible linear combinations.)}
    \qquad\leq \sum_{s=0}^2 \Order\left(\frac{|A|^s}{|G'|^s}\right) + \Order\left(\frac{E(A)}{|G'|^3}\right) \\
    \qquad\leq \Order\left(\frac{|A|^2}{|G'|^2} + \frac{E(A)}{|G'|^3}\right) \\
    \qquad\leq \Order(K^2).
\end{gather*}
In the final step we have used that $|G'| \geq |A| / K$ and that $E(A) \leq |A|^3 / K$.
\end{proof}

By concatenating both energy reductions we obtain the following result.

\thmenergyreduction*
\begin{proof}
Suppose that there are $\epsilon, \delta > 0$ and an algorithm $\mathcal A$ solving 3-SUM on instances~$A$ of size $n$ with additive energy $\Order(|A|^{2+\delta})$ in time $\Order(n^{2-\epsilon})$.

We reduce a given 3-SUM instance $A$ to this problem. Let $K = |A|^{0.0001}$. We first apply \cref{lem:energy-reduction-additive-combinatorics} with parameter $K$ to either detect a 3-SUM solution in $A$ or to find an equivalent instance $A^* \subseteq A$ with additive energy bounded by $|A^*|^3 / K$.

Next, apply the reduction from \cref{lem:energy-reduction-subsampling} to obtain $g = \Order(|A|^2 / K^2)$ instances~$A_1, \dots, A_g$ of size $\Order(K)$ with expected additive energy $\Order(K^2)$. By Markov's bound, each such instance has additive energy more than $K^{2+\delta}$ with probability at most $\Order(K^{-\delta})$. We may therefore use \cref{lem:energy-approximation} to estimate the additive energies of the constructed instances, and brute-force all instances with energy exceeding $K^{2+\delta}$. We solve the remaining instances using the efficient algorithm~$\mathcal A$.

It remains to analyze the running time. \cref{lem:energy-reduction-additive-combinatorics} runs in time $\Order(K^{314} n^{7/4}) = \Order(n^{1.7814})$ and \cref{lem:energy-reduction-subsampling} runs in time $\Order(n^2 / K) = \Order(n^{1.9999})$. Since we only solve a $K^{-\epsilon}$-fraction of the instances by brute-force, the total expected running time of brute-forcing instances with exceptionally large additive energy takes time $\Order(K^{-\epsilon} n^2 / K^2 \cdot K^2) = \Order(n^{2-0.0001\epsilon})$. Finally, solving the remaining instances using~$\mathcal A$ amounts for time~\makebox{$\Order(n^2 / K^2 \cdot K^{2-2\delta}) = \Order(n^{2-0.0001\delta})$}. All in all, the running time is subquadratic as claimed.
\end{proof}
\section{Reducing 3-SUM to Triangle Listing} \label{sec:3sum-to-triangle}
The first reduction from 3-SUM to triangle listing is by Pătraşcu~\cite{Patrascu10}, and this reduction was later generalized by Kopelowitz, Pettie and Porat~\cite{KopelowitzPP16}. It is also known how to adapt the reduction to 3-XOR~\cite{JafargholiV16} (i.e., the~$G = \Field_2^d$ version of 3-SUM).

In this section we revisit this reduction. We present a modified (and arguably simplified) version of the known constructions. As before, we consider 3-SUM instances over the group~$G = \Field_p^d$, where $p$ is a constant prime and~$d = \Order(\log n)$. Our goal is to prove the following theorem:

\thmthreesumtotriangle*

For the remainder of this subsection, we will prove \cref{lem:3sum-to-triangle}. We start with the construction in \cref{sec:3sum-to-triangle:sec:construction}. In \cref{sec:3sum-to-triangle:sec:cycles} we analyze the number of $k$-cycles and in \cref{sec:3sum-to-triangle:sec:regular} we justify the assumption that the graph is $\Theta(n^{1/2})$-regular. We summarize the proof of \cref{lem:3sum-to-triangle} in \cref{sec:3sum-to-triangle:sec:assembling}. Throughout, let~$A$ be the given 3-SUM instance. By the energy reduction in \cref{lem:energy-reduction-complete} (applied with $\delta = \frac12$, say) we can assume that $E(A) \leq \Order(|A|^{5/2})$.

\subsection{The Construction} \label{sec:3sum-to-triangle:sec:construction}
We start with the construction of the triangle listing instance. Let $G' = \Field_p^{d'}$ be a subspace of $G$ with prescribed size~$|G'| \leq n$ which we will set later. We randomly sample linear maps~$h_1, h_2, h_3 : G \to G'$, and let~$h : G \to (G')^3$ be defined by $h(a) = (h_1(a), h_2(a), h_3(a))$. Let
\begin{equation*}
    \begin{array}{c@{\;}c@{\;}c@{\:}c@{\:}c@{\:}c@{\:}c}
        X & = & G' & \times & G' & \times & \set{0} \\[.5ex]
        Y & = & G' & \times & \set{0} & \times & G' \\[.5ex]
        Z & = & \set{0} & \times & G' & \times & G'
    \end{array}
\end{equation*}
be the vertex parts in the constructed tripartite graph. Observe that each set $X, Y, Z$ is a subgroup of~$(G')^3$. We now add edges to the graph: For each $a \in A$, add an edge between~$x \in X$ and~$y \in Y$ whenever~\makebox{$y = x + h(a)$}. We say that this edge $(x, y)$ is \emph{labeled} with $a$. Similarly, add an edge between $y \in Y$ and $z \in Z$ whenever~$z = y + h(a)$ and add an edge between $z \in Z$ and $x \in X$ whenever~$x = z + h(a)$. We remark that for the analysis we view the instance as a labeled (multi-)graph with labels as just described, but for the actual reduction we forget about the edge labels (and multiple edges) and treat the constructed instance as a simple graph; this notation is purely for convenience.

We introduce some more notation. As before, we say that $(a, b, c) \in A^3$ is a \emph{solution} if~$a + b + c = 0$. We say that $(a, b, c) \in A^3$ is a \emph{pseudo-solution} if $h(a) + h(b) + h(c) = 0$. As a first step, we argue that there is a one-to-one correspondence between triangles in the constructed instance and pseudo-solutions.

\begin{lemma}[Pseudo-Solutions Are Triangles] \label{lem:pseudo-solutions}
The labels $a, b, c$ of any triangle in the constructed instance form a pseudo-solution. Moreover, for every pseudo-solution $a, b, c$ there are at most six triangles in the instance labeled with $a, b, c$.
\end{lemma}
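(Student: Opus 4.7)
The plan is to chase the defining equations of the edge labels around a triangle; both directions of the lemma follow from direct bookkeeping once one keeps track of the two scalar constraints imposed by the subgroup memberships $x \in X$, $y \in Y$, $z \in Z$.

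For the first claim, I would take an arbitrary triangle $(x, y, z)$ with $x \in X$, $y \in Y$, $z \in Z$ and let $a, b, c$ be the labels of the edges $(x, y)$, $(y, z)$, $(z, x)$. Composing the three edge relations $y = x + h(a)$, $z = y + h(b)$, and $x = z + h(c)$ around the triangle gives $x = x + h(a) + h(b) + h(c)$, which immediately yields $h(a) + h(b) + h(c) = 0$, i.e.\ the pseudo-solution equation.

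For the second claim, fix a pseudo-solution $(a, b, c)$. Any triangle carrying these labels is determined by (i) one of the $3! = 6$ orderings of $(a, b, c)$ on the ordered edge sequence $(XY, YZ, ZX)$ and (ii) a choice of vertex $x \in X$. It therefore suffices to show that each ordering produces at most one triangle, for a total of at most six. For the canonical ordering $a$ on $XY$, $b$ on $YZ$, $c$ on $ZX$, I would write $x = (x_1, x_2, 0)$ and trace through: the requirement $y = x + h(a) \in Y$ kills the second coordinate and forces $x_2 = -h_2(a)$, and the requirement $z = y + h(b) \in Z$ kills the first coordinate of $z$ and forces $x_1 = -h_1(a) - h_1(b)$. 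These two linear constraints uniquely pin down $x$, and hence $y$ and $z$. The remaining closing condition $x = z + h(c)$ (together with $x \in X$) reduces coordinate by coordinate to exactly the three equations $h_i(a) + h_i(b) + h_i(c) = 0$ and therefore holds automatically.

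The argument is essentially mechanical. The only point worth flagging is that the three coordinate projections $h_1, h_2, h_3$ must interact with the three subgroups $X, Y, Z$ in precisely the right way for the closing constraint to be a \emph{consequence} of the pseudo-solution equation rather than an extra condition---this is the content of the construction, and it is what guarantees that each of the six orderings contributes one triangle. The ``at most six'' comfortably absorbs any collapses produced by repeated labels (e.g.\ when $a = b$).
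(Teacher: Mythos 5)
Your proposal is correct and follows essentially the same route as the paper: the first claim is obtained by composing the three edge relations around the triangle, and the second by fixing one of the six assignments of $a,b,c$ to the edge positions $(XY,YZ,ZX)$ and observing that the subgroup constraints pin down the triangle uniquely, with the closing edge condition holding automatically because $h(a)+h(b)+h(c)=0$. The paper writes this as a $9$-equation system in the six free coordinates and states its unique solvability, whereas you solve for $x$ first and propagate; the content is the same, and your explicit remark that the closing constraint is a consequence of the pseudo-solution equation is exactly the consistency check the paper leaves implicit.
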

\begin{proof}
The first claim is easy: By the construction of the graph, the edge labels $a, b, c$ of any triangle $(x, y, z)$ (in fact, of any closed walk) must satisfy that $h(a) + h(b) + h(c) = 0$. By definition, $a, b, c$ constitutes a pseudo-solution.

For the other direction, let $a, b, c$ be a pseudo-solution. There are six ways to assign the edge labels to the edge parts; we will focus on one case and prove that there is a unique triangle $(x, y, z) \in X \times Y \times Z$ where $(x, y)$ is labeled with $a$, $(y, z)$ is labeled with $b$, and~$(z, x)$ is labeled with $c$. Writing $x = (x_1, x_2, 0)$, $y = (y_1, 0, y_3)$ and $z = (0, z_2, z_3)$, we obtain the following constraints:
\begin{align*}
    y_1 &= x_1 + h_1(a) & 0 &= x_2 + h_2(a) & y_3 &= h_3(a) \\
    0 &= y_1 + h_1(b) & z_2 &= h_2(b) & z_3 &= y_3 + h_3(b) \\
    x_1 &= h_1(c) & x_2 &= z_2 + h_2(c) & 0 &= z_3 + h_3(c)
\end{align*}
It is easy to check that this equation system (with indeterminates $x_1, x_2, y_1, y_3, z_2, z_3$) is uniquely solvable by $x_1 = h_1(c), x_2 = -h_2(a), y_1 = -h_1(b), y_3 = h_3(a), z_2 = h_2(b), z_3 = -h_3(c)$.
\end{proof}

By this characterization it is easy to complete the reduction: By listing all triangles in the constructed instance, in particular we list all pseudo-solutions of the 3-SUM instance. We check whether one of these pseudo-solutions forms a proper solution and return ``yes'' in this and only this case. Moreover, we obtain the following bound on the number of triangles in the constructed instance:

\begin{lemma}[Number of Triangles] \label{lem:number-of-triangles}
Either we can find a 3-SUM solution in time $\widetilde\Order(|G'|^3 / n)$, or the expected number of triangles in the constructed instance is $\Order(n^3 |G'|^{-3})$.
\end{lemma}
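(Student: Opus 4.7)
The plan is to couple an expected-triangle count calculation with a simple sampling algorithm that finds a 3-SUM solution quickly whenever many solutions exist. Let $T := \#\set{(a, b, c) \in A^3 : a + b + c = 0}$ denote the number of proper 3-SUM solutions in $A$. I aim to show that either sampling produces a solution in time $\widetilde\Order(|G'|^3/n)$, or else $T \leq n^3/|G'|^3$, in which case the expected triangle bound falls out of a direct calculation.

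First I would bound the expected number of triangles in terms of $T$. By \cref{lem:pseudo-solutions} the number of triangles is at most six times the number of pseudo-solutions. Using the linearity of $h$, a triple $(a, b, c) \in A^3$ is a pseudo-solution iff $h(a+b+c) = 0$. For any fixed nonzero $v \in G$, \cref{lem:linear-hashing} gives $\Pr(h_i(v) = 0) = 1/|G'|$ independently for $i = 1, 2, 3$, and hence $\Pr(h(v) = 0) = 1/|G'|^3$. Summing over all triples in $A^3$ yields an expected pseudo-solution count of $T + (n^3 - T)/|G'|^3 \leq T + n^3/|G'|^3$, and hence expected triangle count at most $6T + 6 n^3/|G'|^3$.

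Next I would find a 3-SUM solution fast whenever $T$ is large. Preprocess $A$ into a hash table, then sample $s = \Theta((|G'|^3/n) \log n)$ pairs $(a, b) \in A \times A$ uniformly at random, and for each pair test in $\Order(1)$ time whether $-(a+b) \in A$, returning the solution $(a, b, -(a+b))$ if so. Each sample succeeds with probability exactly $T/n^2$, since the number of pairs $(a, b) \in A^2$ admitting a completion $c = -(a+b) \in A$ equals $T$ by definition. The expected number of successes is therefore $sT/n^2 = \Theta((T |G'|^3/n^3) \log n)$, which is $\Omega(\log n)$ whenever $T \geq n^3/|G'|^3$; a Chernoff bound then guarantees at least one success with high probability.

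Combining the two steps proves the lemma: run the sampling in time $\widetilde\Order(|G'|^3/n)$ and, if it finds a 3-SUM solution, return it. Otherwise we conclude with high probability that $T < n^3/|G'|^3$, in which case the expected number of triangles (over the independent random hash $h$) is $6T + 6 n^3/|G'|^3 = \Order(n^3/|G'|^3)$. The main subtlety to watch out for is that the certificate ``$T$ is small'' only holds with high probability over the sampling randomness, whereas the triangle bound is an expectation over the independent randomness in $h$; this is benign and handled by taking the hidden constant in $s$ large enough so that the sampling failure probability is polynomially small, decoupling the two sources of randomness cleanly.
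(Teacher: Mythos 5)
Your proposal is correct and follows essentially the same route as the paper: bound triangles by six times the pseudo-solutions, observe that each non-proper triple survives the hashing with probability $|G'|^{-3}$ (giving expectation $n^3|G'|^{-3}$), and handle proper solutions by a case distinction where, if there are more than $n^3|G'|^{-3}$ of them, sampling $\widetilde\Order(|G'|^3/n)$ random pairs $(a,b)$ and testing $-(a+b)\in A$ finds a solution with high probability. The only (shared, harmless) nit is that constant-time membership tests presuppose a lookup structure on $A$ whose $\Order(n)$ construction cost is absorbed by the surrounding reduction rather than by the $\widetilde\Order(|G'|^3/n)$ budget.
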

\begin{proof}
By the previous lemma, the number of triangles is bounded by six times the number of pseudo-solutions. We first focus on the pseudo-solutions $a, b, c$ which are not proper solutions (i.e.,~\makebox{$a + b + c \neq 0$}). The probability that $h(a) + h(b) + h(c) = 0$, or equivalently that $h(a + b + c) = 0$, is at most $|G'|^{-3}$. It follows that the expected number of non-proper pseudo-solutions is at most $n^3 |G'|^{-3}$.

Next, focus on the proper solutions. We distinguish two cases: On the one hand, if there are at most~$n^3 |G'|^{-3}$ proper solutions, then the total number of pseudo-solutions and therefore the total number of triangles is $\Order(n^3 |G'|^{-3})$, as claimed. On the other hand, if there are at least $n^3 |G'|^{-3}$ solutions, then it suffices to sample $\widetilde\Order(n^2 / (n^3 |G'|^{-3})) = \widetilde\Order(|G'|^3 / n)$ pairs $(a, b) \in A^2$ to detect at least one 3-SUM solution~$(a, b, -(a+b)) \in A^3$ with high probability.
\end{proof}

Finally, the instance can be constructed in time $\Order(n |G'|)$ as follows: We precompute the hash values $h_1(a), h_2(a), h_3(a)$ for all $a \in A$. For each vertex in the instance, say,~\makebox{$x = (x_1, x_2, 0)$}, we then check only those $a$'s with hash values satisfying $x_2 + h_2(a) = 0$ (or~$x_1 - h_1(a) = 0$) and add the respective edges.

\subsection{Counting the Number of \texorpdfstring{\boldmath$k$-Cycles}{k-Cycles}} \label{sec:3sum-to-triangle:sec:cycles}
The most interesting part in our setting is to bound the number of $k$-cycles in the constructed instance (for~$k \geq 4$). To this end, we introduce some notation. We say that a length-$k$ walk is \emph{labeled} by $a_1, \dots, a_k$ whenever the edges in the walk are labeled with $\pm a_1, \dots, \pm a_k$. More specifically, we fix an order of the vertex parts (say the \emph{clockwise} order is $X, Y, Z$) and require that the edge in the $i$-th step is labeled with~$a_i$ if the walk takes a step in clockwise direction (that is, from~$X$ to~$Y$, from~$Y$ to~$Z$ or from~$Z$ to~$X$) and labeled with $-a_i$ if the walk takes a step in counter-clockwise direction (that is, from $Y$ to $X$, $X$ to~$Z$ or from~$Z$ to~$Y$). For example, the walk~$a_1, -a_2, a_3, a_4$ for elements~$a_1, a_2, a_3, a_4 \in A$ takes one step in clockwise direction, takes one step in counter-clockwise direction (to the same part where it started from) and takes two more steps in clockwise direction. Here we assume for simplicity that $A$ and $-A$ are disjoint, so that the label of a walk uniquely determines its directions.\footnote{More generally, we should use pairs $(s_i, a_i)$ with $s_i = \pm 1$ and $a_i \in A$ to label paths, but we stick to the simpler version described in the text.}

We distinguish between two types of $k$-cycles: A $k$-cycle labeled with $a_1, \dots, a_k$ is called a \emph{pseudo-$k$-cycle} if $a_1 + \dots + a_k \neq 0$, and a \emph{zero-$k$-cycle} otherwise. The analysis differs for these two types of cycles: For pseudo-$k$-cycles we can exploit more randomness since all labels~$a_1, \dots, a_k$ can be expected to produce independent hash values $h(a_1), \dots, h(a_k)$. For zero-$k$-cycles, one of the hash values is determined by the others and we therefore have a smaller degree of independence. But we have the advantage that the 3-SUM instance has small additive energy, and therefore the number of solutions to $a_1 + \dots + a_k = 0$ is small.

\begin{lemma}[Rate of Zero-$k$-Cycles] \label{lem:rate-zero-cycles}
Fix a vertex $v$ and $a_1, \dots, a_k \in \pm A$ with $a_1 + \dots + a_k = 0$. Then there is a cycle starting from and ending at $v$ labeled with $a_1, \dots, a_k$ with probability at most $|G'|^{-s}$, where $s = \dim\Span{a_1, \dots, a_k}$.
\end{lemma}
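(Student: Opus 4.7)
}
The plan is to decompose the event ``a cycle labeled $a_1, \dots, a_k$ exists starting at $v$'' into a conjunction of coordinate-wise linear constraints on the random hash maps $h_1,h_2,h_3$, and then to exploit that these three maps are sampled independently. Write $b_i = a_1 + \dots + a_i$, so the walk's $i$-th vertex (if it existed) would be $v + h(b_i) = v + (h_1(b_i), h_2(b_i), h_3(b_i))$. Since $b_k = 0$ and $h$ is linear, the walk automatically returns to $v$ after step $k$; the only nontrivial constraints come from requiring that each intermediate vertex lies in the correct one of $X,Y,Z$, as dictated by the direction pattern (which is in turn determined by the signs of $a_1, \dots, a_k$, recalling that $A$ and $-A$ are disjoint).

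The first step is to observe that each intermediate constraint has a clean form. Each of $X,Y,Z$ is defined by the vanishing of one specific coordinate, so demanding that $v + h(b_i)$ lie in a specified part is equivalent to a single equation of the form $h_{c_i}(b_i) = y_i$, where $c_i \in \{1,2,3\}$ and $y_i \in G'$ depends only on $v$ and the direction pattern. Here $i$ ranges over $\{1, \dots, k-1\}$; the ``$i=k$'' equation is automatically satisfied because $b_k = 0$. Grouping by component, write $S_c = \{i \in [k-1] : c_i = c\}$, so the walk is valid precisely if
\begin{equation*}
h_c(b_i) = y_i \quad \text{for all } c \in \{1,2,3\} \text{ and all } i \in S_c.
\end{equation*}
Because $h_1, h_2, h_3$ are sampled as independent random linear maps, the three events (one per $c$) are independent, and it suffices to bound each factor separately.

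The second step applies \cref{lem:linear-hashing}. For a fixed $c$, the probability that $h_c(b_i) = y_i$ holds simultaneously for all $i \in S_c$ is at most $|G'|^{-s_c}$, where $s_c = \dim \Span{\{b_i : i \in S_c\}}$. Multiplying over $c = 1,2,3$ yields an upper bound of $|G'|^{-(s_1 + s_2 + s_3)}$ on the probability that the walk exists.

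It remains to show $s_1 + s_2 + s_3 \geq s$, which is the small dimension-counting step I expect to be the only real subtlety. Since $S_1 \cup S_2 \cup S_3 = [k-1]$, subadditivity of dimension under summation of subspaces gives
\begin{equation*}
s_1 + s_2 + s_3 \;\geq\; \dim \Span{b_1, \dots, b_{k-1}}.
\end{equation*}
A direct check shows $\Span{b_1, \dots, b_{k-1}} = \Span{a_1, \dots, a_k}$: the forward inclusion uses $b_i = a_1 + \dots + a_i$, and the reverse inclusion uses $a_i = b_i - b_{i-1}$ for $i < k$ together with $a_k = -b_{k-1}$ (which holds because $b_k = 0$). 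Hence $\dim \Span{b_1, \dots, b_{k-1}} = s$, completing the bound $|G'|^{-s}$. The main conceptual obstacle is just bookkeeping — tracking which coordinate is constrained at each step, and verifying that the closing constraint at $i=k$ is already forced by $a_1 + \dots + a_k = 0$; everything else reduces to the independence of $h_1,h_2,h_3$ and the standard linear-hashing estimate.
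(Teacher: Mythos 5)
Your proposal is correct, and it takes a genuinely different (non-inductive) route from the paper. The paper proves the same bound by induction on $k$: it peels off the last label $a_k$, distinguishes whether $\dim\Span{a_1,\dots,a_k}$ exceeds $\dim\Span{a_1,\dots,a_{k-1}}$, and in the latter case uses that $h(a_k)$ is independent of the earlier hash values to gain one factor $|G'|^{-1}$ per dimension increment. You instead write down all the part-membership constraints at once in terms of the prefix sums $b_i$, observe that each constraint touches exactly one of the three independently sampled maps $h_1,h_2,h_3$, apply the general clause of \cref{lem:linear-hashing} to each group, and close with the dimension count $\Span{b_1,\dots,b_{k-1}}=\Span{a_1,\dots,a_k}$, which is exactly where the hypothesis $a_1+\dots+a_k=0$ enters. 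Your version makes the use of randomness and of the zero-sum condition more transparent, and in fact yields the slightly stronger exponent $s_1+s_2+s_3\geq s$; the paper's induction is shorter and is structured so that it parallels the proof of the companion bound for pseudo-cycles (\cref{lem:rate-pseudo-cycles-generalization}). One small imprecision: the step-$k$ constraint is not literally ``automatic'' from $b_k=0$ — if the sign pattern of $a_1,\dots,a_k$ does not return to the part of $v$ (net clockwise displacement not divisible by $3$), the labeled cycle simply cannot exist. This is harmless, since you only use the containment of the cycle event in the conjunction of the constraints for $i\leq k-1$, so the upper bound stands (in the degenerate case the probability is $0$), but the word ``precisely'' should be weakened to an implication.
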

\begin{proof}
First observe that any walk with labels $a_1 + \dots + a_k = 0$ that starts at $v$ also ends at~$v$. We therefore bound the probability that there is a walk starting from~$v$ which is labeled with $a_1, \dots, a_k$ by $|G'|^{-s}$. The proof is by induction on $k$. For the base $k = 0$ we have $s = 0$ and can trivially bound the probability by $1$.

For the inductive case assume that $k \geq 1$. By induction, there is a walk of length $k-1$ with probability at most $|G'|^{s'}$ where $s' = \dim\Span{a_1, \dots, a_{k-1}}$. We distinguish two cases: If~$s' = s$, then we are done. If~$s' = s - 1$ (which is indeed the only other case), then the vector $a_k$ is linearly independent from $a_1, \dots, a_{k-1}$ and thus the random variable $h(a_k)$ is independent from the other random variables $h(a_1), \dots, h(a_{k-1})$. Now suppose that the walk after~$k-1$ steps has reached some vertex, say, $x = (x_1, x_2, 0)$ and we move in clockwise direction. Then the target vertex $y = (y_1, 0, y_3)$ is uniquely determined by $y_1 = x_1 + h_1(a_k)$ and $y_3 = h_3(a_k)$. In addition, we induce the constraint $0 = x_2 + h_2(a_k)$ which is satisfied with probability at most $|G'|^{-1}$. By the aforementioned independence, the total probability is at most \raisebox{0pt}[0pt][0pt]{$|G'|^{-s'} |G'|^{-1} = |G'|^{-s}$}.
\end{proof}

\begin{lemma}[Rate of Pseudo-$k$-Cycles] \label{lem:rate-pseudo-cycles}
Fix a vertex $v$ and $a_1, \dots, a_k \in \pm A$ with $a_1 + \dots + a_k \neq 0$. Then there is a cycle starting from and ending at $v$ labeled with $a_1, \dots, a_k$ with probability at most $|G'|^{-s-2}$, where $s = \dim\Span{a_1, \dots, a_k}$.
\end{lemma}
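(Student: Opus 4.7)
The plan is to extend the walk-existence analysis from the zero-cycle lemma by additionally tracking the constraints imposed by cycle closure. Fix $v = (v_1, v_2, 0) \in X$ (starting vertices in $Y$ or $Z$ are handled symmetrically by the construction), let $S_i = a_1 + \dots + a_i$, and consider the vertex sequence $v_i = v + h(S_i)$. Following the labels $a_1, \dots, a_k$ from $v$ imposes, at each step~$i$, exactly one scalar constraint---namely $h_3(S_i) = 0$, $h_2(S_i) = -v_2$, or $h_1(S_i) = -v_1$ according as the prescribed part $P_i$ equals $X$, $Y$, or $Z$. The walk is a cycle iff $v_k = v$, i.e.\ $h(S_k) = 0$; since $P_k = X$ already forces $h_3(S_k) = 0$, cycle closure adds exactly two further scalar constraints, $h_1(S_k) = 0$ and $h_2(S_k) = 0$.

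Next, I would group the constraints by which of the independent linear maps $h_1, h_2, h_3$ they involve. Letting $I_j$ denote the set of indices whose part constraint lies on $h_j$, I define the input subspaces $V_1 = \Span{S_i : i \in I_1} + \Span{S_k}$, $V_2 = \Span{S_i : i \in I_2} + \Span{S_k}$, and $V_3 = \Span{S_i : i \in I_3}$ (which already contains $S_k$ since $k \in I_3$). The constraints on each $h_j$ take the form $h_j(x) = \cdot$ with $x$ ranging over a spanning set of $V_j$. By independence of the three hash functions and \cref{lem:linear-hashing}, the joint probability that all constraints hold is at most $|G'|^{-(\dim V_1 + \dim V_2 + \dim V_3)}$---or zero if some constraints are mutually inconsistent, in which case the bound is trivial.

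The core step is to prove $\dim V_1 + \dim V_2 + \dim V_3 \geq s + 2$. The key observation is that $S_k \neq 0$ lies in \emph{all three} subspaces: in $V_3$ because $k \in I_3$, and in $V_1, V_2$ by construction. Passing to the quotient by the $1$-dimensional subspace $\Span{S_k}$ therefore drops each $\dim V_j$ by exactly one. On the other hand, $V_1 + V_2 + V_3 = \Span{S_1, \dots, S_k} = \Span{a_1, \dots, a_k}$ (using $a_i = S_i - S_{i-1}$) has dimension $s$, so $(V_1 + V_2 + V_3)/\Span{S_k}$ has dimension $s - 1$. Subadditivity of dimension then yields $\sum_j \dim(V_j/\Span{S_k}) \geq s - 1$, and adding back the three removed copies of $\Span{S_k}$ gives $\sum_j \dim V_j \geq s + 2$ as required.

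The main obstacle I anticipate is identifying the right invariant---the quotient by $\Span{S_k}$---which cleanly explains why closure contributes exactly \emph{two} extra dimensions rather than three: the third is effectively absorbed by the walk-existence constraint $h_3(S_k) = 0$ arising from $P_k = X$. The remaining work is routine bookkeeping, plus a quick check that the symmetry of the construction justifies assuming $v \in X$ and that mutually inconsistent constraints only make the bound easier.
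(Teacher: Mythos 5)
Your proof is correct, but it takes a genuinely different route from the paper's. The paper proves \cref{lem:rate-pseudo-cycles} by first establishing a generalization (\cref{lem:rate-pseudo-cycles-generalization}): for any nonzero $a_0$, the probability that the walk exists \emph{and} $h(a_0)=0$ is at most $|G'|^{-s-2}$ with $s=\dim\Span{a_0,a_1,\dots,a_k}$; the $+2$ is injected entirely in the base case $k=0$, where $\Pr(h(a_0)=0)=|G'|^{-3}$ while $s=1$, and the inductive step then adds one fresh unit of probability per fresh dimension exactly as in \cref{lem:rate-zero-cycles}. The lemma follows by taking $a_0 = a_1+\dots+a_k$. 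You instead argue globally and non-inductively: you write down all $k$ walk constraints plus the two genuinely new closure constraints, sort them by which of the three independent maps $h_1,h_2,h_3$ they constrain, bound each group by $|G'|^{-\dim V_j}$ via \cref{lem:linear-hashing}, and prove $\sum_j \dim V_j \geq s+2$ by quotienting out $\Span{S_k}$, which sits inside all three $V_j$ precisely because $S_k\neq 0$ is a pseudo-solution. Both arguments are sound (your degenerate cases --- inconsistent constraints, a zero partial sum $S_i$, or a label sequence that cannot return to $X$ --- all collapse to probability $0$ or a vacuous constraint, as you note). The paper's induction is shorter and reuses the zero-cycle machinery verbatim, but your dimension count is more transparent about \emph{why} the exponent is $s+2$ rather than $s+3$: the nonzero vector $S_k$ is charged once to each of the three hash functions but only once in the ambient span, and one of those three charges is already spent on the walk-existence constraint $h_3(S_k)=0$.
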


The proof of this lemma is a bit more involved than the previous one. Our strategy is to prove the following more technical generalization (see \cref{lem:rate-pseudo-cycles-generalization}). The proof of \cref{lem:rate-pseudo-cycles} then follows by setting $a_0 = a_1 + \dots + a_k$. Indeed, any cycle labeled with $a_1, \dots, a_k$ is in particular a walk and because it is closed we must have $h(a_1) + \dots + h(a_k) = 0$.

\begin{lemma} \label{lem:rate-pseudo-cycles-generalization}
Fix a vertex $v$ and $a_1, \dots, a_k \in \pm A$ and any non-zero $a_0 \in G$. Then the probability of the combined events that (i) there is a walk starting from $v$ labeled with $a_1, \dots, a_k$ and (ii) $h(a_0) = 0$, is at most $|G'|^{-s-2}$, where $s = \dim\Span{a_0, a_1, \dots, a_k}$.
\end{lemma}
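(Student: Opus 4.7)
The plan is induction on $k$, mimicking the structure of the proof of \cref{lem:rate-zero-cycles}. Intuitively, the role of the extra hypothesis $h(a_0) = 0$ is precisely to supply, via the base case, the two extra factors of $|G'|^{-1}$ that separate the pseudo-cycle bound from the zero-cycle bound.

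For the base case $k = 0$ the walk event is vacuous, so I only need $\Pr(h(a_0) = 0) = |G'|^{-3}$, which matches $|G'|^{-s-2}$ since $s = \dim\Span\{a_0\} = 1$ (as $a_0 \neq 0$). Here I use that by \cref{lem:linear-hashing} the three scalar hashes $h_1(a_0), h_2(a_0), h_3(a_0)$ are independent and uniform on $G'$. For the inductive step, let $E_k$ denote the joint event and $s' = \dim\Span\{a_0, a_1, \ldots, a_{k-1}\}$, so $s' \in \{s-1, s\}$. The key structural observation is that appending a single step with label $a_k$ to a length-$(k-1)$ walk imposes exactly one scalar constraint of the form $h_j(a_k) = c$, where the coordinate $j \in \{1,2,3\}$ is determined by the direction of the step (it is the coordinate forced to vanish at the step's endpoint) and $c$ is determined by the current vertex; the other two coordinates of $h(a_k)$ are unconstrained and simply locate the next vertex. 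When $s' = s$, this new constraint is a deterministic function of previously sampled hashes, so $\Pr(E_k) \leq \Pr(E_{k-1}) \leq |G'|^{-s'-2} = |G'|^{-s-2}$. When $s' = s - 1$, the vector $a_k$ is linearly independent of $\{a_0, a_1, \ldots, a_{k-1}\}$, so by \cref{lem:linear-hashing} the scalar $h_j(a_k)$ is uniform on $G'$ and independent of all previously sampled hash values; the new constraint is then satisfied with conditional probability $|G'|^{-1}$, yielding $\Pr(E_k) \leq |G'|^{-s'-3} = |G'|^{-s-2}$.

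The delicate point is the \emph{one-constraint-per-step} accounting, since $h(a_k)$ is a triple but only one of its coordinates is actually pinned by the step. This has to be unpacked from the explicit edge-labelling rule: a step from one part to another with label $a_k$ pins the single coordinate of $h(a_k)$ corresponding to the coordinate of $(G')^3$ that must vanish at the endpoint, while the remaining two coordinates of $h(a_k)$ freely determine the target vertex. Once this is set up cleanly, the induction is essentially identical to the zero-cycle case, except the base case now contributes $|G'|^{-3}$ rather than $|G'|^{-0}$.

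Finally, \cref{lem:rate-pseudo-cycles} is recovered as the special case $a_0 := a_1 + \cdots + a_k$: this is nonzero by the pseudo-cycle hypothesis, every closed walk satisfies $h(a_0) = h(a_1) + \cdots + h(a_k) = 0$, and $\Span\{a_0, a_1, \ldots, a_k\}$ has the same dimension $s$ as $\Span\{a_1, \ldots, a_k\}$, so the bound $|G'|^{-s-2}$ transfers directly.
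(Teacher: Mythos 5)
Your proof is correct and follows essentially the same route as the paper's: induction on $k$ with base case $\Pr(h(a_0)=0)=|G'|^{-3}$, a case split on whether $a_k$ increases $\dim\Span\{a_0,\dots,a_{k-1}\}$, and an application of the independence property of random linear maps to charge $|G'|^{-1}$ for the single new scalar constraint when $a_k$ is linearly independent of the earlier vectors. Your ``one-constraint-per-step'' accounting is exactly the observation the paper makes by inspecting an explicit step (e.g.\ a counter-clockwise step from $z=(0,z_2,z_3)$ forcing $z_2+h_2(a_k)=0$), and the derivation of Lemma~\ref{lem:rate-pseudo-cycles} via $a_0 := a_1+\dots+a_k$ matches the paper's remark preceding the lemma.
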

\begin{proof}
The proof is by induction on $k$. We start with the base case $k = 0$. Since we assume that $a_0 \neq 0$, we have that $s = \Span{a_0} = 1$. Moreover, the probability that $h(a_0) = 0$ is exactly~$|G'|^{-3}$.

Next consider the inductive case $k \geq 1$, and let $s' = \dim\Span{a_0, \dots, a_{k-1}}$. If $s' = s$, then we are done by induction. Otherwise, we have $s' = s - 1$ and $a_k$ is linearly independent from the other vectors~$a_0, \dots, a_{k-1}$. Suppose that after $k-1$ steps the walk has reached some vertex, say $z = (0, z_2, z_3)$, and we are moving in counter-clockwise direction. Then the target vertex $y = (y_1, 0, y_3)$ is uniquely determined by $y_1 = h_1(a_k)$ and $y_3 = z_3 + h_3(a_k)$, but moving to $y$ is only possible if the new constraint $0 = z_2 + h_2(a_k)$ is satisfied. This constraint is satisfied with probability $|G'|^{-1}$ and since $h(a_k)$ is independent from the randomness in previous steps, the overall probability is at most $|G'|^{-s-1} \cdot |G'|^{-1} \leq |G'|^{-s-2}$.
\end{proof}

\begin{lemma}[Number of $k$-Cycles] \label{lem:number-of-cycles}
For any constant $k \geq 4$, the expected number of $k$-cycles in the constructed instance is $\Order(E(A) \cdot n^{k-4} |G'|^{-k+3} + n^{k-2} |G'|^{-k+4} + n^k |G'|^{-k})$.
\end{lemma}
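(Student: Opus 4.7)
The plan is to count cycles by summing, over all starting vertices $v$ and all label sequences $(a_1, \dots, a_k) \in (\pm A)^k$ (the signs encoding the clockwise/counter-clockwise directions), the probability that a closed walk with that description exists. Since each part $X, Y, Z$ has $|G'|^2$ vertices, the number of starting vertices is $\Order(|G'|^2)$, and each undirected $k$-cycle is counted $\Order(k) = \Order(1)$ times by varying starting vertex and orientation. I split the sum according to whether $a_1 + \cdots + a_k = 0$ (zero-cycle, rate $|G'|^{-s}$ by Lemma \ref{lem:rate-zero-cycles}) or $a_1 + \cdots + a_k \neq 0$ (pseudo-cycle, rate $|G'|^{-s-2}$ by Lemma \ref{lem:rate-pseudo-cycles}), where $s = \dim\Span\{a_1, \dots, a_k\}$, and further stratify by $s$.

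The key combinatorial input is a generic upper bound of $\Order(n^s)$ on the number of tuples in $(\pm A)^k$ whose span has dimension exactly $s$: choose an ordered basis of $s$ elements from $\pm A$ in at most $(2n)^s$ ways and then express each of the remaining $k - s$ vectors as one of $p^s = \Order(1)$ possible $\Field_p$-linear combinations of the basis (using that $p$ and $k$ are constants). For pseudo-cycles this gives a per-$s$ contribution of $\Order(|G'|^2 \cdot n^s \cdot |G'|^{-s-2}) = \Order(n^s |G'|^{-s})$, and since $n \geq |G'|$ this is maximized at $s = k$, yielding the term $\Order(n^k |G'|^{-k})$.

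For zero-cycles with $s \leq k-2$ the same generic tuple bound gives $\Order(|G'|^2 \cdot n^s \cdot |G'|^{-s}) = \Order(n^s |G'|^{2-s})$, again increasing in $s$ under $n \geq |G'|$, so the largest contribution comes from $s = k - 2$ and equals $\Order(n^{k-2} |G'|^{4-k})$. For the maximum dimension $s = k - 1$, the generic bound $n^{k-1}$ is too weak; instead, I invoke Lemma \ref{lem:energy-linear-equations} once for each of the $2^k = \Order(1)$ sign patterns on $(a_1, \dots, a_k)$ to bound the number of zero-summing tuples by $\Order(E(A) \cdot n^{k-4})$, producing the contribution $\Order(|G'|^2 \cdot E(A) n^{k-4} \cdot |G'|^{-(k-1)}) = \Order(E(A) n^{k-4} |G'|^{3-k})$. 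Adding the three dominating contributions recovers the claimed bound.

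The only real obstacle is the zero-cycle case at the top dimension $s = k-1$, where the purely dimensional argument is not sharp enough; here one must combine the hypothesis $E(A) \leq \Order(n^{5/2})$ with Lemma \ref{lem:energy-linear-equations} to control the number of zero-sum tuples. Everything else is straightforward bookkeeping: pairing the two rate lemmas with the dimensional counting bound and selecting, within each of the three regimes (zero small $s$, zero max $s$, pseudo), the term that dominates under the standing assumption $|G'| \leq n$.
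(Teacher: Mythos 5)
Your proposal is correct and follows essentially the same route as the paper's proof: split into zero-cycles and pseudo-cycles, apply the two rate lemmas, stratify by the span dimension $s$ with the $\Order(n^s)$ tuple count, and handle the top-dimensional zero-cycle case ($s=k-1$) via Lemma~\ref{lem:energy-linear-equations}, using $|G'|\leq n$ to identify the dominant terms. One cosmetic remark: the hypothesis $E(A)\leq\Order(n^{5/2})$ is not needed for this lemma (it only enters later when the bound is instantiated), since the claimed estimate is already stated in terms of $E(A)$.
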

\begin{proof}
We first compute the expected number of pseudo-$k$-cycles.
\begin{gather*}
    \sum_{\substack{a_1, \dots, a_k \in \pm A\\a_1 + \dots + a_k \neq 0}}\sum_{v \in V} \Pr(\text{there is a cycle starting from and ending at $v$ labeled with $a_1, \dots, a_k$}) \\
    \qquad\leq \sum_{s=0}^k \sum_{\substack{a_1, \dots, a_k \in \pm A\\a_1 + \dots + a_k \neq 0\\\dim\Span{a_1, \dots, a_k} = s}}\sum_{v \in V} |G'|^{-s-2} \\
    \qquad\leq \sum_{s=0}^k \Order(n^s \cdot |G'|^2 \cdot |G'|^{-s-2}) \\
    \qquad= \Order(n^k |G'|^{-k}).
\end{gather*}
Here, for the first inequality we have applied \cref{lem:rate-pseudo-cycles} and we have bounded the number of tuples $(a_1, \dots, a_k)$ with $\dim\Span{a_1, \dots, a_k} = s$ by $\Order(n^s)$ (indeed, after fixing $s$ linearly independent vectors from~$\pm A$, each remaining vector can be expressed as one out of $p^k \leq \Order(1)$ possible linear combinations).

Next, we compute the number of zero-$k$-cycles:
\begin{gather*}
    \sum_{\substack{a_1, \dots, a_k \in \pm A\\a_1 + \dots + a_k = 0}}\sum_{v \in V} \Pr(\text{there is a cycle starting from and ending at $v$ labeled with $a_1, \dots, a_k$}) \\
    \qquad\leq \sum_{s=0}^{k-1} \sum_{\substack{a_1, \dots, a_k \in \pm A\\a_1 + \dots + a_k = 0\\\dim\Span{a_1, \dots, a_k} = s}}\sum_{v \in V} |G'|^{-s} \\
    \qquad= \sum_{\substack{a_1, \dots, a_k \in \pm A\\a_1 + \dots + a_k = 0\\\dim\Span{a_1, \dots, a_k} = k-1}}\sum_{v \in V} |G'|^{-k+1} + \sum_{s=0}^{k-2} \sum_{\substack{a_1, \dots, a_k \in \pm A\\a_1 + \dots + a_k = 0\\\dim\Span{a_1, \dots, a_k} = s}}\sum_{v \in V} |G'|^{-s} \\
    \qquad\leq \Order(E(A) \cdot n^{k-4} \cdot |G'|^2 \cdot |G'|^{-k+1}) + \sum_{s=0}^{k-2} \Order(n^s \cdot |G'|^2 \cdot |G'|^{-s}) \\
    \qquad= \Order(E(A) \cdot n^{k-4} |G'|^{-k+3} + n^{k-2} |G'|^{-k+4}).
\end{gather*}
For the first inequality we applied \cref{lem:rate-zero-cycles}. For the second inequality, we have bounded the number of tuples $a_1, \dots, a_k$ with $\dim\Span{a_1, \dots, a_k} = s$ by $\Order(n^s)$ as before. In addition, we have bounded the number of solutions $a_1, \dots, a_k \in \pm A$ to the linear equation~$a_1 + \dots + a_k = 0$ using \cref{lem:energy-linear-equations} by $E(A) \cdot |A|^{k-4}$.
\end{proof}

\subsection{Making the Graph Regular} \label{sec:3sum-to-triangle:sec:regular}
The next step is to enforce the assumption that the constructed is $\Theta(r)$-regular, where $r = 2n / |G'|$. To this end, we first analyze the \emph{expected degrees} in the instance constructed in the previous \cref{sec:3sum-to-triangle:sec:construction}.

\begin{lemma}
Fix a vertex $v$. Then $\Ex(\deg(v)) = r \pm \Order(1)$ and $\Var(\deg(v)) \leq \Order(r)$.
\end{lemma}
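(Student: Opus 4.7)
By symmetry it suffices to treat the case $v \in X$, say $v = (x_1,x_2,0)$; the cases $v \in Y$ and $v \in Z$ are identical after relabeling coordinates. Edges incident to $v$ fall into two groups: edges to $Y$, which correspond to elements $a \in A$ with $v + h(a) \in Y$ (equivalently, $h_2(a) = -x_2$), and edges to $Z$, which correspond to elements $a \in A$ with $v - h(a) \in Z$ (equivalently, $h_1(a) = x_1$). The plan is to analyze edges to $Y$ in detail; edges to $Z$ are treated analogously, and the final bounds are obtained by combining the two.

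Let $N_a$ be the indicator that $h_2(a) = -x_2$ and set $D_Y = \sum_{a \in A} N_a$, the number of labeled edges from $v$ to $Y$ (with multiplicity). Let $d_Y$ be the actual (simple) degree of $v$ into $Y$, and define the ``collision'' count $C_Y = D_Y - d_Y \geq 0$. I would first compute $\Ex[D_Y]$ and $\Var[D_Y]$ directly from \cref{lem:linear-hashing}. Since $\Pr[h_2(a) = -x_2] = 1/|G'|$ for every nonzero $a$, linearity gives $\Ex[D_Y] = n/|G'|$ up to an additive $O(1)$ (absorbing the possible element $a=0$). For the variance, observe that for any two linearly independent $a,a' \in A$ the random variables $h_2(a),h_2(a')$ are independent, so $\Cov(N_a,N_{a'})=0$. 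The only nonzero covariances come from linearly dependent pairs, of which there are $O(n)$, each contributing at most $1/|G'|$; hence $\Var[D_Y] = \sum_a \Var(N_a) + O(n/|G'|) = O(n/|G'|)$.

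Next I would bound $C_Y$. Writing $T_y = |\{a \in A : h_2(a) = -x_2,\ h(a) = y - v\}|$, we have $D_Y = \sum_y T_y$ and $d_Y = \sum_y \mathbf{1}[T_y \geq 1]$, so $C_Y = \sum_y (T_y-1)_+ \leq \sum_y T_y(T_y-1) = |\{(a,a') \in A^2: a\neq a',\ N_a=N_{a'}=1,\ h(a)=h(a')\}|$. For linearly independent $a,a'$ the event has probability $1/|G'|^4$ (since $h(a),h(a')$ are independent and uniform on $(G')^3$ and the constraint $h_2(a)=-x_2$ fixes one extra coordinate); there are at most $n^2$ such pairs, contributing $O(n^2/|G'|^4)$. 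For linearly dependent nonzero pairs $a' = \alpha a$ with $\alpha \neq 1$, one checks that $h(a)=h(a')$ forces $h(a)=0$, which has probability $1/|G'|^3$; there are $O(n)$ such pairs. Thus $\Ex[C_Y] = O(n^2/|G'|^4 + n/|G'|^3) = O(1)$ in the regime $|G'| \geq n^{1/2}$ that is relevant later when we aim for $\Theta(n^{1/2})$-regularity. The same reasoning yields $\Ex[C_Y^2] = O(1)$ (the square adds one more pair, giving extra factors that remain $O(1)$ in the same regime), and in particular $\Var[C_Y] = O(1)$.

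Combining the two contributions, $\deg(v) = d_Y + d_Z = D_Y + D_Z - C_Y - C_Z$, and the same estimates apply to $D_Z, C_Z$ (now using $h_1$ in place of $h_2$). Linearity gives
\[
    \Ex[\deg(v)] = \Ex[D_Y] + \Ex[D_Z] - \Ex[C_Y] - \Ex[C_Z] = \tfrac{2n}{|G'|} \pm \Order(1) = r \pm \Order(1),
\]
and for the variance we use $\Var[\deg(v)] \leq 4\bigl(\Var[D_Y]+\Var[D_Z]+\Var[C_Y]+\Var[C_Z]\bigr) = \Order(n/|G'|) = \Order(r)$. The main (minor) obstacle is the bookkeeping around collisions: one has to separate the linearly independent pairs, where independence of hash values gives tight bounds, from the $O(n)$ linearly dependent pairs, which have to be handled by hand; the same distinction already underlies several of the earlier lemmas (e.g.\ \cref{lem:energy-reduction-subsampling,lem:energy-bounded}), so the argument is in the same spirit.
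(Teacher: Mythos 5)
Your core argument is the same as the paper's: split the degree into the $Y$-side and $Z$-side contributions, compute the expectation of the label count $\sum_{a \in A} \mathbf{1}[h_2(a) = -x_2]$ by linearity (absorbing the element $0$ into the $\pm\Order(1)$), bound the variance by separating linearly independent pairs (zero covariance, by \cref{lem:linear-hashing}) from the $\Order(n)$ linearly dependent pairs, and combine the two sides using that they are governed by the independent maps $h_2$ and $h_1$. Up to that point your proof matches the paper's proof essentially line by line.

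Where you diverge is the extra collision correction $C_Y = D_Y - d_Y$, which the paper does not need: as announced in \cref{sec:3sum-to-triangle:sec:construction}, the analysis treats the instance as a labeled multigraph, so the degree in this lemma is exactly your $D_Y + D_Z$ and the statement follows without discussing multi-edges. By making the simple-graph degree load-bearing you introduce a step whose justification is off. You claim $\Ex[C_Y] = \Order(n^2/|G'|^4 + n/|G'|^3) = \Order(1)$ ``in the regime $|G'| \geq n^{1/2}$ that is relevant later'', but the paper's parameter choices go the other way: in \cref{lem:3sum-to-triangle} one has $|G'| = n^{1/2}$ exactly (where your bound is fine), while the same degree lemma is reused in \cref{lem:3sum-to-triangle-degree} with $|G'| = N^{1/2}$ and $n = N^{1/2} r$ for $r$ up to $N^{1-\Omega(1)}$, i.e.\ $|G'|$ as small as roughly $n^{1/3}$. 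There $n^2/|G'|^4 = r^2/N$, which is polynomially large whenever $r \geq N^{1/2+\Omega(1)}$ (and this is not merely a loose bound: each linearly independent pair collides at $v$ with probability exactly $|G'|^{-4}$), so the $\Order(1)$ bound on $\Ex[C_Y]$ fails, and with it your derivation of the exact ``$r \pm \Order(1)$'' expectation for the simple degree. The collision count is still $o(r)$ in that regime, so $\Theta(r)$-regularity is unharmed, but to prove the lemma in the form it is used you should either state and prove it for the degree counted with label multiplicity, as the paper does, or correspondingly weaken the collision estimate rather than asserting it is $\Order(1)$.
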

\begin{proof}
Focus on an arbitrary vertex, say, $x = (x_1, x_2, 0) \in X$ (the proof is similar for vertices in $Y$ and $Z$). We write $\deg(x) = \deg_Y(x) + \deg_Z(x)$, where $\deg_Y(x)$ denotes the number of edges from $x$ to $Y$, and $\deg_Z(x)$ denotes the number of edges from $x$ to $Z$. We focus on the analysis of $\deg_Y(x)$, the same treatment applies to $\deg_Z(x)$. For each edge label $a$, there is only a unique candidate $y = (y_1, 0, y_3) \in Y$ which is reachable by an edge from $x$ (indeed, $y_1$ and $y_3$ are determined by $x_1$ and $a$). There is an edge to that unique candidate~$y$ if and only if $x_2 + h_2(a) = 0$. Therefore, the expected degree is:
\begin{equation*}
    \Ex(\deg_Y(x)) = \sum_{a \in A} \Pr(x_2 + h_2(a) = 0) = n |G'|^{-1} \pm \Order(1).
\end{equation*}
(The $\pm \Order(1)$ term stems from the element $0$ which may or may not be present in $A$ but always hashes to $0$ under a linear hash function.) To bound the variance, we compute
\begin{gather*}
    \Var(\deg_Y(x)) \\
    \qquad= \Ex(\deg_Y(x)^2) - \Ex(\deg_Y(x))^2 \\
    \qquad\leq \left(\sum_{a, b \in A} \Pr(x_2 + h_2(a) = x_2 + h_2(b) = 0)\right) - \left(\sum_{a \in A} \Pr(x_2 + h_2(a) = 0)\right)^2
\intertext{To bound the first sum, we consider two cases: Either $a$ and $b$ are linearly independent, in which case the random variables $h_2(a)$ and $h_2(b)$ are independent. Or $a$ and $b$ are linearly dependent, in which case there are at most $p n = \Order(n)$ such pairs (we can pick $a$ arbitrarily and there are only $p$ choices for $b$ in the span~$\Span{a}$). It follows that:}
    \qquad\leq \Order(n|G'|^{-1}) + \left(\sum_{a, b \in A} \Pr(x_2 + h_2(a) = 0) \cdot \Pr(x_2 + h_2(b) = 0)\right) - \left(\sum_{a \in A} \Pr(x_2 + h_2(a) = 0)\right)^2 \\
    \qquad= \Order(n|G'|^{-1}).
\end{gather*}
Recall that $\deg(x) = \deg_Y(x) + \deg_Z(x)$. Since the random variables $\deg_Y(x)$ and $\deg_Z(x)$ depend on the independent hash functions $h_2$ and $h_1$, the random variables $\deg_Y(x)$ and $\deg_Z(x)$ are independent. It follows that $\Ex(\deg(x))$ and $\Var(\deg(x))$ are as claimed.  
\end{proof}

Given the previous lemma, most vertices in the constructed instance have degree $\Theta(r)$. However, we want that \emph{every} vertex has degree $\Theta(r)$. We will therefore select an (induced) subgraph of the constructed instance, in which the degree bound is satisfied. Note that by selecting a subgraph, we cannot increase the number of $k$-cycles, and the analysis from the previous \cref{sec:3sum-to-triangle:sec:cycles} remains intact.

We use the algorithm described in \cref{alg:regular}. It is easiest to describe using some terminology: We call a vertex $v$ \emph{high-degree} if it has degree more than $2r$, \emph{low-degree} if it has degree less than $\frac12 r$ and \emph{tiny-degree} if it has degree less than $\frac18 r$. As long as there is a high-degree or tiny-degree vertex $v$ in the graph, we remove $v$ and all its incident edges. In order to not miss the triangles involving the removed vertices $v$, we list all pairs of neighbors $u, w$ of $v$ and report all triangles $(u, v, w)$ found in this way. It is obvious that the remaining graph is $\Theta(r)$-regular, and moreover we have not missed any triangle by pruning the graph in this way. It remains to analyze the running time of \cref{alg:regular}. 

\begin{algorithm}[t]
\caption{Turns the triangle listing instance from \cref{sec:3sum-to-triangle:sec:construction} into a $\Theta(n / |G|)$-regular graph (by removing some vertices, and listing all triangles involving at least one of the removed vertices).} \label{alg:regular}
\begin{algorithmic}[1]
\State Let $(V_0, E_0)$ be the instance constructed in \cref{sec:3sum-to-triangle:sec:construction}
\State Let $V \gets V_0, E \gets E_0$ and let $r \gets 2n / |G'|$
\While{there is a vertex $v$ in $(V, E)$ with degree less than $\frac14 r$ or more than $2r$}
    \State Enumerate all pairs of neighbors $u, w \in V$ of $v$ and report $(u, v, w)$ if it is a triangle
    \State Remove $v$ from $V$ and its incident edges from $E$
\EndWhile
\State\Return $(V, E)$
\end{algorithmic}
\end{algorithm}

\begin{lemma}[Running Time of \cref{alg:regular}] \label{lem:regular-time}
\cref{alg:regular} runs in expected time $\widetilde\Order(n |G'|)$.
\end{lemma}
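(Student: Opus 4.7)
The plan is to bound the running time as $\widetilde\Order(|E_0| + \sum_{v \in V_0 \setminus V} D_v^2)$, where $D_v$ denotes the degree of $v$ at the moment of its removal. Maintaining the edge set in a hash table supports all graph updates and the pair-enumeration membership checks in polylogarithmic time per operation, so this is indeed the bottleneck. The expected number of edges $|E_0|$ is at most $\Order(|V_0| \cdot r) = \Order(|G'|^2 \cdot n/|G'|) = \Order(n|G'|)$ by summing the per-vertex expected degree bound from the preceding lemma over the $|V_0| = 3|G'|^2$ vertices.

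The main task is bounding $\sum_v D_v^2$. Since $D_v \leq \deg_{V_0}(v)$, I would split removed vertices by their initial degree. For vertices with $\deg_{V_0}(v) > 2r$, I use $\deg_{V_0}(v) \leq 2(\deg_{V_0}(v) - r)$, so that $D_v^2 \leq 4(\deg_{V_0}(v) - r)^2$. Summing over all of $V_0$ and applying the variance bound $\Var(\deg(v)) = \Order(r)$ from the preceding lemma gives a total contribution of $\Order(|V_0| \cdot r) = \Order(n|G'|)$ in expectation. For vertices with $\deg_{V_0}(v) \leq 2r$ that are removed, each contributes $D_v^2 \leq \Order(r^2)$, so it suffices to show that the expected number of such removed vertices is $\Order(|V_0|/r)$, which together with the per-vertex cost yields $\Order(|V_0| \cdot r) = \Order(n|G'|)$.

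To count these low-degree removals, I would split them into (i) vertices with $\deg_{V_0}(v) < r/4$ (removed immediately) and (ii) cascade vertices with $\deg_{V_0}(v) \in [r/4, 2r]$ that drop below the $r/4$ threshold only after several neighbors were removed. For (i), Chebyshev applied to the preceding variance bound immediately gives $\Pr(\deg_{V_0}(v) < r/4) \leq \Order(1/r)$, hence $\Order(|V_0|/r)$ such vertices in expectation. The delicate part is (ii): a cascade vertex must have lost at least $\deg_{V_0}(v) - r/4$ of its neighbors, and this chain of removals has to be controlled. Vertices with $\deg_{V_0}(v) \in [r/4, r/2)$ are themselves few by Chebyshev, contributing another $\Order(|V_0|/r)$. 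For cascade vertices with $\deg_{V_0}(v) \geq r/2$, each must have lost at least $r/4$ edges, all going to previously removed vertices; a charging argument against the $\Order(|E_0|) = \Order(n|G'|)$ edge budget then bounds their number.

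The main obstacle will be sharpening this last step: the naive charging gives a cascade count of $\Order(|E_0|/r) = \Order(|G'|^2)$, which combined with the $\Order(r^2)$ per-vertex cost yields $\Order(n^2)$, too weak for the claimed bound. I expect the right fix is a potential-function amortization: define $\Phi = \sum_{v} (\deg(v) - r)^2$, observe that $\Ex[\Phi] = \Order(|V_0| r) = \Order(n|G'|)$ initially, and argue that every cascade removal consumes $\Omega(r^2)$ units of potential (because it accompanies a large drop in degree of some surviving neighbors), which amortizes the $\Order(r^2)$ enumeration cost. An alternative is to strengthen the concentration of $\deg_{V_0}(v)$ via higher moments, exploiting the fact from \cref{lem:linear-hashing} that hash values are fully independent for linearly independent inputs and only an $\Order(1/n)$-fraction of neighbor tuples are linearly dependent; this would reduce the pool of vertices that can ever cascade to a subpolynomial quantity and close the gap directly.
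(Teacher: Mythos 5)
Your accounting of the ``non-cascade'' contributions is fine (and your bound $\Ex\bigl[\sum_v (\deg_{V_0}(v)-r)^2\bigr] = \Order(|V_0|\, r)$ is in fact a cleaner route to the high-degree part than the paper's dyadic Chebyshev sum), but the lemma is not proved: the one step you flag as ``the main obstacle'' --- bounding the number of cascade removals, i.e.\ vertices of initial degree $\geq r/2$ that are deleted by the low-degree rule --- is exactly the crux, and neither of your two proposed fixes closes it as stated. The potential argument is misattributed: when neighbors of a removed vertex lose edges, their degrees move \emph{below} $r$, so their terms $(\deg(u)-r)^2$ \emph{increase}; a cascade is created by $\Phi$ going up, not by it being consumed. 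A potential proof can be salvaged, but only by observing that the decrease comes from deleting the removed vertex's own term (which is $\geq (3r/4)^2$ for a tiny-degree removal, versus an increase of at most $D_v(2r+1) \leq r(2r+1)/4$ in the neighbors' terms), and by separately bounding the increase of $\Phi$ caused by high-degree removals (which can be $+\Theta(r\cdot D_v)$ each) using the expected $\Order(|G'|^2)$ degree mass on originally high-degree vertices --- none of which appears in your sketch, and without which $\Phi$ is simply not monotone. The higher-moments alternative does not work at all: whether a vertex cascades is a property of the deletion process, not of its initial degree, so sharper concentration of $\deg_{V_0}(v)$ cannot shrink ``the pool of vertices that can ever cascade'' --- essentially every vertex with a normal initial degree is eligible.

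For comparison, the paper closes this step with a deterministic self-bounding count rather than a potential or moment argument: every edge deleted during the pruning is either incident to an originally high- or low-degree vertex (expected $\Order(|G'|^2)$ such edges, by the same Chebyshev bounds you use) or is one of the at most $\tfrac14 r$ edges deleted by a tiny-degree removal; since each cascade vertex must have lost $\Omega(r)$ edges and each deleted edge charges at most one surviving endpoint, the number $i$ of low-degree iterations satisfies $i \leq H + L + \Order(|G'|^2/r) + \tfrac34 i$, whence $i = \Order(|G'|^2/r)$ and the cascade cost is $\Order(|G'|^2/r \cdot r^2) = \Order(n|G'|)$. The key point your naive charging misses --- and which the contraction exploits --- is that one should charge cascades only against edges actually removed by the process (few, because tiny-degree removals themselves delete fewer edges per iteration than a new cascade requires), not against the whole budget $|E_0|$.
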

\begin{proof}
As in \cref{alg:regular}, we denote by $(V_0, E_0)$ the graph constructed in \cref{sec:3sum-to-triangle:sec:construction}. We split the analysis in two parts: First, we bound the time spend in iterations removing a high-degree vertex and second, we bound the time spend in iterations removing tiny-degree vertices. Since the running time per iteration is dominated by enumerating all pairs of neighbors of the vertex $v$ to be removed, we can bound the expected time to remove all high-degree vertices as follows:
\begin{gather*}
    \sum_{v \in V_0} \deg(v)^2 \cdot \Pr(\deg(v) \geq 2r) \\
    \qquad\leq \sum_{v \in V_0} \sum_{i=1}^{\log |V_0|} 2^{2i+2} r^2 \cdot \Pr(\deg(v) \geq 2^i r) \\
    \qquad\leq \sum_{v \in V_0} \sum_{i=1}^{\log |V_0|} 2^{2i+2} r^2 \cdot \Pr(|\deg(v) - \Ex(\deg(v))| \geq \Omega(2^i \cdot \Var(\deg(v)))) \\
    \qquad\leq \sum_{v \in V_0} \sum_{i=1}^{\log |V_0|} 2^{2i+2} r^2 \cdot \Order\left(\frac{1}{2^{2i} r}\right) \\
    \qquad\leq \sum_{v \in V_0} \sum_{i=1}^{\log |V_0|} \Order(r) \\
    \qquad\leq \widetilde\Order(n |G'|).
\end{gather*}

We now focus on the time spent on iterations removing tiny-degree vertices. Each such iteration runs in time $\Order(r^2)$, and we therefore aim to bound the number of iterations. The first step is to show that in the original graph $(V_0, E_0)$, the expected number of edges incident to high-degree or low-degree vertices is at most $\Order(n)$. Indeed, by Chebyshev's inequality and again using the previously obtained bounds, the expected number of edges incident to high-degree vertices is at most
\begin{gather*}
    \sum_{v \in V_0} \deg(v) \cdot \Pr\left(\deg(v) \geq 2r\right) \\
    \qquad\leq \sum_{v \in V_0} \sum_{i=1}^\infty 2^{i+1} r \cdot \Pr(\deg(v) \geq 2^{i} r) \\
    \qquad\leq \sum_{v \in V_0} \sum_{i=1}^\infty 2^{i+1} r \cdot \Pr(|\deg(v) - \Ex(\deg(v))| \geq \Omega(2^i \cdot \Var(\deg(v)))) \\
    \qquad\leq \sum_{v \in V_0} \sum_{i=1}^\infty 2^{i+1} r \cdot \Order\left(\frac{1}{2^{2i} r}\right) \\
    \qquad\leq \sum_{v \in V_0} \Order(1) \\
    \qquad= \Order(|G'|^2). 
\end{gather*}
Using the same idea we can bound the number of edges incident to low-degree vertices by $\Order(|G'|^2)$, too. Moreover, we can bound the numbers $L$ and $H$ of low-degree and high-degree vertices in the original graph by $L, H = \Order(|G'|^2 / r)$.

We now again turn to \cref{alg:regular} and bound the number of iterations. There are up to $H$ iterations removing the high-degree vertices, and the remaining iterations remove tiny-degree vertices. However, observe after removing $e$ edges from the original graph, we can create at most~$L + 6e / r$ tiny-degree vertices: Up to $L$ vertices which are low-degree in the original graph plus at most $2e / (\frac12 r - \frac18 r) \leq 6e / r$ vertices which were not low-degree in the original graph but which turned tiny-degree by losing edges. Since every iteration removing a tiny-degree vertex removes at most $\frac14 r$ edges, the total number of edges removed after $i$ iterations is at most $\Order(|G'|^2) + \frac i4 r$. Consequently, if the algorithm reaches the $i$-th iteration, it has witnessed at least~$i - H$ tiny-degree vertices and we therefore have
\begin{equation*}
    i - H \leq L + \frac{6 \cdot (\Order(|G'|^2) + \frac{i}8 r)}{r} \leq L + \Order(|G'|^2 / r) + \frac{3i}{4}.
\end{equation*}
It follows that $i \leq \Order(L + H + |G'|^2 / r) = \Order(|G'|^2 / r)$, and therefore \cref{alg:regular} runs for at most $\Order(|G'|^2 / r)$ iterations. Recall that each iteration removing a tiny-degree vertex takes time $\Order(r^2)$, and therefore the total time of all iterations removing tiny-degree vertices is $\Order(|G'|^2 / r \cdot r^2) = \Order(n |G'|)$.
\end{proof}

\subsection{Putting the Pieces Together} \label{sec:3sum-to-triangle:sec:assembling}

We are ready to prove \cref{lem:3sum-to-triangle}.

\begin{proof}[Proof of \cref{lem:3sum-to-triangle}]
Recall that we start from a 3-SUM instance with additive energy $E(A) \leq \Order(n^{5/2})$, by the energy reduction in \cref{lem:energy-reduction-complete} applied with $\delta = \frac12$. We set $|G'| = n^{1/2}$ (that is, we set \makebox{$d' = \ceil{\frac12 \log_p(n)}$} and \raisebox{0pt}[0pt][0pt]{$G' = \Field_p^{d'}$}) and construct the triangle listing instance $(V_0, E_0)$ as described in \cref{sec:3sum-to-triangle:sec:construction}. This step takes time~$\Order(n |G'|) = \Order(n^{3/2})$. We then run \cref{alg:regular} as described in \cref{sec:3sum-to-triangle:sec:regular} to obtain an induced subgraph $(V_1, E_1)$ which is regular with degree $\Theta(n / |G'|) = \Theta(n^{1/2})$. This step again takes time \raisebox{0pt}[0pt][0pt]{$\widetilde\Order(n |G'|) = \Order(n^{3/2})$} in expectation, see \cref{lem:regular-time}.

We next bound the (expected) number of $k$-cycles, for any $3 \leq k \leq k_{\max}$. By \cref{lem:number-of-triangles}, the expected number of triangles in $(V_0, E_0)$ is at most $\Order(n^3 |G'|^{-3})$ (alternatively, we can immediately find a 3-SUM solution in time~\raisebox{0pt}[0pt][0pt]{$\widetilde\Order(|G'|^3 / n) = \widetilde\Order(n^{1/2})$}). For $k \geq 4$, by \cref{lem:number-of-cycles} the expected number of $k$-cycles in~$(V_0, E_0)$ is at most
\begin{gather*}
    \Order(E(A) \cdot n^{k-4} |G'|^{-k+3} + n^{k-2} |G'|^{-k+4} + n^k |G'|^{-k}) \\
    \qquad= \Order(n^{k-3/2} n^{-k/2+3/2} + n^{k-2} n^{-k/2+2} + n^k n^{-k/2}) \\
    \qquad= \Order(n^{k/2}).
\end{gather*}
Using Markov's bound, this number exceeds $10 k_{\max}$ times its expected value with probability at most \raisebox{0pt}[0pt][0pt]{$\frac1{10k_{\max}}$}. Therefore, taking a union bound over all values $3 \leq k \leq k_{\max}$, the constructed instances $(V_0, E_0)$ and~$(V_1, E_1)$ contain at most $\Order(n^{k/2})$ $k$-cycles for all $3 \leq k \leq k_{\max}$, with probability at least~$\frac9{10}$.

Now suppose that we can list all triangles in $(V_1, E_1)$ in time $\Order(n^{2-\epsilon})$. Adding the triangles detected by \cref{alg:regular}, we can compute a list of all triangles in $(V_0, E_0)$. Recall that by \cref{lem:pseudo-solutions}, every triangle corresponds to a pseudo-solution in the 3-SUM instance. Therefore, it suffices to test whether there exists a proper solution among the pseudo-solutions and to return ``yes'' in this case. The total expected running time is $\widetilde\Order(n^{3/2} + n^{2-\epsilon})$ and we succeed with constant error probability.
\end{proof}

\paragraph{Listing Hardness in Graphs with Smaller Degrees}
For one of our corollaries of the reduction we need denser graphs than the $\Theta(n^{1/2})$-regular graphs constructed before. It is easy to obtain the following generalization of our reduction to graphs which are $\Theta(r)$-regular.

\begin{lemma}[Hardness of Listing Triangles in $\Theta(r)$-Regular Graphs] \label{lem:3sum-to-triangle-degree}
For any $\epsilon > 0$ and any parameter $N^{1/2} \leq r \leq N^{1-\Omega(1)}$, there is no $\Order((Nr^2)^{1-\epsilon})$-time algorithm listing all triangles in a $\Theta(r)
$-regular $N$-vertex graph which contains as most $\Order(r^k)$ $k$-cycles for all~$3 \leq k \leq \Order(1)$, unless the 3-SUM conjecture fails.
\end{lemma}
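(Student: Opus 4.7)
The plan is to rerun the proof of \cref{lem:3sum-to-triangle} verbatim, but replace the specific choice $|G'| = n^{1/2}$ by a more general choice that tunes the degree of the resulting graph. Starting from a 3-SUM instance $A$ of size $n$, I first apply the energy reduction \cref{lem:energy-reduction-complete} (with $\delta = 1/2$) to ensure $E(A) \leq \Order(n^{5/2})$. Then, for given target parameters $(N, r)$, I select $d'$ so that $|G'| = p^{d'} = \Theta(\sqrt{N})$ and take the 3-SUM instance size to satisfy $n = \Theta(r \sqrt{N})$, so that the tripartite graph constructed in \cref{sec:3sum-to-triangle:sec:construction} has $3|G'|^2 = \Theta(N)$ vertices and expected degree $n/|G'| = \Theta(r)$. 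The regularization step from \cref{sec:3sum-to-triangle:sec:regular} then turns this into a truly $\Theta(r)$-regular graph. The key numerical coincidence is that $N r^2 = \Theta(n^2)$, so a hypothetical algorithm running in time $\Order((Nr^2)^{1-\epsilon})$ would solve the original 3-SUM instance in time $\Order(n^{2(1-\epsilon)})$, contradicting the 3-SUM conjecture.

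The two things to re-verify are the cycle count and the overhead of the reduction. For the cycle count, I plug the new parameters into \cref{lem:number-of-cycles}, which bounds the number of $k$-cycles by $\Order(E(A) \cdot n^{k-4} |G'|^{-k+3} + n^{k-2} |G'|^{-k+4} + n^k |G'|^{-k})$. Using $E(A) \leq \Order(n^{5/2})$, $|G'| = \Theta(\sqrt{N})$, and $r = n/|G'|$, each of the three terms evaluates to $\Order(r^k)$ precisely when $|G'| \leq \Order(n^{1/2})$, which is equivalent to the hypothesized lower bound $r \geq \Omega(N^{1/2})$. The reduction overhead comprises the energy reduction, the $\Order(n|G'|)$ construction, and the $\widetilde\Order(n|G'|)$ regularization from \cref{lem:regular-time}; these are all $\widetilde\Order(n \sqrt{N}) = \widetilde\Order(n^{3/2})$, which is subquadratic in $n$.

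The upper bound $r \leq N^{1-\Omega(1)}$ will enter only to ensure that the target budget $(Nr^2)^{1-\epsilon} = \Theta(n^{2-2\epsilon})$ genuinely dominates the $\widetilde\Order(n^{3/2})$ reduction overhead and the energy reduction cost, so that the reduction is fine-grained. I do not expect any conceptual difficulty beyond bookkeeping: the only minor nuisance is that $|G'|$ is constrained to be a prime power $p^{d'}$ with $p = \Order(1)$ and therefore we can only match $\sqrt{N}$ up to a multiplicative constant, but this only perturbs $N$ and $r$ by $\Order(1)$ factors and is absorbed into the $\Theta$-notation. Apart from this, the proof is a straightforward re-parameterization of the one already presented in \cref{sec:3sum-to-triangle:sec:assembling}.
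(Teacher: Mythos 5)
Your proof is correct and matches the paper's own argument: re-parameterize Lemma~\ref{lem:3sum-to-triangle} by taking a 3-SUM instance of size $n = \Theta(r\sqrt{N})$ with $E(A) \leq \Order(n^{5/2})$, hash into a group of size $|G'| = \Theta(\sqrt{N})$, plug into Lemma~\ref{lem:number-of-cycles} and use $r \geq N^{1/2}$ to collapse all three terms to $\Order(r^k)$, and finish via the identity $Nr^2 = \Theta(n^2)$. One small misattribution at the end: the upper bound $r \leq N^{1-\Omega(1)}$ is not what makes $(Nr^2)^{1-\epsilon} = n^{2-2\epsilon}$ dominate the $\widetilde\Order(n^{3/2})$ overhead --- that comparison is independent of $r$, and the construction/regularization overhead $\widetilde\Order(n|G'|) = \widetilde\Order(n^2/r)$ in fact shrinks as $r$ grows; the upper bound merely keeps $r$ polynomially bounded away from $N$ so the $\Theta(r)$-regular $N$-vertex target is non-degenerate.
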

\begin{proof}
We redo the proof of \cref{lem:3sum-to-triangle} with a different choice of parameters. Specifically, start from a 3-SUM instance of size $n = N^{1/2} r$ and with additive energy $E(A) \leq \Order(n^{5/2})$ and set $|G'| = N^{1/2}$. The constructions in \cref{sec:3sum-to-triangle:sec:construction,sec:3sum-to-triangle:sec:regular} construct a graph with at most~$|G'|^2 = N$ vertices, and the degree of every vertex is $\Theta(n / |G'|) = \Theta(r)$. The running time of these steps is bounded by $\widetilde\Order(n |G'| + |G'|^3 / n)$ (by \cref{lem:number-of-triangles,lem:regular-time}). By \cref{lem:number-of-triangles} the expected number of triangles is $\Order(n^3 |G'|^{-3}) = \Order(r^3)$ and by \cref{lem:number-of-cycles}, the expected number of $k$-cycles is bounded
\begin{gather*}
    \Order(E(A) \cdot n^{k-4} |G'|^{-k+3} + n^{k-2} |G'|^{-k+4} + n^k |G'|^{-k}) \\
    \qquad= \Order(N^{k/2-3/4} r^{k-3/2} N^{-k/2+3/2} + N^{k/2-1} r^{k-1} N^{-k/2+2} + N^{k/2} r^k N^{-k/2}) \\
    \qquad= \Order(N^{3/4} r^{k-3/2} + N r^{k-2} + r^k) \\
    \qquad\leq \Order(r^k).
\end{gather*}
For last step we have used the assumption $N^{1/2} \leq r$. Finally, an algorithm in time $\Order((Nr^2)^{1-\epsilon})$ would imply an algorithm in time~$\Order(n^{2-2\epsilon} + n |G'| + |G'|^3 / n)$ for the 3-SUM instance we started from. As $n^{1/2} \leq |G'| \leq n^{1-\Omega(1)}$, this is subquadratic and contradicts the 3-SUM conjecture.
\end{proof}

\paragraph{All-Edges Triangle}
Many reductions starting from triangle listing can be phrased in a nicer way by starting instead from the \emph{All-Edges Triangle} problem: Given a graph, determine for each edge whether it is part of a triangle. Using our reduction and in addition some known tricks to turn detection algorithms into witness-finding algorithms, we also obtain the following conditional lower bound:

\begin{lemma}[Hardness of All-Edges Triangle]
For any constants $\epsilon > 0, k_{\max} \geq 3$, there is no $\Order(n^{2-\epsilon})$-time algorithm for the All-Edges Triangle problem in $\Theta(n^{1/2})$-regular $n$-vertex graphs which contain at most~$\Order(n^{k/2})$ $k$-cycles for all $3 \leq k \leq k_{\max}$, unless the 3-SUM conjecture fails.
\end{lemma}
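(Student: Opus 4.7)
My plan is to deduce All-Edges Triangle hardness as a corollary of the triangle listing hardness of Lemma~\ref{lem:3sum-to-triangle} via a standard detection-to-witness-finding reduction. Suppose for contradiction there is an $\Order(n^{2-\epsilon})$-time algorithm $\mathcal{A}$ for All-Edges Triangle on $\Theta(n^{1/2})$-regular graphs with the prescribed short-cycle bounds. I will use $\mathcal{A}$ as a subroutine to list all triangles in such a graph in time $\widetilde\Order(n^{2-\epsilon})$, which contradicts Lemma~\ref{lem:3sum-to-triangle}.

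First I would invoke $\mathcal{A}$ on the input graph $G$ to obtain the set $E^*$ of edges contained in at least one triangle; by the short-cycle assumption the total number of triangles is $t = \Order(n^{3/2})$, so $|E^*| \leq 3t = \Order(n^{3/2})$. Next I would recover the actual triangles by randomized divide-and-conquer on the vertex set: partition $V$ into halves $V_1, V_2$, recursively list the triangles inside $G[V_1]$ and inside $G[V_2]$, and handle the cross-triangles (those with vertices in both halves) by re-running $\mathcal{A}$ on auxiliary edge-restricted subgraphs of $G$ whose triangles are exactly the cross-triangles of the desired pattern (e.g.\ for cross-triangles with two vertices in $V_1$ and one in $V_2$, keep only edges within $V_1$ and between $V_1$ and $V_2$). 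Each such auxiliary graph inherits regularity and the few-short-cycle bounds deterministically from $G$, so $\mathcal{A}$ remains applicable at every level, and the recurrence $T(n) = 2T(n/2) + \Order(n^{2-\epsilon})$ solves to $\widetilde\Order(n^{2-\epsilon})$ by Case~3 of the Master Theorem.

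The main obstacle is the cross-triangle enumeration step: naively intersecting neighborhoods for each edge in $E^*$ costs $\Omega(|E^*| \cdot n^{1/2}) = \Omega(n^2)$. To circumvent this I would exploit the few-4-cycles assumption, which controls the distribution of triangle multiplicities per edge and bounds the total number of (edge, triangle) incidences by $3t = \Order(n^{3/2})$; together with hash-based adjacency tables (standard in the reductions of Pătra\c{s}cu~\cite{Patrascu10}, Kopelowitz--Pettie--Porat~\cite{KopelowitzPP16}, and Jafargholi--Viola~\cite{JafargholiV16}) this permits an output-sensitive enumeration charged to the $E^*$-edges that matter. After careful amortization across the recursion tree, the full listing algorithm runs in $\widetilde\Order(n^{2-\epsilon})$ time, which contradicts Lemma~\ref{lem:3sum-to-triangle} under the 3-SUM conjecture and establishes the stated hardness of All-Edges Triangle.
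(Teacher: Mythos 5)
The paper itself gives no written proof of this lemma---it only remarks that the result follows from \cref{lem:3sum-to-triangle} ``using known tricks to turn detection algorithms into witness-finding algorithms''---so your task is to supply a correct such trick, and the one you propose does not work as stated. The central gap is that your recursion never actually lists the cross-triangles. You recursively list triangles inside $G[V_1]$ and $G[V_2]$, and for the cross-triangles you only propose to ``re-run $\mathcal{A}$'' on auxiliary subgraphs; but $\mathcal{A}$ is a \emph{detection} algorithm, so this tells you which edges lie on a cross-triangle and nothing more. Since a triangle with vertices in both halves is handled by neither recursive call, the bulk of the triangles (in expectation $3/4$ of them under a random bipartition, at the top level alone) are never output. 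The recursion is therefore circular: listing the cross-triangles is exactly the problem you set out to solve. Your closing paragraph acknowledges the resulting $\Omega(|E^*|\cdot n^{1/2})=\Omega(n^2)$ bottleneck but does not resolve it. The few-4-cycles hypothesis does not help here: the total number of (edge, witness) incidences is $3t=\Order(n^{3/2})$ regardless of the 4-cycle count, and the cost is dominated by edges with a \emph{single} witness, for which a neighborhood scan still costs $\Theta(n^{1/2})$ each; ``hash-based adjacency tables'' and ``careful amortization'' are named but no mechanism is given that charges this work to anything sublinear per edge.

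A second, independent problem is the claim that the subproblems ``inherit regularity \dots deterministically.'' At recursion depth $i$ the induced subgraph has $N=n/2^i$ vertices but degrees concentrated around $n^{1/2}/2^i = N^{1/2}/2^{i/2}$, so it is $\Theta(N^{1/2})$-regular only for $i=\Order(1)$; beyond constant depth the promise under which $\mathcal{A}$ is guaranteed to run in $\Order(N^{2-\epsilon})$ time no longer holds, so the recurrence $T(n)=2T(n/2)+\Order(n^{2-\epsilon})$ cannot be justified down to the leaves (and the inheritance is in any case probabilistic, not deterministic, for a random split). To repair the argument you need a genuine detection-to-listing reduction: for instance, partition each vertex part into $g$ groups, use $\mathcal{A}$ to identify which of the $g^3$ triples of groups contain a triangle, recurse only into those (their number is controlled by the triangle count $t=\Order(n^{3/2})$), and brute-force at the leaves; or use the Vassilevska Williams--Williams style witness isolation via random subsampling of one part to extract witnesses for each edge of $E^*$ in batch. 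Either route requires tracking how regularity and the cycle bounds degrade, which is precisely the bookkeeping your proposal omits.
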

\section{Hardness of 4-Cycle Listing} \label{sec:4cycle}

\begin{algorithm}[t]
\caption{The reduction from listing triangles in a $\Theta(n^{1/2})$-regular tripartite graph $G = (V, E)$ to listing 4-cycles.} \label{alg:4cycle}
\begin{algorithmic}[1]
\State Randomly split $V$ into $V_1, \dots, V_s$
\ForEach{$(i, j, \ell) \in [s]^3$}
    \State Let $V_{i, j, \ell} = V_i \cup V_j \cup V_\ell$
    \State Let $G_{i, j, \ell}$ be the graph with vertices $\set{x_1, x_2, x_3, x_4 : x \in V_{i, j, \ell}}$ (that is, create four
    \Statex[1] copies for each original vertex) and edges $\set{(x_1, y_2), (x_2, y_3), (x_3, y_4) : (x, y) \in E}$
    \Statex[1] and $\set{(x_1, x_4) : x \in V_{i, j, \ell}}$
    \State Run the fast 4-cycle listing algorithm on $G_{i, j, \ell}$, and for each 4-cycle of the form
    \Statex[1] $(x_1, y_2, z_3, x_4)$ report the triangle $(x, y, z)$ (unless already reported)
\EndForEach
\end{algorithmic}
\end{algorithm}

\thmhardnessfourcycle*

\noindent
This section is devoted to proving \cref{thm:hardness-4cycle}. Suppose that for some $\epsilon > 0$, there is an algorithm listing all 4-cycles in a graph in time $\Order(n^{2-\epsilon} + t)$. We give a reduction from listing triangles as described in \cref{lem:3sum-to-triangle} (with $k_{\max} = 4$) to listing 4-cycles. That is, we are given an $\Theta(n^{1/2})$-regular $n$-vertex graph~$G = (V, E)$ which contains at most $\Order(n^2)$ 4-cycles, and the goal is to list $\Order(n^{3/2})$ triangles in subquadratic time. The reduction is summarized in \cref{alg:4cycle}.

The algorithm randomly splits the vertex set into $s$ groups, and for each triple $(i, j, \ell) \in [s]^3$ of groups, constructs a new graph $G_{i, j, \ell}$. This graph is obtained from $G$ by copying each vertex $x$ four times $x_1, x_2, x_3, x_4$, and we add edges $(x_1, y_2), (x_2, y_3), (x_3, y_4)$ as in the original graph, and additionally add all edges $(x_1, x_4)$. We list all 4-cycles in the graph~$G_{i, j, \ell}$ and for each 4-cycle of the form $(x_1, y_2, z_3, x_4)$ we report the triangle $(x, y, z)$. Our first claim is that the algorithm correctly reports all triangles in $G$.

\begin{lemma}[Correctness of \cref{alg:4cycle}] \label{lem:4cycle-correctness}
\cref{alg:4cycle} correctly lists all triangles in $G$.
\end{lemma}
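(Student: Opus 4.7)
The plan is to establish a correspondence between triangles in $G$ and 4-cycles of the designated form $(x_1, y_2, z_3, x_4)$ in the constructed graphs $G_{i, j, \ell}$. The argument splits naturally into two directions: soundness (every reported triple is a triangle of $G$) and completeness (every triangle of $G$ is eventually reported).

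For soundness, suppose the algorithm reports $(x, y, z)$ from a 4-cycle $(x_1, y_2, z_3, x_4)$ in some $G_{i, j, \ell}$. Unpacking the construction, the three layer-crossing edges $(x_1, y_2)$, $(y_2, z_3)$, $(z_3, x_4)$ are present exactly when $(x, y), (y, z), (z, x) \in E$, while the closing edge $(x_4, x_1)$ is present by definition since $x \in V_{i, j, \ell}$. Because $G$ is simple, the edge requirements force $x, y, z$ to be pairwise distinct, and because $x_1, y_2, z_3, x_4$ live in four different layers they are automatically distinct as vertices of $G_{i, j, \ell}$. Hence $(x, y, z)$ is a genuine triangle of $G$.

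For completeness, fix a triangle $\{a, b, c\}$ in $G$ and any cyclic ordering, say $(x, y, z) = (a, b, c)$. Since $V_1, \dots, V_s$ partitions $V$, there are indices $i, j, \ell$ (not necessarily distinct) with $x \in V_i$, $y \in V_j$, $z \in V_\ell$, so that $\{x, y, z\} \subseteq V_{i, j, \ell}$. By the triangle hypothesis all three of $(x, y), (y, z), (z, x)$ lie in $E$, hence by construction the edges $(x_1, y_2), (y_2, z_3), (z_3, x_4)$ and the shortcut $(x_4, x_1)$ are all present in $G_{i, j, \ell}$, forming a 4-cycle of the designated form. When the listing subroutine enumerates this 4-cycle, the algorithm reports $(a, b, c)$, with the ``unless already reported'' guard merely suppressing duplicates arising from different triples $(i, j, \ell)$ or different cyclic orderings of the same triangle.

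I do not anticipate any real obstacle: the lemma is a direct combinatorial verification of the construction. The only mildly subtle points are (i) that the algorithm must recognize ``$(x_1, y_2, z_3, x_4)$-shape'' 4-cycles regardless of the traversal order the listing subroutine returns—easy because the only edges between layer $1$ and layer $4$ are the shortcut edges joining matching copies, pinning down the layer sequence and the identification of the layer-$1$ and layer-$4$ endpoints; and (ii) that the random partition plays no role in correctness, only serving later to bound the number of 4-cycles per instance in the running-time analysis.
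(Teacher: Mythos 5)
Your proposal is correct and follows essentially the same argument as the paper: soundness because the construction of $G_{i,j,\ell}$ guarantees that any reported triple has all three edges in $E$, and completeness because a triangle with $x \in V_i$, $y \in V_j$, $z \in V_\ell$ appears as the 4-cycle $(x_1, y_2, z_3, x_4)$ in $G_{i,j,\ell}$. The paper states this in two sentences; you supply the same two directions with a bit more detail on distinctness and on identifying the designated shape, neither of which changes the argument.
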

\begin{proof}
First, observe that by the construction of $G_{i, j, \ell}$ every triple $(x, y, z)$ reported by the algorithm indeed forms a triangle in $G$. Moreover, any triangle $(x, y, z)$ in $G$ can be found as the 4-cycle $(x_1, y_2, z_3, x_4)$ in $G_{i, j, \ell}$, where $x \in V_i, y \in V_j, z \in Z_\ell$. (In addition, there are five other 4-cycles which correspond to $(x, y, z)$.)
\end{proof}

\begin{lemma}[Number of 4-Cycles] \label{lem:number-of-4cycles}
The expected total number of 4-cycles across all graphs~$G_{i, j, \ell}$ is at most $\Order(n^2 / s + n^{3/2})$.
\end{lemma}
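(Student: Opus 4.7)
I classify the 4-cycles in each $G_{i,j,\ell}$ according to whether they use a matching edge, charge each 4-cycle to either a triangle or a 4-cycle of the induced subgraph $G_{|V_{i,j,\ell}}$, and then conclude with a routine random-partition calculation. The delicate point is ruling out a ``cherry'' degeneracy in one of the layer patterns, which will use the tripartite structure of the base triangle instance.

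\textbf{Classification of 4-cycles.} A 4-cycle in $G_{i,j,\ell}$ uses at most one matching edge, since matching edges join only layers 1 and 4 and regular edges only consecutive layers, so two disjoint matching edges cannot be completed into a cycle by just two regular edges. The 4-cycles with one matching edge (\emph{Type A}) are forced into the form $(x_1, y_2, z_3, x_4)$: the remaining three regular edges must traverse layers $1 \to 2 \to 3 \to 4$ monotonically. Such a cycle corresponds to a triangle $(x, y, z)$ in $G_{|V_{i,j,\ell}}$, and each triangle generates $\Order(1)$ of them. The 4-cycles with no matching edge (\emph{Type B}) lie in the bipartite layered subgraph (odd vs.\ even layers); a short case analysis of length-4 closed walks on the layer-path $1$--$2$--$3$--$4$ shows that the possible layer patterns (up to rotation and reflection) are the zigzags $(L, L{+}1, L, L{+}1)$ for $L \in \{1, 2, 3\}$ and the three-layer patterns $(L, L{+}1, L{+}2, L{+}1)$ for $L \in \{1, 2\}$.

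\textbf{Type B projects to a 4-cycle of $G_{|V_{i,j,\ell}}$.} In the zigzag case the four projected vertices $(a, b, c, d)$ already form a 4-cycle of $G_{|V_{i,j,\ell}}$, since the conditions $a \ne c$ and $b \ne d$ are necessary for the four layered vertices to be distinct. In the three-layer case the projection is a priori only a closed length-4 walk $a \to b \to c \to d \to a$ where one could have $a = c$: a ``cherry'' at $a$ with leaves $b, d$. This is the main potential obstacle, since a $\Theta(n^{1/2})$-regular graph has $\Theta(n^2)$ cherries, which would contribute $\Theta(n^2)$ in total and defeat the target bound. The fix is the tripartite structure of the instance from \cref{lem:3sum-to-triangle}: the edges of $G$ carry a canonical cyclic orientation $X \to Y \to Z \to X$ (see \cref{sec:3sum-to-triangle:sec:construction}) which the construction of $G_{i,j,\ell}$ inherits. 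Tracing the four directed edge constraints around a three-layer cycle forces $a$ and $c$ to lie in parts that differ by $2 \pmod 3$; thus $a \ne c$, and every Type B 4-cycle projects to a genuine 4-cycle of $G_{|V_{i,j,\ell}}$, with each such 4-cycle of $G_{|V_{i,j,\ell}}$ charged $\Order(1)$ times.

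\textbf{Random-partition calculation.} For any fixed vertex set $S \subseteq V$ of size $k$, independence of the random partition yields $\Pr[S \subseteq V_{i,j,\ell}] = (|\{i,j,\ell\}|/s)^k$. Splitting the sum over $(i,j,\ell) \in [s]^3$ according to $|\{i,j,\ell\}| \in \{1, 2, 3\}$ gives
\begin{align*}
\sum_{(i,j,\ell) \in [s]^3} \Big(\tfrac{|\{i,j,\ell\}|}{s}\Big)^3 &= \Order(1), \\
\sum_{(i,j,\ell) \in [s]^3} \Big(\tfrac{|\{i,j,\ell\}|}{s}\Big)^4 &= \Order(1/s).
\end{align*}
Combining with the assumed bounds of $\Order(n^{3/2})$ triangles and $\Order(n^2)$ 4-cycles in $G$ (both from \cref{lem:3sum-to-triangle}) and applying linearity of expectation, the expected total number of 4-cycles across all $G_{i,j,\ell}$ is at most $\Order(n^{3/2}) \cdot \Order(1) + \Order(n^2) \cdot \Order(1/s) = \Order(n^2/s + n^{3/2})$, proving the lemma.
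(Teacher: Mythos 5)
Your decomposition and your final computation are the same as the paper's: 4-cycles using a matching edge are charged to triangles, the five non-matching layer patterns are charged to 4-cycles of $G[V_{i,j,\ell}]$, and the random-partition sums give $\Order(1)$ for triples of vertices and $\Order(1/s)$ for quadruples. The genuinely interesting difference is your treatment of the three-layer patterns. The paper's proof simply asserts that in all five patterns the projected quadruple $(x,y,z,w)$ forms a 4-cycle of $G$, i.e.\ it silently assumes $x \neq z$; you are right that this needs an argument, because if both orientations of every edge of $E$ are present between consecutive layers, then every cherry $y$--$x$--$w$ (two distinct neighbors of a common vertex) yields the 4-cycle $(x_1, y_2, x_3, w_2)$, and a $\Theta(n^{1/2})$-regular graph has $\Theta(n^2)$ cherries, each surviving in expectation $\Theta(1)$ times across the $s^3$ triples---which would overwhelm the claimed $\Order(n^2/s + n^{3/2})$ bound exactly as you say.

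The caveat is your fix. You dispose of the cherry case by asserting that $G_{i,j,\ell}$ inherits the cyclic orientation $X \to Y \to Z \to X$, but \cref{alg:4cycle} as written adds the edges of the \emph{undirected} graph between consecutive layers, i.e.\ in both orientations; the paper's own accounting confirms this reading, since it counts six 4-cycles per triangle (see \cref{lem:4cycle-correctness} and the proof of \cref{lem:number-of-4cycles}), whereas with canonically oriented layer edges each triangle appears only three times. So the orientation is not something you can point to in the existing construction---it is a modification of it. The modification is harmless and, in light of your cherry count, necessary: with oriented layer edges every triangle of $G$ still appears as a 4-cycle, your argument that the two endpoints of a three-layer pattern lie in parts differing by two in the cyclic order (hence are distinct) is valid, and the rest of your charging and the $\Order(n^2/s + n^{3/2})$ computation go through. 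State this modification explicitly rather than attributing it to \cref{alg:4cycle}; with that emendation your proof is correct, and on this particular point it is more careful than the paper's own proof, which does not address the $x = z$ degeneracy at all.
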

\begin{proof}
Each 4-cycle using an edge $(x_1, x_4)$ must take the form $(x_1, y_2, z_3, x_4)$. In this case, $(x, y, z)$ is a triangle in the original graph $G$. As each triangle appears as a four cycle in all six possible permutations, the contribution from 4-cycles using an edge $(x_1, x_4)$ is therefore bounded by six times the number of triangles in $G$. By \cref{lem:3sum-to-triangle}, $G$ contains at most $\Order(n^{3/2})$ many triangles.

There are five types of 4-cycles which do not use edges of the form $(x_1, x_4)$, namely $(x_1, y_2, z_1, w_2)$, $(x_2, y_3, z_2, w_3)$, $(x_3, y_4, z_3, w_4)$, $(x_1, y_2, z_3, w_2)$ and $(x_2, y_3, z_4, w_3)$. In all five cases, $(x, y, z, w)$ forms a 4-cycle in the original graph---more specifically, in the subgraph induced by $V_{i, j, \ell}$. In particular, the contribution of these 4-cycles is five times the number of 4-cycles in $G[V_{i, j, \ell}]$. Recall that in $G$ there are only~$\Order(n^2)$ 4-cycles, and each 4-cycle \emph{survives} only if all of its four vertices are sampled into some set $V_{i, j, \ell}$ which happens with probability~$s^3 \cdot s^{-4} = s^{-1}$. Hence, the expected total number of surviving 4-cycles is~$\Order(n^2 / s)$.
\end{proof}

\begin{lemma}[Running Time of \cref{alg:4cycle}] \label{lem:4cycle-time}
For $s = n^{\epsilon/4}$, \cref{alg:4cycle} runs in expected time $\Order(n^{2-\epsilon/4})$.
\end{lemma}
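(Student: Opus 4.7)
The plan is to decompose the total running time into three contributions: (i) the random partition and construction of the graphs $G_{i,j,\ell}$, (ii) the calls to the assumed fast 4-cycle listing algorithm on each $G_{i,j,\ell}$, and (iii) the bookkeeping for reporting triangles. Contribution (iii) is dominated by the total number of 4-cycles reported and will be absorbed into (ii).

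First, I would bound the size of each $G_{i,j,\ell}$. Each part $V_i$ has expected size $\Order(n/s)$, and the graph $G$ is $\Theta(n^{1/2})$-regular, so the induced subgraph $G[V_{i,j,\ell}]$ has $N := \Order(n/s)$ vertices in expectation (and with high probability by standard concentration). Then $G_{i,j,\ell}$ is a graph on $\Order(n/s)$ vertices as well. Constructing all $s^3$ graphs $G_{i,j,\ell}$ takes time $\Order(s^3 \cdot n/s) = \Order(s^2 n)$ for enumerating vertex copies, plus $\Order(s^3 \cdot N \cdot n^{1/2})$ for the edges coming from $G$ (since each vertex has degree $\Theta(n^{1/2})$ in $G$). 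For $s = n^{\epsilon/4}$ this is $\Order(n^{1+\epsilon/2} + n^{3/2+\epsilon/4})$, which is comfortably below $n^{2-\epsilon/4}$ for any $\epsilon \in (0, 1)$.

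Next, for contribution (ii), on each instance $G_{i,j,\ell}$ the hypothetical 4-cycle listing algorithm runs in time $\Order(N^{2-\epsilon} + t_{i,j,\ell})$, where $t_{i,j,\ell}$ denotes the number of 4-cycles in $G_{i,j,\ell}$. Summed over all $s^3$ triples, the ``preprocessing'' term contributes
\begin{equation*}
    s^3 \cdot \Order((n/s)^{2-\epsilon}) = \Order(s^{1+\epsilon} \cdot n^{2-\epsilon}) = \Order(n^{2 - 3\epsilon/4 + \epsilon^2/4}),
\end{equation*}
which, for sufficiently small $\epsilon$, is $\Order(n^{2-\epsilon/2}) \leq \Order(n^{2-\epsilon/4})$. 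The output-sensitive term, by \cref{lem:number-of-4cycles}, sums to $\Order(n^2/s + n^{3/2}) = \Order(n^{2-\epsilon/4} + n^{3/2}) = \Order(n^{2-\epsilon/4})$ in expectation.

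Adding all contributions yields total expected running time $\Order(n^{2-\epsilon/4})$, as desired. No step involves a genuine obstacle: the argument is a straightforward parameter optimization combining the bound on $|V(G_{i,j,\ell})|$ with the bound on the total number of 4-cycles from \cref{lem:number-of-4cycles}. The only subtle point is to ensure that the ``$N^{2-\epsilon}$'' cost, when paid $s^3$ times, gains from the fact that each instance has vertex set of size only $n/s$, which is exactly where the factor $s^{1+\epsilon}$ (rather than $s^3$) arises and makes the choice $s = n^{\epsilon/4}$ succeed.
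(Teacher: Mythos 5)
Your proof is correct and follows essentially the same route as the paper: sum the per-instance cost of the assumed listing algorithm over the $s^3$ triples and control the output-sensitive term via \cref{lem:number-of-4cycles} (the paper simply bounds each instance's cost by $\widetilde\Order(n^{2-\epsilon})$ rather than $(n/s)^{2-\epsilon}$, which is only a cosmetic difference). The one omission is that the paper's proof of this lemma also treats the case where the listing algorithm runs in time $\widetilde\Order(m^{4/3-\epsilon}+t)$, bounding $((n^{3/2})/1)^{4/3-\epsilon}$ per instance, which is needed for the second half of \cref{thm:hardness-4cycle}; you should add that one-line computation.
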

\begin{proof}
The total running time is dominated by the running time of the fast 4-cycle listing algorithm. Assume that this algorithm runs in time $\widetilde\Order(n^{2-\epsilon} + t_{i, j, \ell})$ where $t_{i, j, \ell}$ is the number of 4-cycles in the respective instance. By the previous \cref{lem:number-of-4cycles} we have that~$\sum_{i, j, \ell} t_{i, j, \ell} \leq \Order(n^2 / s) = \Order(n^{2-\epsilon/4})$. Hence, the total running time of \cref{alg:4cycle} is
\begin{equation*}
    \sum_{i, j, \ell \in [s]} \widetilde\Order(n^{2-\epsilon} + t_{i, j, \ell}) \leq \widetilde\Order(n^{3\epsilon/4} \cdot n^{2-\epsilon}) + \widetilde\Order(n^{2-\epsilon/4}) = \widetilde\Order(n^{2-\epsilon/4}).
\end{equation*}
Similarly, if the fast 4-cycle listing algorithm runs in time $\widetilde\Order(m^{4/3-\epsilon} + t_{i, j, \ell})$, then the running time becomes
\begin{equation*}
    \sum_{i, j, \ell \in [s]} \widetilde\Order((n^{3/2})^{4/3-\epsilon} + t_{i, j, \ell}) \leq \widetilde\Order(n^{3\epsilon/4} \cdot n^{2-\epsilon}) + \widetilde\Order(n^{2-\epsilon/4}) = \widetilde\Order(n^{2-\epsilon/4}). \qedhere
\end{equation*}
\end{proof}

The proof of \cref{thm:hardness-4cycle} is complete by \cref{lem:4cycle-correctness,lem:4cycle-time}. If necessary we can further let the algorithm terminate with high probability in time $\widetilde\Order(n^{2-\epsilon/4})$ by repeating the reduction~$\Order(\log n)$ times and interrupting each execution which takes too long.
\section{Hardness of Distance Oracles} \label{sec:distance-oracles}

In this section we prove our conditional hardness results for approximate distance oracles. We start with the stretch-$k$ regime (in \cref{sec:distance-oracles:sec:large}), followed by the stretch-$\alpha$ regime for~\makebox{$2 \leq \alpha < 3$} (in \cref{sec:distance-oracles:sec:small}), and the improved hardness for dynamic approximate distance oracles (in \cref{sec:distance-oracles:sec:dynamic}).

\subsection{Stretch \texorpdfstring{\boldmath$k$}{k}} \label{sec:distance-oracles:sec:large}
The goal of this section is to prove the following theorem:
\thmhardnessdistanceoraclelarge*

\begin{algorithm}[t]
\caption{The reduction from listing triangles in a $\Theta(n^{1/2})$-regular tripartite graph $G = (X, Y, Z, E)$ to approximate distance oracles with stretch $k$.} \label{alg:3sum-to-distance-oracle}
\begin{algorithmic}[1]
\State Randomly split $X, Y, Z$ into $X_1, \dots, X_s$, $Y_1, \dots, Y_t$, $Z_1, \dots, Z_s$ \label{alg:3sum-to-distance-oracle:line:split}
\ForEach{$(i, j, \ell) \in [s] \times [t] \times [s]$} \label{alg:3sum-to-distance-oracle:line:loop-split}
    \State Let $G_{i, j, \ell}$ be the subgraph of $G$ induced by $X_i, Y_j, Z_\ell$ where all edges between $X_i$ \label{alg:3sum-to-distance-oracle:line:construct}
    \Statex[1] and $Z_\ell$ are deleted
    \State Preprocess $G_{i, j, \ell}$ with the approximate distance oracle \label{alg:3sum-to-distance-oracle:line:preprocess}
    \ForEach{$(x, z) \in (X_i \times Z_\ell) \cap E$} \label{alg:3sum-to-distance-oracle:line:loop-query}
        \State Query the distance oracle to obtain an estimate $d(x, z) \leq \widetilde d(x, z) \leq k \cdot d(x, z)$ \label{alg:3sum-to-distance-oracle:line:query}
        \If{$\widetilde d(x, z) \leq 2k$} \label{alg:3sum-to-distance-oracle:line:close-condition}
            \ForEach{$y \in Y_j$ with $(x, y) \in E$} \label{alg:3sum-to-distance-oracle:line:loop-test}
                \If{$(y, z) \in E$} \label{alg:3sum-to-distance-oracle:line:test}
                    \State Report the triangle $(x, y, z)$ \label{alg:3sum-to-distance-oracle:line:report}
                \EndIf
            \EndForEach
        \EndIf
    \EndForEach
\EndForEach
\end{algorithmic}
\end{algorithm}

For the remainder of this subsection, we prove \cref{thm:hardness-distance-oracles}. Assume that we have access to an approximate distance oracle with stretch $k$, preprocessing time $\widetilde\Order(m^{1+p})$ and query time $\widetilde\Order(m^q)$.

We prove hardness starting from an instance of listing $\Order(n^{3/2})$ triangles in a $\Theta(n^{1/2})$-regular tripartite $n$-vertex graph $G = (X, Y, Z, E)$ which contains at most $\Order(n^{k'/2})$ $k'$-cycles for all \makebox{$4 \leq k'\leq 2k+1$} (that is, we apply the hardness result from \cref{lem:3sum-to-triangle} with~\makebox{$k_{\max} = 2k+1$}, and the additional assumption that $G$ be tripartite is without loss of generality). We let $s, t \leq n^{1/2-\Omega(1)}$ be two parameters to be set later and give the reduction in \cref{alg:3sum-to-distance-oracle}.

The algorithm first splits the vertex parts $X, Y, Z$ into $s, t, s$ many groups $X_i, Y_j, Z_\ell$, respectively, and then considers all graphs $G_{i, j, \ell}$ induced by $X_i \cup Y_j \cup Z_\ell$, where we have deleted all edges between $X_i$ and $Z_\ell$. We precompute $G_{i, j, \ell}$ with the distance oracle, and query the oracle for estimates $d(x, z) \leq \widetilde d(x, z) \leq k \cdot d(x, z)$ for all pairs $(x, z) \in (X_i \times Z_\ell) \cap E$. We call a pair $(x, z)$ with estimate $\widetilde d(x, z) \leq 2k$ a \emph{candidate} pair. The algorithm enumerates all candidate pairs $(x, z)$ and all neighbors $y$ of $x$, tests whether $(x, y, z)$ forms a triangle (in the original graph) and reports the triangle in the positive case. It is easy to see that the reduction is correct:

\begin{lemma}[Correctness of \cref{alg:3sum-to-distance-oracle}]
The reduction in \cref{alg:3sum-to-distance-oracle} correctly lists all triangles in the given graph $G = (X, Y, Z, E)$.
\end{lemma}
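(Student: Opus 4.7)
The plan is to prove the two directions of set equality between reported triples and triangles of $G$ separately. For \emph{soundness} (every reported triple is a triangle in $G$), I would simply trace through the algorithm: a triple $(x, y, z)$ is only reported in line~\ref{alg:3sum-to-distance-oracle:line:report}, and reaching that line requires $(x, z) \in E$ (enforced by the enumeration in line~\ref{alg:3sum-to-distance-oracle:line:loop-query}), $(x, y) \in E$ (enforced by the enumeration in line~\ref{alg:3sum-to-distance-oracle:line:loop-test}), and $(y, z) \in E$ (enforced by the test in line~\ref{alg:3sum-to-distance-oracle:line:test}). Since $x \in X_i \subseteq X$, $y \in Y_j \subseteq Y$ and $z \in Z_\ell \subseteq Z$, the triple $(x, y, z)$ is a genuine triangle in $G$.

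For \emph{completeness} (every triangle in $G$ is reported), fix any triangle $(x, y, z)$ of $G$ with $x \in X, y \in Y, z \in Z$. The random partition in line~\ref{alg:3sum-to-distance-oracle:line:split} assigns $x, y, z$ to unique groups $X_i, Y_j, Z_\ell$ respectively, and the algorithm eventually processes this particular triple $(i, j, \ell)$ in the outer loop. The key observation is that in the graph $G_{i, j, \ell}$ constructed in line~\ref{alg:3sum-to-distance-oracle:line:construct}, the edges $(x, y)$ and $(y, z)$ survive (neither lies between $X_i$ and $Z_\ell$), so $x \to y \to z$ is a path of length $2$ in $G_{i, j, \ell}$, giving $d(x, z) \leq 2$. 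The distance oracle therefore returns an estimate $\widetilde d(x, z) \leq k \cdot d(x, z) \leq 2k$, so the candidate condition in line~\ref{alg:3sum-to-distance-oracle:line:close-condition} is satisfied. The inner loop then enumerates $y$ as a neighbor of $x$ in $Y_j$, verifies $(y, z) \in E$, and reports $(x, y, z)$.

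Combining the two directions establishes the claim. There is no real technical obstacle here: the lemma is a pure correctness statement and relies only on the triangle inequality style fact that a triangle survives in $G_{i, j, \ell}$ as a length-$2$ path, together with the stretch guarantee of the oracle. The nontrivial content of the reduction (bounding the number of candidate pairs via the short-cycle assumption, and optimizing the parameters $s, t$ against the preprocessing and query costs) is orthogonal to correctness and will be handled in subsequent lemmas analyzing the running time.
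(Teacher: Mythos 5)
Your proposal is correct and follows essentially the same two-part argument as the paper: soundness by tracing the edge checks in the reported triple, and completeness by observing that the triangle's $2$-path through $y$ survives in $G_{i,j,\ell}$ so the stretch-$k$ estimate is at most $2k$ and the candidate condition fires. No gaps.
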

\begin{proof}
First note that whenever the algorithm reports a triangle $(x, y, z)$, we have verified  in \cref{alg:3sum-to-distance-oracle:line:loop-query,alg:3sum-to-distance-oracle:line:loop-test,alg:3sum-to-distance-oracle:line:test} that all edges $(x, y), (y, z), (x, z)$ are present.

Next, focus on any triangle $(x, y, z)$ in $G$; we prove that it is reported by the algorithm. Clearly there exist $i \in [s], j \in [t], \ell \in [s]$ such that $x \in X_i, y \in Y_j, z \in Z_\ell$. Focus on the iteration of the loop in \cref{alg:3sum-to-distance-oracle:line:loop-split} with~$(i, j, \ell)$ and on the iteration of the inner loop in \cref{alg:3sum-to-distance-oracle:line:loop-query} with $(x, z)$. The distance oracle is queried to obtain a distance estimate $\widetilde d(x, z) \leq k \cdot d(x, z)$ for the distance of $x$ and $z$ in $G_{i, j, \ell}$. Note that $x$ and $z$ are connected by a 2-path via $y$, hence the distance estimate satisfies $\widetilde d(x, z) \leq 2k$ (that is, $(x, z)$ is indeed a candidate pair). It follows that we enter the loop in \cref{alg:3sum-to-distance-oracle:line:loop-test} and report $(x, y, z)$ in \cref{alg:3sum-to-distance-oracle:line:report}.
\end{proof}

The more interesting part of the proof is to bound the running time of the reduction. For the analysis, we first analyze the sizes and degrees of the graphs $G_{i, j, \ell}$. It is easy to see that all bounds are true in expectation, and the high probability bounds follow from Chernoff's bound.

\begin{lemma}[Size of $G_{i, j, \ell}$]
With high probability the following bounds hold for all $(i, j, \ell) \in [s] \times [t] \times [s]$:
\begin{itemize}
\item $|X_i|, |Z_\ell| \leq \Order(n/s)$ and $|Y_j| \leq \Order(n/t)$.
\item $|(X_i \times Y_j) \cap E|, |(Y_j \times Z_\ell) \cap E| \leq \Order(n^{3/2} / st)$ and $|(X_i \times Z_\ell) \cap E| \leq \Order(n^{3/2} / s^2)$.
\item The degree of any vertex $x \in X_i$ in $G_{i, j, \ell}$ is $\Order(n^{1/2} / t)$.
\end{itemize}
\end{lemma}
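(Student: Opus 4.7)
The plan is to prove all three claims by computing expectations and then applying Chernoff-style concentration together with a union bound over the $s^2 t \leq n^3$ triples $(i, j, \ell)$ and (for Claim 3) the $n$ choices of vertex $x$. Throughout, we use that $s, t \leq n^{1/2 - \Omega(1)}$, so that every expected value I will encounter is at least $n^{\Omega(1)} \gg \log n$, the regime where Chernoff's bound gives a $1 - n^{-\Omega(c)}$ tail bound for any constant $c$.

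For Claim 1, each vertex of $X$ is independently placed into one of the $s$ buckets, so $|X_i|$ is a sum of $|X| \leq n$ independent $\mathrm{Bernoulli}(1/s)$ indicators with expectation $|X|/s \leq n/s$. An identical analysis works for $|Z_\ell|$ and $|Y_j|$, with expectations $\Order(n/s)$ and $\Order(n/t)$ respectively. Chernoff gives high-probability deviations within a constant factor of the mean, and a union bound over the at most $2s + t \leq n$ groups yields the claim.

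For Claim 3, observe that because we delete all edges between $X_i$ and $Z_\ell$, the only neighbors of $x \in X_i$ in $G_{i, j, \ell}$ lie in $Y_j$; hence $\deg_{G_{i,j,\ell}}(x) = \sum_{y \in N(x)} \mathbf 1[y \in Y_j]$. The group assignments of distinct $y \in Y$ are fully independent, so this is a sum of $|N(x)| = \Theta(n^{1/2})$ independent $\mathrm{Bernoulli}(1/t)$ indicators with expectation $\Theta(n^{1/2}/t) \geq n^{\Omega(1)}$. Chernoff yields concentration within a constant factor for each $(x, j)$, and a union bound over all $n t$ such pairs concludes the argument.

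For Claim 2, the main obstacle is that, even though the partitions of $X$, $Y$, $Z$ are mutually independent, the indicators $\mathbf 1[(x, y) \in X_i \times Y_j]$ attached to different edges sharing an endpoint are \emph{not} independent, so one cannot naively apply Chernoff to the edge count. My approach is to condition on the partition of $X$. Then $|(X_i \times Y_j) \cap E| = \sum_{y \in Y} |N(y) \cap X_i| \cdot \mathbf 1[y \in Y_j]$, a weighted sum of genuinely independent Bernoulli$(1/t)$ indicators with weights bounded by $W = \max_y \deg(y) = \Order(n^{1/2})$ and (conditional) expectation $\mu = |E \cap (X_i \times Y)|/t$. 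The standard Chernoff bound for weighted sums gives a deviation probability of \smash{$\exp(-\Omega(\mu/W))$}; since $\mu/W \geq \Omega(n/(st)) \geq n^{\Omega(1)}$, we get concentration around the unconditional mean $\Order(n^{3/2}/(st))$ with high probability. The case $|(Y_j \times Z_\ell) \cap E|$ is symmetric, and $|(X_i \times Z_\ell) \cap E|$ works by conditioning on the partition of $X$ and summing $|N(z) \cap X_i| \cdot \mathbf 1[z \in Z_\ell]$ over $z \in Z$, with independent $\mathrm{Bernoulli}(1/s)$ indicators. A union bound over all triples $(i, j, \ell)$ then delivers the simultaneous bound. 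The conditioning trick is the only non-routine step; everything else is textbook Chernoff.
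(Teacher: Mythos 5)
The paper gives no proof of this lemma, merely stating that the bounds hold in expectation and the high-probability version follows from Chernoff; your Chernoff-plus-union-bound argument, including the conditioning trick for Claim 2, correctly fills in exactly what the authors left implicit. One small imprecision worth flagging: in Claims 2 and 3 the actual mean $\mu$ need not itself be $n^{\Omega(1)}$ (e.g.\ $|N(x)\cap Y|$ could be small for some $x$), but the argument survives because the quantity that controls the Chernoff/Bernstein upper tail is the \emph{target bound} divided by the maximum weight, namely $(n^{1/2}/t)$ or $(n^{3/2}/(st))/n^{1/2} = n/(st)$, and these are $n^{\Omega(1)}$ by the assumption $s,t \leq n^{1/2-\Omega(1)}$.
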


\begin{lemma}[Few Candidates] \label{lem:few-short-paths}
Fix $i, j, \ell \in [s] \times [t] \times [s]$. In expectation, the number of candidate pairs~$(x, z) \in (X_i \times Z_\ell) \cap E$ is at most
\begin{equation*}
    \widetilde\Order\left(\frac{n^{k+1/2}}{s^{k+1} t^k}\right).
\end{equation*}
\end{lemma}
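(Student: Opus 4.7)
The plan is to associate each candidate pair $(x, z) \in (X_i \times Z_\ell) \cap E$ with a short simple cycle in $G$ that ``survives'' the subsampling, and then count such cycles using the bound on short cycles in $G$ assumed from \cref{lem:3sum-to-triangle} (applied with $k_{\max} = 2k+1$).

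First I would use the candidate condition $\widetilde d(x,z) \leq 2k$ together with the stretch guarantee $d(x,z) \leq \widetilde d(x,z)$ to extract a shortest (hence simple) path $P$ of length at most $2k$ from $x$ to $z$ in $G_{i,j,\ell}$. The key structural observation is that the edges of $G_{i,j,\ell}$ lie exclusively in $(X_i \times Y_j) \cap E$ and $(Y_j \times Z_\ell) \cap E$ (the $X_i$-$Z_\ell$ edges were deleted in \cref{alg:3sum-to-distance-oracle:line:construct}), so every walk in $G_{i,j,\ell}$ must alternate between $Y_j$ and $X_i \cup Z_\ell$. Since $P$ starts in $X_i$ and ends in $Z_\ell$, its length must therefore be even, say $2m$ with $1 \leq m \leq k$, and it contains exactly $m$ vertices in $Y_j$ and $m+1$ vertices in $X_i \cup Z_\ell$ (including the endpoints $x$ and $z$). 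Closing $P$ with the edge $(x,z)$ yields a simple cycle $C$ of odd length $2m+1$ in the original graph $G$, whose vertex set is contained in $X_i \cup Y_j \cup Z_\ell$ and whose unique $X$-$Z$ edge is $(x,z)$.

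Second, I would upper bound the number of candidates by the number of such surviving cycles. The map $(x,z) \mapsto C$ is injective because $(x,z)$ is recoverable as the unique $X$-$Z$ edge of $C$. Thus it suffices to bound, for each $1 \leq m \leq k$, the expected number of $(2m+1)$-cycles of the above shape whose vertices all land in $X_i \cup Y_j \cup Z_\ell$. For any fixed such cycle in $G$, independence of the partition from \cref{alg:3sum-to-distance-oracle:line:split} gives that the probability that all $m+1$ of its $X \cup Z$-vertices land in their required $X_i$ or $Z_\ell$ group, and all $m$ of its $Y$-vertices land in $Y_j$, equals exactly $s^{-(m+1)} t^{-m}$. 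Combining this with the cycle bound $\Order(n^{(2m+1)/2})$ from \cref{lem:3sum-to-triangle}, the total expected number of candidates is at most
\begin{equation*}
    \sum_{m=1}^{k} \Order\!\left(\frac{n^{m+1/2}}{s^{m+1} t^m}\right).
\end{equation*}

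Finally, I would identify the dominant term. The ratio of the $(m{+}1)$-st term to the $m$-th is $n/(st)$, which is at least $1$ in the regime $s, t \leq n^{1/2 - \Omega(1)}$ of interest. Therefore the sum is dominated (up to a constant factor depending on the constant $k$) by its last term at $m = k$, which yields the advertised bound \raisebox{0pt}[0pt][0pt]{$\widetilde\Order(n^{k+1/2}/(s^{k+1} t^k))$}. I expect the main subtlety to be the structural argument in the first step: the deletion of $X_i$-$Z_\ell$ edges is precisely what forces the path $P$ to alternate strictly through $Y_j$, which in turn pins down the parity of the cycle length and the exact vertex count in each part---and thereby the survival probability; without this careful bookkeeping the exponents on $s$ and $t$ would not match the claim.
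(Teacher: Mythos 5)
Your proposal is correct and follows essentially the same route as the paper's proof: each candidate pair yields a short even path through $Y_j$ in $G_{i,j,\ell}$, hence an odd cycle of length $2m+1\le 2k+1$ surviving the subsampling with probability $s^{-(m+1)}t^{-m}$, and summing the cycle bounds $\Order(n^{m+1/2})$ gives a sum dominated by the $m=k$ term since $st\le n$. Your added bookkeeping (simplicity of the shortest path, injectivity via the unique $X$--$Z$ edge) just makes explicit what the paper leaves implicit.
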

\begin{proof}
Since each candidate pair $(x, z)$ has distance $d(x, z) \leq 2k$ in $G_{i, j, \ell}$, $x$ and $z$ must be connected by a path of length $2k' \leq 2k$ in $G_{i, j, \ell}$. It follows that $(x, z)$ is part of a cycle of (odd) length $2k' + 1 \leq 2k+1$ in the induced subgraph $G[X_i \cup Y_j \cup Z_\ell]$. So fix any cycle in~$G$ of length $2k' + 1 \leq 2k+1$ which uses exactly one edge between $X$ and $Z$. In this case the cycle has exactly $k'+1$ vertices in $X \cup Z$ and exactly $k'$ vertices in~$Y$. The probability that this cycle is also contained in~$G[X_i \cup Y_j \cup Z_\ell]$ is therefore at most $(1/s)^{k'+1} (1/t)^{k'}$. Since the total number of $(2k'+1)$-cycles in $G$ is at most $\Order(n^{k'+1/2})$, we obtain the claimed bound on the expected number of candidate pairs $(x, z)$:
\begin{equation*}
    \sum_{k'=1}^k \Order\left(\frac{n^{k'+1/2}}{s^{k'+1} t^{k'}}\right) \leq \Order\left(\frac{n^{k+1/2}}{s^{k+1} t^k}\right),
\end{equation*}
where the last inequality holds by $s, t \leq n^{1/2}$.
\end{proof}

\begin{lemma}[Running Time of \cref{alg:3sum-to-distance-oracle}]
With high probability, \cref{alg:3sum-to-distance-oracle} runs in expected time
\begin{equation*}
    \widetilde\Order\left(s^2 t \cdot \left(\left(\frac{n^{3/2}}{s t}\right)^{1+p} + \frac{n^{3/2}}{s^2} \cdot \left(\frac{n^{3/2}}{st}\right)^q + \frac{n^{k+1}}{s^{k+1}t^{k+1}}\right)\right).
\end{equation*}
Moreover, if $kp + (k+1)q < 1$ then we can optimize $s$ and $t$ such that the time bound becomes truly subquadratic.
\end{lemma}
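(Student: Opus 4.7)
The plan is to split the total running time into three contributions coming from Line~\ref{alg:3sum-to-distance-oracle:line:preprocess}, Line~\ref{alg:3sum-to-distance-oracle:line:query}, and the enumeration loop in Lines~\ref{alg:3sum-to-distance-oracle:line:loop-test}--\ref{alg:3sum-to-distance-oracle:line:test}, bound each using the preceding size/degree lemma and the candidate-count lemma, and then choose $s$ and $t$ so that every piece is truly subquadratic under the hypothesis $kp + (k+1)q < 1$.

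Conditioning on the high-probability event that the sizes and degrees in every $G_{i,j,\ell}$ meet the bounds stated earlier: since the edges of $G_{i,j,\ell}$ consist only of $X_i$--$Y_j$ and $Y_j$--$Z_\ell$ edges (the $X_i$--$Z_\ell$ edges are deleted), each subgraph has $\widetilde\Order(n^{3/2}/(st))$ edges, so the preprocessing in Line~\ref{alg:3sum-to-distance-oracle:line:preprocess} costs $\widetilde\Order((n^{3/2}/(st))^{1+p})$ per triple. For the queries in Line~\ref{alg:3sum-to-distance-oracle:line:query} we pay $\widetilde\Order((n^{3/2}/(st))^q)$ for each of the $\Order(n^{3/2}/s^2)$ pairs in $(X_i\times Z_\ell)\cap E$. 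Finally, Lemma~\ref{lem:few-short-paths} bounds the expected number of candidate pairs per triple by $\widetilde\Order(n^{k+1/2}/(s^{k+1}t^k))$, and each such pair is processed in time $\Order(n^{1/2}/t)$ by iterating over the neighbors of $x$ in $Y_j$ and testing adjacency to $z$, giving $\widetilde\Order(n^{k+1}/(s^{k+1}t^{k+1}))$ per triple. Multiplying each per-triple contribution by the total count $s^2 t$ of triples gives exactly the three terms in the claim.

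For the optimization, write $s = n^a$, $t = n^b$, and $\mu = 3/2 - a - b$, and translate ``each term is $\Order(n^{2-\Omega(1)})$'' into
\begin{equation*}
    a + \mu p < \tfrac12, \qquad b + \mu q < \tfrac12, \qquad (k-1)a + kb > k-1.
\end{equation*}
Adding the first two inequalities yields $\mu(1-p-q) > 1/2$. Rewriting the third as $b > (k-1)(\mu - 1/2)$ and combining it with the bound $b < 1/2 - \mu q$ yields $\mu < k/(2(k-1+q))$. The compatible window $1/(2(1-p-q)) < \mu < k/(2(k-1+q))$ (with corresponding valid $a,b < 1/2$) admits a solution if and only if $kp + (k+1)q < 1$, so the hypothesis of the theorem is exactly what is needed to pick $s$ and $t$ with a little slack. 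I expect the main obstacle to be purely algebraic: the parameter balancing above, together with careful bookkeeping of which of the intermediate bounds hold in expectation versus with high probability (and hence why the final time bound is an expected-time statement conditioned on a high-probability event).
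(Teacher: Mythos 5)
Your proposal is correct and follows essentially the same route as the paper: the same per-triple decomposition into preprocessing, query, and enumeration costs using the size/degree bounds and the candidate-count lemma, multiplied by the $s^2t$ triples. The only (cosmetic) difference is in the optimization step, where the paper plugs in the explicit choice $s = n^{1/2-\frac{p}{2-2p-2q}-\epsilon}$, $t = n^{1/2-\frac{q}{2-2p-2q}-\epsilon}$ and verifies each term, while you derive the feasible window for the exponents and observe it is nonempty exactly when $kp+(k+1)q<1$ --- an equivalent computation that additionally makes clear the condition is tight for this reduction.
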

\begin{proof}
We can construct the partitions $X_1, \dots, X_s$, $Y_1, \dots, Y_t$ and $Z_1, \dots, Z_s$ in time $\Order(n)$ and prepare the graphs $G_{i, j, \ell}$ in time $\Order(n^{3/2})$ by a single pass over the edge set.

The algorithm runs for $s^2 t$ iterations of the outer loop; focus on one such iteration~$i, j, \ell$. Preprocessing~$G_{i, j, \ell}$ with the distance oracle takes time $\widetilde\Order((n^{3/2} / st)^{1+p})$. Then we issue~$\Order(n^{3/2} / s^2)$ queries, each running in time $\widetilde\Order((n^{3/2} / st)^q)$. By \cref{lem:few-short-paths}, there are at most~$\Order(n^{k+1/2} s^{-k-1} t^{-k})$ candidate pairs in expectation, and only for those we pass the condition \cref{alg:3sum-to-distance-oracle:line:close-condition}. Executing the inner-most loop in \crefrange{alg:3sum-to-distance-oracle:line:loop-test}{alg:3sum-to-distance-oracle:line:report} takes time proportional to the degre of $x$ in $G_{i, j, \ell}$, that is, time $\Order(n^{1/2} / t)$. Summing all contributions, the expected running time becomes:
\begin{equation*}
    \widetilde\Order\left(s^2 t \cdot \left(\left(\frac{n^{3/2}}{s t}\right)^{1+p} + \frac{n^{3/2}}{s^2} \cdot \left(\frac{n^{3/2}}{st}\right)^q + \frac{n^{k+1}}{s^{k+1}t^{k+1}}\right)\right).
\end{equation*}

We now prove that if $kp + (k+1)q < 1$, then the running time becomes subquadratic for some appropriate choice of $s$ and $t$. Let $\epsilon > 0$ be a small constant to be specified later, and set
\begin{align*}
    s &= n^{1/2-\frac{p}{2-2p-2q} - \epsilon}, \\
    t &= n^{1/2-\frac{q}{2-2p-2q} - \epsilon}.
\end{align*}
We analyze the three contributions of the running time in isolation. The first term (i.e., the contribution of the preprocessing time) is
\begin{equation*}
    \widetilde\Order(n^{2-\frac p{2-2p-2q} - \epsilon+p(1/2+\frac{p+q}{2-2p-2q}+\epsilon)}) = \widetilde\Order(n^{2-\frac p{2-2p-2q} - \epsilon+p(\frac1{2-2p-2q}+\epsilon)}) = \widetilde\Order(n^{2-\epsilon(1-p)}).
\end{equation*}
This is subquadratic for any choice of $\epsilon > 0$ as we assume that $p < 1$. The second term (i.e., the contribution of the query time) similarly becomes subquadratic:
\begin{equation*}
    \widetilde\Order(n^{2-\frac q{2-2p-2q}-\epsilon+q(1/2+\frac{p+q}{2-2p-2q}+\epsilon)}) = \widetilde\Order(n^{2-\frac q{2-2p-2q}-\epsilon+q(\frac1{2-2p-2q}+\epsilon)}) = \widetilde\Order(n^{2-\epsilon(1-q)}).
\end{equation*}
For the third term (i.e., the contribution of testing all candidate pairs) we obtain the following bound:
\begin{gather*}
    \widetilde\Order(n^{3/2+\frac{p(k-1)}{2-2p-2q}+\frac{q k}{2-2p-2q}+\epsilon(2k-1)}) = \widetilde\Order(n^{3/2+\frac{kp + (k-1)q - p - q}{2-2p-2q}+\epsilon(2k-1)}).
\end{gather*}
By the same assumption that $kp + (k+1)q < 1$, the exponent becomes strictly smaller than $2$ when ignoring the contribution of $\epsilon$. Therefore, a sufficiently small choice of~$\epsilon > 0$ achieves truly subquadratic running time.
\end{proof}

\subsection{Stretch \texorpdfstring{\boldmath$2 \leq \alpha < 3$}{2 < α ≤ 3}} \label{sec:distance-oracles:sec:small}
In this section we prove the following theorem:

\thmhardnessdistanceoraclesmall*

We use a powerful gadget which was already used in the conditional space lower bounds by Pătraşcu, Roditti and Thorup~\cite{PatrascuRT12}: \emph{Butterfly graphs}. We first define the butterfly graph and then give quick proofs for the properties relevant for our reduction.

\begin{definition}[Butterfly Graph] \label{def:butterfly}
The butterfly graph with alphabet $\sigma$ and dimension $d$ is the $(d+1)$-partite graph with vertex sets $[\sigma]^d \times [d+1]$, and edges
\begin{equation*}
    \Big\{\,(\,(s, i), (s', i+1)\,) : \text{$s[j] = s'[j]$ for all $j \in [d], j \neq i$}\,\Big\}.
\end{equation*}
That is, two vertices $(s, i)$ and $(s', i+1)$ are connected by an edge if and only if the length-$d$ string~$s$ equals~$s'$ in all positions except $i$ (where it might be or might not be equal). We call the vertices $(s, i)$ the \emph{$i$-th layer}, and we occasionally call the 1-st layer the \emph{left layer} and the~$(d+1)$-st layer the \emph{right layer}.
\end{definition}

In particular, we remark that the butterfly graph with alphabet $\sigma$ and dimension $d$ has~$(d+1) \sigma^d$ vertices and $d \sigma^{d+1}$ edges.

\begin{lemma}[Butterfly Graph] \label{lem:butterfly}
Focus on the butterfly graph with alphabet $\sigma$ and dimension~$d$. Then:
\begin{itemize}
\item\emph{Left to right:} The distance from any vertex in the left layer to any vertex in the right layer is exactly $d$.
\item\emph{Left to left:} The probability that two \emph{random} vertices in the left layer have distance at most $2d - 2\ell$ is at most $\sigma^{-\ell}$.
\end{itemize} 
\end{lemma}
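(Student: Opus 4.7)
The plan is to derive both claims from a single structural observation about the butterfly graph: an edge between layers $i$ and $i+1$ can change only coordinate $i$, so the layer structure encodes precisely which coordinates a walk is permitted to modify when.

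For the left-to-right claim, the lower bound $d(u,v) \ge d$ is immediate: every edge shifts the layer index by exactly $\pm 1$, so any walk from layer $1$ to layer $d+1$ uses at least $d$ edges. For the matching upper bound, given $u = (s,1)$ and $v = (s',d+1)$, I would exhibit the canonical length-$d$ path $u = (t_1,1), (t_2,2), \dots, (t_{d+1},d+1) = v$ where $t_{i+1}$ agrees with $t_i$ in every coordinate except coordinate $i$, at which we set it to $s'[i]$. By the edge definition each consecutive pair is adjacent, and after $d$ steps we have $t_{d+1} = s'$.

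For the left-to-left claim, I would first establish the exact formula $d((s,1),(s',1)) = 2 j^\star$, where $j^\star := \max\{j : s[j] \neq s'[j]\}$ (and $0$ if $s = s'$). The lower bound is the key step: any walk between the two endpoints must alter coordinate $j^\star$ an odd number of times, which can only happen via an edge between layers $j^\star$ and $j^\star+1$; hence the walk must reach layer $j^\star+1$, and since every edge shifts the layer by $\pm 1$, going out from layer $1$ and returning costs at least $2 j^\star$ edges in total. The matching upper bound is a direct construction: traverse the layers $1, 2, \dots, j^\star+1, j^\star, \dots, 1$, using each outgoing edge at position $i$ to fix coordinate $i$ whenever $s[i] \neq s'[i]$, and using trivial ``no-change'' edges on the way back (which exist because the edge definition permits but does not require a modification). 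Granted the formula, the event $d((s,1),(s',1)) \le 2d - 2\ell$ is equivalent to ``$s$ and $s'$ agree on the last $\ell$ coordinates'', which for two independent uniform vertices, i.e.\ $s, s' \in [\sigma]^d$ chosen independently and uniformly at random, has probability exactly $\sigma^{-\ell}$.

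The main (minor) obstacle is making the sharp lower bound $d((s,1),(s',1)) \ge 2 j^\star$ fully rigorous---pinning down that the walk's outgoing trip to layer $j^\star+1$ cannot share edges with its return to layer $1$ via some clever shortcut. Once that structural piece is established, the probability computation and the left-to-right claim reduce to routine bookkeeping.
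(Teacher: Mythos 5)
Your proposal is correct and follows essentially the same route as the paper: the observation that only edges between layers $i$ and $i+1$ can alter coordinate $i$, the exact formula $d\bigl((s,1),(s',1)\bigr) = 2\max\{j : s[j]\neq s'[j]\}$, and the resulting agreement-on-the-last-$\ell$-coordinates probability bound. The ``obstacle'' you flag is not one: since every edge changes the layer index by exactly $\pm 1$, any closed walk from layer $1$ that reaches layer $j^\star+1$ has length at least $2j^\star$ by counting layer increments alone, with no need to reason about shared edges.
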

\begin{proof}
Observe that in the butterfly graph, exactly the edges from the $i$-th to $(i+1)$-st layer can change the $i$-th position of the strings. This makes the first property obvious: For any two vertices $(s, 1)$ and $(s', d+1)$, follow the unique path which corrects the mismatches between $s$ and $s'$ in positions $1, 2, \dots, d$.

For the second property, let $(s, 1)$ and $(s', 1)$ be two random vertices in the left layer, i.e., let $s$ and $s'$ be random strings in $[\sigma]^d$. The distance between $(s, 1)$ and $(s', 1)$ is exactly two times the largest $i$ for which $s[i] \neq s'[i]$, as we have to reach the $i$-th layer in the butterfly in order to change $s[i]$ into $s'[i]$. Hence, they have distance at most $2d - 2\ell$ only if $s$ equals~$s'$ in the last $\ell$ positions. Since $s$ and $s'$ are random strings, this happens with probability at most $\sigma^{-\ell}$.
\end{proof}

\begin{algorithm}[t]
\caption{The reduction from listing triangles in a $\Theta(r)$-regular tripartite graph $G = (X, Y, Z, E)$ to approximate distance oracles with stretch $\alpha$.} \label{alg:3sum-to-distance-oracle-small}
\begin{algorithmic}[1]
\State Randomly split $Y$ into $Y_1, \dots, Y_t$ \label{alg:3sum-to-distance-oracle-small:line:split}
\ForEach{$j \in [t]$} \label{alg:3sum-to-distance-oracle-small:line:loop-split}
    \State Let $G_j$ be the following graph: Add the vertices $X_i$ and $Z_\ell$, and add a copy of the 
    \Statex[1] butterfly graph with alphabet $\sigma$ and dimension $d$ for each vertex in $Y_j$. For each
    \Statex[1] $(x, y) \in (X \times Y_j) \cap E$, add an edge from $x$ to a random vertex in the left layer of the
    \Statex[1] butterfly graph corresponding to $y$, and similarly for each $(y, z) \in (Y_j \times Z) \cap E$, add
    \Statex[1] an edge from $z$ to a random vertex in the right layer of the butterfly graph
    \Statex[1] corresponding to $y$ \label{alg:3sum-to-distance-oracle-small:line:construct}
    \State Preprocess $G_j$ with the approximate distance oracle \label{alg:3sum-to-distance-oracle-small:line:preprocess}
    \ForEach{$(x, z) \in (X \times Z) \cap E$} \label{alg:3sum-to-distance-oracle-small:line:loop-query}
        \State Query the distance oracle to obtain an estimate $d(x, z) \leq \widetilde d(x, z) \leq \alpha \cdot d(x, z)$ \label{alg:3sum-to-distance-oracle-small:line:query}
        \If{$\widetilde d(x, z) \leq \alpha \cdot (d + 2)$} \label{alg:3sum-to-distance-oracle-small:line:close-condition}
            \ForEach{$y \in Y_j$ with $(x, y) \in E$} \label{alg:3sum-to-distance-oracle-small:line:loop-test}
                \If{$(y, z) \in E$} \label{alg:3sum-to-distance-oracle-small:line:test}
                    \State Report the triangle $(x, y, z)$ \label{alg:3sum-to-distance-oracle-small:line:report}
                \EndIf
            \EndForEach
        \EndIf
    \EndForEach
\EndForEach
\end{algorithmic}
\end{algorithm}

With this gadget in mind, we are ready to state the reduction, see \cref{alg:3sum-to-distance-oracle-small}. Let~\makebox{$d = \ceil{\max(32 / \epsilon, \frac{4}{3-\alpha})}$}. Using \cref{lem:3sum-to-triangle-degree} we start from a $\Theta(r)$-regular $n$-vertex graph (for some parameter $r \geq n^{1/2}$ to be fixed later) which contains at most $\Order(n^{k/2})$ $k$-cycles, for all $4 \leq k \leq 6d+1$, and will list $\Order(nr)$ triangles in time~$\Order(n r^{2-\delta})$. Let $t \leq r^{1-\Omega(1)}$ be another parameter, and let $\sigma = r^{1/d}$.

The reduction is very similar to the one in the previous section, except that we only split the vertex set~$Y$ (in the language of the previous section we have $s = 1$) and that we construct the graphs $G_j$ differently: The difference is that we replace every vertex in $Y_j$ by a copy of the butterfly gadget. The edges from $X$ are connected to a random vertex in the left layer, and the edges from $Z$ are connected to a random vertex in the right layer. Notice that thereby two vertices $x, z$ which are connected by a 2-path via some vertex $y$ in the original graph, are now connected via a~$(d+2)$-path which traverses the butterfly gadget from left to right.

We preprocess each graph $G_j$ with the distance oracle, and query the distance oracle for all edges $(x, z) \in (X \times Z) \cap E$ to obtain a distance estimate $d(x, z) \leq \widetilde d(x, z) \leq \alpha d(x, z)$. We say that a pair $(x, z)$ is a \emph{candidate} pair if the $\widetilde d(x, z) \leq \alpha (d + 2)$. Note that only candidate pairs can be part of a triangle, and we therefore enumerate all candidate pairs $(x, z)$ and all neighbors $y \in Y_j$ of $x$ and test whether $(x, y, z)$ forms a triangle.

We start to analyze the size of the graphs $G_j$. Note that we have to take care of the additional vertices and edges added by the butterfly gadgets.

\begin{lemma}[Size of $G_j$]
With high probability, the following bounds hold for all~$j \in [t]$: The graph $G_j$ has $\Order(n + nr / t)$ vertices and $\Order(n r^{1 + 1/d} / t)$ edges, and the degree of any vertex~$x \in X$ is bounded by $\Order(r/t)$.
\end{lemma}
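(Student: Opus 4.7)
The plan is to verify each of the three bounds separately, relying on standard Chernoff-style concentration for the random partition $Y = Y_1 \cup \dots \cup Y_t$ together with the explicit size of the butterfly gadget from \cref{def:butterfly}.

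First I would establish that with high probability $|Y_j| = \Order(n/t)$ uniformly in $j$, which is immediate from a Chernoff bound plus a union bound over the $t \leq n$ parts (and a trivial additive slack for tiny buckets). From here the vertex count is a direct calculation: $G_j$ contains the $\Order(n)$ vertices of $X \cup Z$, plus one butterfly per vertex of $Y_j$. A butterfly with alphabet $\sigma = r^{1/d}$ and dimension $d$ has $(d+1)\sigma^d = \Order(r)$ vertices (recalling that $d$ is constant), so summing over $Y_j$ contributes $\Order(nr/t)$, matching the claimed $\Order(n + nr/t)$.

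Next I would bound the edges. The butterfly has $d\sigma^{d+1} = \Order(r^{1+1/d})$ edges, so the butterfly gadgets contribute $\Order(nr^{1+1/d}/t)$ edges in total. On top of this we pay for the edges from $X$ and $Z$ into the gadgets: since $G$ is $\Theta(r)$-regular, $|(X \times Y_j) \cap E|$ and $|(Y_j \times Z) \cap E|$ are each $\Order(nr/t)$ in expectation, and Chernoff/Markov yields the same bound with high probability. Since $r^{1+1/d} \geq r$, these contributions are absorbed into the $\Order(nr^{1+1/d}/t)$ bound.

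Finally, for the degree bound on a vertex $x \in X$: its degree in $G_j$ equals the number of neighbors of $x$ in $Y_j$ in the original graph (each such neighbor contributes exactly one edge to a random left-layer vertex of the corresponding butterfly). Since $x$ has $\Theta(r)$ neighbors in $Y$, each landing in $Y_j$ independently with probability $1/t$, the expected degree in $G_j$ is $\Theta(r/t)$, and a Chernoff bound gives $\Order(r/t + \log n) = \Order(r/t)$ with high probability (the additive $\log n$ term can be absorbed whenever $r/t \geq \log n$, which holds throughout our parameter choices; otherwise one takes a trivial union bound). The main subtlety is this last concentration step in the regime where $r/t$ is small, but a standard Chernoff bound with additive slack handles it uniformly over all $x$ by a union bound.
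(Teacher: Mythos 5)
Your proof is correct and follows essentially the same route as the paper's: Chernoff concentration for $|Y_j|$ and for the degree of each $x \in X$, combined with the explicit butterfly sizes $(d+1)\sigma^d = \Order(r)$ vertices and $d\sigma^{d+1} = \Order(r^{1+1/d})$ edges per gadget. You are in fact slightly more explicit than the paper in accounting for the $X$- and $Z$-side attachment edges (absorbed since $r^{1+1/d} \geq r$) and in noting that the additive $\log n$ slack in the degree bound is harmless because the parameter choice $t \leq r^{1-\Omega(1)}$ makes $r/t$ polynomially large.
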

\begin{proof}
For the degree bound the butterfly gadgets play no role and the proof is the same as in the last section using Chernoff's bound. For the number of vertices, first recall that with high probability there are $\Order(n/t)$ vertices in $Y_j$. Since each vertex in $Y_j$ is replaced by a butterfly graph of size $\Order(d \sigma^d) = \Order(r)$, the bound on the vertices is correct. Moreover, each butterfly graph contributes~$d \sigma^{d+1} = \Order(r^{1+1/d})$ additional edges and therefore also the bound on the edges is as claimed.
\end{proof}

\begin{lemma}[Few Candidates]
Fix $j \in [t]$. In expectation, the expected number of candidate pairs $(x, z) \in (X \times Z) \cap E$ is at most
\begin{equation*}
    \Order\left(\frac{r^{3d+\frac{3+\alpha}2+\frac 4d}}{t^{3d}}\right).
\end{equation*}
\end{lemma}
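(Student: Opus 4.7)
The plan is to bound the expected number of candidate pairs via a union bound over potential short-path structures in $G_j$, combining the short-cycle hypothesis on $G$ with the probabilistic properties of the butterfly gadget from \cref{lem:butterfly}.

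Since any candidate pair $(x,z)$ satisfies $d_{G_j}(x,z) \leq \alpha(d+2)$, there is a simple path $P$ of length at most $\alpha(d+2)$ from $x$ to $z$ in $G_j$. By the construction of $G_j$, $P$ alternates between external vertices $v_0 = x, v_1, \dots, v_{q-1}, v_q = z \in X \cup Z$ and traversals through butterfly gadgets corresponding to some $y_1, \dots, y_q \in Y_j$. Projecting $P$ onto $G$ yields a walk $v_0, y_1, v_1, \dots, y_q, v_q$ of length $2q$, which together with the edge $(x,z) \in E$ forms a closed walk of length $2q+1$ in $G$; the short-cycle hypothesis from \cref{lem:3sum-to-triangle-degree} bounds the number of such potential cycle structures by $\Order(r^{2q+1})$.

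Next I would classify each butterfly visit of $P$ as either a \emph{crossing} (entering and exiting on different sides, contributing exactly $d$ internal edges by the ``left-to-right'' property) or a \emph{bouncing} (same side, contributing a random $D_i$ internal edges with $\Pr(D_i \leq 2k) \leq \sigma^{k-d}$ by the ``left-to-left'' property). Writing $c$ for the crossings and $b = q - c$ for bouncings, the fact that $P$ must flip sides an odd number of times forces $c \geq 1$ odd, and the length constraint becomes
\begin{equation*}
|P| = 2q + cd + \sum_{i \in \mathrm{bouncing}} D_i \leq \alpha(d+2).
\end{equation*}
For each cycle structure of type $(q, b)$ in $G$, the expected contribution factors as the product of (i) $\Order(r^{2q+1})$ for the cycle count from the hypothesis, (ii) $t^{-q}$ for the probability that all $q$ intermediate $Y$-vertices lie in $Y_j$ (using the uniform random split in \cref{alg:3sum-to-distance-oracle-small}), and (iii) $\Order(\sigma^{M/2 - bd})$ for the probability that the bouncings' internal lengths sum to at most $M := \alpha(d+2) - 2q - cd$, obtained by independence of random landings across butterflies together with a union bound on the ``left-to-left'' tail.

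Finally I would sum these contributions over all feasible $(q, b)$ subject to $c \geq 1$ odd and $M \geq 0$, substitute $\sigma = r^{1/d}$ to rewrite the exponents, and identify the dominant term (arising from the largest feasible $q$) to match the claimed bound. The main obstacle will be handling path structures in $G_j$ whose projection to $G$ is a non-simple closed walk---for instance, when two butterfly visits share the same $y \in Y_j$ or when two external vertices $v_i, v_j$ coincide---since the short-cycle count $\Order(r^{2q+1})$ then needs to be replaced by a more general walk count in $G$. Because $d$ is a constant, however, only $\Order(1)$ distinct repetition patterns can arise, and each such degenerate case contributes a subdominant term absorbed into the $\Order(\cdot)$ notation.
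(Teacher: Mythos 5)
Your proposal is correct and takes essentially the same route as the paper's proof: you project a short $x$--$z$ path in $G_j$ to an odd closed walk of length $2q+1 \leq 6d+1$ in $G$, charge it to the $\Order(r^{2q+1})$ cycle bound from the hypothesis, multiply by the $t^{-q}$ probability that the $Y$-vertices land in $Y_j$ and by the $\Order(\sigma^{-\ell})$ butterfly left-to-left tail (using independence of the random landings across gadgets), and then take the dominant term over feasible $q \leq 3d$ using $t \leq r$. The only differences are cosmetic: the paper uses the choice $d > \frac{4}{3-\alpha}$ to argue explicitly that exactly one left-to-right crossing occurs (so the path splits into an $X$-side and a $Z$-side zigzag) whereas you let the length budget rule out $c \geq 3$, and the non-simple-projection subtlety you flag at the end is one the paper silently bypasses by identifying the projection with a simple $(2k+1)$-cycle, so it does not separate the two arguments.
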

\begin{proof}
First note that there is a natural correspondence between paths from $x$ to $z$ in the original graph (which we will call \emph{original paths}) and paths in the constructed graph $G_j$ which take the shortest route through the butterfly gadgets (which we will call \emph{inflated paths}).

Observe that any inflated path from $x$ to $z$ of length at most $\alpha \cdot (d+2)$ must be separable into a path which zigzags between $X$ and the butterfly gadgets, followed by a path which zigzags between the butterfly gadgets and $Z$. Any other inflated path would pass through at least three butterfly gadgets (once by traveling from~$X$ to~$Z$, once by traveling back to $X$ and once more by traveling to $Z$ to reach the final destination $z$) which would require length $3d + 2$. Since we set $d > \frac{4}{3-\alpha}$, we have the inequality~\makebox{$\alpha \cdot (d + 2) < 3d + 2$} which leads to a contradiction.

Any such inflated path originates from a $2k$-path in the original graph $G$ (for some $k \leq 3d$) that first zigzags between $X$ and $Y_j$, and then zigzags between $Y_j$ and $Z$. (In particular, for exactly $k-1$ times the path reaches the vertex part $Y_j$ without crossing to the other side from $X$ to $Z$ or vice versa.) Since the edge from~$x$ to $z$ is also present by assumption, this closes a cycle of length $2k+1$ in the original graph.

As we have a good bound on the number of such cycles (namely, there are~$\Order(r^{2k+1})$ many), our strategy is to prove that each cycle becomes a short inflated path only with small probability. First of all, any original $2k$-path as the one described survives only with probability at most $t^{-k}$ in the induced graph $G[X \cup Y_j \cup Z]$. But even if a path survives, we claim that it leads to a short inflated path only with small probability. Since the path has to traverse $k-1$ butterfly gadgets from left to left (or right to right), we expect the path to have length $2k + d + (k-1) \cdot 2d$ ($2k$ steps in the original path plus $d$ steps to cross through one butterfly gadget plus $(k-1) \cdot 2d$ because of the remaining butterfly gadgets). Using \cref{lem:butterfly}, the probability that it has length at most $2k + d + (k-1) \cdot 2d - 2\ell$ is therefore at most
\begin{equation*}
    \sum_{\substack{\ell_1, \dots, \ell_{k-1} \in \Int\\\ell_1 + \dots + \ell_{k-1} = \ell}} \sigma^{-\ell_1} \cdot \ldots \cdot \sigma^{-\ell_{k-1}} = \sum_{\substack{\ell_1, \dots, \ell_{k-1} \in \Int\\\ell_1 + \dots + \ell_{k-1} = \ell}} \sigma^{-\ell} = \Order(\sigma^{-\ell}).
\end{equation*}
Here we hide in the $O$-notation a constant which only depends on $k$ and $\ell$, both of which are functions of $d$ and thereby constants for us.

Hence, for $\ell = \floor{k + \frac d2 + (k-1)d - \frac{\alpha \cdot (d+2)}{2}}$ the probability that the inflated path has length at most $\alpha \cdot (d + 2) \leq 2k + d + (k-1) \cdot 2d - 2\ell$ is at most
\begin{gather*}
    \Order(\sigma^{-\ell}) \\
    \qquad\leq \Order(\sigma^{-k - \frac d2 - (k-1)d + \frac{\alpha \cdot(d+2)}{2} + 1}) \\ \qquad\leq \Order(\sigma^{-d(\frac12+k-1-\frac{\alpha}{2}) + 4}) \\
    \qquad\leq \Order(\sigma^{-d(k-\frac{\alpha+1}{2}) + 4}) \\
    \qquad\leq \Order(r^{-k+\frac{\alpha+1}{2}+\frac{4}d}).
\end{gather*}
By combining the arguments from the previous paragraphs, we obtain that each $2k+1$ cycle survives only with probability $t^{-k}$ and (independently) becomes a short inflated path with probability at most $\Order(r^{-k+\frac{\alpha+1}{2}+\frac{4}d})$. Since the total number of $(2k+1)$-cycles in the original graph is $\Order(r^{2k+1/2})$, we obtain the claimed bound on the expected number of candidate pairs:
\begin{equation*}
    \Order\left(\sum_{k=2}^{3d} \frac{r^{2k+1} r^{-k+\frac{\alpha+1}2+\frac 4d}}{t^{k}}\right) = \Order\left(\sum_{k=2}^{3d} \frac{r^{k+\frac{3+\alpha}2+\frac 4d}}{t^{k}}\right) \leq \Order\left(\frac{r^{3d+\frac{3+\alpha}2+\frac 4d}}{t^{3d}}\right).
\end{equation*}
For the last inequality we have used that $t \leq r$.
\end{proof}

\begin{proof}[Proof of \cref{thm:hardness-distance-oracles-small}]
We pick $r = n^{\frac{2}{1+\alpha}\cdot (1-\delta)}$ and $t = r^{1-\gamma}$, for some $\gamma, \delta > 0$ to be picked later. Recall that we set $d = \ceil{\max(32 / \epsilon, \frac{4}{3-\alpha})}$ and $\sigma = r^{1/d}$. The correctness proof should be clear from the in-text explanations. It remains to analyze the running time with respect to this choice of parameters. Recall that we aim for a running time of the form $(n r^2)^{1-\Omega(1)}$.

First, consider the contribution of querying the distance oracle: We issue $\Order(t n r)$ queries, each of which runs in subpolynomial time, thus amounting for $\Order(n r^{2-\gamma+\order(1)})$. Next, consider the contribution of explicitly testing whether an edge~$(x, z)$ is part of a triangle, that is, the running time of the inner-most loop \cref{alg:3sum-to-distance-oracle-small:line:loop-test}. By the previous lemma we pass the condition in \cref{alg:3sum-to-distance-oracle-small:line:close-condition} at most
\begin{equation*}
    t \cdot \Order\left(\frac{r^{3d+\frac{3+\alpha}2+\frac 4d}}{t^{3d}}\right)
\end{equation*}
times and each call runs in time $\Order(r/t)$. Therefore, the total time for this step becomes
\begin{gather*}
    t \cdot \Order\left(\frac{r^{3d+\frac{3+\alpha}2+\frac 4d}}{t^{3d}} \cdot \frac rt\right) \\
    \qquad= \Order\left(r^{\frac{1+\alpha}{2}} \cdot r^{2+3d\gamma+\frac4d}\right) \\
    \qquad= \Order\left(n^{1-\delta} r^{2+3d\gamma+\frac4d}\right) \\
    \qquad= \Order\left(n r^{2+3d\gamma+\frac4d-\delta}\right).
\end{gather*}
Finally, we need to consider the preprocessing time of the distance oracles. Recall that each graph $G_j$ has $\Order(n r / t)$ vertices and $\Order(n r^{1+1/d} / t)$ edges. Assuming that the preprocesing time of the distance oracle is $\Order(m^{1+\frac{2}{1+\alpha}-\epsilon})$ as in the theorem statement, the total preprocessing time is bounded by
\begin{gather*}
    t \cdot \Order\left(\left(\frac{nr^{1+1/d}}{t}\right)^{1+\frac{2}{1+\alpha}-\epsilon}\right) \\
    \qquad\leq \Order(r (nr^{1/d+\gamma})^{1+\frac{2}{1+\alpha}-\epsilon}) \\
    \qquad\leq \Order(r^{1+(\frac1d+\gamma) \cdot (1+\frac{2}{1+\alpha}-\epsilon)} n^{1+\frac{2}{1+\alpha}-\epsilon}) \\
    \qquad\leq \Order(r^{1+\frac2d+2\gamma} r^{\frac{1}{1-\delta}} n^{1-\epsilon}) \\
    \qquad\leq \Order(r^{2+\frac2d+2\gamma+2\delta-\epsilon} n).
\end{gather*}
We pick $\delta = \epsilon/4$, $d = \ceil{\max(32 / \epsilon, \frac{4}{3-\alpha})}$ (as announced before), and let $\gamma > 0$ be tiny enough. Then both contributions to the running time become $\Order(n r^{2-\Omega(1)})$. This contradicts the 3-SUM hypothesis by \cref{lem:3sum-to-triangle-degree}.
\end{proof}

\subsection{Dynamic Distance Oracles} \label{sec:distance-oracles:sec:dynamic}
In contrast to the previous sections, we now consider \emph{dynamic} distance oracles. That is, we expect the distance oracle to compute distance estimates while the graph undergoes edge insertions and deletions. 
\thmhardnessdistanceoracledynamic*

\begin{algorithm}[t]
\caption{The reduction from listing triangles in a $\Theta(n^{1/2})$-regular tripartite graph $G = (X, Y, Z, E)$ to dynamic approximate distance oracles with stretch $2k-1$.} \label{alg:3sum-to-distance-oracle-dynamic}
\begin{algorithmic}[1]
\State Randomly split $Y$, $Z$ into $Y_1, \dots, Y_t$, $Z_1, \dots, Z_s$ \label{alg:3sum-to-distance-oracle-dynamic:line:split}
\ForEach{$(j, \ell) \in [t] \times [s]$} \label{alg:3sum-to-distance-oracle-dynamic:line:loop-split}
    \State Let $G_{j, \ell}$ be the subgraph of $G$ induced by $Y_j \cup Z_\ell$, where we subdivide each edge into
    \Statex[1] a path of length $10k$ (equivalently, think of this path as an edge of \emph{weight} $10k$), and
    \Statex[1] add an isolated vertex $v$ \label{alg:3sum-to-distance-oracle-dynamic:line:construct}
    \State Preprocess $G_{j, \ell}$ with the dynamic approximate distance oracle (i.e., add the edges
    \Statex[1] one by one)\label{alg:3sum-to-distance-oracle-dynamic:line:preprocess}
    \ForEach{$x \in X$} \label{alg:3sum-to-distance-oracle-dynamic:line:loop-query}
        \State Add an edge $(v, y)$ for each neighbor $y \in Y_j$ of $x$ \label{alg:3sum-to-distance-oracle-dynamic:line:insert}
        \ForEach{$z \in Z_\ell$ with $(x, z) \in E$}
            \State Query the distance $d(v, z) \leq \widetilde d(v, z) \leq (2k-1) \cdot d(v, z)$
            \If{$\widetilde d(v, z) \leq (2k-1) \cdot (10k + 1)$} \label{alg:3sum-to-distance-oracle-dynamic:line:close-condition}
                \ForEach{$y \in Y_j$ with $(x, y) \in E$} \label{alg:3sum-to-distance-oracle-dynamic:line:loop-test}
                    \If{$(y, z) \in E$}
                        \State Report the triangle $(x, y, z)$
                    \EndIf
                \EndForEach
            \EndIf
        \EndForEach
        \State Delete all edges incident to $v$ \label{alg:3sum-to-distance-oracle-dynamic:line:delete}
    \EndForEach
\EndForEach
\end{algorithmic}
\end{algorithm}

\noindent
We again prove the theorem by a reduction from listing $\Order(n^{3/2})$ triangles in a $\Theta(n^{1/2})$-regular $n$-vertex graph $G = (X, Y, Z, E)$ which contains at most $\Order(n^{k'/2})$ $k'$-cycles for all~\makebox{$4 \leq k' \leq 2k+1$} (that is, we use the conditional hardness result from \cref{lem:3sum-to-triangle} with parameter $k_{\max} = 2k+1$).

Let $s, t \leq n^{1/2-\Omega(1)}$ be two parameters. The reduction is given in \cref{alg:3sum-to-distance-oracle-dynamic}. Our analysis is very similar to the analysis in the previous two sections, and we will therefore omit some details. It is easy to prove that the algorithm reports all triangles in $G$ and is therefore correct. The critical part is to analyze the running time. To this end, we first check the size of the graphs $G_{j, \ell}$. Note that the number of vertices in $G_{j, \ell}$ is dominated by the vertices edges added to the graph by the subdivision of edges into paths.

\begin{lemma}[Size of $G_{i, j}$]
With high probability the following bounds hold for all $(j, \ell) \in [t] \times [s]$:
\begin{itemize}
\item The graph $G_{j, \ell}$ has $\Order(n^{3/2} / st)$ vertices and edges.
\item The degree of any vertex $z \in Z_\ell$ in $G_{j, \ell}$ is $\Order(n^{1/2}/t)$.
\end{itemize}
\end{lemma}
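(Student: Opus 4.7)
The plan is to handle the two bullets separately by reducing each to a Chernoff bound and then recording the (purely deterministic) effect of the subdivision.

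\textbf{Sizes before subdivision.} First I would control $|Y_j|$, $|Z_\ell|$, and $|(Y_j\times Z_\ell)\cap E|$. The random partitions $Y=Y_1\cup\dots\cup Y_t$ and $Z=Z_1\cup\dots\cup Z_s$ in \cref{alg:3sum-to-distance-oracle-dynamic:line:split} are uniform, so each original vertex lands in a fixed part of the partition independently, with probability $1/t$ or $1/s$. Since $\Ex(|Y_j|)=|Y|/t=\Order(n/t)$ and similarly $\Ex(|Z_\ell|)=\Order(n/s)$, and because $s,t\le n^{1/2-\Omega(1)}$ implies these expectations are at least $n^{\Omega(1)}$, a standard Chernoff bound plus a union bound over the $st\le n$ pairs $(j,\ell)$ gives $|Y_j|=\Order(n/t)$ and $|Z_\ell|=\Order(n/s)$ for all $j,\ell$ with high probability. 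For the edges we use that $G$ is tripartite and $\Theta(n^{1/2})$-regular, so $|E\cap(Y\times Z)|=\Order(n^{3/2})$. Each such edge survives in $G[Y_j\cup Z_\ell]$ with probability exactly $1/(st)$ (endpoints land independently), so the expected number of surviving $(Y_j,Z_\ell)$-edges is $\Order(n^{3/2}/(st))\ge n^{\Omega(1)}$, and Chernoff plus a union bound again yields the same upper bound with high probability.

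\textbf{Effect of the subdivision.} Subdividing every edge into a path of length $10k$ turns one edge into $10k$ edges and introduces $10k-1$ new degree-$2$ vertices. Since $k$ is a constant, the number of vertices of $G_{j,\ell}$ is $\Order(|Y_j|+|Z_\ell|+|E\cap(Y_j\times Z_\ell)|)$ and the number of edges is $\Order(|E\cap(Y_j\times Z_\ell)|)$ (the lone isolated $v$ contributes an $\Order(1)$ term). Combining with the bounds from the first step and using $s,t\le n^{1/2}$ to see that $n/s,\,n/t\le n^{3/2}/(st)$, both the vertex count and the edge count of $G_{j,\ell}$ are $\Order(n^{3/2}/(st))$ with high probability, establishing the first bullet.

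\textbf{Degree bound at $z\in Z_\ell$.} The key observation is that subdivision does not change the degree of any original vertex: each incident edge becomes a path whose first edge is the unique neighbor of $z$ from that subdivision, so $\deg_{G_{j,\ell}}(z)=|\{y\in Y_j:(y,z)\in E\}|$. Since $G$ is tripartite and $\Theta(n^{1/2})$-regular, $z$ has $\Theta(n^{1/2})$ neighbors in $G$, all of which lie in $Y$, and each lands in $Y_j$ independently with probability $1/t$. Hence $\Ex(\deg_{G_{j,\ell}}(z))=\Theta(n^{1/2}/t)$, which is $n^{\Omega(1)}$ because $t\le n^{1/2-\Omega(1)}$. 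A Chernoff bound yields $\deg_{G_{j,\ell}}(z)=\Order(n^{1/2}/t)$ with probability $1-n^{-\omega(1)}$, and a union bound over the at most $n$ vertices $z\in Z$ and the at most $n$ pairs $(j,\ell)$ gives the bound for all $z$ and all $(j,\ell)$ simultaneously.

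No single step is hard; the only place to be slightly careful is ensuring that all expectations are polynomially large so that Chernoff actually yields high-probability bounds strong enough to survive the union bounds — this is exactly guaranteed by the hypothesis $s,t\le n^{1/2-\Omega(1)}$, which I would state explicitly at the start of the argument.
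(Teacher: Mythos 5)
The paper itself states this lemma without proof (it appeals to its being ``very similar to the analysis in the previous two sections,'' and for the analogous lemma in Section~7.1 it merely says the bounds hold in expectation and ``follow from Chernoff's bound''). Your proposal fills exactly that gap with the approach the paper evidently has in mind: compute expectations, apply Chernoff and a union bound, and account deterministically for the subdivision and the isolated vertex. The arithmetic is also right, e.g.\ $n/s,\,n/t \le n^{3/2}/(st)$ because $s,t\le n^{1/2}$, so the bullet on the vertex/edge count collapses to the edge bound as claimed.

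One place you should be a little more careful: the bound on $|(Y_j\times Z_\ell)\cap E|$ is not a bare Chernoff application. The edge indicators $\mathbf{1}[y\in Y_j]\cdot\mathbf{1}[z\in Z_\ell]$ are \emph{not} independent across edges --- two edges sharing an endpoint are positively correlated. The cleanest fix is a two-step conditioning: first establish $|Y_j|=\Order(n/t)$ w.h.p., then condition on $Y_j$ and observe that $|(Y_j\times Z_\ell)\cap E| = \sum_{z\in Z}\mathbf{1}[z\in Z_\ell]\cdot |N(z)\cap Y_j|$ is a sum of independent terms, each bounded by $\deg(z)=\Order(n^{1/2})$ and with total expectation $\frac1s\cdot e(Y_j,Z)\le \Order(n^{3/2}/(st))$. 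Since this mean is $n^{\Omega(1)}$ times the bound on each term (using $s,t\le n^{1/2-\Omega(1)}$), a Chernoff/Bernstein bound for bounded independent summands gives the concentration you want. Alternatively, a bounded-differences argument works. Your degree bound for $z\in Z_\ell$ is a genuine Chernoff application (the neighbors of $z$ land in $Y_j$ independently), so no such fix is needed there.
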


We call a pair $(x, z)$ a \emph{candidate} pair if, in the $x$-iteration of the loop in \cref{alg:3sum-to-distance-oracle-dynamic:line:loop-query}, the distance estimate~$\widetilde d(v, z)$ satisfies $\widetilde d(v, z) \leq (2k-1) (10k + 1)$. That is, the condition in \cref{alg:3sum-to-distance-oracle-dynamic:line:close-condition} is satisfied only for candidate pairs.

\begin{lemma}[Few Candidates]
Fix $j, \ell \in [t] \times [s]$. In expectation, the number of candidate pairs $(x, z) \in (X \times Z_\ell) \cap E$ is at most
\begin{equation*}
    \Order\left(\frac{n^{k+1/2}}{s^k t^k}\right).
\end{equation*}
\end{lemma}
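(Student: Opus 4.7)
The plan is to convert the distance-oracle condition into a structural statement about short cycles in the original graph $G$, and then control the number of such cycles that survive the random subsampling of $Y$ and $Z$.

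First I would translate what it means for a pair $(x,z) \in (X \times Z_\ell) \cap E$ to be a candidate. By definition the returned estimate satisfies $\widetilde d(v,z) \leq (2k-1)(10k+1)$, and since $\widetilde d(v,z) \geq d(v,z)$ this means that in the current state of $G_{j,\ell}$ (with edges from $v$ to all neighbors $y \in Y_j$ of $x$) there is a path from $v$ to $z$ of length at most $(2k-1)(10k+1)$. The edges of $G_{j,\ell}$ inherited from $G[Y_j \cup Z_\ell]$ have weight $10k$ (from subdivision), while each edge out of $v$ has weight $1$. So any such path must start with an edge $(v,y_1)$ for some $y_1 \in Y_j$ neighboring $x$, followed by some number $p$ of weight-$10k$ edges alternating between $Y_j$ and $Z_\ell$, and must end at $z \in Z_\ell$. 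The total length is $1 + 10k\cdot p$, which must be at most $(2k-1)(10k+1) = 20k^2 - 8k - 1$; a direct calculation gives $p \leq 2k-1$, and the $Y_j$-to-$Z_\ell$ alternation forces $p$ to be odd.

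Next I would turn this into a cycle statement in the original graph $G$. The subdivided path of length $p$ corresponds to a sequence $y_1, z_1, y_2, z_2, \dots, z$ of $p+1$ vertices in $Y \cup Z$ with $y_1$ a neighbor of $x$ in $G$ and $z$ the chosen endpoint. Together with the original edges $(x,y_1)$ and $(z,x)$ this produces a cycle in $G$ of odd length $p+2 \in \{3,5,\dots,2k+1\}$ that contains exactly one $X$-vertex (namely $x$), $\tfrac{p+1}{2}$ vertices in $Y$ and $\tfrac{p+1}{2}$ vertices in $Z$ (the $Z$-vertices include $z$). Writing $p = 2m+1$ with $0 \leq m \leq k-1$, such a candidate corresponds to a $(2m+3)$-cycle in $G$ that uses an $X$-to-$Z$ edge and has $m+1$ vertices in each of $Y$ and $Z$.

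Now I would count. Using the assumption $G$ has at most $\Order(n^{(2m+3)/2})$ cycles of length $2m+3$, and the fact that such a cycle survives to $G[X \cup Y_j \cup Z_\ell]$ only if all $m+1$ of its $Y$-vertices land in $Y_j$ and all $m+1$ of its $Z$-vertices land in $Z_\ell$ (probability $t^{-(m+1)} s^{-(m+1)}$), the expected number of candidate pairs of this type is at most
\begin{equation*}
    \Order\!\left(\frac{n^{m+3/2}}{s^{m+1} t^{m+1}}\right).
\end{equation*}
Summing over $m = 0, 1, \dots, k-1$ and using the hypothesis $s, t \leq n^{1/2-\Omega(1)}$ (so $n/(st) \geq n^{\Omega(1)}$ and the geometric sum is dominated by its last term) gives the stated bound $\Order\!\left(\tfrac{n^{k+1/2}}{s^k t^k}\right)$.

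The only real obstacle is the arithmetic step that extracts $p \leq 2k-1$ from the stretch bound $(2k-1)(10k+1)$: the constants $10k$ in the subdivision and the extra $v$-edge of weight $1$ are chosen exactly so that an inflated path of length at most $(2k-1)\cdot(10k+1)$ cannot use more than $2k-1$ of the weight-$10k$ edges, which in turn caps the length of the resulting original cycle at $2k+1$. Once this is carefully verified, the remainder is a direct expectation calculation using the input hardness hypothesis on the number of short cycles in $G$.
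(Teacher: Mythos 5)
Your proof is correct and follows essentially the same route as the paper's: translate the candidate condition into a short path in $G_{j,\ell}$, use the weight-$10k$ subdivision to cap the number of original edges at $2k-1$ (the paper does this via $2k\cdot 10k > (2k-1)(10k+1)-1$, you via $1+10kp\le (2k-1)(10k+1)$), close the path into an odd cycle of length at most $2k+1$ through $x$, and bound the expected count by the survival probability $s^{-k'}t^{-k'}$ times the $\Order(n^{k'+1/2})$ cycle bound, summed over cycle lengths. Your accounting of exactly $m+1$ vertices in each of $Y$ and $Z$ and the explicit parity argument only make the same calculation slightly more detailed than the paper's.
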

\begin{proof}
Focus on a candidate pair $(x, z)$. There must be a neighbor $y \in Y_j$ of $X$ (in the original graph) such that $y$ and $z$ have distance $d(y, z) \leq (2k-1)(10k+1)-1$ in $G_{j, \ell}$. A shortest $y$-$z$-path can therefore zigzag at most $2k-1$ times between $Y_j$ and $Z_\ell$, as otherwise it would have length at least $2k \cdot 10k > (2k-1)(10k+1) - 1$.

Therefore, any candidate pair $(x, z)$ is part of a cycle of length at most $2k' + 1 \leq 2k+1$ in the original graph $G$. For fixed $j, \ell$, the probability that any $(2k'+1)$-cycle survives in the induced subgraph $G[X \cup Y_j \cup Z_\ell]$ is at most $s^{-k'} t^{-k'}$. Therefore, using that in $G$ there are at most $\Order(n^{k'+1/2})$ cycles of length $2k'+1$, we obtain the claimed bound on the number of candidate pairs:
\begin{equation*}
    \sum_{k'=1}^{k} \Order\left(\frac{n^{k'+1/2}}{s^{k'} t^{k'}}\right) \leq \Order\left(\frac{n^{k+1/2}}{s^k t^k}\right),
\end{equation*}
where the last inequality holds by $s, t \leq n^{1/2}$.
\end{proof}

\begin{proof}[Proof of \cref{thm:hardness-distance-oracles-dynamic}]
We run the reduction in \cref{alg:3sum-to-distance-oracle-dynamic}. We omit the correctness proof which is similar to the previous sections, and focus on the running time. We set
\begin{align*}
    s &= n^{\frac12-\frac{u}{2-2u-2q}-\gamma}, \\
    t &= n^{1/2-\frac{q}{2-2u-2q}-\gamma},
\end{align*}
for some small $\gamma > 0$ to be determined later. There are three major contributions to the running time.

First, the time to preprocess the graphs $G_{j, \ell}$ (via adding all edges one by one) is bounded by $\Order(st \cdot n^{3/2} / st)$ times the time to perform a single update and therefore negligible. The time to perform the edge insertions and deletions in \cref{alg:3sum-to-distance-oracle-dynamic:line:insert,alg:3sum-to-distance-oracle-dynamic:line:delete} is bounded by
\begin{gather*}
    \Order\left(st \cdot n \cdot \frac{n^{1/2}}t \cdot \left(\frac{n^{3/2}}{st}\right)^u\right) \\
    \qquad= \Order(n^{2-\frac{u}{2-2u-2q}-\gamma + u \cdot (\frac12 + \frac{u+q}{2-2u-2q}) + u\gamma}) \\
    \qquad= \Order(n^{2+\frac{-u+u-u^2-uq+u^2+uq}{2-2u-2q} - \gamma(1-u)}) \\
    \qquad= \Order(n^{2-\gamma(1-u)}), 
\end{gather*}
which is subquadratic for an arbitrarily small $\gamma > 0$. Similarly, the total query time can be bounded by
\begin{equation*}
    \Order\left(st \cdot n \cdot \frac{n^{1/2}}s \cdot \left(\frac{n^{3/2}}{st}\right)^q\right) = \Order(n^{2-\gamma(1-q)}).
\end{equation*}
It remains to bound the time spend in the inner-most loop in \cref{alg:3sum-to-distance-oracle-dynamic:line:loop-test}. By the previous lemma we pass the condition in \cref{alg:3sum-to-distance-oracle-dynamic:line:close-condition} at most
\begin{equation*}
    st \cdot \Order\left(\frac{n^{k+1/2}}{s^k t^k}\right)
\end{equation*}
times, and each execution of the loop body takes time $\Order(n^{1/2} / t)$. Therefore, the total time spent in the loop is
\begin{gather*}
    st \cdot \Order\left(\frac{n^{k+1}}{s^k t^{k+1}}\right) \\
    \qquad= \Order(n^{3/2+\frac{(k-1)u+kq}{2-2u-2q}+(2k-1)\gamma}) \\
    \qquad\leq \Order(n^{3/2+\frac{ku+(k+1)q-u-q}{2-2u-2q}+(2k-1)\gamma}).
\end{gather*}
By the assumption that $ku + (k+1)q < 1$, the first terms in the exponent is strictly less than $2$, and therefore we can set $\gamma > 0$ sufficiently small to achieve subquadratic running time.
\end{proof}
\section*{Acknowledgements}

We would like to thank Merav Parter and Sebastian Forster for helpful discussions on the upper bounds. We would also like to thank Seri Khoury and Or Zamir for collaboration on short cycle removal in graphs, and Marvin Künnemann and Karol Węgrzycki for collaboration on another related project, both of which inspired us to work on short cycle removal on numbers.

\bibliographystyle{plainurl}
\bibliography{references}

\appendix
\section{Computing Sumsets} \label{sec:sparse-sumset}

There is a rich body of research on computing sparse convolutions~\cite{ArnoldR15,ChanL15,Nakos20,BringmannFN21,BringmannFN22}, but to the best of our knowledge there is no generalization to groups other than the integers $\Int$ or cyclic groups $\Int / p\Int$. In this section we give generalizations to groups $\Field_p^d$. Our goal is to prove the following two lemmas:

\lemsparsesumset*

\lemsparsesumsetwitness*

For the proofs of the lemmas, we introduce some notation. The \emph{group algebra} $\Int[G]$ of $G$ is the set of all functions $f : G \to \Int$ equipped with a \emph{convolution} operation $f \star g$ defined by
\begin{equation*}
    (f \star g)(c) = \sum_{\substack{a, b \in G\\a + b = c}} f(a) \cdot g(b).
\end{equation*}
It was an active line of research to achieve Fast Fourier Transform algorithms not only for cyclic convolutions, but also for general groups. The following theorem is known since the 90's~\cite{BaumCT91}. In fact, since recently there are even efficient---alas, not near-linear-time---algorithms for general groups~$G$~\cite{Umans19}.

\begin{theorem}[Generalized Fast Fourier Transform] \label{thm:fft}
Let $G = \Field_p^d$. Given two functions $f, g \in \Int[G]$, we can their convolution $f \star g$ in time $\Order(|G| \log |G|)$.
\end{theorem}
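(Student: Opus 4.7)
The plan is to reduce this to the usual one-dimensional FFT by exploiting the product structure of $G = \Field_p^d$. Because $G$ is a finite abelian group, its characters form an orthogonal basis of $\Complex[G]$, and the group-algebra convolution becomes pointwise multiplication after applying the Fourier transform $\widehat f(a) = \sum_{b \in G} f(b)\,\omega^{\langle a,b\rangle}$, where $\omega = e^{2\pi i/p}$ is a primitive $p$-th root of unity and $\langle\cdot,\cdot\rangle$ is the standard pairing on $\Field_p^d$. Thus the strategy is the textbook one: transform $f$ and $g$, multiply pointwise, and transform back.

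The key observation is that the character $\chi_a(b) = \omega^{\langle a,b\rangle}$ factors coordinatewise, $\chi_a(b) = \prod_{i=1}^d \omega^{a_i b_i}$, so viewing $f$ as a $d$-dimensional array indexed by $[p]^d$, the Fourier transform on $G$ factors as a tensor product of $d$ one-dimensional DFTs of length $p$. Applying the standard trick of transforming along each axis in turn, I would iterate over $d$ dimensions; in each pass, I perform $p^{d-1} = |G|/p$ one-dimensional DFTs of length $p$, each done in $O(p \log p)$ time by (say) Bluestein's algorithm, or in $O(p^2)$ time naively. The total running time of one transform is therefore $O(d \cdot p^d \log p) = O(|G| \cdot d \log p) = O(|G| \log |G|)$. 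The same bound holds for the inverse transform, and pointwise multiplication costs $O(|G|)$, giving the claimed bound overall.

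The main obstacle I would expect is handling the transition from $\Int$-valued inputs to $\Complex$-valued intermediate computations, since the transform uses roots of unity while the inputs and desired outputs are integers. I would address this by one of two standard routes: either perform all arithmetic over a finite field $\Field_q$ with $q$ containing a $p$-th root of unity and $q$ large enough (via multiple moduli and CRT) to exactly represent each output value $(f\star g)(c)$, which is bounded by $|G| \cdot \max|f|\cdot \max|g|$; or use floating-point arithmetic with $O(\log(|G|\cdot \max|f|\cdot\max|g|))$ bits of precision and round at the end. Either variant preserves the claimed $\widetilde O(|G|)$ complexity under the standard word-RAM assumption (with $p = O(1)$ and $d = O(\log n)$, as in our applications), and yields $f\star g$ exactly.
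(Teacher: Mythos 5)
Your proposal is correct, but note that the paper does not actually prove this statement: it imports \cref{thm:fft} as a known black box, citing the literature on generalized FFTs (Baum--Clausen-style results from the 90's, and more recent work for general groups), which handle arbitrary finite groups at the cost of more machinery. Your argument instead exploits the specific abelian product structure of $G = \Field_p^d$: the characters factor coordinatewise, so the transform decomposes into $d$ passes of $p^{d-1}$ length-$p$ DFTs, giving $\Order(d\, p^d \log p) = \Order(|G|\log|G|)$ per transform, followed by pointwise multiplication and inversion. This is the standard multidimensional-FFT route and is a perfectly valid, elementary, self-contained proof of exactly the special case the paper needs. Two small points to keep straight: the naive $\Order(p^2)$ per length-$p$ DFT fallback only yields the claimed bound when $p = \Order(1)$ (which does hold in all of the paper's applications), so for the theorem as stated for general $p$ you should commit to Bluestein/Rader; and your treatment of exact integer arithmetic (NTT over a suitable $\Field_q$ with CRT, or bounded-precision floating point) is the right fix for the passage from $\Int$-valued inputs through complex roots of unity, and is unproblematic here since in the paper $f,g$ are indicator functions, so all convolution values are bounded by $|G| \leq \poly(n)$ and fit in a machine word.
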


We will combine the computation of convolutions with linear hashing. To this end, we prepare the following lemma:

\begin{lemma}[Convolutions and Hashing]
Let $h : G \to G'$ be a linear map, let $f, g \in \Int[\Field_p^d]$ be arbitrary and let $f', g' \in \Int[\Field_p^{d'}]$ be defined by
\begin{align*}
    f'(x) = \sum_{\substack{a \in \Field_p^d\\h(a) = x}} f(a),
    \qquad g'(y) = \sum_{\substack{b \in \Field_p^d\\h(b) = y}} g(b).
\end{align*}
Then:
\begin{align*}
    (f' \star g')(z) = \sum_{\substack{c \in \Field_p^d\\h(c) = z}} (f \star g)(c).
\end{align*}
\end{lemma}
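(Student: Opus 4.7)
The plan is to prove the identity by a direct computation, unfolding the definition of convolution and using the linearity of $h$ to re-index the sum. The lemma is essentially the statement that pushforward under a group homomorphism is a ring homomorphism on the group algebra (viewed with convolution), so no clever tricks should be needed; the challenge is purely bookkeeping.

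First, I would unfold $(f' \star g')(z)$ by the definition of convolution over $\Field_p^{d'}$, writing it as $\sum_{x + y = z} f'(x) g'(y)$ where the sum ranges over $x, y \in \Field_p^{d'}$. Then I would substitute the definitions of $f'$ and $g'$, obtaining a triple sum of the form
\begin{equation*}
    (f' \star g')(z) = \sum_{\substack{x, y \in \Field_p^{d'}\\x + y = z}} \; \sum_{\substack{a \in \Field_p^d\\h(a) = x}} \; \sum_{\substack{b \in \Field_p^d\\h(b) = y}} f(a) g(b).
\end{equation*}
Next, I would collapse the three index sets into a single condition on $(a, b) \in \Field_p^d \times \Field_p^d$: the only remaining constraint is $h(a) + h(b) = z$, which by linearity of $h$ equals $h(a + b) = z$. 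This gives
\begin{equation*}
    (f' \star g')(z) = \sum_{\substack{a, b \in \Field_p^d\\h(a + b) = z}} f(a) g(b).
\end{equation*}

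Finally, I would re-group the sum by the value $c := a + b \in \Field_p^d$, pulling the constraint $h(c) = z$ outside and recognizing the inner sum as $(f \star g)(c)$:
\begin{equation*}
    (f' \star g')(z) = \sum_{\substack{c \in \Field_p^d\\h(c) = z}} \; \sum_{\substack{a, b \in \Field_p^d\\a + b = c}} f(a) g(b) = \sum_{\substack{c \in \Field_p^d\\h(c) = z}} (f \star g)(c),
\end{equation*}
which is the desired identity. The only step requiring any thought is the use of linearity $h(a) + h(b) = h(a+b)$ to turn the two separate constraints $h(a) = x$, $h(b) = y$ with $x + y = z$ into the single constraint $h(a+b) = z$; everything else is symbol manipulation. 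I do not anticipate any genuine obstacle.
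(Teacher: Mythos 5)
Your proposal is correct and matches the paper's proof step for step: unfold the convolution, substitute the definitions of $f'$ and $g'$, merge the constraints via linearity of $h$, and re-group by $c = a+b$. The only cosmetic difference is that you write the intermediate constraint as $h(a+b) = z$ while the paper keeps it as $h(a) + h(b) = z$; these are the same by linearity.
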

\begin{proof}
The proof is a simple calculation:
\begin{gather*}
    (f' \star g')(z) = \sum_{\substack{x, y \in \Field_p^{d'}\\x + y = z}} f'(x) \cdot g'(y) \\
    \qquad= \sum_{\substack{x, y \in \Field_p^{d'}\\x + y = z}} \left(\sum_{\substack{a \in \Field_p^d\\h(a) = x}} f(a)\right) \cdot \left(\sum_{\substack{b \in \Field_p^d\\h(b) = y}} g(b)\right) \\
    \qquad= \sum_{\substack{a, b \in \Field_p^{d}\\h(a) + h(b) = z}} f(a) \cdot g(b) \\
    \qquad= \sum_{\substack{c \in \Field_p^d\\h(c) = z}} (f \star g)(c). \qedhere
\end{gather*}
\end{proof}

\begin{proof}[Proof of \cref{lem:sparse-sumset}]
As a first step we will show how to obtain a small superset $X \supseteq A + B$, by calling our algorithm recursively: Let $A', B'$ be the sets $A, B$ after chopping off the last coordinate from each vector. We compute $A' + B'$ recursively, and let $X$ be the set of all vectors which are equal to some vector in~$A' + B'$ in the first $d-1$ coordinates, and arbitrary in the last coordinate. We have clearly constructed a superset~$X \supseteq A + B$, and moreover since $|A' + B'| \leq |A + B|$, $X$ has size at most $|A + B| \cdot p$.

Next, we apply a hashing approach: Let $d' = \ceil{\log_p(100 \cdot |X|)}$, let $G' = \Field_p^{d'}$ and let $h : G \to G'$ be a random linear map. The subgroup $G'$ has size $100 |X| \leq |G'| \leq 100p \cdot |X|$. We claim that for any element~$x \in X$, the probability that $x$ is \emph{isolated} under the hashing (that is, that there is no other $y \in X$ with $h(x) = h(y)$) is at least $\frac{9}{10}$. Indeed, the collision probability is \raisebox{0pt}[0pt][0pt]{$\Pr(h(x) = h(y)) \leq |G|^{-1} \leq \frac{1}{100 \cdot |X|}$}, therefore is suffices to take a union bound over all possible $|X|$ elements. Our goal is to test for each isolated element whether it appears in $A + B$ (and further compute its multiplicity $r_{A, B}(x)$)

We let $f : G \to \Int$ be the indicator function of the set $A$ (represented sparsely), and similarly we let~$g : G \to \Int$ be the indicator function of $B$. It is easy to check that $(f \star g)(c)$ is positive if and only if~$c \in A + B$. In fact, we have the stronger property that $(f \star g)(c) = r_{A, B}(c)$. We compute $f'$ and $g'$ as defined in the previous lemma (represented densely) with the hash function $h$, and we compute $f' \star g'$ using \cref{thm:fft}. The previous lemma yields that for every isolated element $x \in X$, we have that $(f' \star g')(h(x)) = (f \star g)(x)$. Our algorithm therefore computes the set of isolated elements (by evaluating the hash function on all inputs~$X$), and for each isolated element $x$ recovers $r_{A, B}(x) \gets (f' \star g')(h(x))$.

As we have argued before, each element is isolated with probability at least $\frac9{10}$. Hence by repeating the process for $\Order(\log n)$ iterations, each element in $X$ was isolated at least once and we have therefore computed~$r_{A, B}(x)$ for all $x \in X$.

The total running time (ignoring the recursive call) can be bounded as follows: Constructing $f, g, f', g'$ is in time $\Order(|A| + |B|)$. Computing $f' \star g'$ takes time $\Order(|G'| \log |G'|)$ using \cref{thm:fft}, and by our choice of $G'$ this becomes $\widetilde\Order(|X|) = \widetilde\Order(p \cdot |A + B|)$. In the same time budget we can also test for each element in~$X$ whether it is isolated under the hashing. In total the running time is $\widetilde\Order(p \cdot |A + B|)$ and the repetitions only add a logarithmic overhead. Note that the recursion reaches depth at most $d$, thereby worsening the running time to $\widetilde\Order(pd \cdot |A + B|)$. 
\end{proof}

\begin{proof}[Proof of \cref{lem:sparse-sumset-witness}]
Focus on some $c \in A + B$. Our goal is to sample a random witness $(a, b) \in A \times B$ with~$a + b = c$ in time $\widetilde\Order(|A + B| \cdot \poly(p, d))$. By repeating $\widetilde\Order(t)$ times we will either produce a list of $t$ distinct witnesses, or if there are less than $t$ witnesses, we have seen every witness at least once.

To sample a witness, we first subsample $A$ and $B$ with rates $1, \frac12, \frac14, \dots$. By a standard isolation argument, with constant probability there is an iteration in which exactly one witness for $c$ survives in the subsets. Moreover, the surviving witness is uniformly distributed among all witnesses. We therefore focus on the goal to recover a witness under the promise that there is a unique witness in the instance. Note that the subsampling incurs only a polylogarithmic overhead.

We will now apply \cref{lem:sparse-sumset} to retrieve the unique witness $(a^*, b^*) \in A \times B$ with $a^* + b^* = c$. Our strategy is to recover $a^*$ entry by entry. (We can then recover $b^*$ via $b^* = c - a^*$.) Focus on some coordinate~$i$. We partition $A$ into $p$ subsets $A_0, \dots, A_{p-1}$ where $A_j$ contains all vectors $a \in A$ which are equal to $j$ in the $i$-th coordinate. We compute $A_0 + B, \dots, A_{p-1} + B$ using \cref{lem:sparse-sumset}. Note that $c$ is in exactly one of these sets, namely the set $A_j + b$ where $j$ is the entry of $a^*$ at coordinate $i$. We have therefore successfully recovered the $i$-th coordinate of $a^*$. By repeating the same algorithm for all $i = 1, \dots, d$, we have successfully recovered $a^*$. We have called the sparse sumset algorithm $p \cdot d$ times, therefore the total time is bounded by $\widetilde\Order(|A + B| \cdot \poly(p, d))$ to find a single witness, and by $\widetilde\Order(t \cdot |A + B| \cdot \poly(p, d))$ to find a list of $t$ witnesses.
\end{proof}
\section{The Balog-Szemerédi-Gowers Theorem} \label{sec:bsg}

Our goal in this section is to prove how the following theorem follows from the work by Chan and Lewenstein.

\thmbsg*

In their paper, they obtain the following result:

\begin{theorem}[Theorem 2.1 and Lemma 7.2 in \cite{ChanL15}] \label{thm:chan-lewenstein}
Let $A,B \subseteq G$ and $E \subseteq A \times B$. Suppose that $|A| \cdot |B| = \Theta(n)^2$, $|\{a+b \mid (a,b) \in E\}| \le t n$, and $|E| \ge \alpha n^2$. Then there exist subsets $A' \subseteq A$ and $B' \subseteq B$ such that:
\begin{enumerate}
\item $|A'+B'| \le O((1/\alpha)^5 t^3 n)$ and
\item $|E \cap (A' \times B')| \ge \Omega(\alpha |A'| |B|) \ge \Omega(\alpha^2 n^2)$.
\end{enumerate}
Given $A,B$ and  query access to $E$, such sets $A',B'$ can be computed by a randomized algorithm in time $\widetilde\Order((1/\alpha)^6(|A|+|B|))$.
\end{theorem}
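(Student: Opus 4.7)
My plan is to derive Theorem~\ref{thm:bsg} from the two-set ``graph BSG'' statement of Chan and Lewenstein (Theorem~\ref{thm:chan-lewenstein}), which is exactly the route the paper announces. The idea is to feed Chan--Lewenstein the symmetric instance $A = B$ together with an edge set capturing the popular pairwise sums of $A$; the main work is then to verify that the parameters translate correctly and, crucially, to convert the resulting bound on $|A'+B'|$ into a bound on the self-sumset $|A'+A'|$.

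For the setup, I would define the popular sums $P := \{x \in G : r_{A,A}(x) \geq |A|/(2K)\}$ and the edge set $E := \{(a,b) \in A \times A : a+b \in P\}$. Since $\sum_{x} r_{A,A}(x) = |A|^2$, we have $|P| \leq 2K|A|$, so the image $\{a+b : (a,b) \in E\}$ has size at most $2K \cdot n$ with $n = |A|$, giving $t = 2K$. For the lower bound on $|E|$, I would split $E(A) = \sum_x r_{A,A}(x)^2$ along $P$: the unpopular contribution is at most $(|A|/(2K)) \cdot \sum_x r_{A,A}(x) = |A|^3/(2K) \leq E(A)/2$ by hypothesis, so the popular contribution is at least $E(A)/2 \geq |A|^3/(2K)$, and using $r_{A,A}(x) \leq |A|$ we conclude $|E| = \sum_{x \in P} r_{A,A}(x) \geq |A|^2/(2K)$, matching $\alpha = 1/(2K)$. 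Applying Theorem~\ref{thm:chan-lewenstein} then produces $A', B' \subseteq A$ with $|A'+B'| \leq \Order(K^{8} |A|)$ and $|E \cap (A' \times B')| \geq \Omega(|A'| \cdot |A| /K)$. The second bound immediately forces $|B'| \geq \Omega(|A|/K)$, while $|A'||B'| \geq \Omega(|A|^2/K^2)$ combined with $|B'| \leq |A|$ forces $|A'| \geq \Omega(|A|/K^2)$, which is exactly the first bullet of Theorem~\ref{thm:bsg}.

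The step I expect to require the most care is converting the bipartite sumset bound into a symmetric one. Here I would invoke Pl\"unnecke--Ruzsa (Lemma~\ref{lem:pluennecke}) with $B'$ playing the role of the pivot set and $A'$ the extension: since $|B' + A'| = |A'+B'| \leq \Order(K^8 |A|) \leq \Order(K^9 |B'|)$, the lemma gives $|nA' - mA'| \leq \Order(K^{9(n+m)} |B'|)$ for all nonnegative integers $n, m$. Taking $n = 2, m = 0$ yields $|A' + A'| \leq \Order(K^{18} |B'|) \leq \Order(K^{18} |A|)$, which is in fact stronger than the claimed $\Order(K^{24}|A|)$, confirming the second bullet.

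For the algorithmic version, it suffices to supply Chan--Lewenstein with efficient query access to $E$. Deciding whether $(a, b) \in E$ reduces to testing whether $r_{A,A}(a+b) \geq |A|/(2K)$, and the at most $2K|A|$ popular sums together with their multiplicities $r_{A,A}$ can be precomputed via a heavy-hitters / sparse convolution subroutine built from Lemma~\ref{lem:sparse-sumset} in time $\widetilde \Order(\mathrm{poly}(K) \cdot |A|)$, after which each membership query costs $\Order(1)$ via a hash table. Combined with the Chan--Lewenstein running time $\widetilde \Order((1/\alpha)^6 |A|) = \widetilde \Order(K^{6} |A|)$, the overall time is $\widetilde \Order(\mathrm{poly}(K) \cdot |A|)$, which comfortably fits under the stated $\widetilde \Order(K^{12} |A|)$ bound.
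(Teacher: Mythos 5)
The statement you were assigned is \cref{thm:chan-lewenstein} itself---the two-set, algorithmic graph-BSG theorem---which the paper does not prove but imports verbatim from Chan and Lewenstein (Theorem 2.1 and Lemma 7.2 of their paper). Your proposal never engages with that statement: it takes \cref{thm:chan-lewenstein} as a black box and uses it to derive \cref{thm:bsg}. As a proof of the assigned theorem this is circular---you assume exactly what is to be shown. A genuine proof would have to reconstruct the graph BSG argument (the Balog / Sudakov--Szemer\'edi--Vu proof that a dense edge set $E$ whose sums occupy few values yields $A', B'$ with $|A'+B'|$ small and many surviving edges), together with the $\widetilde\Order((1/\alpha)^6(|A|+|B|))$-time implementation; none of that appears in your write-up.

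What you did write is, in substance, the paper's own appendix derivation of \cref{thm:bsg} in \cref{sec:bsg}: define the popular sums, feed the resulting edge set to \cref{thm:chan-lewenstein} with $B=A$, then use Pl\"unnecke--Ruzsa (\cref{lem:pluennecke}) to pass from $|A'+B'|$ to $|A'+A'|$. Your parametrization differs slightly---you keep the full popular set $P$, so $t=2K$ and $\alpha=1/(2K)$, whereas the paper truncates to a set $C_0$ of size exactly $|A|/(2K)$ so that $t=1$ and $\alpha=\Theta(K^{-2})$---and your version in fact gives the sharper bound $|A'+A'| \leq \Order(K^{18}|A|)$; those combinatorial steps check out. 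The one genuine gap in that derivation is the algorithmic claim: you assert that the popular sums and their multiplicities (hence query access to $E$) can be precomputed ``via \cref{lem:sparse-sumset}'' in $\widetilde\Order(\poly(K)\cdot|A|)$ time, but that lemma runs in output-sensitive time $\widetilde\Order(|A+A|\cdot\poly(p,d))$, and $|A+A|$ can be $\Theta(|A|^2)$ for the unstructured sets to which the BSG theorem is applied, so this step as stated is quadratic. One must instead estimate $r_{A,A}(a+b)$ by random sampling (accepting a soft threshold, which still meets the hypotheses of \cref{thm:chan-lewenstein} with adjusted constants) or, as the paper does, rely on Chan--Lewenstein's Lemma 7.2, which already packages the popular-sum edge-set construction.
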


\begin{proof}[Proof of \cref{thm:bsg}]
Let $C \subseteq A + A$ be the subset containing all elements $x$ with multiplicity $r_{A, A}(x) \geq \frac{|A|}{2K}$. We claim that $|C|$ has size at least $\frac{|A|}{2K}$ as otherwise we would have
\begin{gather*}
    E(A) = \sum_{x \in G} r_{A, A}(x)^2 \\
    \qquad= \sum_{\substack{x \in G\\r_{A, A}(x) \leq |A|/2K}} r_{A, A}(x)^2 + \sum_{\substack{x \in G\\r_{A, A}(x) > |A|/2K}} r_{A, A}(x)^2 \\
    \qquad< \frac{|A|}{2K} \cdot \sum_{\substack{x \in G\\r_{A, A}(x) \leq |A|/2K}} r_{A, A}(x) + \frac{|A|}{2K} \cdot |A|^2 \\
    \qquad= \frac{|A|^3}{K}.
\end{gather*}
Let $C_0 \subseteq C$ be an arbitrary subset of size exactly $\frac{|A|}{2K}$. We will apply \cref{thm:chan-lewenstein} with the bipartite graph with vertex parts $A$ and $B = A$ and edges
\begin{equation*}
    E = \set{(a, b) \in A^2 : a + b \in C_0}.
\end{equation*}
Since each element in $C_0$ contributes at least $\frac{|A|}{2K}$ edges to the graph and since $|C_0| = \frac{|A|}{2K}$, we conclude that~\raisebox{0pt}[0pt][0pt]{$|E| \geq \frac{|A|^2}{4K^2}$}. We can therefore apply \cref{thm:chan-lewenstein} with parameters $n = |A| = |B|$, $\alpha = \frac1{4K^2}$ and~$t = 1$. In this way we obtain subsets $A', B' \subseteq A$, and we claim that the set $A'$ is as desired.

We first check that $A'$ and $B'$ are sufficiently large. \cref{thm:chan-lewenstein} implies that $|A'| \cdot |B'| \geq \Omega(\alpha |A'| n) \geq \Omega(\alpha^2 n^2)$. In particular, it follows that $|A'| \geq \Omega(\alpha n) = \Omega(K^{-2} n)$ and $|B'| \geq \Omega(\alpha n) \geq \Omega(K^{-2} n)$.

To see that $|A' + A'|$ is small, we first note that the theorem implies that $|A' + B'| \leq \Order(\alpha^{-5} t^3 n) = \Order(K^{10} n) = \Order(K^{12} |B'|)$. We apply the Plünnecke-Ruzsa inequality (\cref{lem:pluennecke} with inputs~$A \gets B'$ and~\makebox{$B \gets A'$}) to conclude that $|A' + A'| \leq \Order(K^{24} |B'|) \leq \Order(K^{24} n)$. Finally, the running is bounded by~$\widetilde\Order(K^{12} |A|)$ as claimed.
\end{proof}

\end{document}